  \theoremstyle{plain} 
  	\newtheorem{theorem}{Theorem}[section]
  	  \newtheorem{corollary}{Corollary}[section]
  		\newtheorem{lemma}{Lemma}[section]
  		\newtheorem{claim}{Claim}[section]
 	 	\newtheorem{fact}[theorem]{Fact}
  		\newtheorem{proposition}{Proposition}[section]
  \theoremstyle{remark} 
  	\newtheorem{remark}[theorem]{Remark}
  \theoremstyle{definition} 
 		 \newtheorem{definition}{Definition}[section]
\newenvironment{proofof}[1]{\begin{proof}[Proof of {#1}]}{\end{proof}}
  \newcommand{\new}[1]{{{#1}}}
  \newcommand{\newer}[1]{{{#1}}}
  \newcommand{\newerest}[1]{{{#1}}}
\newcommand{\ignore}[1]{\leavevmode\unskip}
\newcommand{\eps}{\ensuremath{\varepsilon}\xspace}
\newcommand{\property}{\ensuremath{\mathcal{P}}\xspace} 
\newcommand{\eqdef}{:=}
\newcommand{\accept}{\textsf{accept}\xspace}
\newcommand{\reject}{\textsf{reject}\xspace}
\newcommand{\distribs}[1]{\Delta\!\left(#1\right)} 
\newcommand{\yes}{{\sf{}yes}\xspace}
\newcommand{\no}{{\sf{}no}\xspace}
\newcommand{\mutualinfo}[2]{ I\left(#1; #2\right) }
\newcommand{\condmutualinfo}[3]{ I\mleft(#1; #2 \middle| #3\mright) }
\newcommand{\bigO}[1]{{O\mleft( #1 \mright)}}
\newcommand{\bigTheta}[1]{{\Theta\mleft( #1 \mright)}}
\newcommand{\bigOmega}[1]{{\Omega\mleft( #1 \mright)}}
\newcommand{\setOfSuchThat}[2]{ \left\{\; #1 \;\colon\; #2\; \right\} } 			
\newcommand{\indicSet}[1]{\mathds{1}_{#1}}                                              
\newcommand{\indic}[1]{\indicSet{\left\{#1\right\}}}
\newcommand{\dtv}{\operatorname{d}_{\rm TV}}
\newcommand{\totalvardistrestr}[3][]{{\dtv^{#1}\!\left({#2, #3}\right)}}
\newcommand{\totalvardist}[2]{\totalvardistrestr[]{#1}{#2}}
\newcommand{\kldiv}[2]{\operatorname{d_{\rm KL}}\mleft({#1 \mid\mid #2}\mright)}
\newcommand{\proba}{\Pr}
\newcommand{\probaOf}[1]{\proba\!\left[\, #1\, \right]}
\newcommand{\probaCond}[2]{\proba\!\left[\, #1 \;\middle\vert\; #2\, \right]}
\newcommand{\probaDistrOf}[2]{\proba_{#1}\left[\, #2\, \right]}
\newcommand{\expect}[1]{\mathbb{E}\!\left[#1\right]}
\newcommand{\expectCond}[2]{\mathbb{E}\!\left[\, #1 \;\middle\vert\; #2\, \right]}
\newcommand{\shortexpect}{\mathbb{E}}
\newcommand{\var}{\operatorname{Var}}
\newcommand{\cov}{\operatorname{Cov}}
\newcommand{\bernoulli}[1]{\ensuremath{\operatorname{Bern}\!\left( #1 \right)}}
\newcommand{\binomial}[2]{\ensuremath{\operatorname{Bin}\!\left( #1, #2 \right)}}
\newcommand{\poisson}[1]{\ensuremath{\operatorname{Poisson}\!\left( #1 \right) }}
\newcommand{\norm}[1]{\lVert#1{\rVert}}
\newcommand{\normone}[1]{{\norm{#1}}_1}
\newcommand{\normtwo}[1]{{\norm{#1}}_2}
\newcommand{\abs}[1]{\left\lvert #1 \right\rvert}
\newcommand{\vect}[1]{\mathbf{#1}}
\newcommand{\flr}[1]{\left\lfloor #1 \right\rfloor}
\newcommand{\R}{\ensuremath{\mathbb{R}}\xspace}
\newcommand{\N}{\ensuremath{\mathbb{N}}\xspace}
\newcommand{\lp}[1][1]{\ell_{#1}}
\newcommand{\p}{\ensuremath{p}}
\newcommand{\q}{\ensuremath{q}}
\newcommand{\condindprop}[3]{\property_{#1,#2\mid #3}}
\newcommand{\condindrv}[3]{{(#1\perp #2)\mid #3}}
\newcommand{\domx}{\mathcal{X}}
\newcommand{\domy}{\mathcal{Y}}
\newcommand{\domz}{\mathcal{Z}}
\newcommand{\conddistr}[2]{{(#1 \mid #2)}}
\begin{document}

\title{Testing Conditional Independence of Discrete Distributions}

\author{
Cl\'{e}ment L. Canonne\thanks{Supported by a Motwani Postdoctoral Fellowship. Some of this work was performed while visiting USC, and a graduate student at Columbia University supported by NSF grants CCF-1115703 and NSF CCF-1319788.}\\
Stanford University\\
{\tt ccanonne@cs.stanford.edu}
\and
Ilias Diakonikolas\thanks{Supported by NSF Award CCF-1652862 (CAREER) and a Sloan Research Fellowship.}\\
University of Southern California\\
{\tt diakonik@usc.edu}\\
\and
Daniel M. Kane\thanks{Supported by NSF Award CCF-1553288 (CAREER) and a Sloan Research Fellowship.}\\
University of California, San Diego\\
{\tt dakane@cs.ucsd.edu}\\
\and
Alistair Stewart\\
University of Southern California\\
{\tt alistais@usc.edu}
}

\maketitle

\setcounter{page}{0}
\thispagestyle{empty}

\begin{abstract}
We study the problem of testing \emph{conditional independence} for discrete distributions.
Specifically, given samples from a discrete random variable $(X, Y, Z)$ on domain $[\ell_1]\times[\ell_2] \times [n]$, 
we want to distinguish, with probability at least $2/3$, between the case that $X$ and $Y$ are conditionally independent 
given $Z$ from the case that $(X, Y, Z)$ is $\eps$-far, in $\lp[1]$-distance, from every distribution that has this property.
Conditional independence is a concept of central importance in probability and statistics with a range of applications
in various scientific domains. As such, the statistical task of testing conditional independence has been extensively studied
in various forms within the statistics and econometrics communities for nearly a century. 
Perhaps surprisingly, this problem has not been previously considered in the framework of distribution property testing 
and in particular no tester with sublinear sample complexity is known, even for the important special case that the domains of $X$ and $Y$ are binary.

The main algorithmic result of this work is the first conditional independence tester with {\em sublinear} sample complexity for 
discrete distributions over $[\ell_1]\times[\ell_2] \times [n]$. 
To complement our upper bounds, we prove information-theoretic lower bounds establishing 
that the sample complexity of our algorithm is optimal, up to constant factors, for a number of settings.
Specifically, for the prototypical setting when $\ell_1, \ell_2 = O(1)$, we show that the sample complexity of testing 
conditional independence (upper bound and matching lower bound) is 
 \[
      \bigTheta{\max\left(n^{1/2}/\eps^2,\min\mleft(n^{7/8}/\eps,n^{6/7}/\eps^{8/7}\mright)\right)}\,.
  \]

To obtain our tester, we employ a variety of tools, including 
(1) a suitable weighted adaptation of the flattening technique~\cite{DK:16},
and (2) the design and analysis of an optimal (unbiased) estimator 
for the following statistical problem of independent interest: 
Given a degree-$d$ polynomial $Q\colon\mathbb{R}^n \to \R$ 
and sample access to a distribution $p$ over $[n]$, 
estimate $Q(p_1, \ldots, p_n)$ up to small additive error. 
Obtaining tight variance analyses for specific estimators of this form
has been a major technical hurdle in distribution testing (see, e.g.,~\cite{CDVV14}). 
As an important contribution of this work, we develop a general theory 
providing tight variance bounds for \emph{all} such estimators. Our lower bounds, established
using the mutual information method, rely on novel constructions of hard instances 
that may be useful in other settings.
\end{abstract}

\newpage

\section{Introduction} \label{sec:intro}

\subsection{Background} \label{ssec:background}

Suppose we are performing a medical experiment. Our goal is to compare a binary response 
($Y$) for two treatments ($X$), using data obtained at $n$ levels of a possibly confounding factor ($Z$).
We have a collection of observations grouped in strata (fixed values of $Z$). 
The stratified data are summarized in a series of $2 \times 2$ contingency tables, one for each strata. 
One of the most important hypotheses in this context is 
conditional independence of $X$ and $Y$ given $Z$. 
How many observations $(X, Y, Z)$ do we need so that we can confidently test this hypothesis?

The above scenario is a special case of the following statistical problem: 
Given samples from a joint discrete distribution $(X, Y, Z)$, are the random variables $X, Y$ 
independent conditioned on $Z$? 
This is the problem of \emph{testing conditional independence} --- 
a fundamental statistical task with applications in a variety of fields, 
including medicine, economics and finance, etc. (see, e.g., ~\cite{MS59, Spirtes2000, WH17} and references therein). 
Formally, we have the following definition:

\begin{definition}[Conditional Independence]
Let $X, Y, Z$ be random variables over discrete domains $\domx,\domy,\domz$ respectively. 
We say that $X$ and $Y$ are \emph{conditionally independent given $Z$}, denoted by $\condindrv{X}{Y}{Z}$, if
for all $(i,j,z)\in \domx \times \domy \times \domz$ we have that:
$\probaCond{X=i, Y=j}{ Z = z} = \probaCond{X=i}{ Z = z}\cdot \probaCond{Y=j}{ Z = z}$. 
\end{definition}

Conditional independence is an important concept in probability theory and statistics, and is a widely used 
assumption in various scientific disciplines~\cite{Dawid79}. Specifically, it is a central notion in 
modeling causal relations~\cite{Spirtes2000} and of crucial importance in graphical modeling~\cite{Pearl88}.
Conditional independence is, in several settings, a direct implication of economic theory. 
A prototypical such example is the Markov property of a time series process.
The Markov property is a natural property in time series analysis and is broadly 
used in economics and finance~\cite{Easley87}.
Other examples include distributional Granger 
non-causality~\cite{Granger80}~---~which is a particular case of conditional independence~---~and exogeneity~\cite{BlundHor07}.

Given the widespread applications of the conditional independence assumption, 
the statistical task of \emph{testing} conditional independence has been studied extensively
for nearly a century. In 1924, R.~A.~Fisher~\cite{Fisher24} proposed the notion of partial correlation coefficient,
which leads to Fisher's classical $z$-test for the case that the data comes from a \emph{multivariate Gaussian
distribution}. For discrete distributions, conditional independence testing 
is one of the most common inference questions that arise in the context of contingency tables~\cite{Agresti1992}. 
In the field of graphical models, conditional independence testing is a cornerstone in the context of 
structure learning and testing of Bayesian networks 
(see, e.g.,~\cite{Neapolitan:2003, Tsamardinos2006,  Natori17, CDKS17} and references therein).
Finally, conditional independence testing is a useful tool in recent applications of machine learning
involving fairness~\cite{HardtPNS16}.

One of the classical conditional independence tests in the discrete setting  is the 
Cochran--Mantel--Haenszel test~\cite{Coch54, MS59}, which requires certain strong assumptions about 
the marginal distributions. When such assumptions do not hold, a common tester used is a linear combination
of $\chi^2$-squared testers (see, e.g.,~\cite{Agresti1992}). However, even for the most basic case
of distributions over $\{0, 1\}^2 \times [n]$, no finite sample analysis is known. 
(Interestingly enough, we note that our tester can be viewed as an appropriately 
weighted linear combination of $\chi^2$-squared tests.)
A recent line of work  in econometrics has been focusing on 
conditional independence testing in \emph{continuous settings}~\cite{LG96, DelgMant01, 
SuWhite07, SuWhite08, Song09, GL10, Huang10, SuWhite14, Zhang11uai, BouezTaam14, dM14, WH17}. 
The theoretical results in these works are asymptotic in nature, while the finite sample 
performance of their proposed testers is evaluated via simulations.

In this paper, we will study the property of conditional independence in the framework
of distribution testing.
The field of \emph{distribution property testing}~\cite{BFR+:00} has seen substantial progress in the past decade,
see~\cite{Rub12, Canonne15,Gol:17} for two recent surveys and books.
A large body of the literature has focused on characterizing the sample size needed to test properties
of arbitrary distributions of a \emph{given} support size. This regime is fairly well understood:
for many properties of interest there exist sample-efficient testers
~\cite{Paninski:08, CDVV14, VV14, DKN:15, ADK15, CDGR16, DK:16, DiakonikolasGPP16, CDS17, Gol:17, DGPP17}.
Moreover, an emerging body of work has focused on leveraging \textit{a priori} structure
of the underlying distributions to obtain significantly improved sample
complexities~\cite{BKR:04, DDSVV13, DKN:15, DKN:15:FOCS, CDKS17, DaskalakisP17, DaskalakisDK16, DKN17}.

\subsection{Our Contributions} \label{ssec:results}
Rather surprisingly, the problem of testing conditional independence
has not been previously considered in the context of distribution property testing. 
In this work, we study this problem for discrete distributions
and provide the first conditional independence tester with sublinear sample complexity.
To complement our upper bound, we also provide information-theoretic 
lower bounds establishing that the sample complexity of our algorithm is optimal  
for a number of important regimes. To design and analyze our conditional independence tester
we employ a variety of tools, including an optimal (unbiased) estimator for the 
following statistical task of independent interest: Given a degree-$d$ polynomial 
$Q\colon\mathbb{R}^n \to \R$ and sample access to a distribution $\p$ over $[n]$, 
estimate $Q(p_1, \ldots, p_n)$ up to small additive error.

In this section, we provide an overview of our results.
We start with some terminology.
We denote by $\distribs{\Omega}$ the set of all distributions over domain $\Omega$.
For discrete sets $\domx,\domy,\domz$, we will use
$\condindprop{\domx}{\domy}{\domz}$ to denote
the property of conditional independence, i.e., 
\[    
\condindprop{\domx}{\domy}{\domz} \eqdef \setOfSuchThat{ \p\in\distribs{\domx\times\domy\times\domz} }{ (X,Y,Z)\sim \p \text{ satisfies } \condindrv{X}{Y}{Z} } \;.
\]
We say that a distribution $\p \in\distribs{\domx\times\domy\times\domz}$ 
is $\eps$-far from $\condindprop{\domx}{\domy}{\domz}$, 
if for every distribution $\q \in \condindprop{\domx}{\domy}{\domz}$ we have that $\totalvardist{\p}{\q} > \eps$. 
We study the following hypothesis testing problem:
\medskip

\fbox{\parbox{6in}{
\smallskip
\noindent $\mathcal{T}(\ell_1, \ell_2, n, \eps)$: 
Given sample access to a distribution $p$ over $\domx \times \domy \times \domz$, 
with $|\domx| = \ell_1$, $|\domy| = \ell_2$, $|\domz| = n$, and $\eps >0$,
distinguish with probability at least $2/3$ between the following cases: 
\begin{itemize}
\item Completeness: $p \in \condindprop{\domx}{\domy}{\domz}$.
\item Soundness: $\dtv(p, \condindprop{\domx}{\domy}{\domz}) \geq \eps$.
\end{itemize}
\smallskip
}}

\bigskip
Even though the focus of this paper is on testing under the total variation distance
metric (or equivalently the $\lp[1]$-distance), we remark that our techniques
yield near-optimal algorithms under the mutual information metric as well.
The interested reader is referred to~\cref{sec:mutualinfo} for a short description of these
implications.

The property of conditional independence captures a number of other important properties
as a special case. For example, the $n=1$ case reduces to the property of independence 
over $[\ell_1] \times [\ell_2]$, whose testing sample complexity was resolved only recently~\cite{DK:16}.
Arguably the prototypical regime of conditional independence corresponds to the other extreme.
That is, the setting  
that the domains $\domx$, $\domy$ 
are binary (or, more generally, of small constant size), 
while the domain $\domz$ is large. This regime exactly captures the well-studied 
and practically relevant setting of $2 \times 2 \times n$ contingency tables 
(mentioned in the motivating example of the previous section). 
For the setting where $\domx$, $\domy$ are small, our tester and our sample complexity
lower bound match, up to constant factors. Specifically, we prove:

\begin{theorem} \label{thm:binary-informal}
There exists a computationally efficient tester for 
$\mathcal{T}(2, 2, n, \eps)$ with sample complexity 
  \[
      \bigO{\max\left(n^{1/2}/\eps^2,\min\mleft(n^{7/8}/\eps,n^{6/7}/\eps^{8/7}\mright)\right)}\;.
  \]
Moreover, this sample upper bound is tight, up to constant factors. 
That is, any tester for  $\mathcal{T}(2, 2, n, \eps)$  requires at least 
$\bigOmega{\max\left(n^{1/2}/\eps^2,\min\mleft(n^{7/8}/\eps,n^{6/7}/\eps^{8/7}\mright)\right)}$ samples.
\end{theorem}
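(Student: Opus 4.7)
The starting point is a clean identity for the total variation distance to conditional independence. For each $z\in [n]$, define $A_z \eqdef p(1,1,z)p(2,2,z) - p(1,2,z)p(2,1,z)$, the ``determinant'' of the joint distribution restricted to slice $z$. The closest conditionally independent distribution to $p$ preserves the marginal $p_Z$ and replaces each conditional slice by the product of its marginals; since the TV distance between a $2\times 2$ distribution and its closest product equals the absolute value of its determinant, one obtains
\[
\totalvardist{p}{\condindprop{\domx}{\domy}{\domz}} \;=\; \sum_{z=1}^{n} \frac{|A_z|}{p_Z(z)}\,.
\]
Hence completeness becomes $A_z\equiv 0$ and soundness becomes $\sum_z |A_z|/p_Z(z) \ge \eps$.

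The tester I would use is a symmetric, unbiased estimator $T = \sum_z w_z \widehat{A_z^2}$ of a weighted quadratic form, where $\widehat{A_z^2}$ is the canonical degree-$4$ $U$-statistic in the Poissonized cell counts at the four entries of slice $z$, produced by the general unbiased polynomial-estimator framework developed earlier in the paper. Under completeness $\shortexpect[T] = 0$, and Cauchy--Schwarz applied to $|A_z|/p_Z(z) = \sqrt{w_z}|A_z|\cdot 1/(\sqrt{w_z}\,p_Z(z))$ gives
\[
\eps^2 \;\le\; \Bigl(\sum_z w_z A_z^2\Bigr)\Bigl(\sum_z \tfrac{1}{w_z p_Z(z)^2}\Bigr),
\]
so $\shortexpect[T] \ge \eps^2 / \sum_z 1/(w_z p_Z(z)^2)$ in the soundness case. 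The weights $w_z$ will be chosen adaptively from a preliminary estimate of $p_Z$, in the spirit of the weighted flattening of \cite{DK:16}: heavy slices are subdivided into smaller bins so that the effective $Z$-marginal is near-uniform on a domain of size $\Theta(n)$, regularizing both the Cauchy--Schwarz penalty factor and the variance simultaneously.

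The heart of the argument, and the step I expect to be the main obstacle, is the variance analysis of $T$. The general variance bound decomposes $\var(T)$ into contributions indexed by partitions of the four cells touched by each monomial of $\widehat{A_z^2}$, each a polynomial in the $p(i,j,z)$'s scaled by inverse powers of the sample count. A naive estimate loses polynomial factors of $n$; tightness requires exploiting that (i) the determinant structure of $A_z$ kills many monomials, leaving one ``diagonal'' fourth-moment contribution and one ``cross-slice'' contribution, and (ii) after flattening each slice carries mass $O(1/n)$, so factors $p(i,j,z)^k$ cost $(1/n)^k$. Balancing the three dominant variance terms against $\shortexpect[T]^2$ yields the three regimes $n^{1/2}/\eps^2$, $n^{7/8}/\eps$, and $n^{6/7}/\eps^{8/7}$; the $\min$--$\max$ structure reflects which variance term dominates as $\eps$ scales with $n$.

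For the matching lower bound, the plan is to apply the mutual information method to a carefully designed pair of distribution families. The \yes family draws each slice independently as a product distribution with random, slightly perturbed marginals, together with random $p_Z(z)$; the \no family shares the same marginal law but injects an independent uniform sign $\pm\eps$ into the determinant of each slice. Because the two families agree on all first-order quantities, the mutual information contributed per sample decays like a fourth-power function of the perturbation, and tensorizing over $n$ slices reproduces the $n^{7/8}/\eps$ and $n^{6/7}/\eps^{8/7}$ terms; the $n^{1/2}/\eps^2$ term is captured by an additional, independent perturbation on $p_Z$ alone, via a standard reduction to uniformity testing on $[n]$.
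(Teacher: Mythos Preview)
Your upper-bound sketch has a structural gap. The key phenomenon driving the $n^{7/8}/\eps$ and $n^{6/7}/\eps^{8/7}$ terms is not a three-way balance of variance terms; it is that the degree-$4$ $U$-statistic for $A_z^2$ is only defined when at least four samples land in slice $z$, so the true expectation of your statistic is $\sum_z w_z A_z^2\cdot\Pr[\sigma_z\geq 4]$, not $\sum_z w_z A_z^2$. For light bins ($m\,p_Z(z)\ll 1$) this truncation costs a factor $\Theta((m p_Z(z))^4)$, and your Cauchy--Schwarz lower bound on $\sum_z w_z A_z^2$ simply does not transfer to the truncated sum. The paper handles this by weighting each bin by its realized sample count $\sigma_z$ (not by an adaptive estimate of $p_Z$) and splitting the expectation analysis into heavy bins ($\alpha_z>1$, Cauchy--Schwarz) and light bins ($\alpha_z\leq 1$, a Jensen-type bound on $\sum_z \delta_z^2\alpha_z^4$); the three regimes arise from comparing $\min(m\eps^2, m^4\eps^4/n^3)$ to $\sqrt{\min(n,m)}$. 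Also, flattening the $Z$-marginal is not what the paper does and would not help: each sub-bin inherits the same conditional $p_z$, so you gain nothing toward detecting dependence while strictly reducing the number of samples per bin (in particular, if $m<n$ every flattened bin becomes light and the four-sample event becomes rare everywhere).

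Your lower-bound construction also falls short of the target. An independent $\pm\eps$ sign in the determinant matches only the first moment of the cell probabilities between the \yes{} and \no{} families; with Poissonized sampling this makes the two families distinguishable already with \emph{two} samples per bin, which yields a weaker bound than claimed. The paper's construction is more delicate: it chooses five specific $2\times2$ matrices $(N_1,Y_1,N_3,Y_2,N_2)$ lying on an arithmetic progression $A+kB$ and weights them by binomial coefficients so that every degree-$3$ polynomial of the cell probabilities has the same expectation under the \yes{} mixture $(Y_1,Y_2)$ and the \no{} mixture $(N_1,N_2,N_3)$, via a fourth finite difference argument. This forces four samples per bin before any information leaks, which is what produces the eighth-power decay $(m\eps/n)^8$ and hence the $n^{7/8}/\eps$ and $n^{6/7}/\eps^{8/7}$ bounds. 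Finally, the $\sqrt{n}/\eps^2$ term cannot come from perturbing $p_Z$ alone, since every $Z$-marginal is consistent with conditional independence; the paper instead reduces from uniformity on $[4n]$ by mapping Paninski's hard instances to perturbations of the \emph{conditionals} $p_z$.
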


To the best of our knowledge, prior to our work, no $o(n)$ sample algorithm was known for this problem.
Our algorithm in this regime is simple: For every fixed value of $z \in [n]$, we consider the conditional distribution $\p_z$.
Note that $\p_z$ is a distribution over $\domx \times \domy$. We construct an unbiased estimator $\Phi$ 
of the squared $\lp[2]$-distance of any distribution on $\domx \times \domy$ from the product of its marginals. 
Our conditional independence tester uses this estimator in a black-box manner for each of the $p_z$'s.
In more detail, our tester computes a weighted linear combination of $\Phi(p_z)$, $z \in [n]$, 
and rejects if and only if this exceeds an appropriate threshold.

To obtain the required unbiased estimator of the squared $\lp[2]$-distance, 
we observe that this task is a special case of the following more general problem
of broader interest: For a distribution $\p = (p_1, \ldots, p_n)$ and a polynomial $Q: \mathbb{R}^n \rightarrow  \mathbb{R}$, 
obtain an unbiased estimator for the quantity $Q(p_1, \ldots, p_n)$.
We prove the following general result:

\begin{theorem} \label{thm:poly-informal}
For any degree-$d$ polynomial $Q: \mathbb{R}^n \rightarrow  \mathbb{R}$ and distribution $p$ over $[n]$, 
there exists a unique and explicit \emph{unbiased} estimator $U_N$ for $Q(\p)$ given $N\geq d$ samples. 
Moreover, this estimator is linear in $Q$ and its variance is at most
 \[
		\sum_{\substack{ \vect{s}\in\N^n \\1\leq \norm{\vect{s}}\leq d}}
		\left( \prod_{i=1}^n \p_i^{s_i} \right) 
		\left(\frac{\partial^{\norm{\vect{s}}}Q(\p)}{\partial X_1^{s_1} \dots \partial X_n^{s_n}}\right)^2
		\left( \frac{(N-\norm{\vect{s}})!}{N! \prod_{i=1}^n s_i!} \right) \;,
 \]
which itself can be further bounded as a function of $Q^+$, 
the degree-$d$ polynomial obtained by taking the absolute values 
of all the coefficients of $Q$, and its partial derivatives.
\end{theorem}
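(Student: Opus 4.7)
The plan is threefold: construct the unbiased estimator explicitly, monomial by monomial, via a sampling-without-replacement identity; establish uniqueness by a sufficient-statistic dimension count; and derive the variance bound via a decomposition of $U_N - Q(p)$ against the Taylor expansion of $Q$ at $p$.

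For existence, let $X_1,\dots,X_N$ be \iid samples from $p$ and set $N_i \eqdef |\{ j : X_j = i\}|$. For each multi-index $\vect{s}\in\N^n$ with $k \eqdef \normone{\vect{s}} \le d \le N$, define
\[
\hat{e}_{\vect{s}} \eqdef \frac{\prod_{i=1}^n N_i^{\underline{s_i}}}{N^{\underline{k}}},
\]
where $m^{\underline{j}} \eqdef m(m-1)\cdots(m-j+1)$ denotes the falling factorial. Counting ordered $k$-tuples of \emph{distinct} sample indices realizing the pattern ``$s_1$ copies of $1$, then $s_2$ copies of $2$, \dots'' gives $\expect{\hat{e}_{\vect{s}}}=\prod_i p_i^{s_i}$. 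Writing $Q(X)=\sum_{\vect{s}} c_{\vect{s}}\prod_i X_i^{s_i}$, the statistic $U_N \eqdef \sum_{\vect{s}} c_{\vect{s}}\hat{e}_{\vect{s}}$ is unbiased for $Q(p)$ by linearity and is manifestly linear in $Q$. For uniqueness, observe that $(N_1,\dots,N_n)$ is a sufficient statistic for $p$, so any unbiased estimator can be Rao--Blackwellized to a function of the counts without raising its variance. Such functions form a vector space of dimension equal to the number of weak compositions of $N$ into $n$ parts; the expectation map $f\mapsto \expect{f}$ sends this space into polynomials of degree $\le N$ in $p_1,\dots,p_n$ modulo $\sum_i p_i = 1$, which has the same dimension, and Step~1 produces a preimage for every monomial of degree $\le N$. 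The map is therefore a bijection, giving uniqueness of $U_N$.

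For the variance bound, apply the Taylor identity
\[
Q(X) \;=\; Q(p) + \sum_{\vect{s}\neq 0} \frac{1}{\prod_i s_i!}\,\frac{\partial^{\normone{\vect{s}}} Q(p)}{\partial X_1^{s_1}\cdots \partial X_n^{s_n}}\,(X-p)^{\vect{s}},
\]
a finite sum since $\deg Q = d$. Applying the linear ``take the unique unbiased estimator'' operator to both sides (and using that the unbiased estimator of a constant is itself) yields $U_N - Q(p) = \sum_{\vect{s}\neq 0} (\partial^{\vect{s}}Q(p)/\prod_i s_i!)\,\widehat{m}_{\vect{s}}$, where $\widehat{m}_{\vect{s}}$ is the unique unbiased estimator for the polynomial $(X-p)^{\vect{s}}$ in $X$ (with $p$ a parameter), so $\expect{\widehat{m}_{\vect{s}}}=0$ for $\vect{s}\neq 0$. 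Two computations remain: first, $\var(\widehat{m}_{\vect{s}}) \le \prod_i p_i^{s_i}\,\prod_i s_i!\,(N-\normone{\vect{s}})!/N!$, where $(N-\normone{\vect{s}})!/N! = 1/N^{\underline{\normone{\vect{s}}}}$ comes from the normalization in $\hat{e}_{\vect{s}}$ and $\prod_i s_i!$ from the conversion between ordered and unordered patterns; second, that the cross-covariances $\cov(\widehat{m}_{\vect{s}},\widehat{m}_{\vect{t}})$ for $\vect{s}\ne \vect{t}$, once weighted by the squared Taylor coefficients and summed, contribute nonpositively. Multiplying the diagonal variance by $(\partial^{\vect{s}}Q(p))^2/(\prod_i s_i!)^2$ and canceling one copy of $\prod_i s_i!$ reproduces the claimed summand $\prod_i p_i^{s_i}(\partial^{\vect{s}} Q(p))^2 (N-\normone{\vect{s}})!/(N!\prod_i s_i!)$.

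The main hurdle is establishing the nonpositivity of the cross-covariance contribution. This is where the choice of centering around the \emph{true} distribution $p$ and the use of falling-factorial moments (rather than raw moments) are essential: the $\widehat{m}_{\vect{s}}$ correspond, up to normalization, to the multivariate analogue of Krawtchouk polynomials, which are orthogonal under the multinomial law, yielding a clean ANOVA-style decomposition. The further bound in terms of $Q^+$ is then immediate: each summand above is nonnegative, and $|\partial^{\vect{s}}Q(p)|\le \partial^{\vect{s}}Q^+(p)$ for $p\ge 0$ by the triangle inequality applied coefficient-wise.
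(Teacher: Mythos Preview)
Your existence argument is essentially the paper's (falling factorials and binomial coefficients differ only by normalization), and your uniqueness proof via sufficiency plus dimension counting is a clean alternative to the paper's argument, which instead shows that any nonzero homogeneous polynomial vanishing on the simplex must vanish on all of $\R_+^n$ and hence be identically zero.

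The variance argument, however, has a genuine gap. Your plan hinges on the assertion that the cross-covariances $\cov(\widehat{m}_{\vect{s}},\widehat{m}_{\vect{t}})$, once weighted by the Taylor coefficients and summed, contribute nonpositively---which you attribute to a multivariate Krawtchouk orthogonality. But the $\widehat{m}_{\vect{s}}\eqdef U_N[(X-p)^{\vect{s}}]$ are \emph{not} orthogonal under the multinomial law, nor do the cross-terms sum nonpositively in general. Take $n=2$ and $Q(X_1,X_2)=X_1-X_2$: then $\widehat{m}_{(0,1)}=-\widehat{m}_{(1,0)}$, so $\cov(\widehat{m}_{(1,0)},\widehat{m}_{(0,1)})=-p_1p_2/N$, and the cross-term contribution to $\var(U_N Q)$ is $2\cdot(\partial_1 Q)(\partial_2 Q)\cdot\cov(\widehat{m}_{(1,0)},\widehat{m}_{(0,1)})=2\cdot 1\cdot(-1)\cdot(-p_1p_2/N)=+2p_1p_2/N>0$. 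The final bound still happens to hold here (because your diagonal upper bound $p_1/N+p_2/N=1/N$ exceeds the true variance $4p_1p_2/N$), but your stated mechanism is false. The ANOVA/Hoeffding decomposition you allude to is indexed by the \emph{number of samples} a component depends on, not by the multi-index $\vect{s}$; those projections are orthogonal, but they are not the $\widehat{m}_{\vect{s}}$.

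The paper sidesteps this entirely: rather than decomposing $U_N Q - Q(p)$ and bounding term by term, it computes $\expect{U_N X^{\vect{\alpha}}\cdot U_N X^{\vect{\beta}}}$ directly via the elementary identity $\binom{n}{a}\binom{n}{b}=\sum_s \binom{n}{a+b-s}\binom{a+b-s}{a-s,b-s,s}$, and the resulting exact formula for $\expect{(U_N Q)^2}$ turns out to be a sum over $\vect{s}$ of terms involving $(\partial^{\vect{s}}Q(p))^2$ only, with positive coefficients $\frac{(N-d)!^2}{N!(N-2d+\norm{\vect{s}})!\prod_i s_i!}$. The diagonal structure is thus a consequence of the combinatorial identity, not of any a priori orthogonality of centered monomial estimators; bounding those coefficients by $(N-\norm{\vect{s}})!/N!$ and noting that the $\vect{s}=0$ term is at most $Q(p)^2$ then yields the stated variance bound.
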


We note that~\cref{thm:poly-informal} can be appropriately extended 
to the setting where we are interested in estimating $Q(\p, \q)$, where $\p, \q$
are discrete distributions over $[n]$ and $Q$ is a real degree-$d$ polynomial on $2n$ variables.
In addition to being a crucial ingredient for our general conditional independence tester, 
we believe that~\cref{thm:poly-informal} is of independent interest.
Indeed, in a number of distribution testing problems, we need unbiased estimators for some
specific polynomial $Q$ of a distribution $\p$ (or a pair of distributions $\p, \q$).
For example, the $\lp[2]$-tester of~\cite{CDVV14} (which has been used as a primitive to obtain
a wide range of sample-optimal testers~\cite{DK:16}) is an unbiased estimator
for the squared $\lp[2]$-distance between two distributions $\p, \q$ over $[n]$.
While the description of such unbiased estimators may be relatively simple, 
their analyses are typically highly non-trivial. Specifically, obtaining tight bounds for the variance of such 
estimators has been a major technical hurdle in distribution testing. 
As an important contribution of this work, we develop a general theory 
providing tight variance bounds for \emph{all} such estimators.

The conditional independence tester~\cref{thm:binary-informal} 
straightforwardly extends to larger domains $\domx, \domy$, alas
its sample complexity becomes at least linear in the size of these sets. 
To obtain a sublinear tester for this general case,
we require a number of additional conceptual and technical ideas. 
Our main theorem for conditional independence testing for 
domain $[\ell_1] \times [\ell_2] \times [n]$ is the following:

\begin{theorem} \label{thm:general-algo-informal}
There exists a computationally efficient tester for $\mathcal{T}(\ell_1, \ell_2, n, \eps)$ with sample complexity 
\begin{equation} \label{eqn:crazy}
      \bigO{ \max\mleft(
                \min\mleft( \frac{n^{7/8}\ell_1^{1/4}\ell_2^{1/4}}{\eps},\frac{n^{6/7}\ell_1^{2/7}\ell_2^{2/7}}{\eps^{8/7}} \mright),
                 \frac{n^{3/4}\ell_1^{1/2}\ell_2^{1/2}}{\eps},
                \frac{n^{2/3}\ell_1^{2/3}\ell_2^{1/3}}{\eps^{4/3}},
                 \frac{n^{1/2}\ell_1^{1/2} \ell_2^{1/2}}{\eps^2}
            \mright)  } \;,
\end{equation}
where we assume without loss of generality that $\ell_1 \geq \ell_2$.
\end{theorem}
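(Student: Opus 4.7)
My plan is to generalize the binary-case tester of~\cref{thm:binary-informal} by combining the unbiased-estimator machinery of~\cref{thm:poly-informal} with a weighted adaptation of the flattening technique of~\cite{DK:16}. For each $z\in[n]$, let $\p_z$ be the conditional distribution of $(X,Y)$ given $Z=z$, and let $\pi_z=(\p_z)_X\otimes(\p_z)_Y$ be the product of its marginals. Conditional independence is the simultaneous statement $\p_z=\pi_z$ for all $z$, and, by decomposing $\lp[1]$-distance across strata and applying Cauchy--Schwarz after flattening, the $\eps$-far hypothesis translates into a lower bound on a \emph{weighted} sum of squared $\lp[2]$-distances of the form $\sum_z w_z\normtwo{\p_z-\pi_z}^2$, with weights $w_z$ depending on $p_Z(z)$ and on the flattening level applied at stratum $z$.

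The tester draws $N$ samples from $\p$, conditions on the observed multiplicities $N_z\sim\binomial{N}{p_Z(z)}$ on the $Z$-axis, and, conditionally i.i.d.\ on each stratum, feeds those samples into the unique unbiased estimator $U_z$ of $\normtwo{\p_z-\pi_z}^2$ provided by~\cref{thm:poly-informal} applied to the degree-$4$ polynomial $Q_z\colon\R^{\ell_1\ell_2}\to\R$ with $Q_z(\p_z)=\normtwo{\p_z-\pi_z}^2$. The overall statistic is $\Phi=\sum_z w_z U_z$, which is itself an unbiased estimator of $\sum_z w_z\normtwo{\p_z-\pi_z}^2$. To control the variance at strata $z$ that are too heavy in some coordinate, I would apply flattening in the $(X,Y)$-fibers at level $k_z$ chosen so that the maximum per-cell probability is $\bigO{1/(\ell_1\ell_2 k_z)}$; the weights $w_z$ then absorb this rescaling so that the overall expectation still lower-bounds a weighted $\lp[2]^2$ sum which dominates the $\lp[1]$ deviation from conditional independence.

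Completeness is immediate since $\expect{\Phi}=0$ whenever $\p\in\condindprop{\domx}{\domy}{\domz}$. For soundness, the $\eps$-far hypothesis gives $\expect{\Phi}=\bigOmega{\eps^2}$ (after the $\lp[1]\to\lp[2]^2$ reduction and up to the weighting/flattening factors), and the tester succeeds once the variance satisfies $\var(\Phi)\leq c\cdot\expect{\Phi}^2$. The bound of~\cref{thm:poly-informal}, applied to the polynomial underlying $\Phi$, decomposes into contributions indexed by derivative multi-indices $\vect s$ with $1\leq\norm{\vect s}\leq 4$, each of which is itself a homogeneous polynomial in the entries of $\p$ that I would bound by Cauchy--Schwarz combined with the flattening guarantee. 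Balancing the four resulting variance contributions against the soundness mean yields four of the five terms in~\eqref{eqn:crazy}, while the fifth (the $n^{6/7}\ell_1^{2/7}\ell_2^{2/7}/\eps^{8/7}$ branch) arises from a \emph{sparsified} variant of $Q_z$ restricted to the coordinates of $\pi_z$ of mass $\bigOmega{\eps/(\ell_1\ell_2)}$; this sparsification loses only a constant factor on the soundness side but saves an $\eps$-factor in variance, improving on the $n^{7/8}\ell_1^{1/4}\ell_2^{1/4}/\eps$ branch in the small-$\eps$ regime.

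The principal obstacle is the tight four-regime variance bound: the combinatorial coefficients $(N-\norm{\vect s})!/(N!\prod_i s_i!)$ appearing in~\cref{thm:poly-informal} interact delicately with the multinomial concentration of the $(N_z)_z$ and with the flattening multipliers $k_z$, so the weights $w_z$ and the flattening thresholds must be chosen \emph{jointly} across strata rather than locally, to avoid spurious logarithmic losses. Granted~\cref{thm:poly-informal}, however, this reduces to a lengthy but mechanical parameter-balancing argument in the spirit of~\cite{CDVV14, DK:16}, rather than an ad hoc moment computation from scratch.
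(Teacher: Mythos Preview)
Your high-level architecture is right---bin-wise unbiased $\lp[2]^2$ estimators, flattening, weighted sum, Chebyshev---but two concrete steps are misattributed, and one of them would not produce the stated bound.

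First, the four terms in~\eqref{eqn:crazy} do \emph{not} come from the four derivative orders $\norm{\vect s}\in\{1,2,3,4\}$ in the variance bound of~\cref{thm:poly-informal}. In the paper the variance of the full statistic $A$ is controlled uniformly as $\var[A\mid\sigma,T]=O(\min(n,m)+\sqrt{\min(n,m)}\,\shortexpect[A]+\shortexpect[A]^{3/2})$, and the multi-term sample complexity arises entirely from the \emph{soundness-side lower bound} on $\shortexpect[A]$. That lower bound is obtained by partitioning $\domz$ into four classes according to how $\alpha_z\eqdef m\p_Z(z)$ compares to $1$, $\ell_2$, and $\ell_1$: light bins ($\alpha_z\lesssim 1$), and three heavy regimes where the flattening saturates on neither, one, or both of the $X,Y$ marginals. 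Each class gives its own Jensen/Cauchy--Schwarz inequality relating $\sum_z\eps_z^{\prime2}\min(\alpha_z\beta_z,\alpha_z^4)$ to $\sum_z\eps'_z\alpha_z\geq m\eps'$, and each class produces one of the terms in the max. So your proposed variance-side case analysis is aimed at the wrong part of the argument.

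Second, and more seriously, your ``sparsified variant of $Q_z$'' is not how the $n^{6/7}\ell_1^{2/7}\ell_2^{2/7}/\eps^{8/7}$ term arises, and I do not see how a coordinate-restriction step would recover it. Both the $n^{7/8}$ and $n^{6/7}$ branches come from the \emph{same} light-bin inequality $\sum_{z\in\domz_L}\eps_z^{\prime2}\alpha_z^4\gtrsim m^4\eps^4/(n^3\ell_1\ell_2)$; the $\min$ between them is simply the two ways of solving $m^4\eps^4/(n^3\ell_1\ell_2)\gg\sqrt{\min(n,m)}$ depending on whether $m\leq n$ or $m>n$. No second estimator is needed. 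Relatedly, the flattening must be done on the $X$- and $Y$-marginals \emph{separately}, using $\min(\sigma_z,\ell_1)$ and $\min(\sigma_z,\ell_2)$ samples respectively, which is what produces the asymmetric weight $\omega_z=\sqrt{\min(\sigma_z,\ell_1)\min(\sigma_z,\ell_2)}$ and ultimately the asymmetric $\ell_1^{2/3}\ell_2^{1/3}$ term; your description ``max per-cell probability $O(1/(\ell_1\ell_2 k_z))$'' suggests joint flattening, which loses this structure.
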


The expression of the sample complexity in~\cref{thm:general-algo-informal} may seem somewhat unwieldy. 
In an attempt to interpret this bound, we consider several important special cases of interest:
\begin{itemize}
\item For $\ell_1=\ell_2=O(1)$, \eqref{eqn:crazy} reduces to the binary case for $X,Y$, 
  recovering the tight bound of~\cref{thm:binary-informal}.
\item For $n=1$, our problem reduces to the task of {\em testing 
independence} of a distribution over $[\ell_1] \times [\ell_2]$, which has been extensively studied
~\cite{BFFKRW:01,LRR11, ADK15, DK:16}. In this case, \eqref{eqn:crazy} recovers 
the optimal sample complexity of independence testing, 
i.e., $\bigTheta{\max\mleft( \ell_1^{2/3}\ell_2^{1/3}/\eps^{4/3},\sqrt{\ell_1\ell_2}/\eps^2 \mright)}$~\cite{DK:16}. 
\item For $\ell_1=\ell_2=n$ (and $\eps=\bigOmega{1}$), the sample complexity of \eqref{eqn:crazy} becomes $O(n^{7/4})$. 
In~\cref{thm:lb-nnn-informal} below, we show that this bound is optimal as well.
\end{itemize}

We conclude with the aforementioned tight sample lower bound for constant values of $\eps$, in the setting where all three coordinates
are of approximately the same cardinality:
\begin{theorem} \label{thm:lb-nnn-informal}
Any tester for $\mathcal{T}(n, n, n, 1/20)$ requires $\Omega(n^{7/4})$ samples.
\end{theorem}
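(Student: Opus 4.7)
The plan is to establish the lower bound via the mutual information method. I would exhibit two distributions $\mathcal{Y}$ and $\mathcal{N}$ on joint probability distributions over $[n]\times[n]\times[n]$, with $\mathcal{Y}$ supported on $\condindprop{[n]}{[n]}{[n]}$ and $\mathcal{N}$ supported (with probability $1-o(1)$) on distributions at total variation distance at least $1/20$ from the conditional independence property, and argue that the mixture laws of $k$ Poissonized samples under $\mathcal{Y}$ and $\mathcal{N}$ have total variation distance $o(1)$ whenever $k=o(n^{7/4})$. By Le Cam's two-point lemma, this rules out every tester of constant success probability at this sample complexity.

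For the construction, both priors fix a uniform marginal on $Z$ and draw, independently for each $z\in[n]$, a random conditional distribution $p_z$ on $[n]\times[n]$. Each $p_z$ would be a perturbation of the uniform distribution by a random rank-$1$ component (an outer product of $\pm 1$ vectors with zero sums, which preserves the marginals) common to both $\mathcal{Y}$ and $\mathcal{N}$, plus an additional structured random $\pm 1$ matrix perturbation $\tau_z$ with controlled row and column sums, present only under $\mathcal{N}$. The magnitude of $\tau_z$ would be tuned to a constant so that, by concentration, with high probability every no-instance is at $\ell_1$-distance $\bigOmega{1}$ from the set of product distributions on each slice, and hence $1/20$-far from conditional independence overall. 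Under Poissonization, the chi-squared between the two mixtures reduces via the identity $\chi^2 + 1 = \shortexpect_{p,p'}\bigl[\exp\bigl(k\sum_c (p_c-\mu_c)(p'_c-\mu_c)/\mu_c\bigr)\bigr]$ to an expectation of the exponential of a bilinear form in the random perturbations, which may then be expanded in moments; by independence and symmetry, all odd-order terms vanish.

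The main obstacle is to obtain the correct exponent $7/4$. A naive construction with fully random $\pm 1$ perturbations within each slice yields only $\bigOmega{n^{3/2}}$, matching the generic $\ell_1$-testing lower bound on a domain of size $n^3$. The improvement to $\bigOmega{n^{7/4}}$ requires that the shared random rank-$1$ component in $\mathcal{Y}$ and $\mathcal{N}$ be aligned so that the dominant contribution to the chi-squared between the two mixtures cancels, and distinguishing information arises only from the higher-rank residual $\tau_z$. Carrying out the resulting higher-moment calculation on the $\tau_z$ ensemble, with the factors of $n$ from the number of slices and $n^2$ cells per slice combining correctly, should produce a chi-squared bound that becomes $o(1)$ precisely when $k=o(n^{7/4})$, giving the advertised threshold. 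Designing the $\tau_z$ ensemble so that these cancellations hold rigorously while keeping each $p_z$ genuinely far from product is the primary technical challenge, and is where I would expect a ``novel construction of hard instances'' (in the language of the introduction) to be indispensable.
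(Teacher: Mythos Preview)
Your proposal correctly identifies the high-level framework (two priors plus a mutual-information/chi-squared bound), but it is not a proof: you explicitly defer the one decisive step --- the construction that lifts the exponent from $3/2$ to $7/4$ --- calling it ``the primary technical challenge'' and something you ``would expect'' to require a novel idea. What remains is essentially the naive $\bigOmega{n^{3/2}}$ argument together with a hope that some unspecified cancellation will improve it.

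There is also a concrete issue with the yes-instance as you describe it. Perturbing the uniform distribution on $[n]\times[n]$ by $\epsilon\, u v^{\top}$ with zero-sum $u,v$ does preserve the marginals, but the result is generically a rank-$2$ matrix (the uniform distribution is itself rank $1$, and $u v^{\top}$ is linearly independent of $\mathbf{1}\mathbf{1}^{\top}$), hence \emph{not} a product distribution. A distribution on $[n]^2$ with uniform marginals is a product only if it is uniform. So your $\mathcal{Y}$ is not supported on $\condindprop{[n]}{[n]}{[n]}$. This can be repaired by taking each $p_z$ to be a genuine product $a_z\otimes b_z$, but then the deviation from uniform is no longer rank $1$ and the cancellation heuristic needs to be re-examined from scratch.

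The paper's argument is structurally quite different, both in the hard instance and in the analysis. Rather than perturbing uniform, each conditional marginal $\p_{z,X}$, $\p_{z,Y}$ is given $n^{3/4}$ randomly placed ``heavy'' bins (mass $\tfrac{1}{2}n^{-3/4}$ each), with the remaining mass spread evenly over the light bins; crucially, $X$ and $Y$ are conditionally independent on $Z$ in \emph{both} ensembles, so $(X,Y,Z)$ has the same law in the yes- and no-cases. The dependence in the no-instance is hidden in an auxiliary bit $W$ appended to $X$: in $\mathcal{D}_0$ the bit is independent and uniform, while in $\mathcal{D}_1$ it is a fixed random function of $(X,Y,Z)$ whenever both $X$ and $Y$ land in their light sets, and uniform otherwise. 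Consequently all distinguishing information comes from the $W$-values of \emph{collisions} (pairs of samples sharing $(X,Y,Z)$), and moreover only from collisions where both coordinates are light. A short conditional-probability argument shows each collision is light with probability $O(n^{-1/2})$; with $t=O(s/n)$ samples per slice and $\expect{C^2}=O(t^2/n^{3/2})$ colliding pairs, the per-slice mutual information is $O(t^2/n^{5/2})$, which summed over $n$ slices is $o(1)$ precisely when $s=o(n^{7/4})$. No chi-squared moment expansion is used; the heavy/light balance at $n^{3/4}$ is doing the work your ``alignment'' was supposed to do.
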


\subsection{Our Techniques} \label{sec:techniques}
\subsubsection{Conditional Independence Tester for Binary \texorpdfstring{$\domx, \domy$}{X, Y}} \label{ssec:binary-techniques}

In the case where $\domx$ and $\domy$ are binary, for each bin $z \in \domz$ we 
will attempt to estimate the squared $\lp[2]$-distance of the corresponding 
conditional distribution and the product of its conditional marginals.
In particular, if $\domx = \domy = \{0, 1\}$ the square of $\p_{00}\p_{11} -
\p_{01}\p_{10}$, where $\p_{ij}$ is the probability that $X=i$ and $Y=j$, for $Z=z$, is
proportional to this difference. Since this square is a degree-$4$
polynomial in the samples, there is an unbiased estimator of this
quantity that can be computed for any value $z \in \domz$ from which we have at
least $4$ samples. Furthermore, for values of $z \in \domz$ for which we have 
more than $4$ samples, the additional samples can be used to reduce the error
of this estimator. The final algorithm computes 
a weighted linear combination of these estimators (weighted
so that the more accurate estimators from heavier bins are given more
weight) and compares it to an appropriate threshold. 
The correctness of this estimator requires a rather subtle
analysis. Recall that there are three different regimes of $\eps$ versus $n$
in the optimal sample complexity and the tester achieves this bound
without a case analysis. As usual, we require a bound
on the variance of our estimator and a lower bound
on the expectation in the soundness case. 

On the one hand, a naive bound on the variance for our estimator for
an individual bin turns out to be insufficient for our analysis. In
particular, let $\p$ be a discrete probability distribution and $Q(\p)$ a polynomial in
the individual probabilities of $\p$. Given $m \geq \deg(Q)$ independent
samples from $\p$, it is easy to see that there is a unique symmetric,
unbiased estimator for $Q(\p)$, which we call $U_m Q$. Our analysis will
rely on obtaining tight bounds for the variance of $U_m Q$. It is not hard
to show that this variance scales as $O(1/m)$, but this bound turns out to 
be insufficient for our purposes. In order to refine this estimate, we
show that $\var(U_m Q) = R(\p)/m + O(1/m^2)$, 
for some polynomial $R$ for which we devise a general formula. 
From this point on, we can show that for our polynomial $Q$ (or in
general any $Q$ which is the square of a lower degree polynomial)
$\var(U_m Q) = O(Q(\p)/m + 1/m^2)$. This gives a much sharper
estimate on the variance of our estimator, except in cases where the
mean is large enough that the extra precision is not necessary.

Another technical piece of our analysis is relating the mean of our
estimator to the total variation distance of our distribution from being
conditionally independent. In particular, our estimator is roughly the sum 
(over the $\domz$-bins with enough samples) of the
squared $\lp[2]$ distance that the conditional distribution is from being independent.
When much of the distance from conditionally independence comes from
relatively heavy bins, this relation is a more or less standard $\lp[1]/\lp[2]$
inequality. However, when the discrepancy is concentrated on very
light bins, the effectiveness of our tester is bounded by the number
of these bins which obtain at least four samples, and a somewhat
different analysis is required. In fact, out of the different cases
in the performance of our algorithm, one of the boundaries is
determined by a transition between the hard cases involving
discrepancies supported on light bins to ones where the discrepancy is
supported on heavy bins.

If the variables $X$ and $Y$ are no longer binary, our estimates for the
discrepancy of an individual bin must be updated. In particular, we
similarly use an unbiased estimator of the $\lp[2]$ distance between the
conditional distribution and the product of its conditional marginals. 
We note however that variance of this estimator is large 
if the marginal distributions have large $\lp[2]$ norms.
Therefore, in bins for which we have a large number of samples, we can
employ an idea from~\cite{DK:16} and use some of our samples
to artificially break up the heavier bins, thus flattening these
distributions. We elaborate on this case, and the required ingredients it entails, in the next subsection.

\subsubsection{General Conditional Independence Tester}  \label{ssec:general-techniques}

Assuming that we take at least four samples from any bin $z\in\domz$, 
we can compute an unbiased estimator for the squared $\lp[2]$ distance between $\p_z$,
the conditional distribution, and $\q_z$ the product of its conditional 
marginals. It is easy to see that this expectation is at least 
$\eps_z^2/(\abs{\domx}\abs{\domy})$, where $\eps_z$ is the $\lp[1]$ distance between the
conditional distribution and the closest distribution with independent
$X$ and $Y$ coordinates. At a high level, our algorithm takes a linear
combination of these bin-wise estimators (over all bins from which we
got at least $4$ samples), and compares it to an appropriate threshold.
There is a number of key ideas that are needed so that this approach 
gives us the right sample complexity.

Firstly, we use the idea of \emph{flattening}, introduced in~\cite{DK:16}. The
idea here is that the variance of the $\lp[2]$ estimator is larger if the $\lp[2]$
norms of $\p$ and $\q$ are large. However, we can reduce this variance by
artificially breaking up the heavy bins. In particular, if we have $m$
samples from a discrete distribution of support size $n$, we can artificially add $m$
bins and reduce the $\lp[2]$ norm of the distribution (in expectation) to at
most $O(1/\sqrt{m})$. We note that it is usually not a good idea to
employ this operation for $m \gg n$, as it will substantially increase the
number of bins. Nor do we want to use all of our samples for flattening
(since we need to use some for the actual tester). Trading off these considerations, 
using $\min(m/2,n)$ of our samples to flatten is a reasonable choice. 
We also remark that
instead of thinking of $\p$ and $\q$ as distributions over $\abs{\domx}\abs{\domy}$ bins, 
we exploit the fact that $\q$ is a two-dimensional product distribution over $\abs{\domx} \times \abs{\domy}$.
By flattening these marginal distributions
independently, we can obtain substantially better variance upper bounds.

Secondly, we need to use appropriate weights for our bin-wise
estimator. To begin with, one might imagine that the weight we should use 
for the estimator of a bin $z \in \domz$  
should be proportional to the probability mass of that bin. 
This is a natural choice because heavier bins will contribute more 
to the final $\lp[1]$ error, and thus, we
will want to consider their effects more strongly. The probability mass of a bin
is approximately proportional to the number of samples obtained from that bin.
Therefore, we might want to weight each bin by the number of
samples drawn from it. However, there is another important effect of
having more samples in a given bin. In particular, having more samples
from a bin allows us to do more flattening of that bin, which
decreases the variance of the corresponding bin-wise estimator. This means that we
will want to assign more weight to these bins based on how much
flattening is being done, as they will give us more accurate
information about the behavior of that bin.

Finally, we need to analyze our algorithm. 
Let $m$ be the number of samples we take and $n$ be the domain of $Z$.
If the bin weights are chosen appropriately, 
we show that the final estimator $A$ has variance 
$O(\min(n,m)+\sqrt{\min(n,m)}\shortexpect[A]+\shortexpect[A]^{3/2})$, 
with high probability over the number of samples falling in each bin as well as the flattening we perform for each bin. 
(This high-probability statement, in turn, is enough for us to apply Chebyshev's inequality to our final estimator in the end.) 
Furthermore, in the completeness case, we have that $\shortexpect[A]=0$. 
In order to be able to distinguish between completeness and soundness, 
we need it to be the case that for all distributions $\eps$-far from conditional independence
it holds that $\shortexpect[A] \gg \sqrt{\min(n,m)}$. 
We know that if we are $\eps$-far from conditional
independence, we must have that $\sum_z \eps_z w_z \gg \eps$, 
where $w_z$ is the probability that $Z=z$. 
In order to take advantage of this fact, we will need to separate the $Z$-bins 
into four categories based on the size of the $w_z$.
Indeed, if we are far from conditional independence, 
then for at least one of these cases 
the sum of $\eps_z w_z$ over bins of that type only will be $\gg \eps$. 
Each of these four cases will require a slightly different analysis:
\begin{itemize}
\item Case 1: $w_z < 1/m$. In this case, the expected number of samples from
bin $z$ is small. In particular, the probability of even seeing $4$
samples from the bin might will be small. Here, the expectation is
dominated by the probability that we see enough samples from the bin.

\item Case 2: $1/m < w_z < \abs{\domx}/m$: In this case, we are likely to get our $4$ samples
from the bin, but probably will get fewer than $\abs{\domx}$. This means that
our flattening will not saturate either of the marginal distributions
and we can reduce the squared $\lp[2]$ norm of $\q$ by a full factor of $m_z$
(where $m_z$ is the number of samples from this bin).

\item Case 3: $\abs{\domx}/m < w_z < \abs{\domy}/m$. In this case, we are likely to saturate our
flattening over the $X$-marginal but not the $Y$-marginal. Thus, our
flattening only decreases the $\lp[2]$ norm of the conditional distribution 
on that bin by a factor of $\sqrt{\abs{\domx} m_z}$.

\item Case 4: $\abs{\domy}/m < w_z$: Finally, in this case we saturate both the $X$- and $Y$-marginals, 
so our flattening decreases the $\lp[2]$ norm by a factor of $\sqrt{\abs{\domx} \abs{\domy}}$.
\end{itemize} 
Within each sub-case, the expectation of $A$ is a polynomial in 
$m, \abs{\domx}, \abs{\domy}$ multiplied by the sum over $z\in\domz$ of some polynomial 
in $\eps_z$ and $w_z$. We need to bound this from below 
given that $\sum_z \eps_z w_z \gg \eps$, and then set $m$
large enough so that this lower bound is more than $\sqrt{\min(n,m)}$. 
We note that only in Case 1 is the case where $m < n$ relevant. Thus, our final bound
will be a maximum over the $4$ cases of the $m$ required in the
appropriate case.

\subsubsection{Sample Complexity Lower Bound Construction for Binary \texorpdfstring{$\domx, \domy$}{X, Y}} \label{ssec:lb-techniques}

We begin by reviewing the lower bound methodology 
we follow: 
In this methodology, a lower bound is shown by adversarially constructing two 
distributions over pseudo-distributions. Specifically, we construct 
a pair of ensembles $\mathcal{D}$ and $\mathcal{D}'$ of pairs of 
nearly-normalized pseudo-distributions such that the following holds: (1) Pseudo-distributions 
drawn from $\mathcal{D}$ satisfy the desired property and 
pseudo-distribution drawn from $\mathcal{D}'$ are $\eps$-far from satisfying the property with high probability, 
and (2) $\poisson{s}$ samples are insufficient to reliably determine 
from which ensemble the distribution was taken from, unless $s$ is large enough. 

To formally prove our lower bounds, we will use the mutual information method, as in~\cite{DK:16}. 
In this section, we provide an intuitive description of our sample complexity
lower bound for testing conditional independence, when $\domx = \domy = \{0, 1\}$
and $\domz = [n]$. (Our lower bound for the regime  $\domx = \domy = \domz  = [n]$ 
is proved using the same methodology, but relies on a different construction.) 
We construct ensembles $\mathcal{D}$ and $\mathcal{D}'$~---~where draws from $\mathcal{D}$ 
are conditionally independent and draws from $\mathcal{D}'$ are $\eps$-far from conditionally independent 
with high probability~---~and show that $s$ samples from a distribution on
$(X,Y,Z)$ are insufficient to reliably distinguish whether the
distribution came from $\mathcal{D}$ or $\mathcal{D}'$, when $s$ is small. 
We define $\mathcal{D}$ and $\mathcal{D}'$ by treating each bin $z \in [n]$ of $Z$ independently. 
In particular, for each possible value $z \in [n]$ for $Z$, we proceed as follows:
(1) With probability $\min(s/n,1/2)$, we assign the point $Z = z$ probability
mass $\max(1/s,1/n)$ and let the conditional distribution on $(X,Y)$ be uniform. 
Since the distribution is conditionally independent on these bins 
and identical in both ensembles, these ``heavy'' bins will create ``noise'' to confuse an estimator. 
(2) With probability $1-\min(s/n,1/2)$, we set the probability that $Z=z$ to be $\eps/n$, 
and let the conditional distribution on $(X,Y)$ be taken from either $C$ or $C'$, for
some specific ensembles $C$ and $C'$. In particular, we pick $C$ and $C'$ 
so that a draw from $C$ is independent and a draw from $C'$ is far from independent. 
These bins provide the useful information that allows us to 
distinguish between the two ensembles $\mathcal{D}$ and $\mathcal{D}'$. 
\emph{The crucial property is that we can achieve the above while guaranteeing that any third moment from $C$ 
agrees with the corresponding third moment from $C'$.} This guarantee implies 
that if we draw $3$ (or fewer) samples of $(X,Y)$ from some bin $Z=z$, 
then the distribution on triples of $(X,Y)$ will be identical if the conditional 
was taken from $C$ or if it was taken from $C'$. That is, all information about
whether our distribution came from $\mathcal{D}$ or $\mathcal{D}'$ will come from bins of type
(2) (for which we have at least $4$ samples) of which there will be
approximately $n(s \eps/n)^4$. On the other hand, there will be about
$\min(s,n)$ bins of type (1) with $4$ samples in random configuration
adding ``noise''. Thus, we will not be able to distinguish reliably unless
$n(s \eps/n)^4 \gg \sqrt{\min(s,n)}$, as otherwise the noise'
due to the heavy bins will drown out the signal of the light ones.

To define $C$ and $C'$, we find appropriate vectors $p, q$ over $\{0,1\}^2$ 
so that $p+q$ and $p+3q$ each are distributions with independent 
coordinates, but $p, p+2q, p+4q$ are not. We let $C$ 
return $p+q$ and $p+3q$ each with probability
$1/2$, and let $C'$ return $p, p+2q$ or $p+4q$ with probability 
$1/8, 3/4, 1/8$ respectively. If we wish to find the probability that $3$ samples from a
distribution $r$ come in some particular pattern, we get $f(r)$ for some
degree-$3$ polynomial $f$. If we want the difference in these
probabilities for $r$ a random draw from $C$ 
and a random draw from $C'$, 
we get $f(p+q)/2+f(p+3q)/2 - f(p)/8 - f(p+2q)(3/4) - f(p+4q)/8$. 
We note that this is proportional to the fourth finite difference of a
degree-$3$ polynomial, and is thus $0$. Therefore, any combination of at
most $3$ samples are equally likely to show up for some $Z$-bin from $\mathcal{D}$ as
from $\mathcal{D}'$.

To rigorously analyze the above sketched construction, 
we consider drawing $\poisson{s}$ samples from a random distribution 
from either $\mathcal{D}$ or $\mathcal{D}'$, and bound the mutual information between 
the set of samples and the ensemble they came from. 
Since the samples from each bin are conditionally independent on
the ensemble, this is at most $n$ times the mutual information coming
from a single bin. By the above, the probabilities of seeing any triple
of samples are the same for either $\mathcal{D}$ or $\mathcal{D}'$ and thus contribute nothing
to the mutual information. For sets of $4$ or more samples, we note that
the difference in probabilities comes only from the case where $4$
samples are drawn from a bin of type (2), which happens with probability
at most $O(s \eps/n)^4$. However, this is counterbalanced by the fact
that these sample patterns are seen with much higher frequency from
bins of type (1) (as they have larger overall mass). Thus, the mutual
information for a combination including $m \geq 4$ samples will be 
$(O(s \eps/n)^m)^2 / \min(s/n,1/2) \cdot \Omega(1)^m$. 
The contribution from $m>4$ can be shown to be negligible, 
thus the total mutual information summed over all bins is
$O(\min(s,n) \cdot (s \eps/n)^8)$. This must be $\Omega(1)$ in order to reliably
distinguish, and this proves our lower bound.

\subsection{Organization} \label{sec:organ}
After setting up the required preliminaries in~\cref{sec:prelims}, 
we give our testing algorithm for the case of constant $\abs{\domx}, \abs{\domy}$ in~\cref{sec:alg-basic}. 
In~\cref{sec:polynomial}, we develop our theory for polynomial estimation.
\cref{sec:alg:flattened} leverages this theory, along with several other ideas,
to obtain our general testing algorithm for conditional independence. 
\cref{sec:lb:appendix} gives our information-theoretic lower bound for the setting of binary $\abs{\domx}, \abs{\domy}$.
In~\cref{sec:lb:nnn}, we give an information-theoretic lower bound matching the sample complexity of our algorithm 
for the regime where $\abs{\domx}=\abs{\domy}=\abs{\domz}$. 
In~\cref{sec:mutualinfo}, we discuss the implications of our results for 
conditional independence testing with regard to the conditional mutual information. 
 
\section{Preliminaries and Basic Facts} \label{sec:prelims}

We begin with some standard notation and definitions 
that we shall use throughout the paper. 
For $m \in \N$, we write $[m]$ for the set $\{1,\dots,m\}$, and $\log$ for the binary logarithm.

\paragraph{Distributions and Metrics} 
A probability distribution over discrete domain $\Omega$ 
is a function $\p\colon\Omega\to[0,1]$ such that $\normone{\p}\eqdef \sum_{\omega\in\Omega}\p(\omega)=1$.
Without the requirement that the total mass be one, 
$\p$ is said to be a \emph{pseudo-distribution}. 
We denote by $\distribs{\Omega}$ the set of all probability distributions over domain $\Omega$.
For two probability distributions $\p,\q\in\distribs{\Omega}$, 
their \emph{total variation distance} (or statistical distance) is defined as 
$
    \dtv(\p,\q) \eqdef \sup_{S\subseteq\Omega} (\p(S)-\q(S)) = \frac{1}{2}\sum_{\omega\in\Omega} \abs{\p(\omega)-\q(\omega)},
$
i.e., $\dtv(\p,\q) = \frac{1}{2}\normone{\p-\q}$,
and their $\lp[2]$ distance is the distance $\normtwo{\p-\q}$ 
between their probability mass functions. Given a subset $\mathcal{P}\subseteq \distribs{\Omega}$ of distributions, 
the \emph{distance from $\p$ to $\mathcal{P}$} is then defined as 
$\dtv(\p,\mathcal{P})\eqdef \inf_{\q\in\mathcal{P}} \dtv(\p,\q)$. 
If $\dtv(\p,\mathcal{P}) > \eps$, we say that $\p$ is \emph{$\eps$-far} 
from $\mathcal{P}$; otherwise, it is \emph{$\eps$-close}. 
For a distribution $\p$ we write $X\sim \p$ to denote 
that the random variable $X$ is distributed according to $\p$. 
Finally, for $\p\in\distribs{\Omega_1},\q\in\distribs{\Omega_2}$, 
we let $\p\otimes\q\in\distribs{\Omega_1\times\Omega_2}$ 
be the product distribution with marginals $\p$ and $\q$.

\paragraph*{Property Testing}
We work in the standard setting of distribution testing: 
a \emph{testing algorithm for a property $\mathcal{P}\subseteq\distribs{\Omega}$} 
is an algorithm which, granted access to independent samples from an 
unknown distribution $\p\in\distribs{\Omega}$ as well as distance 
parameter $\eps\in(0,1]$, outputs either \accept or \reject, with the 
following guarantees:
\begin{itemize}
  \item If $\p\in\mathcal{P}$, then it outputs \accept with probability at least $2/3$.
  \item If $\dtv(\p,\mathcal{P})>\eps$, then it outputs \reject with probability at least $2/3$.
\end{itemize}
The two measures of interest here are the \emph{sample complexity} 
of the algorithm (i.e., the number of samples it draws from the underlying distribution)
and its running time.

\subsection{Conditional Independence} 
We record here a number of notations definitions regarding conditional independence.
Let $X, Y, Z$ be random variables over discrete domains $\domx,\domy,\domz$ respectively.
Given samples from the joint distribution of $(X, Y, Z)$, we want to determine 
whether $X$ and $Y$ are {\em conditionally independent given $Z$}, denoted by $\condindrv{X}{Y}{Z}$,
versus $\eps$-far in total variation distance 
from every distribution of random variables $(X',Y',Z')$ such that $\condindrv{X'}{Y'}{Z'}$.
For discrete sets $\domx,\domy,\domz$, we will denote by $\condindprop{\domx}{\domy}{\domz}$ 
the property of conditional independence, i.e., 
$ 
\condindprop{\domx}{\domy}{\domz} \eqdef \setOfSuchThat{ \p\in\distribs{\domx\times\domy\times\domz} }{ (X,Y,Z)\sim \p \text{ satisfies } \condindrv{X}{Y}{Z} }
$.
We say that a distribution $\p \in\distribs{\domx\times\domy\times\domz}$ is $\eps$-far from $\condindprop{\domx}{\domy}{\domz}$, 
if for every distribution $\q \in \condindprop{\domx}{\domy}{\domz}$ we have that $\totalvardist{\p}{\q} > \eps$. 
Fix a distribution $\q \in \condindprop{\domx}{\domy}{\domz}$ of minimum total variation distance to $p$. 
Then the marginals of $\q$ on each of the three coordinates may have different distributions.
We will also consider testing conditional independence with respect to a different metric, 
namely the~\emph{conditional mutual information}~\cite{Dobrushin59,Wyner78}. 
For three random variables $X,Y,Z$ as above, the conditional mutual information of $X$ and $Y$ 
with respect to $Z$ is defined as
$\condmutualinfo{X}{Y}{Z} \eqdef \shortexpect_{Z}[ (\mutualinfo{X}{Y} \mid Z ) ]$,
i.e., as the expected (with respect to $Z$) K-L divergence 
between the distributions of $(X,Y)\mid Z$ and the product of the distributions 
of $(X\mid Z)$ and $(Y\mid Z)$. In this variant of the problem (considered in~\cref{sec:mutualinfo}), 
we will want to distinguish $\condmutualinfo{X}{Y}{Z}=0$ from $\condmutualinfo{X}{Y}{Z}\geq \eps$.

\paragraph{Notation.}
Let $\p\in\distribs{ \domx\times\domy\times\domz }$. 
For $z\in\domz$, we will denote by $\p_z \in \distribs{ \domx\times\domy}$ the distribution
defined by $\p_z(i,j) \eqdef \probaDistrOf{(X,Y,Z)\sim p}{ X=i, Y=j \mid Z = z}$
and by $\p_Z\in\distribs{\domz}$ the distribution $\p_Z(z) \eqdef \probaDistrOf{(X,Y,Z)\sim p}{ Z = z}$.
By definition, for any $\p\in\distribs{ \domx\times\domy\times\domz }$, we have that
$p(i, j, z) = \p_Z(z) \cdot \p_z(i,j)$.
For $z\in\domz$, we will denote by $\p_{z,X}\in \distribs{ \domx}$ the distribution
$\p_{z,X}(i) =  \probaDistrOf{(X,Y,Z)\sim p}{ X=i \mid Z = z}$ and 
$\p_{z,Y}\in \distribs{ \domy}$ the distribution
$\p_{z,Y}(j) =  \probaDistrOf{(X,Y,Z)\sim p}{ Y=j \mid Z = z}$.

\medskip

We can now define the product distribution of the conditional marginals:
\begin{definition}[Product of Conditional Marginals]\label{def:product:conditional:marginals}
Let $\p\in\distribs{ \domx\times\domy\times\domz }$. For $z\in\domz$, we define the 
\emph{product of conditional marginals of $\p$ given $Z = z$} to be the product distribution
$\q_z\in\distribs{ \domx\times\domy}$ defined by $\q_z \eqdef \p_{z,X} \otimes \p_{z,Y}$, 
i.e., $\q_z(i,j) = \p_{z,X}(i) \cdot \p_{z,Y}(j)$.
We will also denote by $q$ the mixture of product distributions 
$\q \eqdef \sum_{z\in\domz}\p_Z(z)\q_z \in\condindprop{\domx}{\domy}{\domz}$, i.e.,
$\q(i,j,z) \eqdef \p_Z(z)\cdot \q_z(i,j)$.
\end{definition}

\subsection{Basic Facts} \label{ssec:basics}
  
We start with the following simple lemma:
\begin{lemma} \label{lem:dtv-cond}
Let $\p,\p'\in\distribs{\domx\times\domy\times\domz}$.
Then we have that 
\begin{equation}\label{eq:useful:decomposition}
 \totalvardist{\p}{\p'} \leq \sum_{z\in\domz} \p_Z(z) \cdot \totalvardist{\p_z}{\p'_z} + \totalvardist{\p_Z}{\p'_Z} \;,
\end{equation}
with equality if and only if $\p_Z=\p'_Z$.
\end{lemma}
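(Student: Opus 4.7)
The plan is a direct manipulation from the definition of total variation distance together with the conditional factorization $p(i,j,z)=p_Z(z)\,p_z(i,j)$ (and similarly for $p'$). Starting from $2\totalvardist{\p}{\p'}=\sum_{i,j,z}\abs{p(i,j,z)-p'(i,j,z)}$, I would rewrite each pointwise difference via the standard add-and-subtract trick
\[
p_Z(z)p_z(i,j)-p'_Z(z)p'_z(i,j)
= p_Z(z)\bigl(p_z(i,j)-p'_z(i,j)\bigr) + \bigl(p_Z(z)-p'_Z(z)\bigr)p'_z(i,j),
\]
inserting the intermediate term $p_Z(z)p'_z(i,j)$.

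Next I would apply the triangle inequality pointwise,
\[
\abs{p(i,j,z)-p'(i,j,z)} \leq p_Z(z)\abs{p_z(i,j)-p'_z(i,j)} + \abs{p_Z(z)-p'_Z(z)}\,p'_z(i,j),
\]
and sum over $(i,j)\in\domx\times\domy$ for each fixed $z$. By definition of total variation, $\sum_{i,j}\abs{p_z(i,j)-p'_z(i,j)}=2\totalvardist{\p_z}{\p'_z}$, and since $\p'_z$ is a probability distribution on $\domx\times\domy$ we have $\sum_{i,j}p'_z(i,j)=1$. The inner sums therefore collapse to $2p_Z(z)\totalvardist{\p_z}{\p'_z}+\abs{p_Z(z)-p'_Z(z)}$. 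Summing over $z\in\domz$ and recognizing $\sum_{z}\abs{p_Z(z)-p'_Z(z)}=2\totalvardist{\p_Z}{\p'_Z}$, then dividing by $2$, yields exactly \eqref{eq:useful:decomposition}.

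For the equality characterization, the ``if'' direction is immediate: if $\p_Z=\p'_Z$, then the cross-term $(p_Z(z)-p'_Z(z))p'_z(i,j)$ vanishes identically, so each pointwise triangle inequality becomes an equality, and the computation above reduces directly to $\totalvardist{\p}{\p'}=\sum_{z}p_Z(z)\totalvardist{\p_z}{\p'_z}$. There is no real obstacle in this proof; the only things to track carefully are the factors of $2$ relating $\totalvardist{\cdot}{\cdot}$ to $\normone{\cdot}$, and the choice of intermediate term (symmetrically one could insert $p'_Z(z)p_z(i,j)$ and obtain the same bound with the roles of $\p$ and $\p'$ swapped on the right-hand side).
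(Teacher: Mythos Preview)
Your argument for the inequality is correct and essentially identical to the paper's: the same add-and-subtract decomposition via the intermediate term $p_Z(z)\,p'_z(i,j)$, the same pointwise triangle inequality, and the same use of $\sum_{i,j}p'_z(i,j)=1$ to collapse the second sum.

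For the equality characterization you prove only the ``if'' direction; the paper asserts both directions with a one-line remark that equality in the sum forces equality in each triangle inequality, ``i.e., when $\p_Z=\p'_Z$.'' In fact the ``only if'' direction, as stated, is \emph{false}. Take $\domx=\domy=\{0,1\}$, $\domz=\{1,2\}$, let $\p_z=\p'_z$ be the uniform distribution on $\{0,1\}^2$ for each $z$, but set $\p_Z=(1/3,2/3)$ and $\p'_Z=(2/3,1/3)$. Then $\totalvardist{\p_z}{\p'_z}=0$ for all $z$, and both sides of~\eqref{eq:useful:decomposition} equal $\totalvardist{\p_Z}{\p'_Z}=1/3$, yet $\p_Z\neq\p'_Z$. (What equality in each pointwise triangle inequality actually forces is an asymmetric condition relating $\p_z$ and $\p'_z$ on bins where $\p_Z(z)\neq\p'_Z(z)$, not $\p_Z=\p'_Z$.) The paper only ever invokes the ``if'' direction---for instance in the proof of~\cref{claim:sound-ineq}---so nothing downstream is affected; your restricting attention to that direction is therefore not a gap but, if anything, the correct call.
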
  
  
Using~\cref{lem:dtv-cond}, we deduce the following useful corollary:
   
\begin{fact}\label{lemma:distance:either}
If $\p\in \distribs{\domx\times\domy\times\domz}$ is $\eps$-far from $\condindprop{\domx}{\domy}{\domz}$, 
then, for every $\p'\in\condindprop{\domx}{\domy}{\domz}$, either 
(i) $\totalvardist{\p_Z}{\p'_Z} > \eps/2$, or 
(ii) $\sum_{z\in\domz} \p_Z(z) \cdot \totalvardist{\p_z}{\p'_z} > \eps/2$.
\end{fact}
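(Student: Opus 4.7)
The plan is to derive the fact as an immediate consequence of the decomposition inequality in \cref{lem:dtv-cond}. Fix any $\p' \in \condindprop{\domx}{\domy}{\domz}$. Since $\p$ is $\eps$-far from $\condindprop{\domx}{\domy}{\domz}$ and $\p'$ belongs to this property, the hypothesis gives $\totalvardist{\p}{\p'} > \eps$. Applying \eqref{eq:useful:decomposition} to the pair $(\p,\p')$ yields
\[
\eps \;<\; \totalvardist{\p}{\p'} \;\leq\; \sum_{z\in\domz} \p_Z(z) \cdot \totalvardist{\p_z}{\p'_z} \;+\; \totalvardist{\p_Z}{\p'_Z}\,.
\]
A simple pigeonhole argument on the two non-negative summands on the right-hand side then shows that at least one of them must exceed $\eps/2$, which is exactly the dichotomy claimed in items (i) and (ii). Since $\p'$ was an arbitrary element of $\condindprop{\domx}{\domy}{\domz}$, this establishes the fact in the stated ``for every'' form.

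There is essentially no technical obstacle here: once \cref{lem:dtv-cond} is in hand, the fact is just a one-line pigeonhole consequence of a strict inequality being split across two non-negative terms. The only thing to be slightly careful about is the direction of the quantifier—we are asserting the disjunction for \emph{every} $\p'$ in the property, not merely for some specific (e.g., closest) $\p'$—but this is handled automatically by the fact that the $\eps$-far hypothesis says $\totalvardist{\p}{\p'}>\eps$ simultaneously for all $\p'\in\condindprop{\domx}{\domy}{\domz}$, so the same decomposition bound and pigeonhole apply uniformly. Hence the entire proof should fit in two or three lines.
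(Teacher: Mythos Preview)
Your proposal is correct and follows essentially the same approach as the paper: fix an arbitrary $\p'\in\condindprop{\domx}{\domy}{\domz}$, invoke \cref{lem:dtv-cond} to bound $\totalvardist{\p}{\p'}>\eps$ by the sum of the two terms, and conclude by pigeonhole that one of them exceeds $\eps/2$. The paper's proof is the same two-line argument, just stated slightly more tersely.
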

\begin{proof}
Let $\p'\in\condindprop{\domx}{\domy}{\domz}$. Since $\p$ is $\eps$-far from $\condindprop{\domx}{\domy}{\domz}$
we have that $\totalvardist{\p}{\p'} > \eps$. By~\cref{lem:dtv-cond}, we this obtain that 
$\sum_{z\in\domz} \p_Z(z) \cdot \totalvardist{\p_z}{\p'_z} + \totalvardist{\p_Z}{\p'_Z} > \eps$, 
which proves the fact.
\end{proof}
    
The next lemma shows a useful structural property of conditional independence that
will be crucial for our algorithm. It shows that if a distribution $\p\in \distribs{\domx\times\domy\times\domz}$ 
is close to being conditionally independent, then it is also close to an appropriate mixture 
of its products of conditional marginals, 
specifically distribution $q$ from~\cref{def:product:conditional:marginals}:
\begin{lemma}\label{lemma:distance:product}
Suppose $\p\in\distribs{\domx\times\domy\times\domz}$ is $\eps$-close to $\condindprop{\domx}{\domy}{\domz}$. 
Then, $\p$ is 4\eps-close to the distribution $\q  = \sum_{z\in\domz}\p_Z(z) \q_z$.
\end{lemma}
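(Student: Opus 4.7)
My plan is to combine a triangle inequality with a three-step decomposition that exploits the fact that $\q$ itself is conditionally independent and shares both the $(X,Z)$- and $(Y,Z)$-marginals of $\p$. Pick any $\p^*\in \condindprop{\domx}{\domy}{\domz}$ with $\dtv(\p,\p^*)\leq \eps$ (such a $\p^*$ exists by hypothesis). By the triangle inequality,
$$\dtv(\p,\q) \;\leq\; \dtv(\p,\p^*) + \dtv(\p^*,\q) \;\leq\; \eps + \dtv(\p^*,\q),$$
so everything reduces to showing $\dtv(\p^*,\q) \leq 3\eps$.

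Since both $\p^*$ and $\q$ are conditionally independent, I would write them as
$$\p^*(i,j,z) \;=\; \frac{\p^*_{X,Z}(i,z)\,\p^*_{Y,Z}(j,z)}{\p^*_Z(z)}, \qquad \q(i,j,z) \;=\; \frac{\p_{X,Z}(i,z)\,\p_{Y,Z}(j,z)}{\p_Z(z)},$$
where the formula for $\q$ uses that $\q$ shares its $Z$-, $(X,Z)$-, and $(Y,Z)$-marginals with $\p$ (immediate from~\cref{def:product:conditional:marginals}). The plan is then to interpolate between $\p^*$ and $\q$ in three steps $\p^* \to A \to B \to \q$, swapping one marginal at a time: first replace $\p^*_{X,Z}$ by $\p_{X,Z}$, then $\p^*_Z$ by $\p_Z$, and finally $\p^*_{Y,Z}$ by $\p_{Y,Z}$. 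Applying the triangle inequality to this chain and using the normalizations $\sum_j \p^*_{Y,Z}(j,z)=\p^*_Z(z)$ and $\sum_i \p_{X,Z}(i,z)=\p_Z(z)$ causes each of the three TV increments to collapse neatly to exactly the corresponding 2D-marginal TV distance.

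I would thus obtain
$$\dtv(\p^*,\q) \;\leq\; \dtv(\p^*_{X,Z},\p_{X,Z}) + \dtv(\p^*_Z,\p_Z) + \dtv(\p^*_{Y,Z},\p_{Y,Z}),$$
and by the data-processing inequality (marginalization can only decrease total variation distance) each term is at most $\dtv(\p,\p^*)\leq \eps$, giving $\dtv(\p^*,\q)\leq 3\eps$ and hence $\dtv(\p,\q)\leq 4\eps$. The crux is the three-step chaining: if I instead tried to bound $\dtv(\p^*,\q)$ by first applying \cref{lem:dtv-cond} and then invoking a product-of-marginals inequality on each $z$-slice, I would only get a factor-$6$ bound (via $\eps + 2\cdot\text{(weighted bin distances)}$). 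Performing the substitution in the space of the $(X,Z)$- and $(Y,Z)$-marginals, rather than in the space of conditional distributions on each $z$-slice, is what shaves the constant down to $3$. A minor bookkeeping point is the convention $0/0 := 0$ for bins $z$ with $\p_Z(z)=0$ or $\p^*_Z(z)=0$, which is harmless since $\p$, $\p^*$, and $\q$ all vanish on such slices.
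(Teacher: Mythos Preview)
Your proof is correct and takes a genuinely different route from the paper's. The paper first applies \cref{lem:dtv-cond} to $\dtv(\q,\p^*)$ (exploiting that $\q_Z = \p_Z$), then uses sub-additivity of total variation for product measures on each $z$-slice to split $\dtv(\q_z,\p^*_z)=\dtv(\p_{z,X}\otimes\p_{z,Y},\,\p^*_{z,X}\otimes\p^*_{z,Y})$ into two conditional-marginal terms, ending with the three quantities $\sum_z \p_Z(z)\,\dtv(\p_{z,X},\p^*_{z,X})$, $\sum_z \p_Z(z)\,\dtv(\p_{z,Y},\p^*_{z,Y})$, and $\dtv(\p_Z,\p^*_Z)$; it then asserts each is at most $\dtv(\p,\p^*)\le\eps$. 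Your approach instead interpolates in the space of \emph{joint} marginals, swapping $\p^*_{X,Z}\to\p_{X,Z}$, then $\p^*_Z\to\p_Z$, then $\p^*_{Y,Z}\to\p_{Y,Z}$, with each step collapsing to an \emph{exact} 2D- or 1D-marginal TV distance; the data-processing inequality then immediately gives $3\eps$. Your route is arguably cleaner in that no conditional distributions appear and every step is either the triangle inequality or data processing, with equalities in between. One minor correction to your commentary: your parenthetical remark that the \cref{lem:dtv-cond} route ``would only get a factor-$6$ bound'' undersells it --- that is precisely the paper's route, and the paper obtains $3\eps$ (hence $4\eps$ total) through it as well.
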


\subsection{Flattening Distributions}\label{ssec:flattening}
We now recall some notions and lemmata from previous work, 
regarding the technique of \emph{flattening} of discrete distributions:
\begin{definition}[Split distribution {\cite{DK:16}}]\label{def:split:distribution}
Given a distribution $\p\in\distribs{[n]}$ and a multi-set $S$ 
of elements of $[n]$, define the \emph{split distribution} 
$\p_S\in\distribs{[n+|S|]}$ as follows:
For $1\leq i\leq n$, let $a_i$ denote $1$ plus the number 
of elements of $S$ that are equal to $i$.
Thus, $\sum_{i=1}^n a_i = n+|S|.$ 
We can therefore associate the elements of $[n+|S|]$ 
to elements of the set 
$B_S\eqdef\setOfSuchThat{(i,j) }{ i\in [n], 1\leq j \leq a_i }$.
We now define a distribution $\p_S$ with support $B_S$, 
by letting a random sample from $\p_S$ be given by $(i,j)$,
where $i$ is drawn randomly from $\p$ and $j$ is drawn uniformly from $[a_i]$.
\end{definition}

\begin{fact}[{\cite[Fact 2.5]{DK:16}}]\label{fact:split:distributions:distance}
Let $\p,\q\in\distribs{[n]}$, and $S$ a given multi-set of $[n]$. Then:
(i) We can simulate a sample from $\p_S$ or $\q_S$ by taking a single sample from $\p$ or $\q$, respectively.
(ii) It holds $\totalvardist{\p_S}{\q_S} = \totalvardist{\p}{\q}$.
\end{fact}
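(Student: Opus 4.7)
The plan is to unpack the definition of the split distribution $\p_S$ directly; both parts follow from almost mechanical manipulations, so the main task is just to identify the right bookkeeping.

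For part (i), by \cref{def:split:distribution} a draw from $\p_S$ is produced by first sampling $i \sim \p$ and then, conditionally on $i$, drawing $j$ uniformly from $[a_i]$. Since the multi-set $S$ is fixed and known a priori to the simulator, the counts $a_i$ are known as well, so the uniform draw of $j$ can be performed using only internal randomness; no further access to $\p$ is needed. Thus given a single sample $i \sim \p$, outputting $(i, j)$ with $j$ sampled internally and uniformly from $[a_i]$ is exactly a draw from $\p_S$. The argument for $\q_S$ is identical.

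For part (ii), I would write the probability mass function explicitly: for $(i, j) \in B_S$,
\[
\p_S(i,j) = \p(i) \cdot \frac{1}{a_i}, \qquad \q_S(i,j) = \q(i) \cdot \frac{1}{a_i}.
\]
Plugging into the definition of total variation distance and summing over the support $B_S$,
\[
\dtv(\p_S, \q_S)
= \frac{1}{2}\sum_{i=1}^n \sum_{j=1}^{a_i} \left| \frac{\p(i)}{a_i} - \frac{\q(i)}{a_i} \right|
= \frac{1}{2}\sum_{i=1}^n a_i \cdot \frac{|\p(i) - \q(i)|}{a_i}
= \frac{1}{2}\sum_{i=1}^n |\p(i) - \q(i)|
= \dtv(\p, \q),
\]
where the factor $1/a_i$ introduced by splitting is cancelled exactly by the $a_i$ terms in the inner sum. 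There is no real obstacle here; the only point worth double-checking is that the index $j$ is summed from $1$ to $a_i$ (not to some global maximum) so that each original mass $\p(i)$ is partitioned into exactly $a_i$ equal pieces, which is immediate from the definition of $B_S$.
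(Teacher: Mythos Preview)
Your proof is correct. The paper does not supply its own proof of this fact, merely citing it from \cite{DK:16}; your argument is exactly the direct verification one would expect, with part (i) immediate from the sampling description in \cref{def:split:distribution} and part (ii) from the cancellation of the $a_i$ factors in the $\lp[1]$ sum.
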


\noindent We will also require the analogue of~\cite[Lemma 2.6]{DK:16} (how flattening reduces the $\lp[2]$-norm of a distribution) 
for the non-Poissonized setting, i.e., when exactly $m$ samples are drawn (instead of $\poisson{m}$). 
The proof of this lemma is similar to that of~\cite[Lemma 2.6]{DK:16}, 
and we include it in~\cref{sec:deferred:prelims} for completeness:
\begin{lemma}\label{fact:split:distributions:l2norm:nonpoisson}
Let $\p\in\distribs{[n]}$. Then: 
(i) For any multi-sets $S\subseteq S'$ of $[n]$, $\normtwo{\p_{S'}} \leq \normtwo{\p_{S}}$, and
(ii) If $S$ is obtained by taking $m$ independent samples from $\p$, then $\expect{[\normtwo{\p_S}^2} \leq \frac{1}{m+1}$.
\end{lemma}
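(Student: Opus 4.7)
The plan is to prove both parts by direct computation of the squared $\lp[2]$ norm of $\p_S$ in terms of $\p$ and the multiplicities induced by $S$, and then (for part (ii)) take expectations using a standard closed-form identity for $\shortexpect[1/(1+N)]$ when $N$ is binomial.

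First I would record the basic formula: with $a_i = 1 + \abs{\{s\in S : s = i\}}$, the split distribution $\p_S$ assigns probability $\p(i)/a_i$ to each of the $a_i$ atoms $(i,j)$ for $j\in[a_i]$, and therefore
\[
\normtwo{\p_S}^2 = \sum_{i=1}^n a_i \cdot \left(\frac{\p(i)}{a_i}\right)^{\!2} = \sum_{i=1}^n \frac{\p(i)^2}{a_i}\,.
\]
Part (i) is then immediate: enlarging $S$ to $S'$ can only increase each $a_i$, hence each summand $\p(i)^2/a_i$ only decreases, so $\normtwo{\p_{S'}}^2 \leq \normtwo{\p_S}^2$.

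For part (ii), let $N_i$ be the number of elements of $S$ equal to $i$, so $a_i = 1 + N_i$ and $N_i\sim \binomial{m}{\p(i)}$ (with joint distribution multinomial, but only the marginals matter here by linearity of expectation). The key identity is that for $N\sim\binomial{m}{q}$ with $q>0$,
\[
\expect{\frac{1}{1+N}} = \sum_{k=0}^m \binom{m}{k}\frac{q^k(1-q)^{m-k}}{k+1} = \frac{1}{(m+1)q}\sum_{k=0}^m \binom{m+1}{k+1}q^{k+1}(1-q)^{m-k} = \frac{1-(1-q)^{m+1}}{(m+1)q}\,,
\]
using the identity $\binom{m}{k}/(k+1) = \binom{m+1}{k+1}/(m+1)$ and recognizing the remaining sum as a binomial expansion missing one term. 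Plugging this back and using $1-(1-\p(i))^{m+1}\leq 1$ together with $\sum_i \p(i)=1$ gives
\[
\expect{\normtwo{\p_S}^2} = \sum_{i=1}^n \p(i)^2 \cdot \frac{1-(1-\p(i))^{m+1}}{(m+1)\p(i)} = \frac{1}{m+1}\sum_{i=1}^n \p(i)\bigl(1-(1-\p(i))^{m+1}\bigr) \leq \frac{1}{m+1}\,.
\]
The edge case $\p(i)=0$ contributes $0$ to the original sum and is handled by the convention that such bins are simply absent.

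I expect no real obstacle here; the only subtlety compared to the Poissonized version in \cite{DK:16} is that the coordinates $N_i$ are no longer independent, but because $\normtwo{\p_S}^2$ is a sum of terms each depending only on a single $N_i$, only the marginal distributions matter and linearity of expectation suffices. The main ``work'' is the closed-form evaluation of $\expect{1/(1+N)}$ for a binomial, which is a clean one-line telescoping manipulation.
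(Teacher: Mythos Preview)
Your proof is correct and follows essentially the same approach as the paper: both compute $\normtwo{\p_S}^2 = \sum_i \p(i)^2/a_i$, argue monotonicity in $a_i$ for part (i), and for part (ii) use linearity of expectation together with the closed-form identity $\shortexpect[1/(1+N)] = (1-(1-q)^{m+1})/((m+1)q)$ for $N\sim\binomial{m}{q}$. Your write-up is in fact slightly cleaner, as you carry the $1/(m+1)$ factor through correctly and explicitly note that only the binomial marginals of the multinomial are needed.
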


\begin{remark}\label{remark:split:l2:chi2}
Given $S$ and $(a_i)_{i\in[n]}$ as in~\cref{def:split:distribution}, it is immediate that for any $\p,\q\in\distribs{[n]}$
it holds $\normtwo{\p_S-q_S}^2 = \sum_{i=1}^n \frac{(\p_i-\q_i)^2}{a_i}$
so that an $\lp[2]^2$ statistic for $\p_S,\q_S$ can be seen as a particular rescaled $\lp[2]$ statistic for $\p,\q$. 
\end{remark}

\subsection{Technical Facts on Poisson Random Variables}\label{ssec:poisson:technical}

  We state below some technical result on moments of truncated Poisson random variables, 
  which we will use in various places of our analysis. 
  The proof of these claims are deferred to~\cref{sec:deferred:prelims}.
  \begin{claim}\label{claim:variance:truncated:poisson}
  There exists an absolute constant $C>0$ such that, for $N\sim\poisson{\lambda}$,
  \[
    \var[ N \indic{ N \geq 4}] \leq C\expect{N \indic{ N \geq 4}} \;.
  \]
  Moreover, one can take $C=4.22$.
  \end{claim}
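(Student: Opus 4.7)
The plan is to reduce the claim to a bound on a single explicit real-valued function of $\lambda$. Using the Poisson shift identity $\expect{N^k f(N)} = \lambda^k \expect{f(N+k)}$ (equivalently, direct manipulation of the PMF) applied to $f = \indic{\cdot \geq 4}$, together with the decomposition $N^2 = N(N-1) + N$, I obtain the closed forms
\[
\expect{Y} = \lambda\, \probaOf{N \geq 3}, \qquad \expect{Y^2} = \lambda^2\, \probaOf{N \geq 2} + \lambda\, \probaOf{N \geq 3},
\]
where $Y := N \indic{N \geq 4}$. Substituting into $\var(Y) = \expect{Y^2} - \expect{Y}^2$ and using $\probaOf{N \geq 2} = \probaOf{N \geq 3} + \probaOf{N = 2}$, the desired inequality $\var(Y) \leq C\, \expect{Y}$ rearranges (after dividing by $\expect{Y}$) to the one-variable estimate
\[
F(\lambda) \;:=\; 1 + \lambda\, \probaOf{N \leq 2} + \frac{\lambda\, \probaOf{N = 2}}{\probaOf{N \geq 3}} \;\leq\; C, \qquad \lambda > 0.
\]

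Next, I verify the two limiting regimes of $F$. As $\lambda \to 0^+$, Taylor expansions give $\probaOf{N = 2} = \lambda^2/2 + O(\lambda^3)$ and $\probaOf{N \geq 3} = \lambda^3/6 + O(\lambda^4)$, so the ratio term tends to $3$ while the middle term vanishes, giving $F(\lambda) \to 4$. As $\lambda \to \infty$, both $\probaOf{N \leq 2}$ and $\probaOf{N = 2}/\probaOf{N \geq 3}$ decay like polynomials in $\lambda$ times $e^{-\lambda}$, so $F(\lambda) \to 1$. By continuity $F$ attains its supremum at some finite $\lambda^\ast > 0$, which shows at once that \emph{some} constant $C$ works.

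The main obstacle, and the part requiring actual work, is the sharp quantitative claim that one may take $C = 4.22$. To achieve this, I would substitute the explicit PMFs $\probaOf{N \leq 2} = e^{-\lambda}(1 + \lambda + \lambda^2/2)$ and $\probaOf{N = 2} = \lambda^2 e^{-\lambda}/2$ into $F$, compute $F'(\lambda)$ in closed form, and locate the (few) critical points numerically, verifying that the value of $F$ at each one lies strictly below $4.22$. A fully rigorous version would split $(0, \infty)$ into a compact window $[0, \Lambda]$ checked on a grid together with a uniform Lipschitz bound on $F$, plus a tail $(\Lambda, \infty)$ where the exponential decay of $\probaOf{N \leq 2}$ and $\probaOf{N = 2}$ readily gives $F(\lambda) < 2$. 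A looser constant such as $C = 5$ would admit a crude one-line bound from the two limits alone; it is only sharpness at $4.22$ (essentially forced by the $\lambda \to 0$ limit of $4$) that demands the numerical analysis.
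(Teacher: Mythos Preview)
Your proposal is correct and follows essentially the same approach as the paper: compute the ratio $\var[Y]/\expect{Y}$ as an explicit function of $\lambda$, check the limits as $\lambda\to 0^+$ (yielding $4$) and $\lambda\to\infty$ (yielding $1$), conclude by continuity that the supremum is finite, and defer the sharp constant $4.22$ to a numerical study. The only cosmetic difference is that you use the size-biasing identity to express everything in terms of Poisson tail probabilities, whereas the paper writes out $f(\lambda)$ and $g(\lambda)$ directly as explicit exponential-polynomial expressions; the paper also records that the extremum occurs near $\lambda\approx 1.1457$.
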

  
  \begin{claim}\label{claim:variance:truncated:poisson:2}
  There exists an absolute constant $C>0$ such that, for $X\sim\poisson{\lambda}$ and $a,b\geq 0$,
  \[
  \var[ X \sqrt{\min(X,a)\min(X,b)}\indic{X\geq 4} ] \leq C\expect{X \sqrt{\min(X,a)\min(X,b)}\indic{X\geq 4} } \;.
  \]
  \end{claim}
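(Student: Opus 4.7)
The plan is to emulate the proof of~\cref{claim:variance:truncated:poisson}. Setting $\phi(x) := x\sqrt{\min(x,a)\min(x,b)}$ and $Z := \phi(X)\indic{X\geq 4}$, I would first write $\var[Z] \leq \expect{Z^2}$ and thereby reduce the claim to establishing the multiplicative comparison $\expect{\phi(X)^2 \indic{X\geq 4}} \leq C\,\expect{\phi(X)\indic{X\geq 4}}$ for an absolute constant $C$. The point of this reduction is that both sides are now sums of the same Poisson weights against a nonnegative functional, so I can compare them term-by-term over the support $\{k \geq 4\}$.

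My strategy is to split the analysis into two regimes for $\lambda$. In the small-$\lambda$ regime ($\lambda \leq \lambda_0$ for some absolute threshold $\lambda_0$), I would use the fact that the Poisson weights $e^{-\lambda}\lambda^k/k!$ decay geometrically in $k$, so that both $\expect{Z}$ and $\expect{Z^2}$ are comparable, up to an absolute constant, to their $k=4$ contributions. Taking their ratio then reduces, up to constants, to a comparison at a single integer value, which can be controlled using the trivial bounds $\sqrt{\min(x,a)\min(x,b)} \leq \min(x,\sqrt{ab})$ and $\min(x,a)\min(x,b) \leq x \sqrt{\min(x,a)\min(x,b)}$ for $x\geq 4$. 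In the large-$\lambda$ regime ($\lambda > \lambda_0$), the indicator $\indic{X\geq 4}$ is essentially trivial and I would apply the Poisson Poincar\'e inequality $\var[f(X)] \leq \lambda\,\expect{(f(X+1)-f(X))^2}$, bounding the discrete derivative $\phi(X+1)-\phi(X)$ piecewise on the three regions delimited by the breakpoints $X=a$ and $X=b$, and then compare region-by-region against $\expect{Z} \approx \phi(\lambda)$ using standard Poisson concentration.

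The main obstacle that I expect is the non-smoothness of $\phi$ at $X=a$ and $X=b$, together with the fact that $\phi$ has three qualitatively different local growth rates: it is quadratic-like when $X \leq \min(a,b)$, grows like $X^{3/2}$ (up to constants depending on the smaller of $a,b$) in the middle region $\min(a,b) \leq X \leq \max(a,b)$, and is linear in $X$ when $X \geq \max(a,b)$. In each region the variance-to-expectation ratio takes a different form in terms of $\lambda$, $a$, and $b$, and one must track these separately and then stitch the estimates together to extract a single absolute constant $C$ valid uniformly over all $(a,b,\lambda)$. Matching the two regimes at the threshold $\lambda_0$ so that the constants combine into one absolute constant is where I anticipate the bulk of the technical work; the argument for~\cref{claim:variance:truncated:poisson} would serve as a template, since the case where $\min(a,b)$ is small already exhibits, in miniature, the kind of bookkeeping required here.
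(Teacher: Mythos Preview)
Your very first reduction is where the approach breaks. You propose to bound $\var[Z]\leq\expect{Z^2}$ and then prove the stronger inequality $\expect{\phi(X)^2\indic{X\geq 4}}\leq C\,\expect{\phi(X)\indic{X\geq 4}}$ with an absolute $C$. That inequality is false: take any $a,b\geq\lambda$, so that $\phi(X)=X^2$ on the bulk of the support. Then $\expect{Z^2}=\expect{X^4\indic{X\geq 4}}\sim\lambda^4$ while $\expect{Z}=\expect{X^2\indic{X\geq 4}}\sim\lambda^2$, and the ratio grows like $\lambda^2$. The term $-\expect{Z}^2$ you discarded is precisely what cancels the leading $\lambda^4$, leaving $\var[Z]\sim 4\lambda^3$; without it nothing absorbs the top-order contribution. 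Your large-$\lambda$ Poincar\'e step does not rescue this region either: with $\phi(x)=x^2$ one gets $\var[Z]\leq\lambda\,\expect{(2X+1)^2}\sim 4\lambda^3$, which is $\Theta(\lambda)\cdot\expect{Z}$, not $O(1)\cdot\expect{Z}$.

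The paper does not make this reduction. It rewrites the target as $\expect{Y^2}\leq C\,\expect{Y}+\expect{Y}^2$ (keeping the squared expectation on the right), decomposes $Y$ piecewise over the three $X$-regions $\{X\leq a\}$, $\{a<X\leq b\}$, $\{X>b\}$ (where $\phi$ is respectively $X^2$, $\sqrt{a}\,X^{3/2}$, $\sqrt{ab}\,X$), and then uses $(x_1+x_2+x_3)^2\geq x_1^2+x_2^2+x_3^2$ on the $\expect{Y}^2$ term to reduce to three per-region inequalities of the form
\[
\expect{\beta^2 X^{2\alpha}\indic{X\in S}}\leq C\beta\,\expect{X^\alpha\indic{X\in S}}+\beta^2\,\expect{X^\alpha\indic{X\in S}}^2,
\]
each handled in the style of~\cref{claim:variance:truncated:poisson}. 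The decomposition is in $X$, not in $\lambda$, and the quadratic term on the right is essential in every region; any correct argument must retain it.
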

  
  \begin{claim}\label{claim:expectation:truncated:poisson:squared:with:min}
  There exists an absolute constant $C>0$ such that, for $X\sim\poisson{\lambda}$ and integers $a,b\geq 2$,
  \[
      \expect{ X \sqrt{\min(X,a)\min(X,b)}\indic{X\geq 4}  } \geq C \min(\lambda \sqrt{\min(\lambda,a)\min(\lambda,b)}, \lambda^4) \;.
  \]
  \end{claim}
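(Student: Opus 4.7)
The plan is to split on whether $\lambda$ is smaller or larger than a fixed constant $\lambda_0$ (it will be convenient to take $\lambda_0 = 16$), and handle the two regimes by very different arguments. Since $\min(\lambda \sqrt{\min(\lambda,a)\min(\lambda,b)},\lambda^4)$ is at most each of its two arguments, it suffices to lower bound the expectation by (a constant times) $\lambda^4$ for small $\lambda$, and by (a constant times) $\lambda\sqrt{\min(\lambda,a)\min(\lambda,b)}$ for large $\lambda$.

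For the small-$\lambda$ regime ($\lambda \leq \lambda_0$), first I would drop all terms from the sum $\expect{X\sqrt{\min(X,a)\min(X,b)}\indic{X\geq 4}}$ except the contribution of the single atom $X = 4$. On this atom, since $a,b\geq 2$, we have $\min(4,a)\geq 2$ and $\min(4,b)\geq 2$, whence $4\sqrt{\min(4,a)\min(4,b)}\geq 8$. Combined with $\probaOf{X=4} = e^{-\lambda}\lambda^4/24 \geq e^{-\lambda_0}\lambda^4/24$, this yields
\[
\expect{X\sqrt{\min(X,a)\min(X,b)}\indic{X\geq 4}} \;\geq\; \frac{e^{-\lambda_0}}{3}\lambda^4,
\]
which in turn dominates a constant multiple of $\min(\lambda\sqrt{\min(\lambda,a)\min(\lambda,b)},\lambda^4)$.

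For the large-$\lambda$ regime ($\lambda > \lambda_0$), the plan is to invoke a basic Poisson tail estimate: by Chebyshev's inequality, $\probaOf{|X-\lambda|\geq \lambda/2}\leq 4/\lambda \leq 1/4$, so $\probaOf{X\geq \lambda/2}\geq 3/4$. On the event $\{X\geq \lambda/2\}$, since $\lambda/2\geq 8\geq 4$ the truncation indicator equals $1$. Moreover, $\min(X,a)\geq \min(\lambda/2,a)\geq \min(\lambda,a)/2$ and similarly for $b$, so
\[
X\sqrt{\min(X,a)\min(X,b)} \;\geq\; \frac{\lambda}{2}\cdot\sqrt{\frac{\min(\lambda,a)}{2}\cdot\frac{\min(\lambda,b)}{2}} \;=\; \frac{\lambda}{4}\sqrt{\min(\lambda,a)\min(\lambda,b)}.
\]
Taking expectations gives a lower bound of $(3/16)\lambda\sqrt{\min(\lambda,a)\min(\lambda,b)}$, again dominating a constant multiple of the minimum in the claim.

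The only (mildly) subtle point is ensuring that the two regimes together cover all $\lambda\geq 0$ with a single absolute constant $C$, which is immediate once the threshold $\lambda_0$ is fixed. There is no real obstacle here: the main calculation is ensuring the elementary inequality $\min(4,a),\min(4,b)\geq 2$ (which uses $a,b\geq 2$) and that $\lambda_0$ is chosen large enough that Chebyshev yields a constant-probability concentration window on which $X\geq 4$ automatically holds.
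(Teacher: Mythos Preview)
Your proposal is correct. Both you and the paper split on the size of $\lambda$ and, in the small-$\lambda$ regime, lower bound the expectation by the single atom $X=4$ (using $a,b\geq 2$ to get $\min(4,a),\min(4,b)\geq 2$). The difference is in the large-$\lambda$ regime: the paper establishes $\probaOf{X\geq \lfloor\lambda/2\rfloor}\geq 1/2$ via a direct ratio argument on Poisson pmf terms, and then performs a further three-way case split depending on where $\lambda$ sits relative to $a$ and $b$ (namely $\lambda\leq 2\min(a,b)$, $\lambda\geq 2\max(a,b)$, and the intermediate range), handling each by a tailored lower bound on the integrand. You instead use Chebyshev for the concentration step and then the clean inequality $\min(\lambda/2,a)\geq \min(\lambda,a)/2$ to treat all relationships between $\lambda,a,b$ uniformly, avoiding any sub-cases. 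Your route is shorter and more transparent; the price is a worse (but still absolute) constant, roughly $e^{-16}/3$ in the small-$\lambda$ branch versus the paper's $1/768$, which is immaterial for the claim as stated.
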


\section{Conditional Independence Tester: The Case of Constant \texorpdfstring{$|\domx|, |\domy|$}{|X|, |Y|}} \label{sec:alg-basic}
  
Let $\p\in\distribs{\domx\times\domy\times z}$.
In this section, we present and analyze our conditional independence tester for the case that 
$|\domx|, |\domy| = O(1)$. Specifically, we will present a tester for this regime 
whose sample complexity is optimal, up to constant factors.  
Our tester uses as a black-box an unbiased estimator for the $\ell_2^2$-distance
between a $2$-dimensional distribution and the product of its marginals.
Specifically, we assume that we have access to an estimator $\Phi$ with the following performance:
Given $N$ samples $s=(s_1,\dots,s_N)$ from a distribution $\p\in\distribs{\domx\times\domy}$, 
$\Phi$ satisfies:
\begin{align}
\expect{\Phi(s)} &= \normtwo{\p - \p_{\domx}\otimes\p_{\domy}}^2 \label{eq:estimator:expect}\\
\var[ \Phi(s) ] &\leq C\left(\frac{\expect{\Phi(s)}}{N}+\frac{1}{N^2}\right) \label{eq:estimator:variance} \;,
\end{align}
for some absolute constant $C>0$. Such an estimator follows as a special case of our generic
polynomial estimators of~\cref{sec:polynomial}.

\paragraph{Notation}
Let $\p\in\distribs{\domx\times\domy}$. We denote its marginal distributions by 
$\p_{\domx}$, $\p_{\domy}$. That is, we have that $\p_{\domx} \in \distribs{\domx}$ 
with $\p_{\domx}(x) \eqdef \probaDistrOf{(X,Y)\sim p}{ X = x}$, $x \in \domx$, 
and similarly for $\p_{\domy}$. 
Let $\p\in\distribs{\domx\times\domy\times \domz}$. 
For $z \in \domz$, we will denote by $\q_z$ the product distribution $\p_{z, X} \otimes \p_{x, Y}$.

\medskip

Let $M$ be a $\poisson{m}$ random variable representing the number of samples drawn from 
$\p\in\distribs{\domx\times\domy \times \domz}$. 
Given the multi-set $S$ of $M$ samples drawn from $\p$, let 
$S_z \eqdef \setOfSuchThat{(x,y)}{(x,y,z)\in S} $
denote the multi-set of pairs $(x, y) \in \domx\times\domy$ 
corresponding to samples $(x, y, z) \in S$, 
i.e., the multi-set of samples coming from the conditional distribution $\p_z$. 
For convenience, we will use the notation $\sigma_z \eqdef \abs{S_z}$.
Let  
\[
A_z \eqdef \sigma_z \cdot \Phi(S_z)\cdot \indic{\sigma_z\geq 4} \;,
\]
for all $z\in\domz$. Our final \new{statistic (that we will compare to a suitable threshold in the eventual test)} is 
\[
  A\eqdef \sum_{z\in\domz} A_z \;.
\]
We set $\eps' \eqdef \frac{\eps}{\sqrt{\abs{\domx}\abs{\domy}}} = \Theta(\eps)$, and choose
\begin{equation}\label{eq:m:choice}
m\geq  \beta\max\left(\sqrt{n}/{\eps'}^2,\min\mleft(n^{7/8}/{\eps'},n^{6/7}/{\eps'}^{8/7}\mright)\right) \;,
\end{equation}
for a sufficiently large absolute constant $\beta >0$.

Interestingly enough, there are three distinct regions for this expression, based on the relation between $n$ and $\eps$, as illustrated
in the following figure:
\begin{figure}[H]\centering
    \includegraphics[width = 1.00\textwidth]{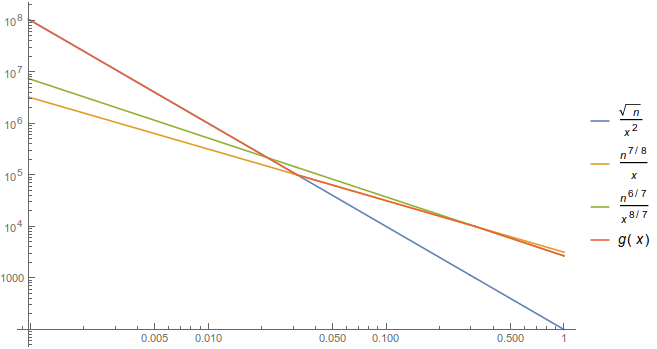}
          \caption{The three regimes of the sample complexity (for $n=10000$ and $\eps\in(0,1]$) (log-log plot).}
\end{figure}

Our conditional independence tester outputs ``\accept'' if $A \geq \tau$ and ``\reject'' otherwise,
where $\tau$ is selected to be \new{$\bigTheta{\sqrt{\min(n,m)}}$}. A detailed pseudo-code for the algorithm is given
in~\cref{algo:testing}. 
\begin{algorithm}[ht]
    \begin{algorithmic}[1]
      \Require Parameter $n\eqdef \abs{\domz}$, $\ell_1\eqdef \abs{\domx}$, $\ell_2\eqdef \abs{\domy}$, $\eps\in(0,1]$, 
                    and sample access to $\p\in\distribs{\domx\times\domy\times\domz}$.
      \State Set $m\gets \beta\max\left(\sqrt{n}/{\eps'}^2,\min\mleft(n^{7/8}/{\eps'},n^{6/7}/{\eps'}^{8/7}\mright)\right)$, 
                 where $\eps'\eqdef{\eps}/{\sqrt{\ell_1\ell_2}}$ \Comment{$\beta \geq 1$ is a sufficiently large absolute constant}
      \State Set \new{$\tau \gets \zeta\sqrt{\min(n,m)}$}, 
                 where \new{$\zeta>0$ is a sufficiently small constant}. \Comment{Threshold for accepting}
      \State Draw $M\sim \poisson{m}$ samples from $\p$ and let $S$ be the multi-set of samples.
      \ForAll{ $z\in\domz$ }
        \State Let $S_z\subseteq \domx\times\domy$ be the multi-set $S_z\eqdef \setOfSuchThat{(x,y)}{(x,y,z)\in S}$.
        \If{ $\abs{S_z} \geq 4$ } \Comment{Enough samples to call $\Phi$}
          \State Compute $\Phi(S_z)$.
          \State Set $A_z\gets \abs{S_z} \cdot \Phi(S_z)$.
        \Else
          \State Set $A_z\gets 0$.
        \EndIf
      \EndFor
      \If{ $A\eqdef \sum_{z\in \domz} A_z \leq \tau$ }
        \State \Return \accept
      \Else
        \State \Return \reject
      \EndIf
    \end{algorithmic}
    \caption{\textsc{TestCondIndependence}}\label{algo:testing}
  \end{algorithm}

\subsection{Proof of Correctness} \label{ssec:alg-analysis}
In this section, we prove correctness of~\cref{algo:testing}.
Specifically, we will show that: (1) If $\p \in \condindprop{\domx}{\domy}{\domz}$ (completeness),
 then~\cref{algo:testing} outputs ``\accept'' with probability at least $2/3$, and 
 (2) If $\dtv(\p, \condindprop{\domx}{\domy}{\domz} )>\eps$, then 
~\cref{algo:testing} outputs ``\reject'' with probability at least $2/3$. 
The proof proceeds by analyzing the expectation and variance of our statistic $A$ 
and using Chebyshev's inequality. We note that $\beta, \zeta$ are absolute constants defined in the algorithm pseudo-code.

\subsubsection{Analyzing the Expectation of \texorpdfstring{$A$}{A}} \label{ssec:exp-gap}

The main result of this subsection is the following proposition establishing 
the existence of a gap in the expected value of $A$ 
in the completeness and soundness cases:
  
\begin{proposition}\label{prop:exp-gap}
We have the following: (a) If $\p \in \condindprop{\domx}{\domy}{\domz}$, then $\expect{A}=0$. 
(b) If $\totalvardist{\p}{ \condindprop{\domx}{\domy}{\domz}}> \eps$, 
then $\expect{A} > \gamma \min\left( m{\eps'}^2, \frac{m^4{\eps'}^4}{8n^3}\right) 
\geq \frac{\beta \cdot \gamma}{8} \cdot \sqrt{\min(n,m)}$, for some absolute constant $\gamma>0$.
\end{proposition}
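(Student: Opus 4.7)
The plan is to analyze $\expect{A}=\sum_z \expect{A_z}$ bin by bin, using the unbiasedness property~\eqref{eq:estimator:expect} of $\Phi$. Part (a) is immediate: conditional independence gives $\p_z = \p_{z,X}\otimes\p_{z,Y}$ for every $z$, so conditioning on $\sigma_z=k\geq 4$ the samples are i.i.d.\ from $\p_z$ and~\eqref{eq:estimator:expect} yields $\expect{\Phi(S_z)\mid\sigma_z=k}=0$. Hence $\expect{A_z}=0$ for every $z$, and $\expect{A}=0$.

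For part (b), I first translate the $\eps$-farness into a bin-wise statement. Setting $\eps_z\eqdef\totalvardist{\p_z}{\q_z}$ and applying~\cref{lemma:distance:either} with $\p'=\q$ from~\cref{def:product:conditional:marginals}, the fact that $\q_Z=\p_Z$ rules out the first alternative, forcing $\sum_{z\in\domz} \p_Z(z)\,\eps_z > \eps/2$. Next, by Poissonization $\sigma_z\sim\poisson{m\p_Z(z)}$ independently across $z$, and conditional on $\sigma_z\geq 4$ the multiset $S_z$ is i.i.d.\ from $\p_z$; so~\eqref{eq:estimator:expect} gives
\[
    \expect{A_z} = \expect{\sigma_z\indic{\sigma_z\geq 4}}\cdot \normtwo{\p_z-\q_z}^2.
\]
Two elementary lower bounds then combine: Cauchy--Schwarz gives $\normtwo{\p_z-\q_z}^2\geq \normone{\p_z-\q_z}^2/(\ell_1\ell_2) = 4\eps_z^2/(\ell_1\ell_2)$, and a direct computation on the Poisson law gives $\expect{\sigma_z\indic{\sigma_z\geq 4}}\geq c\min(m\p_Z(z),(m\p_Z(z))^4)$ for an absolute constant $c>0$ (the quartic branch from the contribution of $\sigma_z=4$ for small $\lambda$, the linear branch from the mean for $\lambda\gtrsim 1$).

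With these in hand I split $\domz=H\sqcup L$ at the threshold $m\p_Z(z)=1$. By pigeonhole, either $\sum_{z\in H}\p_Z(z)\eps_z > \eps/4$ or $\sum_{z\in L}\p_Z(z)\eps_z > \eps/4$. In the heavy case Cauchy--Schwarz gives $\sum_{z\in H}\p_Z(z)\eps_z^2 \geq (\sum_{z\in H}\p_Z(z)\eps_z)^2 \geq \eps^2/16$, which paired with the linear Poisson bound yields $\expect{A}\gtrsim m\eps'^2$. In the light case, using $\eps_z\leq 1$ to raise the exponent and then the power-mean inequality over at most $n$ terms,
\[
    \sum_{z\in L}\eps_z^2\,\p_Z(z)^4 \;\geq\; \sum_{z\in L}(\eps_z\p_Z(z))^4 \;\geq\; \frac{1}{n^3}\Bigl(\sum_{z\in L}\p_Z(z)\eps_z\Bigr)^{\!4} \;\geq\; \frac{\eps^4}{256\,n^3},
\]
which with the quartic Poisson bound yields $\expect{A}\gtrsim m^4\eps'^4/n^3$. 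Taking the smaller of the two subcase bounds establishes $\expect{A}>\gamma\min(m\eps'^2,\,m^4\eps'^4/(8n^3))$ for an absolute constant $\gamma>0$.

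The last step is arithmetic: plug in the choice~\eqref{eq:m:choice} of $m$ and verify that this min exceeds $(\beta/8)\sqrt{\min(n,m)}$. The verification splits on $m\geq n$ versus $m<n$: the $\sqrt{n}/\eps'^2$ term in~\eqref{eq:m:choice} controls the $m\eps'^2$ branch in both regimes; the $n^{7/8}/\eps'$ term controls $m^4\eps'^4/n^3\geq\sqrt{n}$ when $m\geq n$; and the $n^{6/7}/\eps'^{8/7}$ term controls $m^4\eps'^4/n^3\geq\sqrt{m}$ when $m<n$. The step I expect to be the main obstacle is the light-bin case: extracting the sharp $m^4\eps'^4/n^3$ bound requires \emph{both} a uniform-in-$\lambda$ Poisson moment estimate $\Omega(\min(\lambda,\lambda^4))$ (where the transition window $\lambda\in[1,4]$ must be handled carefully since neither asymptotic is tight) \emph{and} the seemingly innocuous observation $\eps_z\leq 1$ allowing the exponent to be lifted from $2$ to $4$ before invoking the power-mean inequality; losing constants in either of these pieces would break the matching lower-bound regime driven by $n^{6/7}/\eps'^{8/7}$.
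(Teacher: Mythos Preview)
Your proof is correct and follows the same overall architecture as the paper's: compute $\expect{A_z}$ via unbiasedness of $\Phi$ and the Poisson moment bound $\expect{\sigma_z\indic{\sigma_z\geq 4}}\geq\gamma\min(\alpha_z,\alpha_z^4)$, split bins into heavy ($\alpha_z>1$) and light ($\alpha_z\leq 1$), apply Cauchy--Schwarz on the heavy part and a power-mean/Jensen argument on the light part, then check the arithmetic against~\eqref{eq:m:choice}.

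The only noteworthy difference is in parametrization and the light-bin step. The paper works directly with $\delta_z=\normtwo{\p_z-\q_z}$ and, for the light bins, applies Jensen with weights $\delta_z^{2/3}$ to obtain $\sum_{z\in L}\delta_z^2\alpha_z^4\geq(8n^3)^{-1}(\sum_{z\in L}\delta_z\alpha_z)^4$ (using $\delta_z\leq 2$). You instead work with $\eps_z=\totalvardist{\p_z}{\q_z}$, convert to $\lp[2]$ via Cauchy--Schwarz, and then use the simpler observation $\eps_z\leq 1\Rightarrow\eps_z^2\geq\eps_z^4$ before a straight power-mean bound. Your light-case argument is arguably cleaner; the paper's weighted-Jensen route is what one would naturally write down if one had committed to $\delta_z$ (which can exceed~$1$) as the working quantity. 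Both yield the same bound up to absolute constants, which is all the proposition claims. As a minor note, by invoking~\cref{lemma:distance:either} you lose a factor of~$2$ relative to the paper's direct use of the equality case of~\cref{lem:dtv-cond} (since $\q_Z=\p_Z$ gives $\sum_z \p_Z(z)\eps_z=\totalvardist{\p}{\q}>\eps$ outright), but this is harmless.
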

  
The rest of this subsection is devoted to the proof of~\cref{prop:exp-gap}.
We start by providing a convenient lower bound on the expectation of $A$.
We prove the following lemma:

\begin{lemma} \label{lemma:exp-lb}
For $z\in\domz$, let $\delta_z\eqdef \normtwo{\p_z - \q_z}$ and $\alpha_z\eqdef m \cdot \p_Z(z)$. 
Then, we have that:
\begin{equation}\label{eq:lowerbound:expectation}
\expect{A} \geq \gamma \cdot \sum_{z\in\domz} \delta_z^2 \min(\alpha_z, \alpha_z^4)\;.
\end{equation}
\end{lemma}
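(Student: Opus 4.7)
The plan is to first decouple the randomness across $z$-bins via Poissonization, then use unbiasedness of $\Phi$ to compute $\expect{A_z}$ conditionally on the bin count $\sigma_z$, and finally reduce the lemma to a one-dimensional Poisson moment estimate.

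By the standard Poissonization setup, since $M\sim\poisson{m}$ and each sample independently lands in bin $z$ with probability $\p_Z(z)$, the bin counts $\{\sigma_z\}_{z\in\domz}$ are mutually independent and $\sigma_z\sim\poisson{\alpha_z}$ with $\alpha_z=m\p_Z(z)$. Moreover, conditioned on $\sigma_z=N$, the multiset $S_z$ consists of $N$ i.i.d.\ samples from $\p_z$. Applying the unbiasedness property~\eqref{eq:estimator:expect} of $\Phi$ to the distribution $\p_z$ (whose marginals are $\p_{z,X}$ and $\p_{z,Y}$), we get
\[
    \expectCond{\Phi(S_z)}{\sigma_z=N} \;=\; \normtwo{\p_z - \p_{z,X}\otimes\p_{z,Y}}^2 \;=\; \delta_z^2
\]
for every $N\geq 4$. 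Therefore, by the tower property and linearity,
\[
    \expect{A} \;=\; \sum_{z\in\domz} \expect{\sigma_z \indic{\sigma_z\geq 4}\cdot \Phi(S_z)}
             \;=\; \sum_{z\in\domz} \delta_z^2 \cdot \expect{\sigma_z \indic{\sigma_z\geq 4}}\,.
\]

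It remains to show that for every $\lambda\geq 0$ and $N\sim\poisson{\lambda}$, one has $\expect{N\indic{N\geq 4}} \geq c\min(\lambda,\lambda^4)$ for some absolute constant $c>0$. I would split into two regimes. For small $\lambda$ (say $\lambda\leq 6$), lower bound $\expect{N\indic{N\geq 4}}\geq 4\probaOf{N=4} = \lambda^4 e^{-\lambda}/6 \geq \lambda^4 e^{-6}/6$. For large $\lambda$ (say $\lambda\geq 6$), use $\expect{N\indic{N\geq 4}} = \lambda-\expect{N\indic{N\leq 3}} \geq \lambda - 3 \geq \lambda/2$. Combining, $\expect{N\indic{N\geq 4}}\geq c\min(\lambda,\lambda^4)$ with $c=\min(1/2,e^{-6}/6)$, which plugged into the above display yields the claimed lower bound on $\expect{A}$.

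No step poses a serious obstacle: the decoupling is pure Poissonization, the identification of the conditional expectation is immediate from~\eqref{eq:estimator:expect}, and the Poisson tail/moment estimate is elementary. The only mild care needed is handling the two regimes of $\lambda$ in the final Poisson moment bound, so as to produce a single absolute constant $\gamma$ that makes the $\min(\alpha_z,\alpha_z^4)$ expression work uniformly across all bins.
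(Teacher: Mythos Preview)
Your proposal is correct and follows essentially the same approach as the paper: condition on $\sigma_z$, use unbiasedness of $\Phi$ to reduce to $\expect{A}=\sum_z \delta_z^2\,\expect{\sigma_z\indic{\sigma_z\geq 4}}$, and then lower bound this Poisson moment by $c\min(\alpha_z,\alpha_z^4)$. The only cosmetic difference is that the paper writes out the closed form $f(x)=x-e^{-x}(x+x^2+x^3/2)$ and argues that $f(x)/\min(x,x^4)$ attains its minimum $\gamma=1-\tfrac{5}{2e}$ at $x=1$, whereas you do a two-regime case split yielding a smaller (but still absolute) constant.
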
 
\begin{proof}
Conditioned on $\sigma_z = |S_z|$, \cref{eq:estimator:expect} gives that 
$\expectCond{ A_z }{ \sigma_z } = \sigma_z \delta_z^2 \indic{\sigma_z\geq 4}$.
Therefore, for $\sigma\eqdef (\sigma_z)_{z\in\domz}$, we can write
$\expectCond{ A }{ \sigma } = \sum_{z\in\domz} \sigma_z \delta_z^2 \indic{\sigma_z\geq 4}$. 
Using the fact that the $\sigma_z$'s are independent and $\sigma_z\sim\poisson{\alpha_z}$, 
we obtain the following closed-form expression for the expectation:
\begin{equation} \label{eqn:exp-closed}
\expect{A} = \expect{\expectCond{ A }{ \sigma }}_{\sigma} 
= \sum_{z\in\domz} \delta_z^2 \expect{\sigma_z \indic{\sigma_z\geq 4}}
= \sum_{z\in\domz} \delta_z^2 \cdot f(\alpha_z) \;,
\end{equation}
where $f \colon \R_+ \to\R$ is the function 
$f(x) = e^{-x}\sum_{k=4}^\infty k\frac{x^k}{k!} = x - e^{-x}(x+x^2+\frac{x^3}{2}) \;.$ 
Let $g\colon \R_+ \to\R$ be defined by $g(x) = \min(x,x^4) \;.$
It is not hard to check that the function $f(x)/g(x)$ achieves its minimum at $x=1$, 
where it takes the value $\gamma\eqdef 1-\frac{5}{2e} > 0$. That is, $f(\alpha_z) \geq \gamma \cdot g(\alpha_z)$
and the lemma follows from~\eqref{eqn:exp-closed}.
\end{proof} 
  
Given~\eqref{eqn:exp-closed}, the first statement of~\cref{prop:exp-gap} is immediate.
Indeed, if $\p$ is conditionally independent, then all $\delta_z$'s are zero. To establish 
the second statement, we will require a number of intermediate lemmata.
We henceforth focus on the analysis of the soundness case, i.e., we will assume that 
$\totalvardist{\p}{ \condindprop{\domx}{\domy}{\domz}}> \eps$.
We require the following useful claim:

\begin{claim}\label{claim:sound-ineq}
If $\totalvardist{\p}{ \condindprop{\domx}{\domy}{\domz}}> \eps$, then 
$\sum_{z\in\domz} \delta_z \alpha_z > 2m\eps'$.
\end{claim}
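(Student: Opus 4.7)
The plan is to relate the $\ell_2$ gap $\delta_z$ on each slice to the global total variation distance by going through the specific conditionally independent distribution $\q = \sum_{z} \p_Z(z)\,\q_z$ defined in~\cref{def:product:conditional:marginals}.

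First I would observe that $\q \in \condindprop{\domx}{\domy}{\domz}$ by construction, so the hypothesis $\totalvardist{\p}{\condindprop{\domx}{\domy}{\domz}} > \eps$ immediately yields $\totalvardist{\p}{\q} > \eps$. Next, I would exploit the fact that $\p$ and $\q$ share the same $Z$-marginal: indeed $\q_Z(z) = \sum_{i,j} \p_Z(z)\p_{z,X}(i)\p_{z,Y}(j) = \p_Z(z)$. By the equality case of~\cref{lem:dtv-cond}, this gives
\[
  \sum_{z\in\domz} \p_Z(z)\,\totalvardist{\p_z}{\q_z} \;=\; \totalvardist{\p}{\q} \;>\; \eps.
\]

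Now I would convert the $\lp[1]$ distance $\totalvardist{\p_z}{\q_z} = \tfrac{1}{2}\normone{\p_z - \q_z}$ into the $\lp[2]$ distance $\delta_z = \normtwo{\p_z - \q_z}$ via Cauchy--Schwarz on a domain of size $\ell_1\ell_2 = |\domx||\domy|$: $\normone{\p_z - \q_z} \leq \sqrt{\ell_1\ell_2}\,\normtwo{\p_z - \q_z} = \sqrt{\ell_1\ell_2}\,\delta_z$. Substituting into the previous display yields
\[
  \frac{\sqrt{\ell_1\ell_2}}{2}\sum_{z\in\domz} \p_Z(z)\,\delta_z \;>\; \eps,
\]
which rearranges to $\sum_{z} \p_Z(z)\,\delta_z > 2\eps/\sqrt{\ell_1\ell_2} = 2\eps'$. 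Multiplying through by $m$ and recalling $\alpha_z = m\,\p_Z(z)$ gives the claim $\sum_{z\in\domz} \delta_z \alpha_z > 2m\eps'$.

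I do not expect any real obstacle: the argument is essentially three lines, combining the equality case of~\cref{lem:dtv-cond} (which needs only the matching $Z$-marginals of $\p$ and $\q$) with a single application of Cauchy--Schwarz to pass from $\lp[1]$ to $\lp[2]$ on each conditional slice. The only subtlety is remembering that using $\q$ itself (rather than invoking the contrapositive of~\cref{lemma:distance:product}, which would lose a factor of $4$) is what allows a clean $\eps$ on the right-hand side, producing exactly the constant $2$ required by the statement.
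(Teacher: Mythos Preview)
Your proposal is correct and follows essentially the same route as the paper: use $\q_Z=\p_Z$ to get the exact decomposition $\totalvardist{\p}{\q}=\sum_z \p_Z(z)\totalvardist{\p_z}{\q_z}>\eps$ from~\cref{lem:dtv-cond}, then apply Cauchy--Schwarz on each slice to pass from $\lp[1]$ to $\lp[2]$, and multiply by $m$. The only cosmetic difference is that the paper writes the chain starting from $\sum_z \delta_z\alpha_z$ and works downward, while you work upward from $\totalvardist{\p}{\q}$.
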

\begin{proof}
We use the identity
$\totalvardist{\p}{\q}  = \sum_{z\in\domz} \p_Z(z)  \cdot \totalvardist{\p_z}{\q_z}$,
which follows from~\cref{lem:dtv-cond} noting that $\q_Z  = \p_Z$. By assumption,
we have that $\totalvardist{\p}{\q}>\eps$.
We can therefore write
\[ \sum_{z\in\domz} \delta_z \alpha_z = m\sum_{z\in\domz} \normtwo{\p_z - \q_z} \cdot \p_Z(z) 
\geq \frac{m}{\sqrt{\abs{\domx}\abs{\domy}}} \cdot \sum_{z\in\domz} \normone{\p_z - \q_z} \p_Z(z) 
>  \frac{2m}{\sqrt{\abs{\domx}\abs{\domy}}}\eps \;,\]
where the last inequality is Cauchy--Schwarz.
\end{proof}

\cref{lemma:exp-lb} suggests the existence of two distinct regimes:
the value of the expectation of our statistic is dominated by 
(1) the ``heavy'' elements $z \in \domz$ for which $\alpha_z > 1$, or 
(2) the ``light'' elements $z \in \domz$ for which $\alpha_z\leq 1$. 
Formally, let $\domz_H \eqdef \setOfSuchThat{z\in\domz}{\alpha_z> 1}$ and 
$\domz_L \eqdef \setOfSuchThat{z\in\domz}{\alpha_z\leq 1}$, 
so that
\begin{equation} \label{eqn:hl}
  \sum_{z\in\domz} \delta_z^2 \min(\alpha_z, \alpha_z^4) = \sum_{z\in\domz_H} \delta_z^2 \alpha_z + \sum_{z\in\domz_L} \delta_z^2 \alpha_z^4\;.
\end{equation}
By~\cref{claim:sound-ineq}, at least one of the following holds:
(1) $\sum_{z\in\domz_H} \delta_z \alpha_z > m\eps'$ 
or (2) $\sum_{z\in\domz_L} \delta_z \alpha_z > m\eps'$.
We analyze each case separately.
\begin{enumerate}[(1)]
\item Suppose that  $\sum_{z\in\domz_H} \delta_z \alpha_z > m\eps'$. 
We claim that $\expect{A} > \gamma \cdot m{\eps'}^2$.
Indeed, this follows from~\cref{lemma:exp-lb} and \eqref{eqn:hl} 
using the following chain of (in)-equalities:
\[
\sum_{z\in\domz_H} \delta_z^2 \alpha_z \geq \frac{\left(\sum_{z\in\domz_H} \delta_z\alpha_z\right)^2 }{\sum_{z\in\domz_H} \alpha_z}
> m{\eps'}^2 \;,
    \]
where the first inequality is Cauchy--Schwarz, and the second follows 
using that $\sum_{z\in\domz_H} \alpha_z \leq m$.

\item Suppose that $\sum_{z\in\domz_L} \delta_z \alpha_z > m\eps'$.
We claim that $\expect{A} > \gamma \cdot \frac{m^4{\eps'}^4}{8n^3}$.
Indeed, this follows from~\cref{lemma:exp-lb} and \eqref{eqn:hl} 
using the following chain of (in)-equalities:

\[
\sum_{z\in \domz_L} \delta_z^2 \alpha_z^4 
\geq \frac{1}{8n^3} \Big(\sum_{z\in \domz_L}\delta_z \alpha_z\Big)^4
> \frac{1}{8}\frac{m^4{\eps'}^4}{n^3} \;.
\]
The first inequality essentially follows by an application Jensen's inequality as follows: 
Let $\delta \eqdef \sum_{z\in \domz_L} \delta_z^{2/3}$.
By Jensen's inequality we have:
\[
\Big(\sum_{z\in \domz_L} (\delta_z^{2/3}/\delta) \cdot \delta_z^{1/3}\alpha_z\Big)^4 
\leq \sum_{z\in \domz_L} (\delta_z^{2/3}/\delta) \cdot \delta_z^{4/3}\alpha_z^4 \;,
\]
or 
\[
  \Big(\sum_{z\in \domz_L} \delta_z \alpha_z\Big)^4 \leq \delta^3 \sum_{z\in \domz_L} \delta_z^2 \alpha_z^4 \;.
\]
Since $\delta_z \leq 2$ for all $z \in \domz$, it follows that $\delta \leq 2n$, 
which completes the proof of the claim.
\end{enumerate}
We have thus far established that 
\[
  \expect{A} > \gamma \min\left( m{\eps'}^2, \frac{m^4{\eps'}^4}{8n^3}\right) \;,
\]
giving the first inequality of~\cref{prop:exp-gap} (b).
To complete the proof of the proposition, it suffices to show that
\[
  \min\left(m{\eps'}^2, \frac{m^4{\eps'}^4}{n^3}\right)\gg \sqrt{\min(n,m)} \;.
\]
We show this below by considering the following cases:
\begin{itemize}
\item If $n\geq \beta m$, we must be in the range $1/n^{1/8}\leq \eps' \leq 1$ where $\max(\sqrt{n}/{\eps'}^2,\min(n^{7/8}/{\eps'},n^{6/7}/{\eps'}^{8/7}))=n^{6/7}/{\eps'}^{8/7}$. We get $\frac{m^4{\eps'}^4}{n^3}\leq \beta^3\frac{m^4{\eps'}^4}{n^3} \leq m{\eps'}^4 \leq m{\eps'}^2$, 
and then since $\frac{m^4{\eps'}^4}{n^3} \geq \beta^{7/2}\sqrt{m}$ by our choice of $m$ in~\cref{eq:m:choice}, 
we get that
\[
  \min\mleft(m{\eps'}^2, \frac{m^4{\eps'}^4}{n^3}\mright)\geq \sqrt{\min(n,m)} \;,
\]
as desired assuming that $\beta \geq 1$.
\item If $\beta m\geq n$, we must be in the range $0< \eps' \leq 1/n^{1/8}$, and therefore $\min(n^{7/8}/{\eps'},n^{6/7}/{\eps'}^{8/7})=n^{7/8}/{\eps'}$. Since $m{\eps'}^2 \geq \beta\sqrt{n}$ and 
$\frac{m^4{\eps'}^4}{n^3} \geq \beta^{4}\sqrt{n}$ by our choice of $m$ in~\cref{eq:m:choice}, 
we get that
\[
  \min\mleft(m{\eps'}^2, \frac{m^4{\eps'}^4}{n^3}\mright)\geq \sqrt{\min(n,m)} \;,
\]
as desired assuming that $\beta \geq 1$.
\end{itemize}

This completes the proof of~\cref{prop:exp-gap}. \qed
  
\subsubsection{Analyzing the Variance of \texorpdfstring{$A$}{}} \label{ssec:var-bound}  

The main result of this subsection is the following proposition establishing 
an upper bound on the variance of $A$ as a function of its expectation: 
  
\begin{proposition}\label{prop:var-ub}
We have that
\begin{equation}\label{eqn:bound:variance}
\var [A] \leq C''\left(\min(n,m) + \expect{ A } \right) \;,
\end{equation}
for some absolute constant $C''$.
\end{proposition}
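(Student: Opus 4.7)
The plan is to exploit Poissonization: since $M \sim \poisson{m}$, the counts $\sigma_z$ are independent across $z$ with $\sigma_z \sim \poisson{\alpha_z}$, and conditional on $(\sigma_z)_z$ the samples drawn in different bins are independent. Therefore the $A_z$ are mutually independent and $\var[A] = \sum_{z\in\domz} \var[A_z]$. It suffices to bound each $\var[A_z]$ in terms of $\delta_z^2 f(\alpha_z)$ plus a ``startup cost'' $\Pr[\sigma_z\geq 4]$, and then sum.

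For a single bin, I would apply the law of total variance, $\var[A_z] = \expect{\var[A_z\mid \sigma_z]} + \var[\expect{A_z\mid\sigma_z}]$. For the inner-variance term, the hypothesis~\eqref{eq:estimator:variance} on $\Phi$ yields, conditionally on $\sigma_z \geq 4$,
\[
\var[A_z \mid \sigma_z] \;=\; \sigma_z^2\, \var[\Phi(S_z)\mid\sigma_z]
\;\leq\; C\bigl(\sigma_z \delta_z^2 + 1\bigr)\indic{\sigma_z \geq 4},
\]
whose expectation over $\sigma_z$ is $C\bigl(\delta_z^2 \, f(\alpha_z) + \Pr[\sigma_z\geq 4]\bigr)$, where $f(\alpha_z) = \expect{\sigma_z\indic{\sigma_z\geq 4}}$ as in~\eqref{eqn:exp-closed}. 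For the outer term, $\expect{A_z\mid\sigma_z} = \sigma_z \delta_z^2 \indic{\sigma_z\geq 4}$, so
\[
\var\!\bigl[\expect{A_z\mid\sigma_z}\bigr] \;=\; \delta_z^4\, \var[\sigma_z\indic{\sigma_z\geq 4}] \;\leq\; C_1\, \delta_z^4\, f(\alpha_z)
\]
by~\cref{claim:variance:truncated:poisson}. Since $\delta_z = \normtwo{\p_z - \q_z} \leq \sqrt{2}$ is an absolute constant, $\delta_z^4 \leq 2\delta_z^2$, and combining gives
\[
\var[A_z] \;\leq\; C_2\bigl(\delta_z^2\, f(\alpha_z) + \Pr[\sigma_z\geq 4]\bigr).
\]

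Finally I would sum over $z \in \domz$. By the closed-form~\eqref{eqn:exp-closed}, $\sum_z \delta_z^2 f(\alpha_z) = \expect{A}$. For the remaining term, I use two simultaneous bounds on $\sum_z \Pr[\sigma_z \geq 4]$: trivially it is at most $n$, since each summand is at most $1$; and by Markov's inequality $\Pr[\sigma_z \geq 4] \leq \alpha_z/4$, so the sum is at most $\frac{1}{4}\sum_z \alpha_z = m/4$. Hence $\sum_z \Pr[\sigma_z \geq 4] \leq \min(n, m/4) \leq \min(n,m)$, which yields $\var[A] \leq C''\bigl(\min(n,m) + \expect{A}\bigr)$, as required.

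The only mildly delicate step is controlling the outer variance term $\delta_z^4 f(\alpha_z)$: one might initially worry this contributes a term proportional to $\expect{A}^2/n$ or something worse, but the boundedness of $\delta_z$ reduces it to the same order as $\delta_z^2 f(\alpha_z)$, at which point the decomposition closes up cleanly. The use of Poissonization to get genuine independence of the $A_z$ (and therefore additivity of variances, rather than a cross-covariance mess) is what makes the argument work.
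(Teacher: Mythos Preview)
Your proof is correct and essentially identical to the paper's. The only cosmetic differences are that you apply the law of total variance bin-by-bin (using the unconditional independence of the $A_z$'s) whereas the paper applies it globally with respect to $\sigma$, and you bound $\sum_z \probaOf{\sigma_z\geq 4}$ via Markov's inequality while the paper bounds $\sum_z \indic{\sigma_z\geq 4}\leq \min(n,M)$ before taking expectations; both routes lead to the same inequality.
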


The rest of this subsection is devoted to the proof of~\cref{prop:var-ub}.
By the law of total variance, we have that:
\[
  \var A = \expect{ \var[ A \mid \sigma ] } + \var \expectCond{ A }{ \sigma } \;.
\]
We will proceed to bound each term from above, which will give the proof.
We start with the first term.
Conditioned on $\sigma_z = |S_z|$,~\cref{eq:estimator:variance} gives that
\[
\var[ A_z \mid \sigma_z ] \leq C\sigma_z^2 \left(\frac{\delta_z^2}{\sigma_z}+\frac{1}{\sigma_z^2}\right) \indic{\sigma_z\geq 4}
= C\left(1+\expectCond{ A_z }{ \sigma_z } \right) \indic{\sigma_z\geq 4} \;.
\]
Therefore, for $\sigma\eqdef (\sigma_z)_{z\in\domz}$, we can write
\begin{equation}\label{eq:cond:var}
\var[ A \mid \sigma ] \leq C\left(\min(n,M) + \expectCond{ A }{ \sigma } \right) \;,
\end{equation}
where we used the inequality 
$\sum_{z\in\domz} \indic{\sigma_z\geq 4} \leq \sum_{z\in\domz} \indic{\sigma_z\geq 1} \leq \min(n,M)$.
From~\cref{eq:cond:var}, we immediately get
\[
\expect{ \var[ A \mid \sigma ] }  \leq C\left(\min(n,m) + \expect{ A } \right) \;,
\]
as desired.

We now proceed to bound the second term.
As shown in~\cref{lemma:exp-lb}, 
$\expectCond{ A }{ \sigma } = \sum_{z\in\domz} \sigma_z \delta_z^2 \indic{\sigma_z\geq 4}$.
By the independence of the $\sigma_z$'s, we obtain that
\begin{equation} \label{eqn:ve}
\var \left[ \expectCond{ A }{ \sigma } \right] = \sum_{z\in\domz} \delta_z^4 \var[\sigma_z \indic{\sigma_z\geq 4}] \;.
\end{equation}
From \eqref{eqn:ve} and~\cref{claim:variance:truncated:poisson}, recalling that $\delta_z\leq 2$, $z \in \domz$, 
we get that 
\[
  \var \left[ \expectCond{ A }{\sigma} \right] \leq 4C'\sum_{z\in\domz} \delta_z^2 \expect{\sigma_z \indic{\sigma_z\geq 4}} = 4C'\expect{A} \;.
\]
This completes the proof of~\cref{prop:var-ub}.

\subsubsection{Completing the Proof} \label{ssec:cheb}
Recall that the threshold of the algorithm is defined to be  
\new{$\tau \eqdef \zeta\sqrt{\min(n,m)}$}. 

In the completeness case, by~\cref{prop:exp-gap} (a),
we have that $\expect{A}=0$.~\cref{prop:var-ub} then gives 
that $\var [A] \leq C'' \cdot \min(n,m)$. Therefore, by Chebyshev's inequality we obtain
\[
\probaOf{ A \geq \tau } \leq \frac{\var [A]}{\tau^2} \leq \frac{1}{\zeta^2}C''\frac{\min(n,m)}{\min(n,m)} \leq \frac{1}{3} \;,
\]
where the last inequality follows by choosing the \new{constant $\zeta$ to be sufficiently small (compared to $C''$).}

In the soundness case, by Chebyshev's inequality \new{and recalling the lower bound on $\expect{A}$ from~\cref{prop:exp-gap} (b) (which implies $\tau \leq 8\frac{\zeta}{\beta\gamma}\expect{A}\leq \expect{A}/2$ as long as $\zeta$ is chosen sufficiently small)} we get
\[
\probaOf{ A < \tau } \leq \probaOf{ \abs{A-\expect{A}} \geq \expect{A}/2 } \leq 4\frac{\var [A]}{\expect{A}^2}
\leq 4C''\left(\frac{\min(n,m)}{\expect{A}^2} + \frac{1}{\expect{A}} \right) \leq \frac{1}{3} \;,
\]
where the third inequality uses~\cref{prop:var-ub} and the fourth inequality uses~\cref{prop:exp-gap} (b), assuming $\beta$ is sufficiently large.
This completes the proof of correctness. \qed
 
\section{Estimating a Polynomial of a Discrete Distribution}\label{sec:polynomial}

In this section, we consider the following general problem: 
Given a degree-$d$ $n$-variate polynomial $Q\in\R_d[X_1,\dots,X_n]$ 
and access to i.i.d. samples from a distribution $\p\in\distribs{[n]}$, 
we want to estimate the quantity $Q(\p) = Q(\p_1,\dots,\p_n)$ to within 
an additive error $\eps$. In this section, we 
analyze an \emph{unbiased} estimator for $Q(\p)$ 
and provide quantitative bounds on its variance.

The structure of this section is as follows: In~\cref{ssec:unbiased-properties},
we describe the unbiased estimator and establish its basic properties.
In~\cref{ssec:var-bound:poly}, we bound from above the variance of the estimator.
Finally,~\cref{ssec:poly:l2} applies the aforementioned results to 
the setting that is relevant for our conditional independence tester.

\subsection{Unbiased Estimator and its Properties} \label{ssec:unbiased-properties}

We start by noting that the general case can be reduced to the case that the polynomial 
$Q$ is homogeneous.

\begin{remark}[Reduction to homogeneous polynomials]
  It is sufficient to consider, without loss of generality, the case where $Q\in\R_d[X_1,\dots,X_n]$ is a \emph{homogeneous} polynomial, 
  i.e., a sum of monomials of total degree exactly $d$. This is because otherwise one can multiply any monomial of total degree $d'<d$ by $\left(\sum_{i=1}^n X_i\right)^{d-d'}$: since $\sum_{i=1}^n \p_i = 1$, this does not affect the value of $Q(\p)$.
\end{remark}
We henceforth assume $Q$ is a homogeneous polynomial of degree $d$. 
Before stating our results, we will need to set some notation. 
Given a multi-set $S$ of independent samples from a distribution $p\in\distribs{[n]}$, 
we let $\Phi_S$ denote the \emph{fingerprint} of $S$, i.e., the vector $(\Phi_{S,1},\dots,\Phi_{S,n})\in\N^n$ of counts: 
$\sum_{i=1}^n \Phi_{S,i} = \abs{S}$, and $\Phi_{S,i}$ is the number of occurrences of $i$ in $S$. 
Moreover, for a vector $\vect{\alpha}=(\alpha_1,\dots,\alpha_n)\in\N^n$, 
we write $X^{\vect{\alpha}}$ for the monomial $X^{\vect{\alpha}}\eqdef \prod_{i=1}^n X_i^{\alpha_i}$, 
$\norm{\vect{\alpha}}$ for the $\lp[1]$-norm $\sum_{i=1}^n \alpha_i$, 
and $\binom{\norm{\vect{\alpha}}}{\vect{\alpha}}$ for the multinomial coefficient 
$\frac{\norm{\vect{\alpha}}!}{\alpha_1!\cdots\alpha_n!}$. Finally, for any integer $d\geq 0$, 
we denote by $\mathcal{H}_d\subseteq \R_d[X_1,\dots,X_n]$ 
the set of homogeneous degree-$d$ $n$-variate polynomials.

The estimators we consider are symmetric, that is only a function of the fingerprint $\Phi_S$. 
We first focus on the special case $N=d$.

\begin{lemma} \label{prop:unbiased:existence:Neqd}
  There exists an unbiased symmetric \emph{linear} estimator for $Q(\p)$, 
  i.e., a linear function $U^d_d\colon \R_d[X_1,\dots,X_n] \to \R_d[X_1,\dots,X_n]$ such that
  \[
      \expect{U^d_d Q(\Phi_S)} = Q(\p) \;,
  \]
  where $S$ is obtained by drawing $d$ independent samples from $\p$.
\end{lemma}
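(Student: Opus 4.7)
The plan is to exhibit $U_d^d$ explicitly using falling-factorial monomials and verify unbiasedness by a direct moment computation on the multinomial distribution. By the homogeneity reduction noted just before the lemma, I may assume $Q = \sum_{\vect{\alpha}:\norm{\vect{\alpha}}=d} c_{\vect{\alpha}} X^{\vect{\alpha}}$. Write $(y)_k \eqdef y(y-1)\cdots(y-k+1)$ for the falling factorial, and define
\[
U_d^d Q(Y_1,\dots,Y_n) \eqdef \frac{1}{d!}\sum_{\substack{\vect{\alpha}\in\N^n \\ \norm{\vect{\alpha}}=d}} c_{\vect{\alpha}} \prod_{i=1}^n (Y_i)_{\alpha_i}\;.
\]
This is manifestly a degree-$d$ polynomial in $Y_1,\dots,Y_n$ whose coefficients depend linearly on those of $Q$, so $U_d^d$ is a well-defined linear operator on $\mathcal{H}_d$.

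The next step is to verify that $\expect{U_d^d Q(\Phi_S)} = Q(\p)$ when $S$ consists of exactly $d$ i.i.d. draws from $\p$. The key combinatorial observation is that, since $\norm{\Phi_S}=d=\norm{\vect{\alpha}}$, the product $\prod_i (\Phi_{S,i})_{\alpha_i}$ vanishes unless $\Phi_{S,i}\geq \alpha_i$ for every $i$, and this in turn forces $\Phi_S=\vect{\alpha}$ coordinate-wise; in that case $\prod_i (\Phi_{S,i})_{\alpha_i}=\prod_i \alpha_i!$. Hence
\[
\expect{\prod_{i=1}^n(\Phi_{S,i})_{\alpha_i}} = \probaOf{\Phi_S=\vect{\alpha}}\cdot \prod_i\alpha_i! = \binom{d}{\vect{\alpha}} \p^{\vect{\alpha}}\cdot \prod_i\alpha_i! = d!\,\p^{\vect{\alpha}}\;,
\]
using the multinomial probability. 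Plugging into the definition gives $\expect{U_d^d Q(\Phi_S)} = \sum_{\vect{\alpha}} c_{\vect{\alpha}}\p^{\vect{\alpha}} = Q(\p)$, which is the required unbiasedness. Symmetry is automatic since $U_d^d Q(\Phi_S)$ is by construction a function of the fingerprint $\Phi_S$ only, not of the ordering of $S$.

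I do not expect any real obstacle here: the whole argument is a one-line falling-factorial identity once the right estimator is written down. The only subtlety worth flagging is that the coincidence $\prod_i(\Phi_{S,i})_{\alpha_i}=\prod_i\alpha_i!\cdot \indic{\Phi_S=\vect{\alpha}}$ is special to the balanced case $N=\norm{\vect{\alpha}}=d$; this is what makes $N=d$ the natural base case and what forces the more delicate bookkeeping that the subsequent extension to $N>d$ (presumably carried out by averaging $U_d^d$ over size-$d$ subsets of the sample) will need to handle. Uniqueness, if desired, follows from the observation that the expectation of any symmetric estimator with $N=d$ samples is a homogeneous polynomial of degree $d$ in $\p$ whose coefficients are determined by the estimator's values on the finitely many fingerprints $\vect{\alpha}$ with $\norm{\vect{\alpha}}=d$, so matching $Q$ term by term pins down $U_d^d Q$ uniquely.
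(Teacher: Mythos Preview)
Your proof is correct and is essentially the same as the paper's: both define the estimator on monomials via the indicator $\indic{\Phi_S=\vect{\alpha}}$ (your falling-factorial formulation $\frac{1}{d!}\prod_i(\Phi_{S,i})_{\alpha_i}$ is identical to the paper's $\binom{d}{\vect{\alpha}}^{-1}\prod_i\binom{\Phi_{S,i}}{\alpha_i}$), exploit the key fact that when $\norm{\Phi_S}=\norm{\vect{\alpha}}=d$ this product collapses to a Kronecker delta, and then extend linearly. Your remarks on uniqueness are extraneous here, as the paper handles that separately in the next proposition.
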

\begin{proof}
For any ordered $d$-tuple $T \in [n]^d$, by independence of the samples in $S$, 
we see that $\probaOf{S=T}=\prod_{i=1}^n p_{T_i} = \p^{\Phi_T}$. 
For any $\vect{\alpha} \in \N^n$ with $\norm{\vect{\alpha}}= d$, 
the number of $T \in [n]^d$ with fingerprint $\vect{\alpha}$ is $\binom{d}{\vect{\alpha}}$. 
Thus, we have that $\probaOf{\Phi_S=\vect{\alpha}}=\binom{d}{\vect{\alpha}} \p^{\vect{\alpha}}$. 
Noting that since $\norm{\vect{\alpha}}=\norm{\Phi_S}$, 
$\prod_{i=1}^n \binom{\Phi_{S,i}}{\alpha_i} = \delta_{\Phi_S, \alpha}$, 
we can define
\begin{equation} \label{eq:def:unbiased:Neqd}
	U^d_d X^{\vect{\alpha}}(\Phi_S) \eqdef \binom{d}{\vect{\alpha}}^{-1} \indic{\Phi_S=\vect{\alpha}} = \binom{d}{\vect{\alpha}}^{-1} \prod_{i=1}^n \binom{\Phi_{S,i}}{\alpha_i} \; .
\end{equation}
Then we have $\expect{U^d_d X^{\vect{\alpha}}(\Phi_S)} = \p^{\vect{\alpha}}$. 
We extend this linearly to all $Q \in \mathcal{H}_d$. 
By linearity of expectation, we obtain an unbiased estimator for any such $Q(\p)$.
\end{proof}

We can generalize this to $N \geq d$, by taking the average over all subsets of size $d$ of $S$ of the above estimator.
\begin{proposition}[Existence]\label{prop:unbiased:existence}
For  $N \geq d$ and $Q \in \mathcal{H}_d$ written in terms of monomials as 
$Q(X)= \sum_{\vect{\alpha}} c_{\vect{\alpha}} X^{\vect{\alpha}}$, the symmetric linear estimator
  \begin{equation}\label{eq:def:unbiased}
    U_N Q(\Phi_S) \eqdef \sum_{\vect{\alpha}} c_{\vect{\alpha}} \binom{N}{\vect{\alpha},N-\norm{\vect{\alpha}}}^{-1} \prod_{i=1}^n \binom{X_i}{\alpha_i}
  \end{equation}
  is an unbiased estimator for $Q(\p)$.
\end{proposition}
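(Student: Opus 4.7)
The plan is to verify the claim monomial by monomial via linearity, and to reduce the expectation computation to a single multinomial identity. Since $U_N$ is linear in $Q$ and $Q$ is a linear combination of monomials $X^{\vect{\alpha}}$ with $\norm{\vect{\alpha}}=d$, by linearity of expectation it suffices to prove
\[
\expect{U_N X^{\vect{\alpha}}(\Phi_S)} = \binom{N}{\vect{\alpha},N-\norm{\vect{\alpha}}}^{-1} \expect{\prod_{i=1}^n \binom{\Phi_{S,i}}{\alpha_i}} \;\stackrel{?}{=}\; \p^{\vect{\alpha}}
\]
for each such $\vect{\alpha}$. Hence the whole proposition boils down to the identity
\begin{equation}\label{eq:key-id}
\expect{\prod_{i=1}^n \binom{\Phi_{S,i}}{\alpha_i}} = \binom{N}{\vect{\alpha},N-\norm{\vect{\alpha}}}\p^{\vect{\alpha}}\,.
\end{equation}

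To prove \eqref{eq:key-id}, I would use a combinatorial double-counting argument that interprets $\prod_{i=1}^n\binom{\Phi_{S,i}}{\alpha_i}$ as counting ordered assignments. Label the samples $Y_1,\dots,Y_N\stackrel{\text{iid}}{\sim}\p$, so that $\Phi_{S,i}=\sum_{j=1}^N\indic{Y_j=i}$. Each product $\prod_i\binom{\Phi_{S,i}}{\alpha_i}$ equals the number of ways to pick disjoint subsets $T_1,\dots,T_n$ of $[N]$ with $|T_i|=\alpha_i$ and $Y_j=i$ for every $j\in T_i$. Equivalently, after choosing an ordering of each block, it is $\frac{1}{\prod_i\alpha_i!}$ times the number of injective maps $\phi\colon[\alpha_1]\sqcup\cdots\sqcup[\alpha_n]\hookrightarrow[N]$ compatible with the samples. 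Taking expectation and using independence of the $Y_j$'s,
\[
\expect{\prod_{i=1}^n\binom{\Phi_{S,i}}{\alpha_i}} = \frac{1}{\prod_i\alpha_i!}\sum_{\substack{(j_1,\dots,j_d)\in[N]^d \\ \text{distinct}}} \p^{\vect{\alpha}} = \frac{N(N-1)\cdots(N-d+1)}{\prod_i\alpha_i!}\,\p^{\vect{\alpha}}\,,
\]
which is exactly $\binom{N}{\vect{\alpha},N-\norm{\vect{\alpha}}}\p^{\vect{\alpha}}$, proving \eqref{eq:key-id}.

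An alternative (which avoids re-deriving anything combinatorial) is to invoke~\cref{prop:unbiased:existence:Neqd} directly: given $N\geq d$ samples $S$, pick a uniformly random subset $S'\subseteq S$ of size $d$, apply the unbiased estimator $U^d_d$ from~\cref{prop:unbiased:existence:Neqd} to $S'$, and average over the $\binom{N}{d}$ possible choices. The resulting symmetric function of $\Phi_S$ is automatically unbiased by the tower property, and a short bookkeeping calculation (using $\binom{\Phi_{S,i}}{\alpha_i}=\sum_{S'}\indic{\Phi_{S',i}=\alpha_i}\binom{N-d}{\Phi_{S,i}-\alpha_i}/\binom{N}{d}$ type identities) shows that the resulting estimator agrees with the closed form~\eqref{eq:def:unbiased}.

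There is no real obstacle here: uniqueness (as a symmetric estimator) will presumably be handled separately later in the section, and the proof of existence reduces to the elementary multinomial identity above. The only mildly delicate point is being careful that the cases $\alpha_i>\Phi_{S,i}$ are automatically handled since $\binom{\Phi_{S,i}}{\alpha_i}=0$ in that regime, which matches the combinatorial count of empty sets of assignments.
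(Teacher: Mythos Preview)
Your proposal is correct. You actually offer two routes, and the second one---averaging $U^d_d$ over all size-$d$ subsets and checking that this symmetrization coincides with the closed form~\eqref{eq:def:unbiased}---is precisely the paper's proof. The paper carries out explicitly the ``short bookkeeping calculation'' you allude to: it observes that the number of size-$d$ subsets $S'\subseteq S$ with fingerprint exactly $\vect{\alpha}$ is $\prod_i\binom{\Phi_{S,i}}{\alpha_i}$, and then the average of $U^d_d X^{\vect{\alpha}}(\Phi_{S'})=\binom{d}{\vect{\alpha}}^{-1}\indic{\Phi_{S'}=\vect{\alpha}}$ over the $\binom{N}{d}$ subsets immediately yields the formula.

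Your first route, the direct verification of the multinomial identity $\expect{\prod_i\binom{\Phi_{S,i}}{\alpha_i}}=\binom{N}{\vect{\alpha},N-d}\p^{\vect{\alpha}}$ via counting disjoint labeled assignments, is a genuinely different (and slightly shorter) argument. It bypasses~\cref{prop:unbiased:existence:Neqd} entirely and proves unbiasedness in one shot. What you lose is the structural byproduct~\eqref{eq:estimator-is-symmetrization}, namely that $U_N$ is literally the average of $U^d_d$ over subsets---a fact the paper records explicitly and which gives some intuition for why the variance should shrink with $N$. For the proposition as stated, though, your direct route is perfectly sufficient.
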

\begin{proof}
For the case $N=d$, this follows from~\cref{prop:unbiased:existence:Neqd,eq:def:unbiased:Neqd}. 
For any set of $d$ indices $I \subseteq [N], |I|=d$, the subset $S_I=\setOfSuchThat{S_i}{i \in I}$ 
is a set of $d$ independent samples from $\p$, thus $U^d_d Q(\Phi_{S_I})$ 
is an unbiased estimator for $Q(\p)$. To get a symmetric unbiased estimator (and to reduce the variance), 
we can take the average over all subsets $S_I$ of $S$ of size $d$. 
We claim that this estimator is $U_N$ as defined above, i.e., that
  \begin{equation} \label{eq:estimator-is-symmetrization}
    U_N  Q(\Phi_S) = \binom{N}{d}^{-1} \sum_{S' \subseteq S,|S'|=d} U^d_d Q(\Phi_{S'}) \; .
  \end{equation}
By linearity of expectation, the RHS is an unbiased estimator for $Q(\p)$, 
and so~\eqref{eq:estimator-is-symmetrization} suffices to show the proposition. 
By linearity of $U_N$ and $U^d_d$, we need to show~\eqref{eq:estimator-is-symmetrization} 
for each monomial $X^{\vect{\alpha}}$. Noting that the number of subsets $S'$ of $S$ of size $d$ 
that have fingerprint $\alpha$ is $\prod_{i=1}^n \binom{\Phi_{S,i}}{\alpha_i}$, we have
  \begin{align*}
    \binom{N}{d}^{-1} \sum_{S' \subseteq S,|S'|=d} U^d_d X^\vect{\alpha}(\Phi_{S'})  
    & = \binom{N}{d}^{-1} \sum_{S' \subseteq S,|S'|=d} \binom{d}{\vect{\alpha}}^{-1} \indic{\Phi_{S'}=\vect{\alpha}} \\
    &= \binom{N}{\vect{\alpha},N-\norm{\vect{\alpha}}}^{-1} \sum_{S' \subseteq S,|S'|=d} \indic{\Phi_{S'}=\vect{\alpha}} \\
    & = \binom{N}{\vect{\alpha},N-\norm{\vect{\alpha}}}^{-1} \prod_{i=1}^n \binom{\Phi_{S,i}}{\alpha_i} \\
    & = U_N X^{\vect{\alpha}}(\Phi_S) \;.
  \end{align*}
This completes the proof.
\end{proof}

\begin{proposition}[Uniqueness]\label{prop:poly:estim:uniqueness}
The unbiased estimator $U_N Q(\Phi_S)$ of \eqref{eq:def:unbiased} 
is unique among symmetric estimators. 
That is, for every $N\geq d$, for any symmetric estimator $V_N\colon [n]^N\to\R$ satisfying
$\expect{V_N(\Phi_S)} = Q(\p)$,
where $S$ is a multiset of $N$ samples drawn from $\p$, 
one must have $V_N(\Phi_S) = U_N Q(\Phi_S)$ for all $S$.
\end{proposition}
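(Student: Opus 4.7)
The plan is to establish uniqueness by the direct linear-algebraic argument that the constant-$N$-sample moment map from symmetric statistics to polynomials-on-the-simplex is injective. Suppose $V_N$ and $V_N'$ are both symmetric unbiased estimators of $Q(\p)$, and set $W \eqdef V_N - V_N'$. Then $W$ is symmetric and $\shortexpect[W(\Phi_S)]=0$ for every $\p\in\distribs{[n]}$. It suffices to show $W\equiv 0$ on the set of fingerprints $\setOfSuchThat{\vect{\beta}\in\N^n}{\norm{\vect{\beta}}=N}$, since symmetric estimators are determined by their values on fingerprints.

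First I would expand the expectation. Because the probability of observing fingerprint $\vect{\beta}$ from $N$ independent draws from $\p$ is $\binom{N}{\vect{\beta}}\p^{\vect{\beta}}$, the vanishing-expectation condition becomes the identity
\[
\sum_{\vect{\beta}:\norm{\vect{\beta}}=N} W(\vect{\beta})\binom{N}{\vect{\beta}}\p^{\vect{\beta}} \;=\; 0
\]
for every $\p$ in the open simplex of $\R^n$. The key step is then to exploit the fact that every monomial appearing on the left has the same total degree $N$. Given any $\vect{x}\in\R_{>0}^n$, set $\lambda=\sum_i x_i>0$ and $\p=\vect{x}/\lambda$, so $\p$ lies in the open simplex; substituting yields
\[
\sum_{\vect{\beta}:\norm{\vect{\beta}}=N} W(\vect{\beta})\binom{N}{\vect{\beta}}\vect{x}^{\vect{\beta}} \;=\; \lambda^{N}\sum_{\vect{\beta}:\norm{\vect{\beta}}=N} W(\vect{\beta})\binom{N}{\vect{\beta}}\p^{\vect{\beta}} \;=\; 0.
\]
Thus the polynomial on the left vanishes on the nonempty open set $\R_{>0}^n$, hence vanishes identically in $\R[X_1,\dots,X_n]$. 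Since distinct monomials are linearly independent as elements of the polynomial ring and $\binom{N}{\vect{\beta}}>0$, we conclude $W(\vect{\beta})=0$ for every $\vect{\beta}$ with $\norm{\vect{\beta}}=N$, i.e., $V_N=V_N'$.

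The only subtlety worth noting is that one might worry about the polynomial identity only being asserted on the simplex, which is an $(n-1)$-dimensional affine slice where degree-$N$ monomials a priori satisfy nontrivial linear relations (for instance $(\sum_i X_i)^N$ equals $1$ there). The homogeneity trick above sidesteps this completely by rescaling to the full positive orthant, so no separate dimension count is needed. I expect this to be the main conceptual step; everything else is bookkeeping.
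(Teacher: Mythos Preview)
Your proof is correct and rests on the same key idea as the paper's: a homogeneous polynomial that vanishes on the probability simplex must vanish on all of $\R_{>0}^n$ (by rescaling), hence identically. The paper arrives at this step slightly less directly---it first reduces to the case $N=d$ and phrases things via the bijection $U_d^d$ between homogeneous degree-$d$ polynomials and symmetric $d$-sample estimators---whereas you go straight to the difference $W$ and the moment expansion; your route is cleaner but the mathematical content is the same.
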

\begin{proof}
We first show that it is sufficient to establish uniqueness only for the case $d=N$, 
i.e., to show that $\operatorname{U}_d^d$ maps polynomials to singletons. 
To argue this is enough, suppose $N > d$, and with have two different $N$-sample 
estimators $V_N,W_N$ for a homogeneous degree-$d$ polynomial $Q$. 
Considering $R\eqdef \left(\sum_{i=1}^n X_i\right)^{N-d}Q$ 
which is homogeneous of degree $N$ and agrees with $Q$ 
on every probability distribution $\p$, we obtain two different 
$N$-sample estimators $V_N,W_N$ for a homogeneous degree-$N$ polynomial.
  
When $N=d$, we have a map $U^N_N$ from polynomials to estimators that gives an unbiased estimator for the polynomial.
By~\eqref{eq:def:unbiased:Neqd}, for $Q(X)=\sum_{\vect{\alpha}} c_{\vect{\alpha}} X^{\vect{\alpha}}$, this is given by
\[
U^d_d Q(\Phi_S) = \sum_{\vect{\alpha}} c_{\vect{\alpha}} \binom{d}{\vect{\alpha}}^{-1} \indic{\Phi_S=\vect{\alpha}} \;.
\]
Any symmetric estimator on $N$ samples can by written as a linear combination of $\indic{\Phi_S=\vect{\alpha}}$. 
Hence, given an estimator $V_N$, we can find a unique polynomial $Q_{V_n}$ 
with $U^d_d Q_{V_n}(\Phi_S) = V_n$ by choosing $c_{\vect{\alpha}}$ 
to match the coefficients in this linear combination, i.e., $U^d_d$ is a bijection between polynomials and symmetric estimators. 
Thus, if we have two different $N$-sample estimators $V_N,W_N$ 
for a homogeneous degree-$N$ polynomial $Q$, 
at least one of them is $U^d_d R$ for some homogeneous degree-$N$ polynomial $R$. 
  
Now we have an estimator $V_N$ that is unbiased for two different homogeneous 
degree-$N$ polynomials $Q$ and $R$.  So we get that for every $\p\in\distribs{[n]}$, 
$Q(\p)=\shortexpect_S[ V_N(\Phi_S) ] = R(\p)$. Hence, their difference $D\eqdef Q-R$ 
is a non-zero homogeneous degree-$N$ polynomial which vanishes on every point 
$(x_1,\dots,x_n)\in\N^n$ with $\sum_{i=1}^n x_i = 1$. 
By homogeneity, for every non-zero $\vect{x}=(x_1,\dots,x_n)\in\R_+^n$,
\[
D(\vect{x}) = \normone{\vect{x}}^d D\mleft(\frac{\vect{x}}{\normone{\vect{x}}}\mright) = \normone{\vect{x}}^N\cdot 0 = 0 \;,
\]
and therefore $D$ vanishes on the whole 
non-negative quadrant $\R_+^n = \setOfSuchThat{\vect{x}\in\R^n}{x_i \geq 0 \text{ for all } i}$. 
Being identically zero on an open set, $D$ must be the zero polynomial, leading to a contradiction.
\end{proof}

The above shows existence and uniqueness of an unbiased estimator, 
provided the number of samples $N$ is at least the degree $d$ 
of the polynomial (in $\p$) we are trying to estimate. 
The proposition below shows this is necessary: 
if $N<d$, there is no unbiased estimator in general.
\begin{proposition}
Let $Q\in\mathcal{H}_d$ be a homogeneous $n$-variate polynomial 
such that $\sum_{k=1}^n X$ does not divide $Q$. Then, there 
exists no unbiased estimator for $Q(\p)$ from $N$ samples unless $N\geq d$.
\end{proposition}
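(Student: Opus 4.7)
\medskip

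\noindent\textbf{Proof plan.} The plan is to analyze what functions on distributions can arise as the expectation of an $N$-sample estimator, and then exploit homogeneity together with the constraint $\sum_i X_i = 1$ on the simplex. The key observation is that for $N < d$, the expectation of any $N$-sample estimator is a polynomial in $\p$ of degree strictly less than $d$, and matching a homogeneous degree-$d$ polynomial $Q$ on the simplex forces $\sum_i X_i$ to divide $Q$.

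\medskip

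\noindent\textbf{Step 1: Expectation as a homogeneous polynomial of degree $N$.} Let $V_N \colon [n]^N \to \R$ be any estimator (not necessarily symmetric). For a multiset $S = (S_1, \ldots, S_N)$ of $N$ independent samples from $\p$, compute
\[
    \shortexpect[V_N(S)] = \sum_{(s_1, \ldots, s_N) \in [n]^N} V_N(s_1, \ldots, s_N) \prod_{k=1}^N p_{s_k} \;.
\]
Every monomial $\prod_{k=1}^N p_{s_k}$ has total degree exactly $N$ in the $p_i$'s, so this defines a homogeneous polynomial $R \in \mathcal{H}_N$ (in the variables $X_1, \ldots, X_n$) such that $\shortexpect[V_N(S)] = R(\p)$.

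\medskip

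\noindent\textbf{Step 2: Matching on the simplex forces a polynomial identity.} Suppose $V_N$ is unbiased for $Q(\p)$, i.e., $R(\p) = Q(\p)$ for every probability distribution $\p \in \distribs{[n]}$. I would then promote this equality to one on the entire positive orthant $\R_+^n$ via homogeneity: for any non-zero $\vect{x} \in \R_+^n$, setting $\q \eqdef \vect{x}/\normone{\vect{x}} \in \distribs{[n]}$ and using that $R$ is homogeneous of degree $N$ while $Q$ is homogeneous of degree $d$, we get
\[
    R(\vect{x}) = \normone{\vect{x}}^N R(\q) = \normone{\vect{x}}^N Q(\q) = \normone{\vect{x}}^{N-d} Q(\vect{x}) \;.
\]
Rearranging, the two polynomials $\left(\sum_{k=1}^n X_k\right)^{d-N} R(X)$ and $Q(X)$ agree on $\R_+^n$. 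Since both are polynomials and $\R_+^n$ has non-empty interior, they must coincide as elements of $\R[X_1, \ldots, X_n]$, giving the identity
\[
    Q = \left(\sum_{k=1}^n X_k\right)^{d-N} R \;.
\]

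\medskip

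\noindent\textbf{Step 3: Deriving the contradiction.} If $N < d$, then the exponent $d - N \geq 1$, and the above identity exhibits $\sum_{k=1}^n X_k$ as a divisor of $Q$, contradicting the hypothesis that $\sum_{k=1}^n X_k$ does not divide $Q$. Hence no unbiased estimator can exist when $N < d$.

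\medskip

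\noindent\textbf{Main obstacle.} The proof is short, but the one place that requires a small amount of care is Step 2: one must justify passing from agreement on the simplex to a genuine polynomial identity. The homogeneity trick handles this cleanly by first extending agreement to all of $\R_+^n$ (a Zariski-dense open set), after which the polynomial identity follows. Everything else is essentially bookkeeping about degrees.
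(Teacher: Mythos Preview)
Your proof is correct and follows essentially the same route as the paper: both arguments hinge on showing that the expectation of any $N$-sample estimator is a homogeneous degree-$N$ polynomial $R$, then using the homogeneity scaling $\vect{x}\mapsto \vect{x}/\normone{\vect{x}}$ to pass from agreement on the simplex to the polynomial identity $Q=\big(\sum_k X_k\big)^{d-N}R$, which contradicts the non-divisibility hypothesis when $N<d$. The only difference is packaging: the paper invokes its earlier uniqueness proposition (which internally runs the same homogeneity-plus-Zariski-density argument) to conclude $Q=R\cdot\big(\sum_k X_k\big)^{d-N}$, whereas you carry out that argument directly and self-containedly; your version is arguably cleaner since it avoids the detour through uniqueness.
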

\begin{proof}
Suppose by contradiction that, for such a $Q\in\mathcal{H}_d$, 
there exists an unbiased estimator for $Q(\p)$ with $N<d$ samples. 
Then, since $\operatorname{U}_N^N$ (with the notation of the 
proof of~\cref{prop:poly:estim:uniqueness}) is invertible, 
this estimator is also an unbiased estimator for some homogeneous 
degree-$N$ polynomial $R\in\mathcal{H}_N$. Therefore, it is also 
an unbiased estimator for the degree-$d$ homogeneous polynomial 
$R'\eqdef R\cdot(\sum_{k=1}^n X_k)^{d-N}\in\mathcal{H}_d$. 
But by~\cref{prop:poly:estim:uniqueness}, one must then 
have $Q=R'$, which is impossible since $\sum_{k=1}^n X$ does not divide $Q$.
\end{proof}

\subsection{Bounding the Variance of the Unbiased Estimator} \label{ssec:var-bound:poly}

Having established existence and uniqueness of our unbiased estimator, 
it remains to bound its variance:
\begin{theorem}\label{theo:expect:squared:unbiased}
  Fix $N\geq d$, and let $U_N\colon \R_d[X_1,\dots,X_n] \to \R_d[X_1,\dots,X_n]$ be as above. Then, for every $Q\in\mathcal{H}_d$,
  \begin{equation}\label{eq:expect:squared:unbiased}
      \expect{(U_N Q(\Phi_S))^2} 
      = \sum_{\substack{ \vect{s}\in\N^n \\\norm{\vect{s}}\leq d}}  \p^{\vect{s}} \left(\frac{d^{\norm{\vect{s}}}Q(\p)}{dX^{\vect{s}}}\right)^2 \frac{(N-d)!^2}{N! (N-2d+\norm{\vect{s}})! \prod_{i=1}^n s_i!} \;,
  \end{equation}
  where the expectation is over $S$ obtained by drawing $N$ independent samples from $\p$.
\end{theorem}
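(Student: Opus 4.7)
The plan is to reduce the computation to a bilinear form in two monomials and then perform an explicit combinatorial count. Write $Q = \sum_{\vect{\alpha}} c_{\vect{\alpha}} X^{\vect{\alpha}}$ (sum over $\vect{\alpha}\in\N^n$ with $\norm{\vect{\alpha}}=d$). By linearity of $U_N$ and bilinearity of the expectation,
\[
\expect{(U_N Q(\Phi_S))^2} \;=\; \sum_{\vect{\alpha},\vect{\beta}} c_{\vect{\alpha}} c_{\vect{\beta}}\, E_{\vect{\alpha},\vect{\beta}}, \qquad E_{\vect{\alpha},\vect{\beta}} \eqdef \expect{U_N X^{\vect{\alpha}}(\Phi_S)\cdot U_N X^{\vect{\beta}}(\Phi_S)}.
\]
My aim is to show that $E_{\vect{\alpha},\vect{\beta}}$ splits as a sum over $\vect{s}$ of a term proportional to $\partial^{\vect{s}} X^{\vect{\alpha}}(\p)\cdot\partial^{\vect{s}} X^{\vect{\beta}}(\p)$ with a prefactor independent of $\vect{\alpha},\vect{\beta}$. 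Summing over $\vect{\alpha},\vect{\beta}$ against $c_{\vect{\alpha}}c_{\vect{\beta}}$ will then collapse the bilinear expression into $(\partial^{\vect{s}} Q(\p))^2$, yielding \eqref{eq:expect:squared:unbiased}.

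To compute $E_{\vect{\alpha},\vect{\beta}}$, I would start from \eqref{eq:def:unbiased} and interpret $\prod_i \binom{\Phi_{S,i}}{\alpha_i}\binom{\Phi_{S,i}}{\beta_i}$ combinatorially as counting pairs of size-$d$ subsets $(I,J)$ of $[N]$ such that $\Phi_{S_I}=\vect{\alpha}$ and $\Phi_{S_J}=\vect{\beta}$. I then partition these pairs by the fingerprint $\vect{s}\eqdef \Phi_{S_{I\cap J}}$, which necessarily satisfies $\vect{s}\le\min(\vect{\alpha},\vect{\beta})$ componentwise and $\norm{\vect{s}}\le d$. For each value $k\eqdef\norm{\vect{s}}$, the number of pairs with $|I\cap J|=k$ equals $\binom{N}{k}\binom{N-k}{d-k}\binom{N-d}{d-k}=\frac{N!}{k!(d-k)!^2(N-2d+k)!}$, with the convention $1/m!=0$ for $m<0$ that automatically encodes $k\ge 2d-N$. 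Because the samples $S_{I\cap J}$, $S_{I\setminus J}$, $S_{J\setminus I}$ are mutually independent, the joint probability that they have fingerprints $\vect{s}$, $\vect{\alpha}-\vect{s}$, $\vect{\beta}-\vect{s}$ factors as $\binom{k}{\vect{s}}\binom{d-k}{\vect{\alpha}-\vect{s}}\binom{d-k}{\vect{\beta}-\vect{s}}\,\p^{\vect{\alpha}+\vect{\beta}-\vect{s}}$.

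Plugging these counts in and dividing by the two normalizing factors $\binom{N}{\vect{\alpha},N-d}^{-1}$ and $\binom{N}{\vect{\beta},N-d}^{-1}$ from \eqref{eq:def:unbiased}, the multinomial overcounting cancels cleanly. A short bookkeeping computation yields
\[
E_{\vect{\alpha},\vect{\beta}} \;=\; \frac{(N-d)!^2}{N!} \sum_{\substack{\vect{s}\in\N^n\\ \norm{\vect{s}}\le d}} \frac{\p^{\vect{\alpha}+\vect{\beta}-\vect{s}}}{(N-2d+\norm{\vect{s}})!\prod_i s_i!}\prod_{i=1}^n \frac{\alpha_i!\,\beta_i!}{(\alpha_i-s_i)!(\beta_i-s_i)!},
\]
where the summand is understood to vanish whenever $\vect{s}\not\le\min(\vect{\alpha},\vect{\beta})$. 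Since $\partial^{\vect{s}} X^{\vect{\alpha}}(\p)=\prod_i \frac{\alpha_i!}{(\alpha_i-s_i)!}p_i^{\alpha_i-s_i}$ for $\vect{s}\le\vect{\alpha}$ (and vanishes otherwise), the product $\p^{\vect{s}}\cdot\partial^{\vect{s}} X^{\vect{\alpha}}(\p)\cdot\partial^{\vect{s}} X^{\vect{\beta}}(\p)$ equals exactly $\prod_i \frac{\alpha_i!\beta_i!}{(\alpha_i-s_i)!(\beta_i-s_i)!}\,\p^{\vect{\alpha}+\vect{\beta}-\vect{s}}$. Substituting back into $\sum_{\vect{\alpha},\vect{\beta}} c_{\vect{\alpha}}c_{\vect{\beta}}\,E_{\vect{\alpha},\vect{\beta}}$ and using $\sum_{\vect{\alpha},\vect{\beta}} c_{\vect{\alpha}}c_{\vect{\beta}}\,\partial^{\vect{s}} X^{\vect{\alpha}}(\p)\,\partial^{\vect{s}} X^{\vect{\beta}}(\p) = (\partial^{\vect{s}} Q(\p))^2$ completes the argument.

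I expect the main obstacle to be the bookkeeping of the factorial factors: verifying that the assembled prefactor is exactly $\frac{(N-d)!^2}{N!(N-2d+\norm{\vect{s}})!\prod_i s_i!}$ independently of $\vect{\alpha},\vect{\beta}$. A small point worth checking is the range of $\vect{s}$: the combinatorial derivation naturally enforces $\vect{s}\le\min(\vect{\alpha},\vect{\beta})$, but this restriction is harmless in the final sum because $\partial^{\vect{s}} X^{\vect{\alpha}}(\p)$ already vanishes outside this range, so one may freely extend the summation to all $\vect{s}$ with $\norm{\vect{s}}\le d$ as in \eqref{eq:expect:squared:unbiased}.
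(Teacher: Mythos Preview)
Your proposal is correct and follows essentially the same approach as the paper: reduce to the bilinear form $E_{\vect{\alpha},\vect{\beta}}$, decompose according to the overlap $\vect{s}$, and recognize the factor $\prod_i \frac{\alpha_i!\beta_i!}{(\alpha_i-s_i)!(\beta_i-s_i)!}\,\p^{\vect{\alpha}+\vect{\beta}-\vect{s}}$ as $\p^{\vect{s}}\,\partial^{\vect{s}}X^{\vect{\alpha}}(\p)\,\partial^{\vect{s}}X^{\vect{\beta}}(\p)$. The only cosmetic difference is that the paper expands $\prod_i \binom{\Phi_{S,i}}{\alpha_i}\binom{\Phi_{S,i}}{\beta_i}$ via a coordinatewise binomial identity $\binom{n}{a}\binom{n}{b}=\sum_s \binom{n}{a+b-s}\binom{a+b-s}{a-s,b-s,s}$ and then takes the expectation, whereas you compute the expectation directly by summing $\probaOf{\Phi_{S_I}=\vect{\alpha},\Phi_{S_J}=\vect{\beta}}$ over pairs $(I,J)$ grouped by $|I\cap J|$; since the paper's proof of that identity is precisely this pairs-of-subsets count, the two arguments are the same at heart.
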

\begin{proof}
  In order to establish the identity, we first consider monomials: for $\vect{\alpha},\vect{\beta}\in\N^n$, we will analyze $\expect{U_N X^{\vect{\alpha}}(\Phi_S))U_N X^{\vect{\beta}}(\Phi_S))}$, before extending it to $Q\in\mathcal{H}_d$, relying on the linearity of $U_N$. First, note that by definition of $U_N$ (in~\cref{eq:def:unbiased}),
  \begin{equation}\label{eq:UXalpha:UXbeta}
      U_N X^{\vect{\alpha}} U_N X^{\vect{\beta}} = \frac{1}{\binom{N}{\vect{\alpha}, N-\norm{\vect{\alpha}}}\binom{N}{\vect{\beta},N-\norm{\vect{\beta}}}} \prod_{i=1}^n \binom{X_i}{\alpha_i}\binom{X_i}{\beta_i}\,.
  \end{equation}
  We will use the following fact:
  \begin{claim}\label{claim:binomial:multinomial:identity}
      For $0\leq a,b\leq n$, we have 
      \[
        \binom{n}{a}\binom{n}{b} = \sum_{s=0}^{\min(a,b)} \binom{n}{a+b-s}\binom{a+b-s}{a-s,b-s,s}\,.
      \]
  \end{claim}
  \begin{proof}
    The left-hand-side $\binom{n}{a}\binom{n}{b}$ is the number of subsets $A$,$B$ of $[n]$ with $|A|=a$,$|B|=b$. We can group the set of such pairs of subsets by the size of their intersection $s=|A \cap B|$. Summing the size of these classes gives that
    \[
        \binom{n}{a}\binom{n}{b} = \sum_{s=0}^{\min(a,b)} \binom{n}{a-s,b-s,s,n-(a+b)+s} \;,
    \]
	which is easily seen to be equivalent to the claim by multiplying out the factorials.
  \end{proof}
  
  We can then rewrite, combining~\cref{eq:UXalpha:UXbeta,claim:binomial:multinomial:identity}, and setting $\pi_{\vect{\alpha},\vect{\beta}} \eqdef \frac{1}{\binom{N}{\vect{\alpha},N-\norm{\vect{\alpha}}}\binom{N}{\vect{\beta},N-\norm{\vect{\beta}}}}$ for convenience, that
  \begin{align*}
      U_N X^{\vect{\alpha}} U_N X^{\vect{\beta}} 
        &= \pi_{\vect{\alpha},\vect{\beta}} \prod_{i=1}^n \sum_{s=0}^{\min(\alpha_i,\beta_i)} \binom{X_i}{\alpha_i+\beta_i-s}\binom{\alpha_i+\beta_i-s}{\alpha_i-s,\beta_i-s,s} \\
        &= \pi_{\vect{\alpha},\vect{\beta}} \sum_{\substack{\vect{s}\in\N^n \\ \vect{s}\leq \min(\vect{\alpha},\vect{\beta})}}\prod_{i=1}^n \binom{X_i}{\alpha_i+\beta_i-s_i}\binom{\alpha_i+\beta_i-s_i}{\alpha_i-s_i,\beta_i-s_i,s_i} \,.
  \end{align*}
  Taking the expectation over an $N$-sample multiset $S$, we obtain 
  \begin{align*}
      \expect{ U_N X^{\vect{\alpha}}(\Phi_S) U_N X^{\vect{\beta}}(\Phi_S) } 
        &= \pi_{\vect{\alpha},\vect{\beta}} \sum_{\substack{\vect{s}\in\N^n \\ \vect{s}\leq \min(\vect{\alpha},\vect{\beta})}} 
        \prod_{i=1}^n \binom{\alpha_i+\beta_i-s_i}{\alpha_i-s_i,\beta_i-s_i,s_i} \expect{  \prod_{i=1}^n \binom{\Phi_{S,i}}{\alpha_i+\beta_i-s_i} } \;.
  \end{align*}
  Recalling the proof of~\cref{prop:unbiased:existence}, we have
\begin{align*}
\expect{  \prod_{i=1}^n \binom{\Phi_{S,i}}{\alpha_i+\beta_i-s_i} }
&= \binom{N}{\vect{\alpha}+\vect{\beta}-\vect{s}, N-(\norm{\vect{\alpha}}+\norm{\vect{\beta}}-\norm{\vect{s}})} \expect{U_N X^{\vect{\alpha}+\vect{\beta}-\vect{s}}(\Phi_S)} \\
&= \binom{N}{\vect{\alpha}+\vect{\beta}-\vect{s}, N-(\norm{\vect{\alpha}}+\norm{\vect{\beta}}-\norm{\vect{s}})} \p^{\vect{\alpha}+\vect{\beta}-\vect{s}} \;,
\end{align*}
  leading to
  \begin{align*}
      \expect{ U_N X^{\vect{\alpha}}(\Phi_S) \right.&\left. U_N X^{\vect{\beta}}(\Phi_S) } \\
        &= \pi_{\vect{\alpha},\vect{\beta}} \sum_{\substack{\vect{s}\in\N^n \\ \vect{s}\leq \min(\vect{\alpha},\vect{\beta})}} \p^{\vect{\alpha}+\vect{\beta}-\vect{s}} 
        \binom{N}{\vect{\alpha}+\vect{\beta}-\vect{s}, N-(\norm{\vect{\alpha}}+\norm{\vect{\beta}}-\norm{\vect{s}})} \prod_{i=1}^n \binom{\alpha_i+\beta_i-s_i}{\alpha_i-s_i,\beta_i-s_i,s_i}\\
        &= \pi_{\vect{\alpha},\vect{\beta}} \sum_{\substack{\vect{s}\in\N^n \\ \vect{s}\leq \min(\vect{\alpha},\vect{\beta})}} \p^{\vect{\alpha}+\vect{\beta}-\vect{s}} 
        \binom{N}{\vect{\alpha}-\vect{s},\vect{\beta}-\vect{s},\vect{s}, N-(\norm{\vect{\alpha}}+\norm{\vect{\beta}}-\norm{\vect{s}})} \;.
  \end{align*}
To get a better hold on this expression and extend the analysis 
to general homogeneous polynomials (instead of monomials), 
we first massage the expression above 
under the additional constraint that $\norm{\vect{\alpha}} = \norm{\vect{\beta}} = d$.
  \begin{align*}
      \expect{ U_N X^{\vect{\alpha}}(\Phi_S) \right.&\left. U_N X^{\vect{\beta}}(\Phi_S) } \\
        &= \frac{(N-d)!^2}{N!} \prod_{i=1}^n \alpha_i!\prod_{i=1}^n \beta_i!  \sum_{\substack{\vect{s}\in\N^n \\ \vect{s}\leq \min(\vect{\alpha},\vect{\beta})}} 
        \frac{\p^{\vect{\alpha}+\vect{\beta}-\vect{s}} }{(N-2d+\norm{\vect{s}})!\prod_{i=1}^n (\alpha_i-s_i)! (\beta_i-s_i)! s_i!} \\
        &= \frac{(N-d)!^2}{N!}
		\sum_{\substack{\vect{s}\in\N^n \\ \vect{s}\leq \min(\vect{\alpha},\vect{\beta})}} 
		\frac{\p^{\vect{\alpha}+\vect{\beta}-\vect{s}}}{(N-2d+\norm{\vect{s}})!\prod_{i=1}^n  s_i!}
		\prod_{i=1}^n \frac{\alpha_i!}{(\alpha_i-s_i)!} \prod_{i=1}^n \frac{\beta_i!}{(\beta_i-s_i)!} \;.
  \end{align*}
Recalling that $\frac{d^{\norm{\vect{s}}} X^{\vect{\alpha}}}{dX^{\vect{s}}} = \prod_{i=1}^n \frac{\alpha_i!}{(\alpha_i-s_i)!} X_i^{\alpha_i-s_i}$ for $\vect{s}\leq \vect{\alpha}$, we have
  \[
      \p^{\vect{s}}\frac{d^{\norm{\vect{s}}} \p^{\vect{\alpha}}}{dX^{\vect{s}}}\frac{d^{\norm{\vect{s}}} \p^{\vect{\beta}}}{dX^{\vect{s}}} = \p^{\vect{\alpha}+\vect{\beta}-\vect{s}} \prod_{i=1}^n \frac{\alpha_i!}{(\alpha_i-s_i)!} \prod_{i=1}^n \frac{\beta_i!}{(\beta_i-s_i)!}
  \]
  for $\vect{s}\leq \min(\vect{\alpha},\vect{\beta})$, from which
  \begin{align*}
      \expect{ U_N X^{\vect{\alpha}}(\Phi_S) U_N X^{\vect{\beta}}(\Phi_S) } 
        &= \frac{(N-d)!^2}{N!} \sum_{\substack{\vect{s}\in\N^n \\ \vect{s}\leq \min(\vect{\alpha},\vect{\beta})}} 
        \frac{\p^{\vect{s}}}{(N-2d+\norm{\vect{s}})!\prod_{i=1}^n  s_i!} \frac{d^{\norm{\vect{s}}} \p^{\vect{\alpha}}}{dX^{\vect{s}}}\frac{d^{\norm{\vect{s}}} \p^{\vect{\beta}}}{dX^{\vect{s}}} \,.
  \end{align*}
  By linearity of $U$ and differentiation, this implies that, for any $Q,R\in\mathcal{H}_d$,
  \[
    \expect{ U_N Q(\Phi_S) U_N R(\Phi_S) } = \frac{(N-d)!^2}{N!} \sum_{\substack{\vect{s}\in\N^n \\ \norm{\vect{s}}\leq d}} 
        \frac{\p^{\vect{s}}}{(N-2d+\norm{\vect{s}})!\prod_{i=1}^n  s_i!} \frac{d^{\norm{\vect{s}}} Q(\p)}{dX^{\vect{s}}}\frac{d^{\norm{\vect{s}}} R(\p)}{dX^{\vect{s}}}\,.
  \]
  Choosing $R=Q$ yields~\cref{eq:expect:squared:unbiased}.
\end{proof}

By the previous theorem, in order to analyze the variance 
$\var \left[ U_N Q(\Phi_S) \right] = \expect{(U_N Q(\Phi_S))^2} - \expect{U_N Q(\Phi_S)}^2$, 
one needs to bound the different terms of
\[
  \expect{(U_N Q(\Phi_S))^2} =
  \sum_{h=0}^d \sum_{\substack{ \vect{s}\in\N^n \\\norm{\vect{s}} = h}}  \p^{\vect{s}} \left(\frac{d^{h}Q(\p)}{dX^{\vect{s}}}\right)^2 \frac{(N-d)!^2}{N!(N-2d+h)! \prod_{i=1}^n s_i!} = \sum_{h=0}^d T_h(Q,\p,d,N) \;,
\]
letting $T_h(Q,\p,d,N)$ denote the inner sum 
for a given $0\leq h\leq d$. 
Next, we provide some useful bounds on some of these terms. 
We show that the first term will be mostly taken care of in the variance 
by the subtracted squared expectation, $\expect{U_N Q(\Phi_S)}^2 = Q(\p)^2$. 
This allows us to get a bound on the variance directly:
\begin{corollary} \label{cor:variance:unbiased}
For $h \geq 0$,
\[
  T_h(Q,\p,d,N) \leq \frac{(N-h)!}{N!} \sum_{\substack{ \vect{s}\in\N^n \\\norm{\vect{s}} = h}}  \p^{\vect{s}} \left(\frac{d^{h}Q(\p)}{dX^{\vect{s}}}\right)^2 \frac{1}{\prod_{i=1}^n s_i!}  \;,
\]
and so
\[
  \var{U_N Q(\Phi_S)} \leq
  \sum_{h=1}^d \frac{(N-h)!}{N!} \sum_{\substack{ \vect{s}\in\N^n \\\norm{\vect{s}} = h}}  \p^{\vect{s}} \left(\frac{d^{h}Q(\p)}{dX^{\vect{s}}}\right)^2 \frac{1}{\prod_{i=1}^n s_i!} \;.
\]
\end{corollary}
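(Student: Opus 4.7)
The plan is to deduce both parts of the corollary directly from the identity in \cref{theo:expect:squared:unbiased}, which writes $\expect{(U_N Q(\Phi_S))^2} = \sum_{h=0}^d T_h(Q,\p,d,N)$ with
\[
T_h(Q,\p,d,N) = \sum_{\substack{\vect{s}\in\N^n \\ \norm{\vect{s}}=h}} \p^{\vect{s}} \left(\frac{d^{h}Q(\p)}{dX^{\vect{s}}}\right)^2 \frac{(N-d)!^2}{N!\,(N-2d+h)! \prod_{i=1}^n s_i!}\,.
\]
So the first claim reduces to the purely combinatorial inequality
\[
\frac{(N-d)!^2}{(N-2d+h)!} \;\leq\; (N-h)!\,,
\]
for $0\leq h\leq d\leq N/2$ (the regime in which the factorials are defined; the boundary cases handle themselves). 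Rewriting this as $\frac{(N-d)!}{(N-2d+h)!} \leq \frac{(N-h)!}{(N-d)!}$, each side is a product of exactly $d-h$ consecutive descending integers: the left-hand side is $(N-d)(N-d-1)\cdots(N-2d+h+1)$, while the right-hand side is $(N-h)(N-h-1)\cdots(N-d+1)$. Since $h\leq d$, each factor on the right dominates the corresponding factor on the left term-by-term, yielding the bound on $T_h$.

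For the variance bound, I would use $\var[U_N Q(\Phi_S)] = \expect{(U_N Q(\Phi_S))^2} - \expect{U_N Q(\Phi_S)}^2$ together with the unbiasedness guarantee $\expect{U_N Q(\Phi_S)} = Q(\p)$ from \cref{prop:unbiased:existence}. The key observation is that the $h=0$ term in \cref{eq:expect:squared:unbiased} isolates the squared expectation: since the only $\vect{s}\in\N^n$ with $\norm{\vect{s}}=0$ is $\vect{s}=\vect{0}$, and $\frac{d^0 Q}{dX^{\vect{0}}}(\p)=Q(\p)$, we get $T_0(Q,\p,d,N) = Q(\p)^2 \cdot \frac{(N-d)!^2}{N!\,(N-2d)!}$. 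Applying the inequality from the first part with $h=0$ shows $T_0 \leq Q(\p)^2 = \expect{U_N Q(\Phi_S)}^2$, so $T_0 - \expect{U_N Q(\Phi_S)}^2 \leq 0$ and the $h=0$ contribution drops out of the variance. Applying the per-$h$ bound to the remaining terms $h=1,\dots,d$ yields the stated upper bound on $\var{U_N Q(\Phi_S)}$.

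There is no real obstacle here: the content of the corollary is the simplification of the $h$-independent factorial prefactor $\frac{(N-d)!^2}{N!(N-2d+h)!}$ into the cleaner $\frac{(N-h)!}{N!}$, which is the form most useful for downstream applications (in particular for the $\lp[2]$ estimator used in \cref{sec:alg-basic}). The mild subtlety is just noticing that the $h=0$ term does \emph{not} vanish from $\expect{(U_NQ)^2}$, and must be absorbed by the $-\expect{U_NQ}^2$ subtraction using the same factorial inequality with $h=0$.
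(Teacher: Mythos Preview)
Your proof is correct and essentially identical to the paper's: both reduce to the factorial inequality $\frac{(N-d)!^2}{(N-2d+h)!} \leq (N-h)!$, which the paper verifies by writing $\frac{(N-d)!^2}{N!(N-2d+h)!} = \prod_{i=1}^{d-h}\frac{N-2d+h+i}{N-d+i}\cdot\prod_{j=0}^{h-1}\frac{1}{N-j}$ and bounding each factor in the first product by $1$, exactly your term-by-term comparison phrased multiplicatively. The handling of the $h=0$ term for the variance (showing $T_0 \leq Q(\p)^2$ so it is absorbed by $-\expect{U_NQ}^2$) is also the same.
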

\begin{proof}
We have
\[
  \frac{(N-d)!^2}{N!(N-2d+h)!} = \prod_{i=1}^{d-h} \frac{N-2d+h+i}{N-d+i} \prod_{j=0}^{h-1} \frac{1}{N-j} \leq  \frac{(N-h)!}{N!}
\]
which gives the bound on $T_h(Q,\p,d,N)$. For $h=0$, this gives that $T_0(Q,\p,d,N) \leq Q(\p)^2=\expect{U_N Q(\Phi_S)}^2$ and so if we expand $\var{U_N Q(\Phi_S)}=\expect{U_N Q(\Phi_S)^2}- \expect{U_N Q(\Phi_S)}^2$, the $T_0(Q,\p,d,N)$ term is at least cancelled by the $- \expect{U_N Q(\Phi_S)}^2$.
\end{proof}

In view of bounding the rest of the terms, 
let $Q^+\in\mathcal{H}_d$ denote the polynomial obtained 
from $Q$ by making all its coefficients non-negative: 
that is, if $Q=\sum_{\substack{\norm{\vect{\alpha}}= d}} c_{\vect{\alpha}} X^{\vect{\alpha}}$, 
then $Q^+\eqdef\sum_{\substack{\norm{\vect{\alpha}}= d}} \abs{c_{\vect{\alpha}}} X^{\vect{\alpha}}$. 
Then, we show the following:

\begin{lemma}\label{lemma:contribution:higher}
Fix any $0\leq g\leq d$. Then,
\[
\sum_{h=g}^d T_h(Q,\p,d,N) = \bigO{\frac{1}{N^g}} 2^d Q^+(\p) \max_{\vect{s}: \norm{\vect{s}}\geq g} \abs{ \frac{d^{h}Q(\p)}{dX^{\vect{s}}} }\,.
\]
\end{lemma}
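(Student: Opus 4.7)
The plan is to start from the bound provided by \cref{cor:variance:unbiased}, namely
\[
T_h(Q,\p,d,N) \leq \frac{(N-h)!}{N!} \sum_{\vect{s}:\,\norm{\vect{s}}=h} \p^{\vect{s}} \left(\frac{d^{h}Q(\p)}{dX^{\vect{s}}}\right)^2 \frac{1}{\prod_{i=1}^n s_i!}\,,
\]
and to manipulate the summands so that (i) one of the two partial-derivative factors can be pulled out as the max, (ii) the remaining squared-derivative factor collapses into a Taylor series that admits a closed-form bound via the homogeneity of $Q^+$.

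First I will replace the factor $(N-h)!/N!$ by $O(1/N^g)$: since $h\geq g$, we have $\frac{(N-h)!}{N!} = \frac{1}{N(N-1)\cdots(N-h+1)} = \bigO{1/N^h}\leq \bigO{1/N^g}$, where the hidden constant is absolute (for $N$ larger than, say, $2d$, one has $N-j+1\geq N/2$ for all $j\leq d$, so the constant is at most $2^d$). Next I will bound one copy of $|d^hQ(\p)/dX^{\vect{s}}|$ by the maximum $M\eqdef \max_{\vect{s}:\,\norm{\vect{s}}\geq g}\abs{d^{\norm{\vect{s}}}Q(\p)/dX^{\vect{s}}}$, pull it outside the sum, and bound the second copy of $|d^hQ(\p)/dX^{\vect{s}}|$ by $d^hQ^+(\p)/dX^{\vect{s}}$ (which is valid at $\p\in\R_+^n$: taking partial derivatives of $X^{\vect{\alpha}}$ produces monomials with non-negative coefficients in $\p$, and replacing $c_{\vect{\alpha}}$ by $|c_{\vect{\alpha}}|$ dominates each such coefficient in absolute value).

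After these two steps the estimate becomes
\[
\sum_{h=g}^{d} T_h(Q,\p,d,N) \leq \bigO{\frac{1}{N^g}}\cdot M\cdot \sum_{h=g}^{d}\sum_{\vect{s}:\,\norm{\vect{s}}=h} \frac{\p^{\vect{s}}}{\prod_{i=1}^n s_i!}\frac{d^hQ^+(\p)}{dX^{\vect{s}}}\,.
\]
The key observation now is that the multivariate Taylor expansion of $Q^+$ around $\p$, evaluated at the point $\p+\p=2\p$, reads
\[
Q^+(2\p) \;=\; \sum_{\vect{s}\in\N^n}\frac{\p^{\vect{s}}}{\prod_{i=1}^n s_i!}\frac{d^{\norm{\vect{s}}}Q^+(\p)}{dX^{\vect{s}}}\,,
\]
and since $Q^+\in\mathcal{H}_d$ is homogeneous of degree $d$, the left-hand side equals $2^d Q^+(\p)$. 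Dropping the terms with $\norm{\vect{s}}<g$ (they are non-negative because every derivative of $Q^+$ at a non-negative point is non-negative) yields
\[
\sum_{h=g}^{d}\sum_{\vect{s}:\,\norm{\vect{s}}=h} \frac{\p^{\vect{s}}}{\prod_{i=1}^n s_i!}\frac{d^hQ^+(\p)}{dX^{\vect{s}}} \;\leq\; 2^d Q^+(\p)\,,
\]
which combined with the previous display gives exactly the claimed bound.

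The argument is essentially a two-line manipulation once the Taylor trick is identified, so I do not anticipate any serious obstacle. The only point that requires a bit of care is the dominance $|d^hQ/dX^{\vect{s}}(\p)|\leq d^hQ^+/dX^{\vect{s}}(\p)$ at non-negative $\p$, which must be justified term-by-term from the monomial expansion; everything else is routine bookkeeping and the homogeneity identity $Q^+(2\p)=2^dQ^+(\p)$.
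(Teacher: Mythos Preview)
Your proof is correct and follows essentially the same strategy as the paper: pull out one copy of the derivative as the maximum, replace the other by the corresponding derivative of $Q^+$, and show the remaining weighted sum is at most $2^d Q^+(\p)$. The only difference is in the execution of this last step: the paper expands monomial-by-monomial and uses the identity $\sum_{\vect{s}\leq\vect{\alpha}}\prod_i\binom{\alpha_i}{s_i}=2^{\norm{\vect{\alpha}}}=2^d$, whereas you recognize the sum as the Taylor expansion of $Q^+$ at $\p$ evaluated at $2\p$ and invoke homogeneity to get $Q^+(2\p)=2^dQ^+(\p)$. These are two phrasings of the same computation, and your Taylor-series formulation is arguably cleaner.
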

\begin{proof}
  For $g$ as above, we have
  \begin{align*}
     \sum_{h=g}^d T_h(Q,\p,d,N) 
     &= \sum_{h=g}^d \bigO{\frac{1}{N^h}} \sum_{\substack{ \vect{s}\in\N^n \\\norm{\vect{s}} = h}} \frac{1}{\prod_{i=1}^n s_i!} \p^{\vect{s}} \left(\frac{d^{h}Q(\p)}{dX^{\vect{s}}}\right)^2\\
     &= \bigO{\frac{1}{N^g}} \sum_{h=g}^d \sum_{\substack{ \vect{s}\in\N^n \\\norm{\vect{s}} = h}} \frac{1}{\prod_{i=1}^n s_i!} \p^{\vect{s}} \left(\frac{d^{h}Q(\p)}{dX^{\vect{s}}}\right)^2\,.
  \end{align*}
A useful observation is that since $Q$ is homogeneous of degree $d$, then so is $X^{\vect{s}}\frac{d^{h}Q}{dX^{\vect{s}}}$ for every $\vect{s}$. Consider a term $c_{\vect{\alpha}} X^{\vect{\alpha}}$ in $Q$; a term $X^{\vect{\alpha}}$ will appear in $X^{\vect{s}}\abs{\frac{d^{h}Q}{dX^{\vect{s}}}}$ if and only if $\vect{\alpha}\geq \vect{s}$, in which case this term will be 
\[
    \left(\prod_{i=1}^n\frac{\alpha_i!}{(\alpha_i-s_i)!}\right) \abs{c_{\vect{\alpha}}}X^{\vect{\alpha}}
    = \abs{c_{\vect{\alpha}}}X^{\vect{\alpha}}\prod_{i=1}^n s_i! \prod_{i=1}^n\binom{\alpha_i}{s_i}\,.
\]
Therefore, summing over all $\vect{s}$, we get
\[
\sum_{h=g}^d \sum_{\substack{ \vect{s}\in\N^n \\\norm{\vect{s}} = h}} \frac{1}{\prod_{i=1}^n s_i!} X^{\vect{s}} \abs{ \frac{d^{h}(c_{\vect{\alpha}} X^{\vect{\alpha}})}{dX^{\vect{s}}} }
=
\sum_{h=g}^d \sum_{\substack{ \vect{s}\leq \vect{\alpha}  \\\norm{\vect{s}} = h}} \abs{c_{\vect{\alpha}}}X^{\vect{\alpha}}\prod_{i=1}^n\binom{\alpha_i}{s_i}
\leq 
\sum_{\vect{s}\leq \vect{\alpha} } \abs{c_{\vect{\alpha}}}X^{\vect{\alpha}}\prod_{i=1}^n\binom{\alpha_i}{s_i}
= 2^d\abs{c_{\vect{\alpha}}}X^{\vect{\alpha}} \;,
\]
where the inequality is an abuse of notation, 
assuming $X$ is a non-negative vector. 
For the last equality, we relied on the facts that $\norm{\vect{\alpha}}=d$ and
\[
\sum_{\vect{s}\leq \vect{\alpha} } \prod_{i=1}^n\binom{\alpha_i}{s_i}
= \prod_{i=1}^n \sum_{ s : s \leq \alpha_i }\binom{\alpha_i}{s} = \prod_{i=1}^n 2^{\alpha_i} = 2^{\norm{\vect{\alpha}}}\,.
\]
By linearity and the definition of $Q^+$, this yields
  \begin{align*}
     \sum_{h=g}^d \sum_{\substack{ \vect{s}\in\N^n \\\norm{\vect{s}} = h}} \frac{1}{\prod_{i=1}^n s_i!} \p^{\vect{s}} \abs{ \frac{d^{h}Q(\p)}{dX^{\vect{s}}} }
     \leq 2^d Q^+(\p) \;,
  \end{align*}
and thus
  \begin{align*}
     \sum_{h=g}^d T_h(Q,\p,d,N) 
     &= \bigO{\frac{1}{N^g}} \sum_{h=g}^d \sum_{\substack{ \vect{s}\in\N^n \\\norm{\vect{s}} = h}} \frac{1}{\prod_{i=1}^n s_i!} \p^{\vect{s}} \left(\frac{d^{h}Q(\p)}{dX^{\vect{s}}}\right)^2\\
     &\leq \bigO{\frac{1}{N^g}} 2^d Q^+(\p) \max_{\vect{s}: \norm{\vect{s}}\geq g} \abs{ \frac{d^{h}Q(\p)}{dX^{\vect{s}}} } \;.
  \end{align*}
This completes the proof.
\end{proof}

\begin{remark}[Poissonized case]
Frequently, in distribution testing we analyze Poissonized statistics. 
That is, instead of $S$ being a set of $N$ samples, we consider a set $S$ of $\poisson{N}$ samples. 
In this case, $\Phi_{S,i}$ is independent for different $i$'s 
and $\expect{\prod_{i=1}^n \binom{S_i}{\alpha_i}} = \p^{\vect{\alpha}} \prod_{i=1}^n \frac{N}{\alpha_i!}$. 
Thus, we can define an unbiased estimator for $U'_N Q$ for a polynomial $Q(\p)$ 
by taking linear combinations of $U'_N X^{\vect{\alpha}}(\Phi_S) =N^{-\norm{\vect{\alpha}}} \prod_{i=1}^n \binom{S_i}{\alpha_i} \alpha_i!$.
The theory in the Poissonized setting is a little different: 
this estimator is not unique and is unbiased for any $N > 0$, including non-integral $N$ and $N < d$. 
However, the expression for $\expect{U'_N X^{\vect{\alpha}}(\Phi_S)^2}$ is very similar, 
and is obtained by an analogous proof. 
The difference is that we obtain a term $N^{-h}$ instead of $\frac{(N-d)!^2}{N!(N-2d+h)!}$. 
The bound on the variance in~\cref{cor:variance:unbiased} 
holds for the unbiased estimators in both the Poissonized and non-Poissonized cases.
\end{remark}

\subsection{Case of Interest: \texorpdfstring{$\lp[2]$}{l2}-Distance between \texorpdfstring{$\p$}{p} and \texorpdfstring{$\p_{\domx}\otimes \p_{\domy}$}{pX x pY}}\label{ssec:poly:l2}

We now instantiate the results of the previous subsections to a case of interest, 
the polynomial $Q$ corresponding to the $\lp[2]$ distance 
between a bivariate discrete distribution and the product of its marginals. 
In more detail, for any distribution $\p\in\distribs{\domx\times\domy}$, 
where $\abs{\domx}=\ell_1$, $\abs{\domy}=\ell_2$, 
we let $\p_\Pi\eqdef \p_{\domx}\otimes \p_{\domy}\in\distribs{\domx\times\domy}$ be the product of its marginals. Moreover, let $Q\in\R_4[X_{1,1}, X_{2,1},\dots,X_{\ell_1,1},X_{\ell_1,2},\dots,X_{\ell_1,\ell_2}]$ be the degree-$4$ $(\ell_1\ell_2)$-variate polynomial defined as
\begin{equation}\label{eq:polynomial:l2:product}
    Q(X_{1,1},\dots,X_{\ell_1,\ell_2}) \eqdef \sum_{i=1}^{\ell_1}\sum_{j=1}^{\ell_2} \left( X_{i,j} \sum_{i'\neq i}\sum_{j'\neq j} X_{i',j'} - \sum_{i'\neq i} X_{i',j}\sum_{j'\neq j} X_{i,j'} \right)^2\,.
\end{equation}
An explicit expression for its unbiased estimator $U_N Q(\Phi_S)$ will be given in~\cref{eq:explicit:formula:estimator:l2}. 
Specifically, we shall prove the following result:
\begin{proposition}\label{prop:variance:l2}
  Let $Q$ be as in~\cref{eq:polynomial:l2:product}, and suppose that $b \geq \max( \normtwo{\p}^2, \normtwo{\p_\Pi}^2 )$. Then, for $N\geq 4$,
  \[
      \var \left[ U_N Q(\Phi_S) \right] = \bigO{\frac{Q(\p)\sqrt{b}}{N} + \frac{b}{N^2} } \;.
  \]
\end{proposition}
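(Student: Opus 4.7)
The plan is to start from Corollary~\ref{cor:variance:unbiased}, which yields
\[
  \var\bigl[U_N Q(\Phi_S)\bigr] \leq \sum_{h=1}^{4} \frac{(N-h)!}{N!} \sum_{\substack{\vect{s}\in\N^{\ell_1\ell_2} \\ \norm{\vect{s}}=h}} \frac{\p^{\vect{s}}}{\prod_{k,l} s_{k,l}!} \left(\frac{\partial^{h} Q(\p)}{\partial X^{\vect{s}}}\right)^{2} \;=:\; \sum_{h=1}^{4} T_h \;,
\]
and to bound $T_1 = O(Q(\p)\sqrt{b}/N)$ and $T_2,T_3,T_4 = O(b/N^2)$. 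A useful preliminary observation is that for any distribution $\p$ one has $\normtwo{\p}^2, \normtwo{\p_\Pi}^2 \leq 1$, so $b\leq 1$; consequently $\sqrt{b}\geq b\geq b^{3/2}\geq b^{2}$, and any bound of the form $O(b^{\alpha}/N^{h})$ with $\alpha\geq 1$ and $h\geq 2$ is absorbed in $O(b/N^2)$. The key structural fact is that $Q$ decomposes as $Q=\sum_{i,j} f_{ij}^{2}$, where $f_{ij}(X) \eqdef X_{i,j}\bigl(\sum_{i',j'}X_{i',j'}\bigr)-\bigl(\sum_{i'}X_{i',j}\bigr)\bigl(\sum_{j'}X_{i,j'}\bigr)$ is homogeneous of degree $2$ and, on a distribution $\p$, equals $f_{ij}(\p)=\p_{i,j}-\p_{\domx}(i)\p_{\domy}(j)$; in particular $Q(\p) = \normtwo{\p-\p_\Pi}^{2}$. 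Writing $M$ for the matrix $M_{ij} \eqdef f_{ij}(\p)$, we have $\normtwo{M}_{F}^{2}=Q(\p)$.

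By Leibniz, $\partial^{\vect{s}} Q = \sum_{i,j}\sum_{\vect{s}'+\vect{s}''=\vect{s}}\binom{\vect{s}}{\vect{s}'}\partial^{\vect{s}'}f_{ij}\cdot\partial^{\vect{s}''}f_{ij}$; since $\deg f_{ij}=2$, only $|\vect{s}'|,|\vect{s}''|\leq 2$ contribute, and for $|\vect{s}''|=2$ the factor $\partial^{\vect{s}''}f_{ij}$ is a \emph{constant} in $\{-2,-1,0,1,2\}$. This immediately forces $T_h=0$ for structural reasons whenever $h\geq 5$, and reduces the analysis of $T_2,T_3,T_4$ to counting arguments.

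For the dominant term $h=1$, only the ``edge'' pieces survive and $\partial_{k,l}Q = 2\sum_{i,j}f_{ij}\,\partial_{k,l}f_{ij}$. Computing the partial gives the closed form
\[
  \partial_{k,l}Q(\p) \;=\; 2 f_{k,l}(\p) + 2\,\langle M,\p\rangle - 2(M\p_{\domy})_k - 2(\p_{\domx} M)_l \,,
\]
so $(\partial_{k,l}Q(\p))^{2}\leq 16\bigl[f_{k,l}(\p)^{2}+\langle M,\p\rangle^{2}+(M\p_{\domy})_k^{2}+(\p_{\domx} M)_l^{2}\bigr]$. Summing against $\p_{k,l}$, each summand is controlled by a $\sqrt{b}\,Q(\p)$ factor: (i) $\sum_{k,l}\p_{k,l}f_{k,l}(\p)^{2}\leq\norminf{\p}\cdot Q(\p)\leq\sqrt{b}\,Q(\p)$ since $\norminf{\p}\leq\normtwo{\p}\leq\sqrt{b}$; (ii) $\langle M,\p\rangle^{2}\leq Q(\p)\normtwo{\p}^{2}\leq bQ(\p)$ by Cauchy--Schwarz; and (iii) $\sum_{k}\p_{\domx}(k)(M\p_{\domy})_k^{2}\leq\norminf{\p_{\domx}}\normtwo{M\p_{\domy}}^{2}\leq\norminf{\p_{\domx}}Q(\p)\normtwo{\p_{\domy}}^{2}\leq\sqrt{b}\,Q(\p)$, using $\norminf{\p_{\domx}}\normtwo{\p_{\domy}}\leq\normtwo{\p_{\domx}}\normtwo{\p_{\domy}}=\normtwo{\p_{\Pi}}\leq\sqrt{b}$ (and symmetrically for (iv)). Hence $T_1 = O(Q(\p)\sqrt{b}/N)$.

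For $h\in\{2,3,4\}$, split $\partial^{\vect{s}}Q$ into an edge part $2\sum_{i,j}f_{ij}\partial^{\vect{s}}f_{ij}$ (whose square is $\leq 4Q(\p)\sum_{i,j}(\partial^{\vect{s}}f_{ij})^{2}$ by Cauchy--Schwarz; the inner sum is $O(1)$ because only $O(1)$ of the $f_{ij}$'s have a nonzero second partial at fixed indices) and middle parts (products of first and second derivatives of $f_{ij}$, which are bounded polynomial expressions in $\p, \p_{\domx}, \p_{\domy}$). Weighting by $\p^{\vect{s}}$ and summing over $|\vect{s}|=h$, the edge contribution becomes $O(Q(\p)/N^{h})\leq O(b/N^{2})$ (using $Q(\p)\leq 2b$ and $h\geq 2$), and the middle contribution becomes $O(b/N^{2})$ likewise. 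This yields $T_h=O(b/N^{2})$ for all $h\geq 2$, completing the bound.

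\paragraph{Main obstacle.} The delicate step is the $h=1$ analysis: a naive application of Cauchy--Schwarz followed by $\sum_{i,j}(\partial_{k,l}f_{ij}(\p))^{2}=O(1)$ yields only $T_{1}=O(Q(\p)/N)$, which is \emph{weaker} than $O(Q(\p)\sqrt{b}/N)$ when $b\ll 1$. Obtaining the sharp $\sqrt{b}$ gain requires the four-way expansion of $\partial_{k,l}Q(\p)$ above and the simultaneous use of $\norminf{\p}\leq\sqrt{b}$, $\normtwo{\p_{\domx}}\normtwo{\p_{\domy}}\leq\sqrt{b}$, and the operator-norm bound $\normtwo{Mv}\leq\normtwo{M}_{F}\normtwo{v}$.
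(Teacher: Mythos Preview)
Your treatment of the $h=1$ term is correct and matches the paper's approach: you and the paper both decompose $\partial_{k,l}Q(\p)$ into four pieces and bound each weighted sum by $\sqrt{b}\,Q(\p)$ via Cauchy--Schwarz. (One small quibble: the hypothesis is $b\geq\max(\normtwo{\p}^2,\normtwo{\p_\Pi}^2)$, so $b\leq 1$ is not automatic; but since the max is at most $1$ one can replace $b$ by $\min(b,1)$ without loss, so this is harmless.)

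The gap is in your $h\geq 2$ argument. Your edge part is fine, but the claim that ``the middle contribution becomes $O(b/N^{2})$ likewise'' is not justified by what you wrote, and a direct attempt along your lines fails. For instance at $h=2$ the middle piece equals $2\sum_{i,j}\partial_{k,l}f_{ij}(\p)\,\partial_{k',l'}f_{ij}(\p)$; this expression contains the term $\delta_{kk'}\delta_{ll'}$ (from $\partial_{k,l}f_{ij}\ni \delta_{ik}\delta_{jl}$), which is $\Theta(1)$ and carries no factor of $b$. A naive Cauchy--Schwarz bound then gives only $\sum_{|\vect{s}|=2}\frac{\p^{\vect{s}}}{\vect{s}!}(\partial^{\vect{s}}Q)^{2}=O(1)$, i.e.\ $T_2=O(1/N^{2})$, which does \emph{not} imply $O(b/N^{2})$ when $b\ll 1$. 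The same issue arises for $h=3,4$. To rescue this you would need a much more careful case analysis exploiting that the $\Theta(1)$ pieces only survive on ``diagonal'' multi-indices where $\p^{\vect{s}}$ itself supplies an $\normtwo{\p}^{2}$ factor, and that off-diagonal the middle is genuinely small; you have not carried this out.

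The paper handles $h\geq 2$ by a different route that avoids this difficulty entirely: it invokes \cref{lemma:contribution:higher}, which bounds $\sum_{h\geq 2}T_h$ by $O(1/N^{2})\cdot 2^{d}\,Q^{+}(\p)\cdot\max_{|\vect{s}|\geq 2}\bigl|\partial^{\vect{s}}Q(\p)\bigr|$. The crucial point is that one factor of $\partial^{\vect{s}}Q$ is absorbed into $Q^{+}(\p)$ via the identity $\sum_{\vect{s}}\frac{\p^{\vect{s}}}{\prod s_i!}\bigl|\partial^{\vect{s}}Q(\p)\bigr|\leq 2^{d}Q^{+}(\p)$, and then one checks directly that $Q^{+}(\p)\leq 4b$ while $\max_{|\vect{s}|\geq 2}|\partial^{\vect{s}}Q(\p)|=O(1)$. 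This is the missing ingredient: the $b$ factor for $h\geq 2$ comes from $Q^{+}(\p)$, not from the weighting by $\p^{\vect{s}}$ alone.
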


\noindent For consistency of notation with the previous section, we let $n\eqdef \ell_1\ell_2$ in what follows.
\begin{claim}
  For any $\p$ over $\domx\times\domy$, we have  $Q(\p) = \normtwo{\p-\p_\Pi}^2$.
\end{claim}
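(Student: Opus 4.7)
The claim is a pointwise algebraic identity that uses only the normalization $\sum_{i,j}\p_{i,j}=1$. My plan is to verify that, for every pair $(i,j)$, the inner expression appearing in the definition of $Q$ equals $\p_{i,j}-\p_\Pi(i,j)$; summing the squares then yields $\normtwo{\p-\p_\Pi}^2$ by definition.

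First, I would write $\p_\Pi(i,j) = \p_\domx(i)\p_\domy(j) = \bigl(\sum_{j'} \p_{i,j'}\bigr)\bigl(\sum_{i'}\p_{i',j}\bigr)$ and split both sums at the diagonal $i'=i$, $j'=j$. This gives
\[
\p_\Pi(i,j) = \p_{i,j}^2 + \p_{i,j}\sum_{i'\neq i}\p_{i',j} + \p_{i,j}\sum_{j'\neq j}\p_{i,j'} + \Bigl(\sum_{i'\neq i}\p_{i',j}\Bigr)\Bigl(\sum_{j'\neq j}\p_{i,j'}\Bigr).
\]
Next, using the normalization to write $\p_{i,j} = \p_{i,j}\sum_{i',j'}\p_{i',j'}$ and again splitting according to whether $i'=i$ or $j'=j$, I get
\[
\p_{i,j} = \p_{i,j}^2 + \p_{i,j}\sum_{i'\neq i}\p_{i',j} + \p_{i,j}\sum_{j'\neq j}\p_{i,j'} + \p_{i,j}\sum_{\substack{i'\neq i\\ j'\neq j}}\p_{i',j'}.
\]
Subtracting the two displays, the three common cross terms cancel and I am left with
\[
\p_{i,j}-\p_\Pi(i,j) = \p_{i,j}\sum_{\substack{i'\neq i\\ j'\neq j}}\p_{i',j'} - \sum_{i'\neq i}\p_{i',j}\sum_{j'\neq j}\p_{i,j'},
\]
which is precisely the inner bracket in the definition of $Q$ evaluated at $\p$.

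Squaring and summing over $(i,j)\in\domx\times\domy$ gives $Q(\p)=\sum_{i,j}(\p_{i,j}-\p_\Pi(i,j))^2 = \normtwo{\p-\p_\Pi}^2$, as desired. There is no real obstacle here: the only subtlety is recognizing that the normalization $\sum \p_{i,j}=1$ is what makes the identity work (so that $Q$ genuinely computes the squared $\lp[2]$ deviation from independence on the probability simplex, even though as a formal polynomial in the $X_{i,j}$ it need not equal $\normtwo{X-X_\Pi}^2$ off the simplex).
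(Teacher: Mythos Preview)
Your proof is correct and follows essentially the same approach as the paper's: both verify the pointwise identity $\p_{i,j}-\p_\Pi(i,j) = \p_{i,j}\sum_{i'\neq i,\,j'\neq j}\p_{i',j'} - \sum_{i'\neq i}\p_{i',j}\sum_{j'\neq j}\p_{i,j'}$ by splitting the marginal sums at the diagonal and invoking the normalization $\sum_{i',j'}\p_{i',j'}=1$. Your presentation is arguably slightly cleaner in that you expand $\p_{i,j}$ and $\p_\Pi(i,j)$ separately before subtracting, whereas the paper works directly with the difference, but the algebra is the same.
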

\begin{proof}
  Unraveling the definitions, we can write
  \begin{align*}
    \normtwo{\p-\p_\Pi}^2 
    &= \sum_{i=1}^{\ell_1}\sum_{j=1}^{\ell_2} \left( \p(i,j) - \p_\Pi(i,j) \right)^2
     = \sum_{i=1}^{\ell_1}\sum_{j=1}^{\ell_2} \left( \p(i,j) - \sum_{j'=1}^{\ell_2} \p(i,j')\sum_{i'=1}^{\ell_1} \p(i',j) \right)^2\\
    &= \sum_{i=1}^{\ell_1}\sum_{j=1}^{\ell_2} \left( \p(i,j) - \left(\p(i,j)+\sum_{j'\neq j} \p(i,j')\right)\left(\p(i,j)+\sum_{i'\neq i} \p(i',j)\right) \right)^2\\
    &= \sum_{i=1}^{\ell_1}\sum_{j=1}^{\ell_2} \left( \p(i,j)\left(1-\p(i,j)-\sum_{i'\neq i} \p(i',j)-\sum_{j'\neq j} \p(i,j')\right) -\sum_{i'\neq i} \p(i',j)\sum_{j'\neq j} \p(i,j')\right)^2\\
&= \sum_{i=1}^{\ell_1}\sum_{j=1}^{\ell_2} \left( \p(i,j)\sum_{i'\neq i}\sum_{j'\neq j} \p(i',j')-\sum_{i'\neq i} \p(i',j)\sum_{j'\neq j} \p(i,j')\right)^2 = Q(\p) \;,
  \end{align*}
  as claimed.
\end{proof}

Firstly, we compute $U_N Q$ explicitly. 
By linearity of $U_N$, we can compute the unbiased estimator for each term separately, after writing
$Q(X) = \sum_{i=1}^{\ell_1}\sum_{j=1}^{\ell_2} \Delta_{ij}(X)^2$,
where $\Delta_{ij}(X) \eqdef X_{i,j} \sum_{i'\neq i}\sum_{j'\neq j} X_{i',j'} - \sum_{i'\neq i} X_{i',j}\sum_{j'\neq j} X_{i,j'}$. 
Now  $U_N Q = \new{\sum_{i=1}^{\ell_1}\sum_{j=1}^{\ell_2}}U_N \Delta_{ij}^2$ 
and we want to compute $U_N \Delta_{ij}^2$. Note that the sums in $\Delta_{ij}(X)$ are over disjoint sets of $X_{i,j}$'s 
whose union is every $X_{i,j}$. We can consider $\Delta_{ij}$ as a polynomial over the probabilities of a distribution with support of size $4$, which consists of the events given by whether the marginal $X$ is equal to $i$, and whether the marginal $Y$ is equal to $j$. By uniqueness of the unbiased estimator, $U_N \Delta_{ij}^2$ is the same on this distribution of support $4$ as on the original $\ell_1 \ell_2$-size support distribution. Formally, we will write 
\[
    \Delta_{ij}(X) \eqdef X_{i,j} X_{-i,-j} - X_{i,-j} X_{-i,j} \;,
\]
where  $X_{-i,-j}\eqdef \sum_{i'\neq i}\sum_{j'\neq j} X_{i',j'}$, 
$X_{-i,j}\eqdef\sum_{i'\neq i} X_{i',j}$, and $X_{i,-j} \eqdef \sum_{j'\neq j} X_{i,j'}$. 
Squaring gives
\[
    \Delta_{ij}(X)^2 = X_{i,j}^2 X_{-i,-j}^2 + X_{i,-j}^2 X_{-i,j}^2 - X_{i,j} X_{-i,-j}X_{i,-j} X_{-i,j} \;,
\]
and it remains to apply $U_N$ to each of these terms. We see that 
\[
    \frac{N!}{(N-4)!} U_N X_{i,j} X_{-i,-j}X_{i,-j} X_{-i,j} = \Phi_{S,i,j} \Phi_{S,-i,-j} \Phi_{S,i,-j} \Phi_{S,-i,j}\,,
\]
\[
    \frac{N!}{(N-4)!} U_N X_{i,j}^2 X_{-i,-j}^2 = \Phi_{S,i,j}(\Phi_{S,i,j}-1) \Phi_{S,-i,-j}(\Phi_{S,-i,-j} -1)\,,
\] and 
\[
    \frac{N!}{(N-4)!} U_N X_{-i,j}^2 X_{i,-j}^2 = \Phi_{S,-i,j}(\Phi_{S,-i,j}-1) \Phi_{S,i,-j}(\Phi_{S,i,-j} -1)\,.
\]
These counts are similarly summed so that, for example, $\Phi_{S,i,-j} = \sum_{j'\neq j} \Phi_{S,i,j'}$. 
Adding these together, we get that:
  \begin{align}
 \frac{N!}{(N-4)!} U_N Q(\Phi_S) 
 &=\frac{N!}{(N-4)!} \sum_{i=1}^{\ell_1}\sum_{j=1}^{\ell_2} U_N \Delta_{ij}(\Phi_S)^2 \notag\\
 &= \sum_{i=1}^{\ell_1}\sum_{j=1}^{\ell_2} \left( \Phi_{S,i,j}(\Phi_{S,i,j}-1) \Phi_{S,-i,-j}(\Phi_{S,-i,-j}-1)  \right.\notag\\
      &\qquad+ \left. \Phi_{S,-i,j}(\Phi_{S,-i,j}-1) \Phi_{S,i,-j}(\Phi_{S,i,-j} -1) - 2 \Phi_{S,i,j} \Phi_{S,-i,-j} \Phi_{S,i,-j} \Phi_{S,-i,j} \right) \notag\\
& = \sum_{i=1}^{\ell_1}\sum_{j=1}^{\ell_2} \left( (\Phi_{S,i,j} \Phi_{S,-i,-j} - \Phi_{S,-i,j}  \Phi_{S,-i,-j})^2 \right. \notag\\
      &\qquad+ \left. \Phi_{S,i,j} \Phi_{S,-i,-j} (1-\Phi_{S,i,j} - \Phi_{S,-i,-j}) + \Phi_{S,-i,j} \Phi_{S, i,-j} (1-\Phi_{S,-i,j} - \Phi_{S,i,-j}) \right) \;,\label{eq:explicit:formula:estimator:l2}
\end{align}
where $\Phi_{S,-i,-j} \eqdef \sum_{i'\neq i}\sum_{j'\neq j} \Phi_{S,i',j'}$, 
$\Phi_{S,-i,j}\eqdef\sum_{i'\neq i} \Phi_{S,i',j}$, 
and $\Phi_{S,i,-j} \eqdef \sum_{j'\neq j} \Phi_{S,i,j'}$. 
This yields the explicit formula for our unbiased estimator of $Q(\p)$. 

\medskip

\noindent We then turn to bounding its variance. 
From~\cref{theo:expect:squared:unbiased}, we then have that, for $N\geq 4$,
\begin{equation}
      \expect{(U_N Q(\Phi_S))^2} =
      \sum_{h=0}^4 \sum_{\substack{ \vect{s}\in\N^n \\\norm{\vect{s}}=h}} \binom{h}{\vect{s}} \p^{\vect{s}} \left(\frac{d^{h}Q(\p)}{dX^{\vect{s}}}\right)^2 \binom{ N-4 }{ 4-h } \binom{N}{h,4-h,N-4}^{-1} \frac{1}{h!^2} \;.
\end{equation}
The rest of this section is devoted to bounding this quantity. 
For $h\in\{0,\dots,4\}$, we let $T_h(N)$ be the inner sum corresponding to $h$, so that 
$\expect{(U_N Q(\Phi_S))^2} = \sum_{h=0}^4 T_h(N)$.

For clarity, we (re-)introduce some notation: that is, we write
$Q(X) = \sum_{i=1}^{\ell_1}\sum_{j=1}^{\ell_2} \Delta_{ij}(X)^2$,
where $\Delta_{ij}(X) \eqdef X_{i,j} \sum_{i'\neq i}\sum_{j'\neq j} X_{i',j'} - \sum_{i'\neq i} X_{i',j}\sum_{j'\neq j} X_{i,j'}$ as before. 
Each $\Delta_{ij}$ is a degree-$2$ polynomial, with partial derivatives
\[
    \frac{\partial \Delta_{ij}}{ \partial X_{k,\ell} } =
    \begin{cases}
        X_{i,j} &\text{ if } k\neq i, \ell\neq j\\
        \sum_{i'\neq i}\sum_{j'\neq j} X_{i',j'} &\text{ if } k=i, \ell = j\\
        -\sum_{i'\neq i} X_{i',j} &\text{ if } k=i, \ell \neq j\\
        -\sum_{j'\neq j} X_{i,j'} &\text{ if } k \neq i, \ell = j\\
    \end{cases}
\]
and
\[
    \frac{\partial^2 \Delta_{ij}}{ \partial X_{k,\ell}\partial X_{k',\ell'} } =
    (\delta_{ik}-\delta_{ik'})(\delta_{j\ell}-\delta_{j\ell'})\,.
\]

\begin{itemize}
  \item The first contribution, for $h=0$, is $\bigO{Q(\p)^2/N}$ 
  by~\cref{cor:variance:unbiased}, so we have $T_0$ under control. Indeed,
  \[
      Q(\p) \leq 2\sqrt{b}
  \]
  by the triangle inequality and the definition of $b$. So, $T_0(N)-Q(\p)^2 = \bigO{ Q(\p)\sqrt{b}/N}$.
  \item The second, $h=1$, contributes
\[
    T_1(N) =  \sum_{\substack{ \vect{s}\in\N^n \\\norm{\vect{s}}=1}} \p^{\vect{s}} \left(\frac{dQ(\p)}{dX^{\vect{s}}}\right)^2 \binom{ N-4 }{ 3 } \binom{N}{1,3,N-4}^{-1}
    =  4\frac{\binom{ N-4 }{ 3 }}{\binom{N}{4}}\sum_{\substack{ \vect{s}\in\N^n \\\norm{\vect{s}}=1}} \p^{\vect{s}} \left(\frac{dQ(\p)}{dX^{\vect{s}}}\right)^2
 \]
 Since ${\binom{ N-4 }{ 3 }}/{\binom{N}{4}} = \bigO{1/N}$, it is enough to consider the other factor,
 \[
    \sum_{\substack{ \vect{s}\in\N^n \\\norm{\vect{s}}=1}} \p^{\vect{s}} \left(\frac{dQ(\p)}{dX^{\vect{s}}}\right)^2
    = \sum_{k,\ell} \p_{k,\ell} \left(\frac{dQ(\p)}{dX_{k,\ell}}\right)^2\,.
 \]
 Recalling the expression of the derivatives of $\Delta_{ij}$, we have that
 \begin{align*}
    \frac{1}{2}\frac{dQ}{dX_{k,\ell}} &= \frac{1}{2}\sum_{i,j} 2\Delta_{ij}\frac{d\Delta_{ij}}{dX_{k,\ell}}
    \\
    &= \sum_{i\neq k}\sum_{j\neq \ell} X_{i,j}\Delta_{ij}(X)
    + \Delta_{k\ell}(X) \sum_{i\neq k}\sum_{j\neq \ell} X_{i,j}
    - \sum_{j\neq \ell} \Delta_{kj}(X) \sum_{i\neq k} X_{i,j}
    - \sum_{i\neq k} \Delta_{i\ell}(X) \sum_{j\neq \ell} X_{i,j} \;.
 \end{align*}
 Having this sum of four terms $A_1,A_2,A_3,A_4$ for $\frac{dQ}{dX_{k,\ell}}$, 
 by Cauchy--Schwarz it holds that
 $\left(\frac{dQ}{dX_{k,\ell}}\right)^2 \leq 4(A_1^2+A_2^2+A_3^2+A_4^2)$, 
 and so we can bound each of the square of these terms separately, ignoring cross factors.
 
 \begin{itemize}
 \item For the first, we have (again by Cauchy--Schwarz)
 \[
 \left(\sum_{i\neq k}\sum_{j\neq \ell} \p_{i,j}\Delta_{ij}(\p)\right)^2
 \leq  \left(\sum_{i,j} \p_{i,j}\Delta_{ij}(\p)\right)^2
 \leq \left(\sum_{i,j} \p_{i,j}^2\right)\left(\sum_{i,j} \Delta_{ij}(\p)^2\right)
 \leq b Q(\p) \leq \sqrt{b}Q(\p) \;,
 \]
 so $\sum_{k,\ell}  \p_{k,\ell} \left(\sum_{i\neq k}\sum_{j\neq \ell} \p_{i,j}\Delta_{ij}(\p)\right)^2 \leq b Q(\p)$.
 
 \item For the second, since $\left(\Delta_{k\ell}(\p) \sum_{i\neq k}\sum_{j\neq \ell} \p_{i,j}\right)^2
 \leq  \Delta_{k\ell}(\p)^2$, we have
 \[
 \sum_{k,\ell} \p_{k,\ell}\left(\Delta_{k\ell}(\p) \sum_{i\neq k}\sum_{j\neq \ell} \p_{i,j}\right)^2
 \leq  \sum_{k,\ell} \p_{k,\ell} \Delta_{k\ell}(\p)^2
 \leq \sqrt{\sum_{k,\ell} \p_{k,\ell}^2}\sqrt{\sum_{k,\ell} \Delta_{k\ell}(\p)^4}
  \leq \sqrt{b}\sqrt{\left(\sum_{k,\ell} \Delta_{k\ell}(\p)^2\right)^2} \;,
 \]
 which is equal to $\sqrt{b}Q(\p)$.
 
 \item For the third and fourth term (similarly handled by symmetry),
 \begin{align*}
 \sum_{k,\ell} \p_{k,\ell}\left(\sum_{j\neq \ell} \Delta_{kj}(\p) \sum_{i\neq k} \p_{i,j}\right)^2
 &\leq \sum_{k,\ell} \p_{k,\ell}\left(\sum_{j} \Delta_{kj}(\p) \sum_{i\neq k} \p_{i,j}\right)^2\\
 &= \sum_{k}\left(\sum_{j} \Delta_{kj}(\p) \sum_{i\neq k} \p_{i,j}\right)^2\sum_\ell \p_{k,\ell} \\
 &\leq \sum_{k}\left(\sum_{j} \Delta_{kj}(\p)^2 \sum_{j}\left(\sum_{i\neq k} \p_{i,j}\right)^2\right)\sum_\ell \p_{k,\ell} \tag{Cauchy--Schwarz}\\
 &\leq \sum_{k}\left(\sum_{j} \Delta_{kj}(\p)^2 \sum_{j}\left(\sum_{i} \p_{i,j}\right)^2\right)\sum_\ell \p_{k,\ell}\\
 &=\sum_{j}\left(\sum_{i} \p_{i,j}\right)^2\cdot \sum_{k}\left(\sum_{j} \Delta_{kj}(\p)^2\right)\sum_\ell \p_{k,\ell}\\
 &\leq \sum_{j}\Big(\sum_{i} \p_{i,j}\Big)^2\sqrt{ \sum_{k}\Big(\sum_{j} \Delta_{kj}(\p)^2\Big)^2\sum_k\Big(\sum_{\ell} \p_{k,\ell}\Big)^2 } \tag{Cauchy--Schwarz}\\
  &\leq \sqrt{ \sum_{j}\Big(\sum_{i} \p_{i,j}\Big)^2 \sum_k\Big(\sum_{\ell} \p_{k,\ell}\Big)^2 }\sqrt{ \sum_{k}\Big(\sum_{j} \Delta_{kj}(\p)^2\Big)^2} \;,
 \end{align*}
 where the last step relies on $\sum_{j}\big(\sum_{i} \p_{i,j}\big)^2\leq 1$ 
 (since it is the  squared $\lp[2]$-norm of a probability distribution, 
 that of the first marginal of $\p$) to write $\sum_{j}\big(\sum_{i} \p_{i,j}\big)^2\leq \sqrt{\sum_{j}\big(\sum_{i} \p_{i,j}\big)^2}$. 
 Continuing from there, and using monotonicity of $\lp[p]$ norms to write $\sum_i v_i^2 \leq \big(\sum_i \abs{v_i}\big)^2$,
  \begin{align*}
 \sum_{k,\ell} \p_{k,\ell}\left(\sum_{j\neq \ell} \Delta_{kj}(\p) \sum_{i\neq k} \p_{i,j}\right)^2
  &\leq \sqrt{ \sum_{j}\Big(\sum_{i} \p_{i,j}\Big)^2 \sum_k\Big(\sum_{\ell} \p_{k,\ell}\Big)^2 }\sum_{k}\sum_{j} \Delta_{kj}(\p)^2\\
  &=  \sqrt{ \sum_{j}\p_\domy(k)^2 \sum_k\p_\domx(j)^2 }Q(\p)
  =  \sqrt{ \sum_{k,j}\p_\Pi(k,j)^2 }Q(\p)\\
  &\leq \sqrt{b} Q(\p) \;,
 \end{align*}
 \end{itemize}
 and so $T_1(N) = \bigO{Q(\p)\sqrt{b}/N}$.
 
 Gathering these four terms, and by the above discussion, we obtain
 \[
    T_1(N) = 4\frac{\binom{ N-4 }{ 3 }}{\binom{N}{4}}\sum_{k,\ell} \p_{k,\ell} \left(\frac{dQ(\p)}{dX_{k,\ell}}\right)^2 \leq 4\frac{\binom{ N-4 }{ 3 }}{\binom{N}{4}}\cdot 8\cdot4\sqrt{b} Q(\p) = \bigO{\frac{\sqrt{b} Q(\p)}{N}}\,.
 \]
 
 \item Finally, for the rest of the contributions ($h\geq 2$), 
 we invoke~\cref{lemma:contribution:higher}. Specifically, we first observe that, for any distribution $\p\in\distribs{\domx\times\domy}$,
 \begin{align*}
 Q^+(\p) &= \sum_{i=1}^{\ell_1}\sum_{j=1}^{\ell_2} \left( \p_{i,j} \sum_{i'\neq i}\sum_{j'\neq j} \p_{i',j'} + \sum_{i'\neq i} \p_{i',j}\sum_{j'\neq j} \p_{i,j'} \right)^2
 \leq \sum_{i,j} \left( \p_{i,j} + \sum_{i'=1}^{\ell_1} \p_{i',j}\sum_{j'=1}^{\ell_2} \p_{i,j'} \right)^2 \\
 &\leq 2\sum_{i,j} \left( \p_{i,j}^2 + \left(\sum_{i'=1}^{\ell_1} \p_{i',j}\right)^2\left(\sum_{j'=1}^{\ell_2} \p_{i,j'} \right)^2  \right) \leq
 2\left(\normtwo{\p}^2+\normtwo{\p_\Pi}^2\right) \leq  4b\,.
 \end{align*}
 Next, we need to bound from above the high-order derivatives of $Q$. 
 By Leibniz's rule, for $h\geq 2$ and $\norm{\vect{s}}=h$, we can write:
 \begin{align*}
    \frac{d^{h}Q}{dX^{\vect{s}}} 
    &= \sum_{i,j} \frac{d^{h} \Delta_{ij}^2}{dX^{\vect{s}}}
    = \sum_{i,j} \sum_{\vect{s'}\leq \vect{s}} \prod_{\ell=1}^n \binom{s_\ell}{s'_\ell} \frac{d^{\norm{\vect{s'}}} \Delta_{ij}}{dX^{\vect{s'}}}\frac{d^{\norm{\vect{s}}-\norm{\vect{s'}}} \Delta_{ij}}{dX^{\vect{s}-\vect{s'}}} \\
    &\leq \sum_{\vect{s'}\leq \vect{s}} \prod_{i=\ell}^n \binom{s_\ell}{s'_\ell} \sqrt{ \sum_{i,j} \left(\frac{d^{\norm{\vect{s'}}} \Delta_{ij}}{dX^{\vect{s'}}}\right)^2 \sum_{i,j} \left(\frac{d^{\norm{\vect{s}-\vect{s'}}} \Delta_{ij}}{dX^{\vect{s}-\vect{s'}}}\right)^2 } \tag{Cauchy--Schwarz}\\
    &\leq \max_{ \vect{s'}\leq \vect{s} } \sum_{i,j} \left(\frac{d^{\norm{\vect{s'}}} \Delta_{ij}}{dX^{\vect{s'}}}\right)^2 \sum_{\vect{s'}\leq \vect{s}} \prod_{i=\ell}^n \binom{s_\ell}{s'_\ell} = 2^h \max_{ \vect{s'}\leq \vect{s} } \sum_{i,j} \left(\frac{d^{\norm{\vect{s'}}} \Delta_{ij}}{dX^{\vect{s'}}}\right)^2\,.
 \end{align*}
 Since $\Delta_{ij}$ has degree $2$, to bound this maximum 
 we have to consider three cases: first, $\sum_{i,j} \left(\frac{d^{0} \Delta_{ij}(\p)}{dX^{\vect{0}}}\right)^2 = Q(\p) \leq 4$. Second,
 recalling the partial derivatives of $\Delta_{ij}$ we computed earlier,
 \[
    \sum_{i,j} \left(\frac{d \Delta_{ij}(\p)}{dX_{k,\ell}}\right)^2
    = \sum_{i\neq k}\sum_{j\neq \ell} \p_{k,\ell}^2 + \left(\sum_{i'\neq k}\sum_{j'\neq \ell} \p_{i',j'}\right)^2
      + \sum_{i'\neq k}\p_{i',\ell}^2
      + \sum_{j'\neq \ell}\p_{k,j'}^2 \leq 4\,.
 \]
 Third,
 \[
    \sum_{i,j} \left(\frac{d^2 \Delta_{ij}(\p)}{dX_{k,\ell} dX_{k',\ell'}}\right)^2
    = \sum_{i,j} (\delta_{ik}-\delta_{ik'})^2(\delta_{j\ell}-\delta_{j\ell'})^2 \leq 4\,.
 \]

Combining all of the above cases results in 
$\abs{\frac{d^{h}Q}{dX^{\vect{s}}}}\leq 2^4\cdot 4$ for any $h\geq 2$ and $\norm{\vect{s}}=h$, 
and from there
\[
\sum_{h=2}^4 T_h(N) = \bigO{\frac{1}{N^2}} \cdot 2^4 \cdot 4\cdot 4b = \bigO{\frac{b}{N^2}}\,.
\] 
\end{itemize}
Accounting for all the terms, we can thus bound the variance as
\[
    \var U_N Q(\Phi_S) = (T_0(N)-Q(\p)^2) + T_1(N) + \sum_{h=2}^4 T_h(N)
    = \bigO{\frac{Q(\p)\sqrt{b}}{N} + \frac{b}{N^2} } \;,
\]
concluding the proof of~\cref{prop:variance:l2}.

\begin{remark}[Estimating a Polynomial under Poisson Sampling]\label{ssec:poly:poisson}
We observe that analogues of our theorems hold under \emph{Poisson} 
sampling (instead of multinomial sampling as treated in~\cref{sec:polynomial}). 
We defer these results, which follow from a straightforward (yet slightly cumbersome) 
adaptation of the proofs of this section, to an updated version of this paper.
\end{remark}

\section{The General Conditional Independence Tester} \label{sec:alg:flattened}

In this section, we present and analyze our general algorithm for testing conditional independence. 
The structure of this section is as follows:
In~\cref{sec:alg:flattened:flattening}, we begin by describing how we flatten the marginals of the distribution $\p_z$,
for each bin $z$ for which we receive enough samples.
After this flattening is performed, in~\cref{sec:alg:flattened:algo} we explain how 
we use the remaining samples for each such bin $z$ to compute a statistic $A$ as an appropriate weighted 
sum of bin-wise statistics $A_z$. \newer{Before going further, we discuss in~\cref{ssec:sample:complexity:discussion:flattened} 
the eventual result our analysis yields and comment on the sample complexity bound of our algorithm. 
In~\cref{sec:alg:flattened:randomness}, we explain the three different sources of randomness involved in our estimator, 
in order to clarify what will follow~--~as we will crucially later condition on part of this randomness 
to obtain bounds on some of its conditional expectations and variances. 
\cref{sec:alg:flattened:analysis:A} then details how the analysis of our statistic $A$ is performed 
(\cref{sec:alg:flattened:expectation,sec:alg:flattened:variance} respectively contain 
the analysis of the expectation and variance of $A$, conditioned on \emph{some} of the randomness at play).} 
Finally,~\cref{sec:alg:flattened:conclusion} puts everything together
and derives the correctness guarantee of our overall algorithm. 

\subsection{Flattening \texorpdfstring{$\domx$, $\domy$}{X,Y} for any Given Bin \texorpdfstring{$z$}{z}}\label{sec:alg:flattened:flattening}
Given a multiset $S$ of $N\geq 4$ independent samples from $\p\in\distribs{\domx\times\domy}$, 
where $\abs{\domx}=\ell_1$, $\abs{\domy}=\ell_2$, we perform the following. 
Losing at most three samples, we can assume $N=4+4t$ for some integer $t$. 
Let $t_1\eqdef \min(t,\ell_1)$  and $t_2\eqdef \min(t,\ell_2)$. 
We divide $S$ into two disjoint multi-sets $S_{\mathcal{F}},S_{\mathcal{T}}$ of size $t_1+t_2$ and $2t+4$ respectively, 
where the subscripts $\mathcal{F}$ and $\mathcal{T}$ stand for \emph{Flatten} and \emph{Test}.
\begin{itemize}
  \item We use $S_{\mathcal{F}}$ to flatten $\domx\times\domy$, as per~\cref{def:split:distribution}. 
  Namely, first we partition it into two multi-sets $S^1_{\mathcal{F}}$, $S^2_{\mathcal{F}}$ of size $t_1,t_2$. 
  Looking at the projections $\pi_{\domx}S^1_{\mathcal{F}}$, $\pi_{\domy}S^2_{\mathcal{F}}$ of $S^1_{\mathcal{F}}, S^2_{\mathcal{F}}$ 
  onto $\domx$ and $\domy$ respectively, we have two multi-sets of $t_1$ and $t_2$ elements. 
  We then let $T\subseteq\domx\times\domy$ obtained by, for each $(x,y)\in\domx\times\domy$, 
  adding in $T$ $a_{x,y}$ copies of $(x,y)$, where
  \begin{equation}\label{eq:flattening:axy}
      1+ a_{x,y} \eqdef (1+ a_x)(1+ a'_y)
  \end{equation}
  with $a_x$ (resp. $a'_y$) being the number of occurrences of $x$ in $\pi_{\domx}S^1_{\mathcal{F}}$ (resp. of $y$ in $\pi_{\domy}S^2_{\mathcal{F}}$). 
  Note that
  $
      \abs{T} + \ell_1\ell_2 = (\abs{\domx}+t_1)(\abs{\domy}+t_2)
  $, and that for all $(x,y)\in\domx\times\domy$, by a similar proof as that of~\cref{fact:split:distributions:l2norm:nonpoisson} 
  (using the fact that $a_x$ and $a'_y$ are independent),
  \[
    \expect{\frac{1}{1+ a_{x,y}}}=\expect{\frac{1}{1+ a_x}}\expect{\frac{1}{1+ a_y}} \leq \frac{1}{(1+t_1)(1+t_2)\p_{\domx}(x)\p_{\domy}(x)} \;,
  \]
  and so, letting $\q_T$ denote the product of the marginals of $\p_T$,
  \begin{equation}\label{eq:expected:l2norm:product:marginals:split}
    \expect{\normtwo{\q_T}^2} \leq \frac{1}{(1+t_1)(1+t_2)}\,.
  \end{equation}
  \item Next, we use the $2t+4\geq 4$ samples from $S_{\mathcal{T}}$ to estimate the squared $\lp[2]$-distance 
  between $\p_T$ and $\q_T$, as per~\cref{sec:polynomial}. Here,~\cref{remark:split:l2:chi2} will come in handy, 
  as it allows us to do it implicitly without having to actually map $\p$ to $\p_T$. 
  Indeed, recalling that the polynomial $Q$ for which we wish to estimate $Q(\p_T)$ is of the form 
  \[
      Q(X) = \sum_{(i,j)\in\domx\times\domy} \Delta_{ij}(X)^2 \;,
  \]
  we will instead estimate $R_T(\p)$, where $R$ is defined as
  \[
      R_T(X) \eqdef  \sum_{(i,j)\in\domx\times\domy} c_{i,j}\Delta_{ij}(X)^2
  \]
  with $c_{i,j} \eqdef \frac{1}{1+a_{i,j}}$ for all $(i,j)\in\domx\times\domy$. 
  From~\cref{remark:split:l2:chi2}, it is immediate that $R_T(\p) = Q(\p_T) = \normtwo{\p_T-\q_T}^2$, 
  and further by inspection of the proof of~\cref{prop:variance:l2} it is not hard to see 
  that the variance of our estimator $U_N R_T$ on $\p$ is the same as that of $U_NQ$ on $\p_T$.
  
  Let $B\eqdef \normtwo{\q_T}^2$. Note that $B$ is a random variable, determined by the choice of $S_{\mathcal{F}}$. 
  The first observation is that, while the statement of~\cref{prop:variance:l2} would be with regard to the 
  maximum of $\normtwo{\p_T}^2$, $\normtwo{\q_T}^2$, we would like to relate it to $B$. To do so, observe that
  \[
    \normtwo{\p_T}^2 \leq \left( \normtwo{\q_T} + \normtwo{\p_T - \q_T} \right)^2
    \leq 2\left( \normtwo{\q_T}^2 + \normtwo{\p_T - \q_T}^2 \right) 
    = 2\left( B+ Q(\p_T) \right)
  \]
  so we can use $B' \eqdef 2 B+ 2Q(\p_T)$ instead of our original bound $B$.  
  
  Therefore, our bound $B$ can be used in the statement of~\cref{prop:variance:l2}, leading to a variance for our estimator of
  \begin{equation}\label{eq:variance:estimator:flattened}
      \var \left[ U_N R_T \right] = \bigO{\frac{Q(\p_T)\sqrt{B'}}{N} + \frac{B'}{N^2} } = \bigO{\frac{Q(\p_T)\sqrt{B}}{N} + \frac{Q(\p_T)^{3/2}}{N} + \frac{B}{N^2} } \,.
  \end{equation}
  Now, recall that by~\cref{fact:split:distributions:l2norm:nonpoisson} (more precisely,~\cref{eq:expected:l2norm:product:marginals:split}), 
  we only have a handle on the \emph{expectation} of $B$. We could try to first obtain instead a high-probability bound on its value 
  by proving sufficiently strong concentration followed by a union bound over all estimators that we may run (i.e., all $n$ bins in $\domz$). 
  \new{However, this would lead to a rather unwieldy argument. Instead, as outlined in~\cref{sec:alg:flattened:analysis:A}, 
  we will analyze our estimators by carefully conditioning on some of the randomness 
  (the one underlying the flattening we perform for each bin), 
  and only convert the bounds obtained into high-probability statements at the end, 
  by a combination of Markov's and Chebyshev's inequalities.}
  \end{itemize}

\subsection{From Flattening to an Algorithm}\label{sec:alg:flattened:algo}
We now explain how the guarantees established above are sufficient to use in our algorithm. We will use the same notations as above, but now specifying the bin $z\in\domz$: that is, we will write $\p_z,\q_z,T_z,\p_{z,T_z},\q_{z,T_z}$ instead of $\p,\q,T,\p_{T},\q_{T}$ to make the dependence on the bin we condition on explicit. In what follows, we write $\sigma=(\sigma_z)_{z\in\domz}$, $T=(T_z\mid\sigma_z)_{z\in\domz}$.

\noindent We let  
\[
  A_z \eqdef \sigma_z\cdot \omega_z \cdot \Phi(S_z)\cdot \indic{\sigma_z\geq 4} \;,
\]
for all $z\in\domz$, where $\omega_z \eqdef \sqrt{\min\left(\sigma_z,\ell_1\right)\min\left(\sigma_z,\ell_2\right)}$. Our final statistic is 
\[
  A\eqdef \sum_{z\in\domz} A_z \;.
\]
That is, compared to algorithm of Section~\ref{sec:alg-basic}, we now re-weight 
the statistics by $\sigma_z\omega_z$ instead of $\sigma_z$ 
(since, intuitively, the flattening is done with ``$t_{1,z},t_{2,z}$'' samples for which 
$\sqrt{t_{1,z}t_{2,z}}=\Theta(\omega_z)$ samples, we multiply the weight by the ``flattening amount'').

\noindent Recalling that $\ell_1\geq \ell_2$ without loss of generality, we set
\begin{align}\label{eq:m:choice:flattened}
m\geq \zeta\max\!\Big(
    &\min\mleft( \frac{n^{7/8}\ell_1^{1/4}\ell_2^{1/4}}{\eps},\frac{n^{6/7}\ell_1^{2/7}\ell_2^{2/7}}{\eps^{8/7}},\frac{n\ell_1^{1/2}\ell_2^{1/2}}{\eps} \mright),
     \min\mleft( \frac{n^{3/4}\ell_1^{1/2}\ell_2^{1/2}}{\eps},\frac{\ell_1^{2}\ell_2^{2}}{\eps^4},\frac{n\ell_1^{1/2}\ell_2^{3/2}}{\eps} \mright), \notag\\
    &\min\mleft( \frac{n^{2/3}\ell_1^{2/3}\ell_2^{1/3}}{\eps^{4/3}},\frac{\ell_1\ell_2}{\eps^4}, \frac{\sqrt{n}\ell_1\sqrt{\ell_2}}{\eps^2}, \frac{n\ell_1^{3/2}\ell_2^{1/2}}{\eps} \mright),
     \min\mleft( \frac{\sqrt{n\ell_1\ell_2}}{\eps^2},\frac{\ell_1\ell_2}{\eps^4}\mright)
\Big) \;,
\end{align}
for some sufficiently big absolute constant $\zeta \geq 1$. The resulting pseudo-code is given in~\cref{algo:general:testing}.

\begin{algorithm}[h!t]
    \begin{algorithmic}[1]
      \Require Parameter $n\eqdef \abs{\domz}$, $\ell_1\eqdef \abs{\domx}$, $\ell_2\eqdef \abs{\domy}$, $\eps\in(0,1]$, 
                    and sample access to $\p\in\distribs{\domx\times\domy\times\domz}$.
      \State Set $m$ as in~\cref{eq:m:choice:flattened} \Comment{$\zeta \geq 1$ is an absolute constant}
      \State Set $\tau \gets \newer{\zeta^{1/4} \sqrt{\min(n,m)}}$. \Comment{Threshold for accepting}
      \State Draw $M\sim \poisson{m}$ samples from $\p$ and let $S$ be the multi-set of samples.
      \ForAll{ $z\in\domz$ }
        \State Let $S_z\subseteq \domx\times\domy$ be the multi-set $S_z\eqdef \setOfSuchThat{(x,y)}{(x,y,z)\in S}$.
        \If{ $\abs{S_z} \geq 4$ } \Comment{Enough samples to call $\Phi$}
          \State Set $N_z\gets 4\flr{(\abs{S_z}-4)/4}$, and let $S'_z$ be the multi-set of the first $N_z$ elements of $S_z$. \Comment{$N_z=4+4t_z$ for some integer $t_z$.}
          \State Set $t_{1,z} \gets \min(t_z,\ell_1)$, $t_{2,z} \gets \min(t_z,\ell_2)$, and divide $S'_z$ into disjoint $S'_{\mathcal{F},z}$, $S'_{\mathcal{T},z}$ of size $t_{1,z}+t_{2,z}$ and $\sigma_z\eqdef 2t_z+4$, respectively.
          \State $(a^{(z)}_{x,y})_{(x,y)\in \domx\times\domy} \gets \textsc{ImplicitFlattening}(S'_{\mathcal{F},z})$ \Comment{Flatten $\domx\times\domy$ using $S'_{\mathcal{F},z}$ as explained in the first bullet of~\cref{sec:alg:flattened:flattening}, by calling~\cref{algo:general:implicit:flattening}}
          \State $\Phi_z \gets \textsc{UnbiasedEstimator}((a^{(z)}_{x,y})_{(x,y)\in \domx\times\domy}, S'_{\mathcal{F},z})$  \Comment{Compute $\Phi(S'_{\mathcal{F},z})$, the unbiased estimator of $Q$ as defined in the second bullet of~\cref{sec:alg:flattened:flattening}, by calling~\cref{algo:general:implicit:estimator}} 
          \State Set $A_z\gets \sigma_z \omega_z \cdot \Phi_z$, where $\omega_z \gets \sqrt{\min\left(\sigma_z,\ell_1\right)\min\left(\sigma_z,\ell_2\right)}$.
        \Else
          \State Set $A_z\gets 0$.
        \EndIf
      \EndFor
      \If{ $A\eqdef \sum_{z\in \domz} A_z \leq \tau$ }
        \State \Return \accept
      \Else
        \State \Return \reject
      \EndIf
    \end{algorithmic}
    \caption{\textsc{TestCondIndependenceGeneral}}\label{algo:general:testing}
\end{algorithm}

\begin{algorithm}[h!t]
    \begin{algorithmic}[1]
      \Require Multi-set $S\subseteq \domx\times\domy$.
      \State\Comment{This simulates the construction of the ``flattening set'' as per~\cref{sec:alg:flattened:flattening}; by~\cref{remark:split:l2:chi2}, it is actually sufficient to compute the corresponding normalization coefficients $a_{x,y}$, which we perform below.}
      \State \Comment{All $b_x$ and $c_y$ are initialized to $0$}
      \ForAll{$(x,y)\in S$}
        \State $b_x \gets b_x + 1$
        \State $c_y \gets c_y + 1$
      \EndFor
      \State \Comment{Note that the step below can be done more efficiently by only looping through elements $(x,y)$ for which either $b_x$ or $c_y$ is positive}
      \ForAll{$(x,y)\in \domx\times\domy$}
        \State $a_{x,y} \gets (1+b_x)(1+c_y)-1$ \Comment{Implement~\cref{eq:flattening:axy}}
      \EndFor
      \State\Return $(a_{x,y})_{(x,y)\in \domx\times\domy}$
    \end{algorithmic}
    \caption{\textsc{ImplicitFlattening}}\label{algo:general:implicit:flattening}
\end{algorithm}

\begin{algorithm}[h!t]
    \begin{algorithmic}[1]
      \Require Set of coefficients $(a_{x,y})_{(x,y)\in \domx\times\domy}$, multi-set of samples $S\subseteq\domx\times\domy$.
      \State\Comment{This computes the unbiased estimator $U_N R_T$ for $Q(\p_T)=R_T(\p)$  from the samples in $S$, as explained in~\cref{sec:alg:flattened:flattening}: where \[
      R_T(X) = \sum_{(x,y)\in \domx\times\domy} \frac{1}{1+a_{x,y}}\Delta_{x,y}(X)^2
      \]}
      \State Let $N\gets \abs{S}$.
      \State \Comment{Recall that $\Phi_{S,x,y}$ denotes the count of occurrences of $(x,y)$ in the multi-set $S$}
      \ForAll{$(x,y)\in\domx\times\domy$} \Comment{Compute for $U_N \Delta_{x,y}(\Phi_S)^2$, from~\cref{eq:explicit:formula:estimator:l2}}
        \State $\Phi_{S,-x,-y} \gets \sum_{x'\neq x}\sum_{y'\neq y} \Phi_{S,x',y'}$
        \State $\Phi_{S,-x,y} \gets\sum_{x'\neq x} \Phi_{S,x',y}$
        \State $\Phi_{S,x,-y} \gets  \sum_{y'\neq y} \Phi_{S,x,y'}$
        \State $C_{i,j} \gets (\Phi_{S,i,j} \Phi_{S,-i,-j} - \Phi_{S,-i,j}  \Phi_{S,-i,-j})^2 +  \Phi_{S,i,j} \Phi_{S,-i,-j} (1-\Phi_{S,i,j} - \Phi_{S,-i,-j}) + \Phi_{S,-i,j} \Phi_{S, i,-j} (1-\Phi_{S,-i,j} - \Phi_{S,i,-j})$
      \EndFor
      \State\Return $\frac{(N-4)!}{N!}\sum_{(x,y)\in \domx\times\domy} \frac{1}{1+a_{x,y}}C_{i,j}$
    \end{algorithmic}
    \caption{\textsc{UnbiasedEstimator}}\label{algo:general:implicit:estimator}
\end{algorithm}

\subsection{Discussion of the Sample Complexity}\label{ssec:sample:complexity:discussion:flattened}

The expression of our sample complexity in~\cref{eq:m:choice:flattened} may seem rather complicated. 
We argue here that it captures at least some of the regimes of our four parameters in a tight way:
\begin{itemize}
  \item For $\ell_1=\ell_2=2$, we fall back to the case $\domx=\domy=\{0,1\}$, for which we had proven a tight bound in~\cref{sec:alg-basic}. Note that in this case the expression of $m$ in~\cref{eq:m:choice:flattened} reduces to 
  \[
  \bigO{\max\mleft(\min(n^{7/8}/\eps,n^{6/7}/\eps^{8/7}),\sqrt{n}/\eps^2\mright)}\;,
  \] matching the bounds of~\cref{sec:alg-basic}.
  
  \item For $n=1$ (and $\ell_1\geq\ell_2$ as before) we fall back to the independence testing problem~\cite{BFFKRW:01,LRR11, ADK15, DK:16}, 
  for which the tight sample complexity is known to be 
  $\bigTheta{\max\mleft( \ell_1^{2/3}\ell_2^{1/3}/\eps^{4/3},\sqrt{\ell_1\ell_2}/\eps^2 \mright)}$~\cite{DK:16}. 
  It is easy to see that, with these parameters,~\cref{eq:m:choice:flattened} 
  reduces to $\bigO{\max\mleft( \ell_1^{2/3}\ell_2^{1/3}/\eps^{4/3},\sqrt{\ell_1\ell_2}/\eps^2 \mright)}$ as well.
  
  \item For $\ell_1=\ell_2=n$ (and $\eps$ not too small), the choice of $m$ reduces to $O(n^{7/4}/\eps)$. 
  This matches the $\bigOmega{n^{7/4}}$ lower bound of~\cref{sec:lb:nnn} for $\eps=1/20$.
\end{itemize}

\begin{remark}
  We further note that the expression of~\cref{eq:m:choice:flattened}, which emerges from the analysis, 
  can be simplified by a careful accounting of the regimes of the parameters. Namely, one can show that it is equivalent to
  \begin{equation}\label{eq:m:choice:flattened:simplified}
m\geq \beta\max\mleft(
    \min\mleft( \frac{n^{7/8}\ell_1^{1/4}\ell_2^{1/4}}{\eps},\frac{n^{6/7}\ell_1^{2/7}\ell_2^{2/7}}{\eps^{8/7}} \mright),
     \frac{n^{3/4}\ell_1^{1/2}\ell_2^{1/2}}{\eps},
    \frac{n^{2/3}\ell_1^{2/3}\ell_2^{1/3}}{\eps^{4/3}},
     \frac{n^{1/2}\ell_1^{1/2}\ell_2^{1/2}}{\eps^2}
\mright) \;.
\end{equation}
\end{remark}

\subsection{The Different Sources of Randomness}\label{sec:alg:flattened:randomness}

As the argument will heavily rely on conditioning on \emph{some} of the randomness at play and analyzing the resulting conditional expectations and variances, it is important to clearly state upfront what the different sources of randomness are and how we refer to them.

In what follows, we will use the following notations: for each bin $z\in\domz$,
\begin{itemize}
  \item $\sigma_z$ is the number of samples from $\p$ we obtain with the $Z$ coordinate falling in bin $z$;
  \item $T_z$ is the randomness corresponding to the flattening of $\domx$, $\domy$ for the corresponding bin $z$ (as described in~\cref{sec:alg:flattened:flattening});
  \item $R_z$ is the randomness of the estimator $\Phi_{S_z}$ on bin $z$.
\end{itemize}
Accordingly, we will write $\sigma=(\sigma_z)_{z\in\domz}$, $T=(T_z)_{z\in\domz}$, and $R=(R_z)_{z\in\domz}$ for the three sources of randomness (over all bins).

\subsection{Analyzing \texorpdfstring{$A$}{A}}\label{sec:alg:flattened:analysis:A}

\newer{
The goal of this subsection is to show that, with high probability over $\sigma,T$, the following holds:
\begin{itemize}
  \item If $\p$ is indeed conditionally independent, then $\expectCond{A}{\sigma,T} = 0$ and $\var[A\mid \sigma,T] = O(\min(n,m))$.
  \item If $\p$ is far from conditionally independent, then $\expectCond{A}{\sigma,T} = \Omega(\sqrt{\min(n,m)})$ and $\var[A\mid \sigma,T]$ 
  is ``not too big''compared to $\min(n,m)$ and $\expectCond{A}{\sigma,T}$.
\end{itemize}
This high-probability guarantee will allow us to use Chebyshev's inequality in~\cref{sec:alg:flattened:conclusion} to conclude that, 
by comparing $A$ to a suitably chosen threshold, we can distinguish between the two cases with high probability (both over $\sigma,T$ and $R$).

The reason for which we only obtain the above guarantees ``with high probability over $\sigma,T$'' is, roughly speaking, 
that we need to handle the complicated dependencies between $A$ and $T$, 
which prevent us from analyzing $\expect{A}$ and $\var[A]$ directly. 
To do so, we introduce an intermediate statistic, $D$ (which itself only depends on $\sigma$ and $R$, but not on the flattening randomness $T$), 
and relate it to $\expectCond{A}{\sigma,T}$. This enables us to analyze the expectation and variance of $D$
instead of $\expectCond{A}{\sigma,T}$, before concluding by Markov's and (another application of) 
Chebyshev's inequality that these bounds carry over to $\expectCond{A}{\sigma,T}$ \emph{with high probability over $\sigma,T$}.
} 

\subsubsection{The Expectation of \texorpdfstring{$A$}{A}}\label{sec:alg:flattened:expectation}
We have that
\begin{equation}\label{eq:condexpect:az:tzsz}
      \expectCond{ A_z }{ \sigma_z, T_z } = \sigma_z\omega_z\normtwo{\p_{T_z}-\q_{T_z}}^2\indic{\sigma_z\geq 4}
\end{equation}
but $Q(\p_{T_z})=\normtwo{\p_{T_z}-\q_{T_z}}^2$ depends on $T_z$. To get around this, we will start by analyzing $D\eqdef \sum_{z\in\domz} D_z$, 
where 
\[
      D_z \eqdef \sigma_z\omega_z  \frac{\eps_z^2}{\ell_1\ell_2} \indic{\sigma_z\geq 4} \;,
\]
and $\eps_z \eqdef \totalvardist{\p_z}{\q_z}$. Note that now $D$ only depends on $R$ and $\sigma$ (and no longer on $T$).
For simplicity, we will often write $\eps'_z = \frac{\eps_z}{\sqrt{\ell_1\ell_2}}$.
 Next we show that whatever flattenings $T=(T_z)_{z\in\domz}$ we use given $\sigma=(\sigma_z)_{z\in\domz}$, $D$ is a lower bound for the conditional expectation of $A$: 
\begin{lemma}\label{lemma:D:lowerbound:AsigmaT:flattened}
    $\expectCond{ A }{ \sigma, T } \geq D$.
\end{lemma}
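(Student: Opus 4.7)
The plan is to reduce the statement to a bin-by-bin inequality, then use the fact that the flattening is set up so that $\q_{T_z}$ is itself a split distribution (with the \emph{same} splitting multi-set as $\p_{T_z}$), and finish with a Cauchy--Schwarz on the flattened support.

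First I would observe that, by the unbiasedness of $\Phi$ and the identity $Q(\p_{T_z}) = \|\p_{T_z} - \q_{T_z}\|_2^2$, the stated equation \eqref{eq:condexpect:az:tzsz} gives
\[
   \expectCond{A}{\sigma,T} = \sum_{z\in\domz} \sigma_z\,\omega_z\, \normtwo{\p_{T_z} - \q_{T_z}}^2 \indic{\sigma_z\geq 4}\,,
\]
so it is enough to prove, for every $z$ with $\sigma_z \geq 4$, the bin-wise inequality
$\normtwo{\p_{T_z} - \q_{T_z}}^2 \geq \eps_z^2/(\ell_1\ell_2)$.

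The key structural observation is that the product of the marginals of $\p_{T_z}$ is the same distribution as one would obtain by splitting $\q_z$ with the multi-set $T_z$. Concretely, because the coefficients are chosen as $1+a_{x,y} = (1+a_x)(1+a'_y)$ (see~\cref{eq:flattening:axy}), a direct computation of the $X$- and $Y$-marginals of $\p_{T_z}$ yields $(\p_{T_z})_X(x,i) = \p_{z,X}(x)/(1+a_x)$ and $(\p_{T_z})_Y(y,j) = \p_{z,Y}(y)/(1+a'_y)$, so that $\q_{T_z}(x,y,i,j) = \q_z(x,y)/((1+a_x)(1+a'_y))$, which is exactly $(\q_z)_{T_z}$. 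Hence by~\cref{fact:split:distributions:distance},
\[
   \totalvardist{\p_{T_z}}{\q_{T_z}} = \totalvardist{(\p_z)_{T_z}}{(\q_z)_{T_z}} = \totalvardist{\p_z}{\q_z} = \eps_z\,,
\]
and therefore $\normone{\p_{T_z} - \q_{T_z}} = 2\eps_z$.

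To pass from $\lp[1]$ to $\lp[2]^2$, I apply Cauchy--Schwarz on the support of $\p_{T_z} - \q_{T_z}$, which has size at most $(\ell_1+t_{1,z})(\ell_2+t_{2,z})$:
\[
   \normtwo{\p_{T_z} - \q_{T_z}}^2 \geq \frac{\normone{\p_{T_z} - \q_{T_z}}^2}{(\ell_1+t_{1,z})(\ell_2+t_{2,z})} = \frac{4\eps_z^2}{(\ell_1+t_{1,z})(\ell_2+t_{2,z})}\,.
\]
Finally, the definition $t_{1,z} = \min(t_z,\ell_1)$ and $t_{2,z} = \min(t_z,\ell_2)$ gives $(\ell_1+t_{1,z})(\ell_2+t_{2,z}) \leq 4\ell_1\ell_2$, which yields $\normtwo{\p_{T_z} - \q_{T_z}}^2 \geq \eps_z^2/(\ell_1\ell_2)$ and completes the proof. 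The only mildly delicate step is verifying that $\q_{T_z}$ equals the split of $\q_z$ by $T_z$; this is exactly where the product structure $1+a_{x,y} = (1+a_x)(1+a'_y)$ of the flattening is used, and without it the argument would only produce a split of some other product distribution, breaking the clean application of~\cref{fact:split:distributions:distance}.
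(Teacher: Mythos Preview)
Your proof is correct and follows essentially the same approach as the paper: both reduce to the bin-wise inequality $\normtwo{\p_{T_z}-\q_{T_z}}\geq \frac{2\eps_z}{\sqrt{(\ell_1+t_{1,z})(\ell_2+t_{2,z})}}$ and then use $t_{i,z}\leq \ell_i$ to bound the denominator by $4\ell_1\ell_2$. Your write-up is in fact more complete than the paper's, which simply asserts that inequality; you correctly identify and verify the key point that, thanks to the product structure $1+a_{x,y}=(1+a_x)(1+a'_y)$, the product of the marginals of $\p_{T_z}$ coincides with the split $(\q_z)_{T_z}$, so that \cref{fact:split:distributions:distance} applies and the $\lp[1]$ distance is preserved before the Cauchy--Schwarz step.
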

\begin{proof}
Using that $\normtwo{\p_{T_z}-\q_{T_z}} \geq \frac{2\eps_z}{\sqrt{(\ell_1+t_{1,z})(\ell_2+t_{2,z})}} \geq \frac{\eps_Z}{\sqrt{\ell_1\ell_2}}$, we have that
\[
D = \sum_{z\in\domz} \sigma_z\omega_z  \frac{\eps_z^2}{\sqrt{\ell_1\ell_2}} \indic{\sigma_z\geq 4} \leq \sum_{z\in\domz} \sigma_z\omega_z \normtwo{\p_{T_z}-\q_{T_z}}^2 \indic{\sigma_z\geq 4} = A \; .
\]
\end{proof}

We will require the following analogue of~\cref{lemma:exp-lb} for $D$:  
  \begin{lemma} \label{lemma:exp-lb:flattened}
    For $z\in\domz$, let  $\alpha_z\eqdef m \cdot \p_Z(z)$. 
    Then, we have that:
    \begin{equation}\label{eq:lowerbound:expectation:flattened}
      \expect{D} \geq \gamma \cdot \sum_{z\in\domz} \eps_z^2 \min(\alpha_z \beta_z, \alpha_z^4)
    \end{equation}
    for some absolute constant $\gamma>0$, 
    where $\beta_z \eqdef \sqrt{\min(\alpha_z,\ell_1)\min(\alpha_z,\ell_2)}$.
  \end{lemma}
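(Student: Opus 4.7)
The plan is to prove the bound bin by bin. Observe that $\omega_z = \sqrt{\min(\sigma_z,\ell_1)\min(\sigma_z,\ell_2)}$ is a measurable function of $\sigma_z$ alone, so $D_z$ is a deterministic function of $\sigma_z$ (nothing else in its definition depends on $T$ or on the estimator randomness $R$). Under Poissonization the counts $(\sigma_z)_{z\in\domz}$ are independent with $\sigma_z\sim\poisson{\alpha_z}$, so linearity of expectation gives
\[
\expect{D} = \sum_{z\in\domz} \frac{\eps_z^2}{\ell_1\ell_2}\cdot \expect{\sigma_z\,\sqrt{\min(\sigma_z,\ell_1)\min(\sigma_z,\ell_2)}\,\indic{\sigma_z\geq 4}}.
\]
The problem thereby decouples across bins into a single moment computation for a truncated Poisson of exactly the shape handled in \cref{claim:expectation:truncated:poisson:squared:with:min}.

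The second step is then a direct invocation of that claim with $X=\sigma_z$, $\lambda=\alpha_z$, $a=\ell_1$, $b=\ell_2$, which yields
\[
\expect{\sigma_z\,\sqrt{\min(\sigma_z,\ell_1)\min(\sigma_z,\ell_2)}\,\indic{\sigma_z\geq 4}} \geq C\,\min\mleft(\alpha_z\sqrt{\min(\alpha_z,\ell_1)\min(\alpha_z,\ell_2)},\,\alpha_z^4\mright) = C\min(\alpha_z\beta_z,\,\alpha_z^4),
\]
where the last step just unpacks the definition of $\beta_z$. Summing over $z\in\domz$ and folding the $1/(\ell_1\ell_2)$ factor together with $C$ into the claimed absolute constant $\gamma$ (matching the $\eps'_z$-style rescaling used in the companion Lemma 3.2) delivers the inequality. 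This is the precise analogue, in the flattened/weighted regime, of the lower bound of \cref{lemma:exp-lb}: the extra $\omega_z$ weight in $A_z$ simply upgrades the Poisson moment from $\shortexpect[\sigma_z\indic{\sigma_z\geq 4}]$ to $\shortexpect[\sigma_z\omega_z\indic{\sigma_z\geq 4}]$, and \cref{claim:expectation:truncated:poisson:squared:with:min} is designed exactly to cover that.

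There is essentially no obstacle here once the truncated-Poisson moment bound is in hand, because the delicate two-regime analysis has been isolated into the preliminary claim. If one had to reprove that bound directly, the split would be: for $\alpha_z\leq 1$, lower-bound the expectation by the single-term contribution at $\sigma_z=4$ (probability $\Theta(\alpha_z^4)$, value $\Theta(1)$ since $\ell_1,\ell_2\geq 2$), yielding the $\alpha_z^4$ branch; for $\alpha_z>1$, standard anti-concentration of a Poisson near its mean shows that $\sigma_z$ and hence $\omega_z$ are within constant factors of their expected orders with probability $\Omega(1)$ conditional on $\sigma_z\geq 4$, yielding the $\alpha_z\beta_z$ branch. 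The only thing to be mindful of is keeping these two branches in a single $\min(\cdot,\cdot)$ expression so that the resulting lower bound can be fed cleanly into the Markov/Chebyshev transfer from $\expect{D}$ to $\expectCond{A}{\sigma,T}$ via \cref{lemma:D:lowerbound:AsigmaT:flattened}, which is where this lemma is used in the subsequent analysis.
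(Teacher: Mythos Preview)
Your proof is essentially identical to the paper's: both compute $\expect{D}$ by linearity as $\sum_z \eps'^2_z\,\shortexpect_\sigma[\sigma_z\omega_z\indic{\sigma_z\geq 4}]$ and then invoke \cref{claim:expectation:truncated:poisson:squared:with:min} with $\lambda=\alpha_z$, $a=\ell_1$, $b=\ell_2$ to obtain the $\min(\alpha_z\beta_z,\alpha_z^4)$ lower bound bin by bin. One small caveat: the $1/(\ell_1\ell_2)$ factor cannot literally be absorbed into an \emph{absolute} constant $\gamma$, so the inequality actually established (here and in the paper's own proof) is $\expect{D}\geq \gamma\sum_z \eps'^2_z\min(\alpha_z\beta_z,\alpha_z^4)$ with $\eps'_z=\eps_z/\sqrt{\ell_1\ell_2}$, which is what the downstream argument in \cref{prop:exp-gap:flattened} uses anyway.
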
 
\begin{proof}
From the definition of $D$, we obtain that its expectation is:
  \[
      \expect{D} = \sum_z \shortexpect_\sigma\!\left[ \sigma_z\omega_z \indic{\sigma_z\geq 4} \eps'^2_z \right]\,.
  \]
  Now
  \[
      \expect{D} = \sum_z \eps'^2_z \shortexpect_\sigma\!\left[ \sigma_z\omega_z \indic{\sigma_z\geq 4}  \right]
      = \Omega(1) \sum_z \eps'^2_z \min(\alpha_z \beta_z,\alpha_z^4) \;,
  \]
  using the fact (\cref{claim:expectation:truncated:poisson:squared:with:min}) that, for a Poisson random variable $X$ with parameter $\lambda$, $\expect{ X \sqrt{\min(X,a)\min(X,b)}\indic{X\geq 4}  } \geq \gamma \min(\lambda \sqrt{\min(\lambda,a)\min(\lambda,b)}, \lambda^4)$, for some absolute constant $\gamma>0$.
  \end{proof}

We will leverage this lemma to show the following lower bound on the expectation of $D$:
\begin{proposition}\label{prop:exp-gap:flattened}
If $\totalvardist{\p}{ \condindprop{\domx}{\domy}{\domz}}> \eps$, 
then $\expect{D} = \bigOmega{\zeta \sqrt{\min(n,m)}}$ (where $\zeta$ the constant in the definition of $m$).
\end{proposition}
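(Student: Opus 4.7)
The plan is to mirror the expectation-gap analysis of~\cref{prop:exp-gap}, but with four regimes of $\alpha_z$ in place of two, matching the four inner $\min$-blocks in~\cref{eq:m:choice:flattened}. By (the proof of)~\cref{lemma:exp-lb:flattened}, I obtain
\[
\expect{D} \geq \gamma\sum_{z\in\domz}\frac{\eps_z^2}{\ell_1\ell_2}\min(\alpha_z\beta_z,\alpha_z^4),
\]
so it suffices to lower bound the right-hand side. The first step is to derive the global constraint $\sum_{z\in\domz}\alpha_z\eps_z > m\eps$: taking $\q=\sum_{z}\p_Z(z)\q_z$ from~\cref{def:product:conditional:marginals}, we have $\q\in\condindprop{\domx}{\domy}{\domz}$ and $\q_Z=\p_Z$, so~\cref{lem:dtv-cond} gives $\dtv(\p,\q)=\sum_{z\in\domz}\p_Z(z)\eps_z$; combining with the hypothesis $\dtv(\p,\q)\geq \dtv(\p,\condindprop{\domx}{\domy}{\domz})>\eps$ and multiplying by $m$ yields the claim.

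Next, I partition $\domz$ into the four regions $\domz_1=\{z:\alpha_z<1\}$, $\domz_2=\{z:1\leq \alpha_z<\ell_2\}$, $\domz_3=\{z:\ell_2\leq \alpha_z<\ell_1\}$, and $\domz_4=\{z:\alpha_z\geq \ell_1\}$ (recalling $\ell_1\geq \ell_2$). On these regions $\min(\alpha_z\beta_z,\alpha_z^4)$ simplifies respectively to $\alpha_z^4$, $\alpha_z^2$, $\alpha_z^{3/2}\sqrt{\ell_2}$, and $\alpha_z\sqrt{\ell_1\ell_2}$. By pigeonhole, some $i\in\{1,2,3,4\}$ satisfies $\sum_{z\in\domz_i}\alpha_z\eps_z>m\eps/4$; I then convert this into a lower bound on $\sum_{z\in\domz_i}\frac{\eps_z^2}{\ell_1\ell_2}\min(\alpha_z\beta_z,\alpha_z^4)$ by applying the appropriate moment inequality: Cauchy--Schwarz on $\domz_2,\domz_3,\domz_4$ (where the exponent of $\alpha_z$ is at most $2$), and the Jensen-type power-mean argument used for Case~(2) of~\cref{prop:exp-gap} on $\domz_1$ (where the exponent is $4$). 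The auxiliary estimates I rely on are $|\domz_i|\leq n$, $\sum_{z\in\domz_i}\alpha_z\leq m$, and the regime-defining lower bound on $\alpha_z$ (e.g., $\alpha_z\geq \ell_1$ on $\domz_4$).

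The main obstacle is the bookkeeping. The lower bound obtained in each of the four regimes must be compared against $c\zeta\sqrt{\min(n,m)}$, and this comparison splits further into subcases depending on how $m$, $n$, $\ell_1$, $\ell_2$, and $\eps$ interact (in particular whether $m\leq n$ or $m>n$, and which of the auxiliary bounds above is binding). These subcases are precisely what gives rise to the several terms inside each inner $\min$ of~\cref{eq:m:choice:flattened}: the plan is to check, regime by regime, that the assumed lower bound on $m$ from~\cref{eq:m:choice:flattened} together with a sufficiently large absolute constant $\zeta$ (chosen to absorb the universal constants lost in the moment inequalities) forces the resulting bound to exceed $c\zeta\sqrt{\min(n,m)}$. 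The simplified expression~\cref{eq:m:choice:flattened:simplified} provides a useful sanity check, since each of its four terms should correspond to the worst subcase within one of the four regimes.
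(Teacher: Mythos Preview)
Your approach is essentially the paper's: partition $\domz$ into the four regimes $\alpha_z<1$, $1\leq\alpha_z<\ell_2$, $\ell_2\leq\alpha_z<\ell_1$, $\alpha_z\geq\ell_1$, simplify $\min(\alpha_z\beta_z,\alpha_z^4)$ on each to $\alpha_z^4$, $\alpha_z^2$, $\alpha_z^{3/2}\sqrt{\ell_2}$, $\alpha_z\sqrt{\ell_1\ell_2}$ respectively, apply Cauchy--Schwarz (exponent $\leq 2$) or the Jensen power-mean argument (exponent $4$), and then match against~\cref{eq:m:choice:flattened}.

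One point you gloss over, however, is the role of the \emph{last} term inside each inner $\min$ of~\cref{eq:m:choice:flattened} (e.g.\ $n\ell_1^{1/2}\ell_2^{1/2}/\eps$ in the first block). This term does \emph{not} arise from a moment inequality: it encodes an \emph{infeasibility constraint}. Concretely, on $\domz_1$ one has $\alpha_z<1$ and $\eps_z\leq 2$, whence $\sum_{z\in\domz_1}\eps_z\alpha_z\leq 2n$; so if $m\eps/4>2n$ the regime $\domz_1$ simply cannot be the one carrying mass $m\eps/4$, and you need not verify anything for it. The analogous bounds for $\domz_2,\domz_3$ use the regime-defining \emph{upper} bounds $\alpha_z<\ell_2$, $\alpha_z<\ell_1$ (you only list lower bounds in your auxiliary estimates). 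Without this observation you would be stuck whenever the last term in some inner $\min$ is the binding one, since then $m$ need not satisfy the moment-inequality terms for that block. The fix is exactly as in the paper: if regime $i$ contributes, its infeasibility bound forces $m$ to be below (a constant times) the last term, hence the $\min$ in block $i$ is attained at one of the moment-inequality terms, and your Cauchy--Schwarz/Jensen bound then applies.
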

\begin{proof}
Since $\totalvardist{\p}{ \condindprop{\domx}{\domy}{\domz}}> \eps$, we have that 
  $\sum_{z\in\domz} \eps'_z \alpha_z \geq \frac{1}{2\sqrt{\ell_1\ell_2}}\sum_{z\in\domz} \eps_z \alpha_z > \frac{m\eps}{\sqrt{\ell_1\ell_2}}$.

 We once again divide $\domz$ into heavy and light bins, $\domz_{H}\eqdef \setOfSuchThat{z}{\alpha_z^3 \geq \beta_z }$ and $\domz_{L}\eqdef \domz\setminus \domz_{H}$. By the above, we must have $\sum_{z\in\domz_{H}} \eps_z \alpha_z > m\eps$ or $\sum_{z\in\domz_{L}} \eps_z \alpha_z > m\eps$. We proceed as in the proof of~\cref{prop:exp-gap} to handle these two cases.
  \begin{itemize}
    \item In the first case, we want to lower bound $\sum_{z\in\domz_{H}} \eps'^2_z \alpha_z\beta_z$. We consider three sub-cases, partitioning $\domz_{H}$ in $3$: (1) $\domz_{H,1} \eqdef \setOfSuchThat{ z\in\domz }{ \ell_2\leq \ell_1 < \alpha_z }$, (2) $\domz_{H,2} \eqdef \setOfSuchThat{ z\in\domz }{ \ell_2 \leq \alpha_z \leq \ell_1  }$, and (3) $\domz_{H,3} \eqdef \setOfSuchThat{ z\in\domz }{ \alpha_z < \ell_2 \leq \ell_1  }$. By a similar argument, at least one of these sets is such that $\sum_{z\in\domz_{H,i}} \eps_z\alpha_z > \frac{1}{3}m\eps'$.
      \begin{itemize}
        \item In the first sub-case:
          \[
              \sum_{z\in\domz_{H,1}} \eps'^2_z \alpha_z\beta_z
              = \sqrt{\ell_1\ell_2}\sum_{z\in\domz_{H,1}} \eps'^2_z \alpha_z
              \geq \sqrt{\ell_1\ell_2}\frac{\left(\sum_{z\in\domz_{H,1}} \eps'_z \alpha_z\right)^2}{\sum_{z\in\domz_{H,1}} \alpha_z}
              \geq \sqrt{\ell_1\ell_2}\frac{\left(\sum_{z\in\domz_{H,1}} \eps'_Z \alpha_z\right)^2}{m}
          \]
          by Cauchy--Schwarz and recalling that $\sum_{z\in\domz_{H,1}} \alpha_z\leq \sum_{z\in\domz} \alpha_z = m$; 
          and again by Jensen's inequality after taking expectations on both sides,
          \begin{equation}\label{eq:flattened:expect:case1:subcase:1}
              \sum_{z\in\domz_{H,1}} \eps'^2_z \alpha_z\beta_z 
                \geq \sqrt{\ell_1\ell_2}\frac{\left(\sum_{z\in\domz_{H,1}} \eps'_z \alpha_z\right)^2}{m} 
                > \sqrt{\ell_1\ell_2}\frac{1}{36}m{\eps'}^2 = \frac{1}{36}\frac{m\eps^2}{\sqrt{\ell_1\ell_2}}\,.
          \end{equation}
        \item In the second sub-case:
          \[
              \sum_{z\in\domz_{H,2}}\eps'^2_z \alpha_z\beta_z
              = \sqrt{\ell_2}\sum_{z\in\domz_{H,2}} \eps'^2_z \alpha_z^{3/2}
              \geq \sqrt{\ell_2}\frac{\left(\sum_{z\in\domz_{H,2}} \eps'_z \alpha_z\right)^2}{\sum_{z\in\domz_{H,2}} \sqrt{\alpha_z}}
              \geq \sqrt{\ell_2}\frac{\left(\sum_{z\in\domz_{H,2}} \eps'_z \alpha_z\right)^2}{\min(\sqrt{mn},m/\sqrt{\ell_1})}
          \]
          by Jensen's inequality and then using that $\sum_{z\in\domz_{H,2}} \sqrt{\alpha_z} = \sqrt{m}\sum_{z\in\domz_{H,2}} \sqrt{\p_Z(z)} \leq \sqrt{mn}$, and also that by definition of $\domz_{H,2}$  we have $\sum_{z\in\domz_{H,2}} \sqrt{\alpha_z} \leq \sqrt{m} \sum_{z\in\domz_{H,2}} \sqrt{\p_Z(z)}\leq \sqrt{m}\frac{m}{\ell_1} \sqrt{\frac{\ell_1}{m}} = \frac{m}{\sqrt{\ell_1}}$. Again by Jensen's inequality after taking expectations on both sides,
          \begin{align}
              \sum_{z\in\domz_{H,2}} \eps'^2_z \alpha_z\beta_z 
                &\geq \sqrt{\ell_2}\frac{\left(\sum_{z\in\domz_{H,2}} \eps'_z \alpha_z\right)^2}{\min(\sqrt{mn},m/\sqrt{\ell_1})} 
                > \sqrt{\ell_2}\frac{1}{36}\frac{m^2{\eps'}^2}{{\min(\sqrt{mn},m/\sqrt{\ell_1})}} \notag\\
                &= \frac{1}{36}\frac{m^{3/2}\eps^2}{\ell_1\sqrt{\ell_2}}\max\mleft(\frac{1}{\sqrt{n}},\sqrt{\frac{\ell_1}{m}}\mright)\,. \label{eq:flattened:expect:case1:subcase:2}
          \end{align}
        However, note that since
        $\sum_{z\in\domz_{H,2}} \eps'_z \alpha_z \leq \sqrt{2} \sum_{z\in\domz_{H,2}} \alpha_z \leq \sqrt{2}\abs{\domz_{H,2}}\ell_1 \leq \sqrt{2}n\ell_1$
      (as $\alpha_z \leq \ell_1$ for $z\in\domz_{H,2}$), the second sub-case cannot happen if $m\eps' \geq 2\sqrt{2}n\ell_1$.
        \item In the third sub-case:
          \[
              \sum_{z\in\domz_{H,3}} \eps'^2_z \alpha_z\beta_z
              = \sum_{z\in\domz_{H,3}} \eps'^2_z \alpha_z^{2}
              \geq \frac{\left(\sum_{z\in\domz_{H,3}} \eps'_z \alpha_z\right)^2}{\sum_{z\in\domz_{H,3}} 1}
              \geq \frac{\left(\sum_{z\in\domz_{H,3}} \eps'_z \alpha_z\right)^2}{\min(n,m)}
          \]
          by Jensen's inequality and recalling that $\abs{\domz_{H,3}}\leq \min(n,m)$; and again by Jensen's inequality after taking expectations on both sides,
          \begin{equation}\label{eq:flattened:expect:case1:subcase:3}
              \sum_{z\in\domz_{H,3}} \eps'^2_z \alpha_z 
              \geq \frac{\left(\sum_{z\in\domz_{H,3}} \eps'_z \alpha_z\right)^2}{\min(n,m)} 
              > \frac{1}{36}\frac{m^2}{\min(n,m)}{\eps'}^2 = \frac{1}{36}\max\mleft(\frac{m^2}{n},m\mright)\frac{\eps^2}{\ell_1\ell_2}\,. 
          \end{equation}
      However, note that since
        $\sum_{z\in\domz_{H,3}} \delta_z \alpha_z \leq \sqrt{2} \sum_{z\in\domz_{H,3}} \alpha_z \leq \sqrt{2}\abs{\domz_{H,3}}\ell_2 \leq \sqrt{2}n\ell_2$
      (as $\alpha_z < \ell_2$ for $z\in\domz_{H,3}$), the third sub-case cannot happen if $m\eps' \geq 2\sqrt{2}n\ell_2$.
      \end{itemize}
    \item In the second case, we want to lower bound 
    $
        \sum_{z\in \domz_{L}} \eps'^2_z \alpha_z^4
    $. We then use the same chain of (in)-equalities as in the second case of~\cref{prop:exp-gap}, to obtain
    \[
        \sum_{z\in \domz_{L}} \eps'^2_z \alpha_z^4 \geq \frac{\left(\sum_{z\in \domz_{L}}\eps'_z \alpha_z\right)^4}{\left(\sum_{z\in \domz_{L}} \eps'^{2/3}_z \right)^3} \;,
    \]
    and recall that $\eps'_z = \frac{\eps_z}{2\sqrt{\ell_1\ell_2}} \leq \frac{1}{\sqrt{\ell_1\ell_2}}$ to conclude
    \begin{equation}\label{eq:flattened:expect:case2}
    \sum_{z\in \domz_{L}} \eps'^2_z \alpha_z^4 
    \geq \frac{\ell_1\ell_2}{4n^3} \Big(\sum_{z\in \domz_{L}} \eps'_z \alpha_z\Big)^4
    = \frac{\ell_1\ell_2}{4n^3} \Big(\frac{1}{2\sqrt{\ell_1\ell_2}}\sum_{z\in \domz_{L}} \eps_z \alpha_z\Big)^4
    > \frac{1}{8}\frac{m^4\eps^4}{n^3\ell_1\ell_2} \;.
    \end{equation}
    However, note that since
      $\sum_{z\in\domz_{L}} \delta_z \alpha_z \leq \sqrt{2} \sum_{z\in\domz_{L}} \alpha_z \leq \sqrt{2}\abs{\domz_{L}} \leq \sqrt{2}n$
    (as $\alpha_z\leq 1$ for $z\in\domz_{L}$), the second case cannot happen if $m\eps' \geq 2\sqrt{2}n$.
  \end{itemize}
  
It remains to use~\cref{eq:flattened:expect:case1:subcase:1,eq:flattened:expect:case1:subcase:2,eq:flattened:expect:case1:subcase:3,eq:flattened:expect:case2} and our setting of $m$ to show that $\expect{D} \geq C\sqrt{\min(n,m)}$ (where the constant $C>0$ depends on the choice of the constant in the definition of $m$).
\begin{itemize}
\item From~\cref{eq:flattened:expect:case1:subcase:1} and the fact that $m \geq \zeta \min(\ell_1\ell_2/\eps^4, \sqrt{n\ell_1\ell_2}/\eps^2)$, we get
  \[
      \sum_{z\in\domz_{H,1}} \eps'^2_z \alpha_z\beta_z \gg \sqrt{\zeta \min(n,m)}
  \]
  in the first sub-case of the first case.
\item From~\cref{eq:flattened:expect:case1:subcase:2} and the fact that $m \geq \zeta \min(n^{2/3}\ell_1^{2/3}\ell_2^{1/3}/\eps^{4/3},\ell_1\ell_2/\eps^4,\sqrt{n}\ell_1\sqrt{\ell_2}/\eps^2,n\ell_1^{3/2}\ell_2^{1/2}/\eps)$, we get
  \[
      \sum_{z\in\domz_{H,2}} \eps'^2_z \alpha_z\beta_z \gg \sqrt{\zeta \min(n,m)}
  \]
  in the second sub-case of the first case (depending on whether $\min(n,m)\min\mleft(n,\frac{m}{\ell_1}\mright)$ is equal to $n^2$, $m^2/\ell_1$, or $mn$). (The last term in the $\min$ enforcing the condition that this sub-case can only happen whenever $m \eps' = O(n \ell_1)$.)
\item From~\cref{eq:flattened:expect:case1:subcase:3} and the fact that $m \geq \zeta \min(n^{3/4}\ell_1^{1/2}\ell_2^{1/2}/\eps,\ell_1^{2}\ell_2^{2}/\eps^4,n\ell_1^{1/2}\ell_2^{3/2}/\eps)$, we get
  \[
      \sum_{z\in\domz_{H,3}} \eps'^2_z \alpha_z\beta_z \gg \sqrt{\zeta \min(n,m)}
  \]
  in the third sub-case of the first case (depending on whether $\sqrt{\min(n,m)}\min\mleft(\frac{1}{m},\frac{n}{m^2}\mright)$ is equal to $n^{3/2}/m^2$ or $1/m^{1/2}$). (The last term in the $\min$ enforcing the condition that this sub-case can only happen whenever $m \eps' = O(n \ell_2)$.)
\item From~\cref{eq:flattened:expect:case2} and the fact that $m \geq \zeta \min(n^{7/8}\ell_1^{1/4}\ell_2^{1/4}/\eps,n^{6/7}\ell_1^{2/7}\ell_2^{2/7}/\eps^{8/7},n\ell_1^{1/2}\ell_2^{1/2}/\eps)$, we get
  \[
      \sum_{z\in\domz_{L}} \eps'^2_z \alpha_z\beta_z \gg \zeta^2 \sqrt{\min(n,m)}
  \]
  in the second case (depending on whether $\min(n,m)$ is equal to $n$ or $m$). 
  (The last term in the $\min$ enforcing the condition that this sub-case can only happen whenever $m \eps' = O(n)$.)
\end{itemize}
This completes the proof of Proposition~\ref{prop:exp-gap:flattened}.
\end{proof}

\subsubsection{Variances of \texorpdfstring{$D$}{D} and \texorpdfstring{$A$}{A}}\label{sec:alg:flattened:variance}
First we bound the variance of $D$: 
\begin{lemma}\label{lemma:D:variance:expectation:flattened}
\[
  \var[D] \leq O(\expect{D})\,.
\]
\end{lemma}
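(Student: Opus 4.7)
The plan is to exploit the fact that $D$ decomposes as a sum of independent random variables, so that its variance splits bin by bin, and then to invoke the technical variance bound already stated in the preliminaries.

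First, observe that $D = \sum_{z\in\domz} D_z$ where
\[
D_z = \frac{\eps_z^2}{\ell_1\ell_2}\cdot \sigma_z\,\omega_z\,\indic{\sigma_z\geq 4}
\]
depends only on the single random variable $\sigma_z$ (the quantities $\eps_z,\ell_1,\ell_2$ are deterministic, and $\omega_z=\sqrt{\min(\sigma_z,\ell_1)\min(\sigma_z,\ell_2)}$ is itself a function of $\sigma_z$). Because we Poissonize the total sample size, the counts $(\sigma_z)_{z\in\domz}$ are independent, with $\sigma_z\sim\poisson{\alpha_z}$. Consequently the $D_z$'s are mutually independent and
\[
\var[D] \;=\; \sum_{z\in\domz}\var[D_z] \;=\; \sum_{z\in\domz}\frac{\eps_z^4}{(\ell_1\ell_2)^2}\,\var\bigl[\sigma_z\,\omega_z\,\indic{\sigma_z\geq 4}\bigr].
\]

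Next, I would apply \cref{claim:variance:truncated:poisson:2} bin by bin with $X=\sigma_z$, $a=\ell_1$, $b=\ell_2$: this claim is tailored precisely to the random variable $X\sqrt{\min(X,a)\min(X,b)}\indic{X\geq 4}$ and yields
\[
\var\bigl[\sigma_z\,\omega_z\,\indic{\sigma_z\geq 4}\bigr] \;\leq\; C\cdot \expect{\sigma_z\,\omega_z\,\indic{\sigma_z\geq 4}}
\]
for some absolute constant $C>0$. Multiplying by $\eps_z^4/(\ell_1\ell_2)^2$ and using the trivial bound $\eps_z\leq 1$ (since $\eps_z$ is a total variation distance), together with $\ell_1\ell_2\geq 1$, we get
\[
\frac{\eps_z^4}{(\ell_1\ell_2)^2} \;\leq\; \frac{\eps_z^2}{\ell_1\ell_2},
\]
so that $\var[D_z]\leq C\cdot \expect{D_z}$. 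Summing over $z\in\domz$ and using linearity of expectation gives
\[
\var[D] \;\leq\; C \sum_{z\in\domz}\expect{D_z} \;=\; C\,\expect{D},
\]
which is the desired inequality.

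There is essentially no technical obstacle: the lemma follows almost immediately once one invokes the moment bound from \cref{claim:variance:truncated:poisson:2} — which is where all the real work (estimating truncated Poisson moments of products of minima) was already done in the preliminaries. The only thing worth flagging is that one must remember to use independence of the $\sigma_z$'s (hence of the $D_z$'s), which relies on Poissonizing the total sample count; without Poissonization the $\sigma_z$'s would be negatively correlated and one would need a covariance argument, though in fact independence only strengthens the conclusion.
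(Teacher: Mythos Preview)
Your proof is correct and essentially identical to the paper's: both decompose $\var[D]$ as a sum over bins using independence of the Poissonized counts $\sigma_z$, apply \cref{claim:variance:truncated:poisson:2} to each term, and then use the trivial bound $\eps_z^2/(\ell_1\ell_2)\leq 1$ (equivalently $\eps'^2_z\leq 1$) to pass from $\eps'^4_z$ to $\eps'^2_z$ before summing.
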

\begin{proof}
 Recall that $D= \sum_{z\in\domz} \sigma_z\omega_z  \eps'^2_z \indic{\sigma_z\geq 4}$. Since the $\sigma_z$'s are independent and $D_z$ is a function of $\sigma_z$, the $D_z$'s are independent as well and so
  \[
      \var[D] = \sum_{z\in\domz} \var[ \sigma_z\omega_z \eps'^2_z \indic{\sigma_z\geq 4} ]
      = \sum_{z\in\domz} \eps'^4_z \var_\sigma[ \sigma_z\omega_z\indic{\sigma_z\geq 4} ] \;.
  \]
  
  As $\sigma_z$ is distributed as $\poisson{\alpha_z}$, we can use \cref{claim:variance:truncated:poisson:2} to bound this
  \begin{align*}
      \var[D] &\leq C'\sum_{z\in\domz} \eps'^4_z\shortexpect_\sigma[ \sigma_z\omega_z\indic{\sigma_z\geq 4} ] \\
      & \leq C'\sum_{z\in\domz} \eps'^2_z \shortexpect_\sigma[ \sigma_z\omega_z\indic{\sigma_z\geq 4} ] \\
	  &= C' \expect{D} \;,
  \end{align*}
  for some absolute constant $C'>0$.  
\end{proof}

Since our statistic $A$ is a linear combination of the $\Phi_{S_z}$'s and all the $S_z$'s are independent by Poissonization, we get the analogue of~\cref{prop:var-ub}:
\begin{proposition}\label{prop:var-ub:in:expectation}
Let $E\eqdef \sum_{z\in\domz} \omega_z^2 B_{T_z}\indic{\sigma_z\geq 4}$.  Then,
\begin{equation}\label{eqn:bound:variance:in:expectation}
\var[A \mid \sigma,T] \leq C\left( E + E^{1/2}\expectCond{ A }{ \sigma, T } +  \expectCond{ A }{ \sigma, T }^{3/2} \right)  \;,
\end{equation}
where $\expectCond{E}{\sigma} = O( \min(n,M) )$ and $C> 0$ is some absolute constant.
\end{proposition}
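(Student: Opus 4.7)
The plan is to exploit the Poissonization (which makes the sample multisets $S_z$ independent across bins), invoke the per-bin variance bound from~\cref{prop:variance:l2} conditioned on the flattening $T_z$, and then carefully combine the three resulting contributions across $z\in\domz$ into the three terms appearing in~\eqref{eqn:bound:variance:in:expectation}.

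First, since the $M\sim\poisson{m}$ samples are Poissonized, conditioning on $\sigma=(\sigma_z)_z$ makes the multisets $(S_z)_{z\in\domz}$ independent; additionally conditioning on $T=(T_z)_z$ (which is computed from a disjoint portion of each $S_z$) preserves independence of the $A_z$'s. Therefore
\[
\var[A\mid \sigma,T] \;=\; \sum_{z\in\domz}\var[A_z\mid \sigma_z,T_z] \;.
\]
For each bin with $\sigma_z\geq 4$, the estimator $\Phi(S_z)$ is exactly the unbiased estimator $U_{N_z} R_{T_z}$ studied in~\cref{sec:polynomial,ssec:poly:l2}. Applying the bound~\eqref{eq:variance:estimator:flattened}  (with $N=\Theta(\sigma_z)$ samples for the testing phase and bound $B_{T_z}$ on the $\lp[2]^2$-norm of the flattened product of marginals), and multiplying by $\sigma_z^2\omega_z^2$, we get
\[
\var[A_z\mid\sigma_z,T_z] \;\leq\; C\,\omega_z^2\Bigl(\sigma_z Q(\p_{T_z})\sqrt{B_{T_z}} + \sigma_z Q(\p_{T_z})^{3/2} + B_{T_z}\Bigr)\indic{\sigma_z\geq 4} \;.
\]

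Now write $a_z \eqdef \expectCond{A_z}{\sigma_z,T_z} = \sigma_z\omega_z Q(\p_{T_z})\indic{\sigma_z\geq 4}$ (from~\eqref{eq:condexpect:az:tzsz}) and $e_z\eqdef \omega_z^2 B_{T_z}\indic{\sigma_z\geq 4}$, so that $E=\sum_z e_z$ and $\expectCond{A}{\sigma,T}=\sum_z a_z$. The three summands above rewrite as $a_z\sqrt{e_z}$, $\omega_z^{1/2}\sigma_z^{-1/2}a_z^{3/2}\leq a_z^{3/2}$ (using $\omega_z\leq\sigma_z$ by definition), and $e_z$, respectively. It remains to sum these over $z$ by three elementary manipulations: $\sum_z a_z\sqrt{e_z}\leq \sqrt{E}\sum_z a_z = E^{1/2}\expectCond{A}{\sigma,T}$ since $\sqrt{e_z}\leq \sqrt{E}$; $\sum_z a_z^{3/2}\leq (\sum_z a_z)^{3/2} = \expectCond{A}{\sigma,T}^{3/2}$ since $a_z^{3/2}=a_z\sqrt{a_z}\leq a_z\sqrt{\sum_i a_i}$; and $\sum_z e_z = E$ by definition. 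Combining these yields~\eqref{eqn:bound:variance:in:expectation}.

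Finally, to bound $\expectCond{E}{\sigma}$, recall from~\eqref{eq:expected:l2norm:product:marginals:split} that $\expectCond{B_{T_z}}{\sigma_z}=\expectCond{\normtwo{\q_{T_z}}^2}{\sigma_z}\leq \frac{1}{(1+t_{1,z})(1+t_{2,z})}$, where $t_{1,z}=\min(t_z,\ell_1)$, $t_{2,z}=\min(t_z,\ell_2)$, and $t_z=\Theta(\sigma_z)$. Thus $(1+t_{1,z})(1+t_{2,z})=\Theta(\min(\sigma_z,\ell_1)\min(\sigma_z,\ell_2))=\Theta(\omega_z^2)$, so $\omega_z^2\expectCond{B_{T_z}}{\sigma_z}=O(1)$. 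Since at most $\min(n,M)$ bins satisfy $\sigma_z\geq 4$, we conclude $\expectCond{E}{\sigma}=O(\min(n,M))$. The main (minor) subtlety in this proof is handling the three different contributions from the per-bin variance bound uniformly, using the fact that $\omega_z\leq\sigma_z$ to absorb the term $\sigma_z Q(\p_{T_z})^{3/2}$ cleanly into $a_z^{3/2}$; everything else is independence, Cauchy--Schwarz-type inequalities, and the previously established $\lp[2]^2$-norm bound for the flattened distribution.
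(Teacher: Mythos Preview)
Your proof is correct and follows essentially the same approach as the paper: decompose $\var[A\mid\sigma,T]$ as a sum over bins by conditional independence, apply the per-bin variance bound~\eqref{eq:variance:estimator:flattened}, and handle the three resulting contributions separately using $\omega_z\leq\sigma_z$ and elementary inequalities. The only cosmetic difference is that for the cross term $\sum_z a_z\sqrt{e_z}$ you use the trivial bound $\sqrt{e_z}\leq\sqrt{E}$, whereas the paper goes through Cauchy--Schwarz followed by $\lp[1]$--$\lp[2]$ monotonicity; both yield the same bound $E^{1/2}\expectCond{A}{\sigma,T}$.
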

\begin{proof}
  Since $\var[ A_z \mid \sigma_z, T_z ] = \sigma_z^2\omega_z^2 \indic{\sigma_z\geq 4} \var[ \Phi(S_z) \mid \sigma_z, T_z ]$, 
  we have by~\cref{eq:variance:estimator:flattened} that, for some absolute constant $C>0$,
  \begin{align*}
       \var[ A_z \mid \sigma_z, T_z ] 
      &\leq  C \left( \sigma_z^2\omega_z^2\left( \frac{\normtwo{\p_{T_z}-\q_{T_z}}^2 \sqrt{B_{T_z}}}{ \sigma_z} + \frac{B_{T_z}}{ \sigma_z^2} + \frac{\normtwo{\p_{T_z}-\q_{T_z}}^3}{\sigma_z} \right)  \indic{\sigma_z\geq 4} \right) \;.
  \end{align*} 
  We will handle the three terms of the RHS separately.  First, by Cauchy--Schwarz and monotonicity of $\lp[p]$ norms we get that
  \begin{align}
    \sum_{z\in\domz} \sigma_z \omega_z^2 \normtwo{\p_{T_z}-\q_{T_z}}^2 \sqrt{B_{T_z}}\indic{\sigma_z\geq 4} 
    &\leq \mleft(\sum_{z\in\domz} \omega_z^2 B_{T_z}\indic{\sigma_z\geq 4} \mright)^{1/2}
          \mleft(\sum_{z\in\domz} \mleft(\sigma_z\omega_z \normtwo{\p_{T_z}-\q_{T_z}}^2\mright)^2\indic{\sigma_z\geq 4} \mright)^{1/2} \notag\\
    &\leq \mleft(\sum_{z\in\domz} \omega_z^2 B_{T_z}\indic{\sigma_z\geq 4} \mright)^{1/2} \sum_{z\in\domz} \sigma_z\omega_z \normtwo{\p_{T_z}-\q_{T_z}}^2\indic{\sigma_z\geq 4} \notag\\
    &= E^{1/2}\expectCond{ A }{ \sigma, T }  \label{prop:var-ub:in:expectation:term:1} \;,
  \end{align}
  the last equality from~\cref{eq:condexpect:az:tzsz}.  
  
  Moreover, for the second term $\sigma_z^2 \omega_z^2 \frac{B_{T_z}}{ \sigma_z^2 }\indic{\sigma_z\geq 4}= \omega_z^2 B_{T_z}\indic{\sigma_z\geq 4}$, it is immediate that summing over all bins we get $\sum_{z\in\domz }\omega_z^2 B_{T_z}\indic{\sigma_z\geq 4} = E$. 

Let us now turn to the last term of our upper bound on the variance. We can write
\begin{align*}
\sigma_z^2\omega_z^2\frac{\normtwo{\p_{T_z}-\q_{T_z}}^3}{\sigma_z}\indic{\sigma_z\geq 4} &= \sigma_z\omega_z^2 \normtwo{\p_{T_z}-\q_{T_z}}^3 \indic{\sigma_z\geq 4}
= \sqrt{\frac{\omega_z}{\sigma_z}} \sigma_z^{3/2}\omega_z^{3/2} \normtwo{\p_{T_z}-\q_{T_z}}^3 \indic{\sigma_z\geq 4} \\
&\leq \left( \sigma_z\omega_z \normtwo{\p_{T_z}-\q_{T_z}}^2 \indic{\sigma_z\geq 4} \right)^{3/2}
= \expectCond{ A_z }{ \sigma, T }^{3/2}
\end{align*}
recalling that $\omega_z\leq \sigma_z$ by definition. We may use the inequality between the $\lp[1]$ and $\lp[3/2]$ norms to conclude that $\sum_{z \in \domz} \expectCond{ A_z }{ \sigma, T }^{3/2} \leq \expectCond{ A }{ \sigma, T }^{3/2}$, which leads by the above to
\begin{equation}\label{prop:var-ub:in:expectation:term:3}
  \expectCond{ \sum_{z\in\domz}\sigma_z\omega_z^2 \normtwo{\p_{T_z}-\q_{T_z}}^3 \indic{\sigma_z\geq 4} }{ \sigma, T }  \;.
  \leq \expectCond{ A }{ \sigma, T }^{3/2}
\end{equation}
Since the $A_z$'s are independent conditioned on $\sigma_z$ and $T_z$, we have
\[
    \var[A \mid \sigma, T ] = \sum_{z\in\domz} \var[A \mid \sigma_z, T_z ] \;,
\]
and therefore by~\cref{prop:var-ub:in:expectation:term:1,prop:var-ub:in:expectation:term:3} and the definition of $E$ we obtain
\begin{equation}
    \var[A \mid \sigma, T ]  \leq O\mleft( E^{1/2}\expectCond{ A }{ \sigma, T } + E + \expectCond{ A }{ \sigma, T }^{3/2} \mright)\,.
\end{equation}
It remains to establish the further guarantee that
$\expectCond{ E }{\sigma} = O(\min(n,M))$. To do so, observe that we can write, as $\omega_z$ only depends on the randomness $\sigma$,
  \begin{align*}
    \expectCond{ \omega_z^2 B_{T_z} \indic{\sigma_z\geq 4}}{\sigma}
    = \omega_z^2  \expectCond{ B_{T_z}}{\sigma} \indic{\sigma_z\geq 4}
    \leq \frac{\omega_z^2}{(1+t_{1,z})(1+t_{2,z})} \indic{\sigma_z\geq 4}
  \end{align*}
  by~\cref{eq:expected:l2norm:product:marginals:split}. Recalling that $t_{i,z} = \min\mleft( ({\sigma_z-4})/{4}, \ell_i \mright)$ by the definition of the flattening (\cref{sec:alg:flattened:flattening}) and that $\omega_z^2 = \min\mleft(\sigma_z, \ell_1 \mright)\min\mleft(\sigma_z, \ell_2 \mright)$, this leads to
  \begin{align*}
    \expectCond{ \omega_z^2 B_{T_z} \indic{\sigma_z\geq 4}}{\sigma}
    \leq O(1)\cdot\indic{\sigma_z\geq 4}\,.
  \end{align*}
  In particular, by summing over all bins $z$ this implies that
  \begin{equation}\label{prop:var-ub:in:expectation:term:2}
      \expectCond{ E }{\sigma} \leq O(1)\sum_{z\in\domz}\indic{\sigma_z\geq 4} \leq O(1)\sum_{z\in\domz}\indic{\sigma_z\geq 1} = O(\min(n,M))\;,
  \end{equation}
as claimed.

\begin{lemma}[Soundness] \label{lemma:AsigmaT:variance:expectation:soundness:flattened}
If $\totalvardist{\p}{ \condindprop{\domx}{\domy}{\domz}}> \eps$, then with probability at least $99/100$ over $\sigma,T$ we have simultaneously 
  $\expectCond{ A }{\sigma, T} = \bigOmega{\sqrt{\newerest{\zeta} \min(n,m)}}$ 
and 
  \[
    \var[A \mid \sigma, T] \leq \bigO{ \min(n,m) + \sqrt{\min(n,m)}\expectCond{ A }{ \sigma, T } +  \expectCond{ A }{ \sigma, T }^{3/2} }.
  \]
\end{lemma}
\begin{proof}
  By~\cref{lemma:D:lowerbound:AsigmaT:flattened}, we have that $D$ is a lower bound on $\expectCond{ A }{\sigma, T}$ for all $\sigma,T$. Since~\cref{prop:exp-gap:flattened,lemma:D:variance:expectation:flattened} further ensures that $\expect{D} \geq \Omega(\sqrt{\newerest{\zeta}\min(n,m)})$ and $\var[D] \leq O(\expect{D})$, applying Chebyshev's inequality on $D$ results in
  \begin{align*}
      \probaDistrOf{ \sigma,T }{ \expectCond{ A }{\sigma, T} < \kappa\sqrt{\newerest{\zeta} \min(n,m)} }
      &\leq \probaDistrOf{ \sigma,T }{ D < O(\expect{D}) }
      = \bigO{\frac{\var[D]}{\expect{D}^2}}\\
      &= \bigO{\frac{1}{\expect{D}}}
      = \bigO{\frac{1}{\sqrt{\newerest{\zeta} \min(n,m)}}}
      \leq \frac{1}{200}
  \end{align*}
  for some absolute constant $\kappa>0$. 
  This gives the first statement. For the second, we start from~\cref{eqn:bound:variance:in:expectation} 
  and we apply Markov's inequality to $E$: as $\expectCond{E}{\sigma} = O( \min(n,M) )$, 
  with probability at least $399/400$ we have $E \leq  400\expectCond{E}{\sigma} = O( \min(n,M) )$. 
  Moreover, recalling that $M=\sum_{z\in\domz}\sigma_z$ is a Poisson random variable 
  with parameter $m$, we have $\probaOf{M > 2m} \leq  399/400$. Therefore, by a union bound 
  \[
      \probaDistrOf{ \sigma,T }{ \var[A \mid \sigma,T] \geq  \kappa'\left( \min(n,m) + \sqrt{\min(n,m)}\expectCond{ A }{ \sigma, T } +  \expectCond{ A }{ \sigma, T }^{3/2} \right)  }
      \leq \frac{1}{400}+\frac{1}{400}=\frac{1}{200}
  \]
  again for some absolute constant $\kappa'>0$. 
  This gives the second statement. 
  A union bound over both events concludes the proof.
\end{proof}

\begin{lemma}[Completeness] \label{lemma:AsigmaT:variance:expectation:completeness:flattened}
If $\p \in \condindprop{\domx}{\domy}{\domz}$, then with probability at least $99/100$ over $\sigma,T$ we have simultaneously 
  $\expectCond{ A }{\sigma, T} = 0$ 
and 
  $\var[A \mid \sigma, T] \leq \bigO{ \min(n,m) }$.
\end{lemma}
\begin{proof}
  The first statement is obvious by the definition of $A$ as sum of the $A_z$'s, since $\eps_z=0$ for all $z\in\domz$. For the second, the proof is identical as that of~\cref{lemma:AsigmaT:variance:expectation:soundness:flattened}, but having only to deal with the term $E$ in the bound on the variance (as the others are zero).
\end{proof}
\end{proof}

\subsection{Completing the Proof}\label{sec:alg:flattened:conclusion}

Let our threshold $\tau$ be set to \new{$\zeta^{1/4} \sqrt{\min(n,m)}$}. Gathering the above pieces we obtain the following:
\begin{lemma}[Soundness]\label{lemma:AsigmaT:soundness:flattened}
  If $\p$ is $\eps$-far from conditionally independent, then $\probaOf{A < \tau } \leq \frac{1}{50}$.
\end{lemma}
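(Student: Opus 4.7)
The plan is to apply Chebyshev's inequality to $A$, conditioned on the ``good event'' for the randomness $(\sigma,T)$ provided by~\cref{lemma:AsigmaT:variance:expectation:soundness:flattened}. Let $\mathcal{E}$ denote the event that both
\[
\expectCond{A}{\sigma,T} \geq \kappa\sqrt{\zeta\min(n,m)} \quad \text{and} \quad \var[A\mid\sigma,T] \leq \kappa'\Bigl(\min(n,m) + \sqrt{\min(n,m)}\,\expectCond{A}{\sigma,T} + \expectCond{A}{\sigma,T}^{3/2}\Bigr)
\]
hold, where $\kappa,\kappa'>0$ are the absolute constants furnished by~\cref{lemma:AsigmaT:variance:expectation:soundness:flattened}. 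That lemma guarantees $\probaOf{\mathcal{E}} \geq 99/100$.

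Condition on any realization of $(\sigma,T)$ in $\mathcal{E}$. Since $\zeta$ is chosen sufficiently large, we may assume $\kappa\sqrt{\zeta\min(n,m)} \geq 2\zeta^{1/4}\sqrt{\min(n,m)} = 2\tau$, so that $\expectCond{A}{\sigma,T} \geq 2\tau$. Consequently, $A < \tau$ implies $|A - \expectCond{A}{\sigma,T}| \geq \expectCond{A}{\sigma,T}/2$. By Chebyshev's inequality and the variance bound on $\mathcal{E}$, writing $\mu \eqdef \expectCond{A}{\sigma,T}$ for brevity,
\[
\probaCond{A < \tau}{\sigma,T} \leq \frac{4\var[A\mid\sigma,T]}{\mu^2} \leq 4\kappa'\left(\frac{\min(n,m)}{\mu^2} + \frac{\sqrt{\min(n,m)}}{\mu} + \frac{1}{\mu^{1/2}}\right).
\]
Using $\mu \geq \kappa\sqrt{\zeta\min(n,m)}$, the first two terms are each $O(1/\zeta)$ and $O(1/\sqrt{\zeta})$, while the third is $O(\zeta^{-1/4}\min(n,m)^{-1/4}) = O(\zeta^{-1/4})$. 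Choosing $\zeta$ large enough, each term is at most $1/300$, so $\probaCond{A < \tau}{\sigma,T} \leq 1/100$ on $\mathcal{E}$.

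Putting the two parts together via a union bound,
\[
\probaOf{A < \tau} \leq \probaOf{\mathcal{E}^c} + \expect{\probaCond{A < \tau}{\sigma,T}\indic{\mathcal{E}}} \leq \frac{1}{100} + \frac{1}{100} = \frac{1}{50},
\]
as required. The main (but already resolved) obstacle was the intricate dependence of the variance bound on the realized flattenings $T$; this is precisely what~\cref{lemma:AsigmaT:variance:expectation:soundness:flattened} handles by first analyzing the auxiliary $T$-independent statistic $D$, allowing us here to proceed by a clean two-step conditioning argument.
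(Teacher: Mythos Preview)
Your proof is correct and follows essentially the same approach as the paper: define the good event $\mathcal{E}$ from \cref{lemma:AsigmaT:variance:expectation:soundness:flattened}, apply Chebyshev's inequality to $A$ conditioned on $(\sigma,T)\in\mathcal{E}$, bound each of the three resulting terms using $\mu\geq\kappa\sqrt{\zeta\min(n,m)}$, and conclude by a union bound with $\Pr[\mathcal{E}^c]\leq 1/100$. Your write-up is in fact slightly cleaner in making the final union bound explicit via $\Pr[A<\tau]\leq\Pr[\mathcal{E}^c]+\expect{\probaCond{A<\tau}{\sigma,T}\indic{\mathcal{E}}}$.
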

\begin{proof}
We apply Chebyshev's inequality once more, this time to $A'\eqdef(A \mid \sigma, T)$ and relying on the bounds on its expectation and variance established in~\cref{lemma:AsigmaT:variance:expectation:soundness:flattened}. Specifically, let $\mathcal{E}$ denote the event that both bounds of~\cref{lemma:AsigmaT:variance:expectation:soundness:flattened} hold simultaneously; then
  \begin{align*}
      \probaOf{A \leq \tau } = \probaOf{ A' \leq \tau  } &\leq \probaCond{ A' \leq \tau }{ \mathcal{E} } + \proba[ \overline{\mathcal{E}} ] \\
      &\leq \probaCond{ \abs{ A' - \expectCond{ A }{\sigma, T} } \geq \frac{1}{2}\expectCond{ A }{\sigma, T} }{ \mathcal{E} } + \frac{1}{100}
  \end{align*}
  where the second line is because, conditioned on $\mathcal{E}$, $\expectCond{ A }{\sigma, T} \geq \newerest{\bigOmega{\sqrt{\zeta \min(n,m)}} \geq \zeta^{1/4} \sqrt{\min(n,m)} =}  2\tau$. It only remains to bound the first term:
  \begin{align*}
    \probaCond{ \abs{ A' - \expectCond{ A }{\sigma, T} } \geq \expectCond{ A }{\sigma, T} }{ \mathcal{E} }
    &\leq \bigO{\frac{ \min(n,m) + \sqrt{\min(n,m)}\expectCond{ A }{ \sigma, T } +  \expectCond{ A }{ \sigma, T }^{3/2} }{\expectCond{ A }{\sigma, T}^2}} \\
    &= \bigO{ \frac{\min(n,m)}{\expectCond{ A }{\sigma, T}^2} + \frac{\sqrt{\min(n,m)}}{\expectCond{ A }{\sigma, T}}+ \frac{1}{\expectCond{ A }{\sigma, T}^{1/2}}  } \\
    &\newerest{\leq \bigO{\frac{1}{\zeta^{1/4}}}} \leq  \frac{1}{100}
  \end{align*}
  for the choice of a sufficiently large constant $\zeta$ in the definition of $m$.
\end{proof}

\begin{lemma}[Completeness]\label{lemma:AsigmaT:completeness:flattened}
  If $\p$ is conditionally independent, then $\probaOf{A \geq \tau } \leq \frac{1}{50}$.
\end{lemma}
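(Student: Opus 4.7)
The proof will follow exactly the same pattern as Lemma~\ref{lemma:AsigmaT:soundness:flattened}, but in this case the conditional expectation bound is trivially zero so the argument is substantially simpler. The plan is to apply Chebyshev's inequality conditionally on the ``good'' event for $\sigma, T$, handled by the completeness statement of the two-step analysis already performed, and then take a union bound with the event's failure probability.

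First I would invoke Lemma~\ref{lemma:AsigmaT:variance:expectation:completeness:flattened}: since $\p\in\condindprop{\domx}{\domy}{\domz}$, there is an event $\mathcal{E}$ with $\proba[\mathcal{E}]\geq 99/100$, measurable with respect to $(\sigma,T)$, on which $\expectCond{A}{\sigma, T}=0$ and $\var[A\mid \sigma, T] = \bigO{\min(n,m)}$ hold simultaneously. On this event, $A'\eqdef (A\mid\sigma, T)$ is a centered random variable with controlled variance.

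Next, applying Chebyshev's inequality to $A'$ conditionally on $\mathcal{E}$, and recalling that $\tau = \zeta^{1/4}\sqrt{\min(n,m)}$ so that $\tau^2 = \zeta^{1/2}\min(n,m)$, yields
\begin{align*}
  \probaCond{ A' \geq \tau }{\mathcal{E}}
  &= \probaCond{ A' - \expectCond{A}{\sigma,T} \geq \tau }{\mathcal{E}} \\
  &\leq \frac{\var[A\mid\sigma, T]}{\tau^2}
  \leq \bigO{\frac{\min(n,m)}{\zeta^{1/2}\min(n,m)}}
  = \bigO{\frac{1}{\zeta^{1/2}}} \leq \frac{1}{100},
\end{align*}
for $\zeta$ chosen sufficiently large (which we are free to do in the definition of $m$).

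Finally, taking a union bound over the event $\mathcal{E}$ and the deviation event above,
\[
  \probaOf{A \geq \tau} \leq \probaCond{A' \geq \tau}{\mathcal{E}} + \proba[\overline{\mathcal{E}}] \leq \frac{1}{100}+\frac{1}{100} = \frac{1}{50},
\]
as required. I do not expect any genuine obstacle here; the only subtlety is making sure the constant $\zeta$ is absorbed consistently with its use in the soundness lemma, but since both lemmas only require $\zeta$ to exceed a common absolute constant, there is no conflict.
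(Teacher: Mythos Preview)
Your proposal is correct and follows essentially the same approach as the paper's own proof: invoke the completeness variance lemma to get a high-probability event $\mathcal{E}$ on which $\expectCond{A}{\sigma,T}=0$ and $\var[A\mid\sigma,T]=\bigO{\min(n,m)}$, apply Chebyshev conditionally on $\mathcal{E}$, and finish with a union bound. Your computation of $\tau^2 = \zeta^{1/2}\min(n,m)$ (yielding the bound $\bigO{1/\zeta^{1/2}}$) is in fact slightly more careful than the paper's, which writes $\bigO{1/\zeta}$ at that step.
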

\begin{proof}
Analogously to the proof in the soundness case, we apply Chebyshev's inequality to $A'\eqdef(A \mid \sigma, T)$ and relying on~\cref{lemma:AsigmaT:variance:expectation:completeness:flattened}. Specifically, let $\mathcal{E}]$ denote the event that the bound of~\cref{lemma:AsigmaT:variance:expectation:completeness:flattened} holds; then
  \[
      \probaOf{A \geq \tau } = \probaOf{ A' \geq \tau  } \leq \probaCond{ A' \geq \tau }{ \mathcal{E}] } + \proba[ \overline{\mathcal{E}'} ]\,.
  \]
  To conclude, we bound the first term:
  \begin{align*}
    \probaCond{ A' \geq \tau }{ \mathcal{E}'] }
    \leq \frac{ \var[A'\mid \mathcal{E}'] }{\tau^2}
    &\leq \bigO{ \frac{\min(n,m)}{\tau^2} } \\
    &\newerest{\leq \bigO{\frac{1}{\zeta}} }\leq \frac{1}{100}
  \end{align*}
  again for the choice of a sufficiently large constant $\zeta$ in the definition of $m$.
\end{proof}
 
\section{Sample Complexity Lower Bounds: The Case of Constant \texorpdfstring{$|\domx|, |\domy|$}{|X|,|Y|}} \label{sec:lb:appendix}
   
In this section, we prove our tight sample complexity lower bound of
\[
  \bigOmega{\max(\min(n^{6/7}/\eps^{8/7},n^{7/8}/\eps),\sqrt{n}/\eps^2)}
\]
for testing conditional independence in the regime that 
$\domx = \domy=\{0,1\}$ and $\domz=[n]$. This matches the sample complexity of our algorithm in~\cref{sec:alg-basic},
up to constant factors. In the main body of this section, we prove each lower bound separately.

The following expression for the total variation distance will be useful in the analysis of the lower bound constructions:

\begin{fact}
For any $\p\in\distribs{\domx\times\domy\times\domz}$ 
for $\domx = \domy=\{0,1\}$ and $\domz=[n]$. 
we have that:
\begin{equation}\label{eq:relation:to:covariance}
\totalvardist{\p_z}{\q_z} =  2 \abs{\cov[\conddistr{X}{Z=z}, \conddistr{Y}{Z=z}]} = \normtwo{\p_z-\q_z} \;.
\end{equation}
\end{fact}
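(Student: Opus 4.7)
The plan is to exploit the very rigid structure of $2\times 2$ joint distributions: for binary $X,Y$ conditioned on $Z=z$, every entry of the matrix $\p_z - \q_z$ has the \emph{same} absolute value, and this common value is precisely $|\cov[X\mid Z=z,\,Y\mid Z=z]|$. Once this is shown, all three quantities in the claimed identity collapse to the same number up to the stated factors.

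Concretely, write the four conditional probabilities as $p_{ij}\eqdef \probaCond{X=i,Y=j}{Z=z}$ for $i,j\in\{0,1\}$, and let $q_{ij}$ be the corresponding entries of $\q_z$, i.e.\ $q_{ij} = (p_{i0}+p_{i1})(p_{0j}+p_{1j})$. Set
\[
  \delta \eqdef p_{00}p_{11} - p_{01}p_{10}.
\]
First I would compute $p_{00}-q_{00}$ directly: using $p_{00}+p_{01}+p_{10}+p_{11}=1$,
\[
  p_{00}-q_{00} = p_{00}\bigl(1-p_{00}-p_{01}-p_{10}\bigr) - p_{01}p_{10} = p_{00}p_{11}-p_{01}p_{10} = \delta,
\]
and the analogous computation for the other three entries shows $p_{11}-q_{11}=\delta$ while $p_{01}-q_{01}=p_{10}-q_{10}=-\delta$. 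So $\p_z-\q_z$ is the $2\times 2$ matrix with entries $\pm\delta$.

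From here the three equalities are immediate. For total variation,
\[
  \totalvardist{\p_z}{\q_z}=\tfrac12 \sum_{i,j}|p_{ij}-q_{ij}| = \tfrac12\cdot 4|\delta| = 2|\delta|,
\]
and for the $\lp[2]$ distance,
\[
  \normtwo{\p_z-\q_z} = \sqrt{\sum_{i,j}(p_{ij}-q_{ij})^2} = \sqrt{4\delta^2} = 2|\delta|.
\]
Finally, since $X,Y$ are $\{0,1\}$-valued, $\shortexpect[XY\mid Z=z]=p_{11}$, $\shortexpect[X\mid Z=z]=p_{10}+p_{11}$, and $\shortexpect[Y\mid Z=z]=p_{01}+p_{11}$, so
\[
  \cov[X\mid Z=z,\,Y\mid Z=z] = p_{11} - (p_{10}+p_{11})(p_{01}+p_{11}) = p_{00}p_{11} - p_{01}p_{10} = \delta,
\]
where the second equality again uses $\sum_{ij}p_{ij}=1$. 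Therefore $2|\cov[\cdot,\cdot]| = 2|\delta|$ as well, proving the fact.

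There is no real obstacle here; the only thing to be careful about is using the normalization $\sum p_{ij}=1$ in both the entry-wise calculation and the covariance calculation, which is precisely what forces the ``all four differences have the same magnitude'' phenomenon that is special to the $2\times 2$ case. (Indeed, this identity fails already for $3\times 2$ tables, which is why the general-$\ell_1,\ell_2$ analysis in later sections must use the polynomial $Q$ of~\cref{eq:polynomial:l2:product} rather than a single determinant.)
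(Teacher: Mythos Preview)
Your proof is correct and follows essentially the same approach as the paper: expand each entry $p_{ij}-q_{ij}$ using the normalization $\sum_{ij}p_{ij}=1$ to show it equals $\pm(p_{00}p_{11}-p_{01}p_{10})$, then read off all three quantities. If anything, your version is slightly more complete, since the paper's proof only writes out the $\dtv = 2|\cov|$ equality explicitly and leaves the $\lp[2]$ identity implicit.
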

\begin{proof}
We have the following:
\begin{align*}
2\totalvardist{\p_z}{\q_z} 
&= \abs{\p_z(1,1)-(\p_z(1,0)+\p_z(1,1))\cdot (\p_z(0,1)+\p_z(1,1))}\\
&\quad + \abs{\p_z(1,0)-(\p_z(1,0)+\p_z(1,1))\cdot(\p_z(1,0)+\p_z(0,0))} \\
&\quad + \abs{\p_z(0,1)-(\p_z(0,1)+\p_z(0,0))\cdot (\p_z(0,1)+\p_z(1,1))} \\
&\quad + \abs{\p_z(0,0)-(\p_z(0,1)+\p_z(0,0))\cdot (\p_z(1,0)+\p_z(0,0))}\\
&= 4 \abs{\p_z(0,0) \cdot \p_z(1,1)-\p_z(0,1) \cdot \p_z(1,0)} \\
&= 4\abs{\cov[\conddistr{X}{Z=z}, \conddistr{Y}{Z=z}]} \;.
\end{align*}
\end{proof}

\subsection{First Lower Bound Regime\texorpdfstring{: $\bigOmega{n^{6/7}/\eps^{8/7}}$ for $\eps > \frac{1}{n^{1/8}}$}{}} \label{ssec:lb-large-eps:appendix}

Assume that we are in the regime where
\[
  \max(\min(n^{6/7}/\eps^{8/7},n^{7/8}/\eps),\sqrt{n}/\eps^2) = n^{6/7}/\eps^{8/7} \;,
\]
i.e., $\eps > {1}/{n^{1/8}}$.
Suppose there is an algorithm for $\eps$-testing conditional independence 
drawing $m \leq c n^{6/7}/\eps^{8/7}$ samples from $\p$, 
for some sufficiently small universal constant $c>0$. 
Note that in this regime, we have $m \ll n$, 
i.e., we can assume that $m< c'n$ for some small constant $c'>0$.
  
\paragraph{The $\yes$-instance}
A pseudo-distribution $\p$ is drawn from the \yes-instances as follows: 
Independently for each value $z\in[n]$, we set:
\begin{itemize}
\item With probability $\frac{m}{n}$, we select
$\p_Z(z) = \frac{1}{m}$ and $\p_z(i, j) = \frac{1}{4}$ for all $i, j \in\{0,1\}$. In other words, we select 
uniform marginals for the conditional distributions $\p_z(i, j)$.          
\item With probability $1-\frac{m}{n}$, we select $\p_Z(z) = \frac{\eps}{n}$ and 
$\p_z(i, j)$ us defined by the $2\times 2$ matrix:
\begin{itemize}
\item With probability $1/2$,
$
\frac{1}{100}\begin{pmatrix}
16 & 24\\
24 & 36 
\end{pmatrix}  \eqdef Y_1 \;,
$
\item With probability $1/2$,
$
\frac{1}{100}\begin{pmatrix}
36 & 24\\
24 & 16 
\end{pmatrix}  \eqdef Y_2 \;$
\end{itemize}
\end{itemize}
It is easy to see that the resulting distribution $\p$ satisfies
\[
\expect{\sum_{z=1}^n \p_Z(z)} = n \left( \frac{m}{n}\cdot \frac{1}{m} + \left(1-\frac{m}{n}\right) \cdot \frac{\eps}{n} \right) \in [1,1+\eps] \;,
\]
i.e., the marginal for $Z$ has mass roughly $1$ in expectation, 
and that $\p\in\condindprop{\{0,1\}}{\{0,1\}}{[n]}$.
      
\paragraph{The $\no$-instance}
A pseudo-distribution $\p$ is drawn from the \no-instances as follows: 
Independently for each value $z\in[n]$, we set
\begin{itemize}
\item With probability $\frac{m}{n}$, we set 
$\p_Z(z) = \frac{1}{m}$ and $\p_z(i, j) = \frac{1}{4}$ for all $i,j \in\{0,1\}$.
\item With probability $1-\frac{m}{n}$, we set $\p_Z(z) = \frac{\eps}{n}$ and 
$\p_z(i, j)$ be defined by the $2\times 2$ matrix:
\begin{itemize}
\item With probability $1/8$,
$
\frac{1}{100}\begin{pmatrix}
6 & 24\\
24 & 46 
\end{pmatrix}  \eqdef N_1 \;,
$
\item With probability $1/8$,
$
\frac{1}{100}\begin{pmatrix}
46 & 24\\
24 & 6 
\end{pmatrix}  \eqdef N_2 \;,
$
\item With probability $3/4$,
$
\frac{1}{100}\begin{pmatrix}
26 & 24\\
24 & 26 
\end{pmatrix}  \eqdef N_3 \;.
$
\end{itemize}
\end{itemize}
Similarly, we have that 
$\expect{ \sum_{z=1}^n \p_Z(z)} \in[1,1+\eps]$. Furthermore, the expected total variation distance 
between such a $\p$ and the corresponding $\q \eqdef \sum_{z=1}^n \p_Z(z) \q_z \in \condindprop{\{0,1\}}{\{0,1\}}{[n]}$ is
\[
\expect{\totalvardist{\p}{\q}}
= \frac{1}{2}\left( n\left( \frac{m}{n}\cdot\frac{1}{m}\cdot 0+ \left(1-\frac{m}{n}\right) \frac{\eps}{n} \left(  \frac{1}{8}\cdot\frac{12}{100} + \frac{1}{8}\cdot\frac{12}{100} + \frac{3}{4}\cdot \frac{4}{100}  \right) \right)\right) = \frac{3}{100}\left(1-\frac{m}{n}\right)\eps > \frac{\eps}{100} \;,
\]
      where
      \[
          \frac{12}{100} = \abs{ \frac{1}{100}\begin{pmatrix}
                       46 & 24\\
                       24 & 6 
                    \end{pmatrix} - \frac{1}{100}\begin{pmatrix}
                       7\\
                       3 
                    \end{pmatrix}\begin{pmatrix}
                       7&
                       3 
                    \end{pmatrix} }, \qquad 
                    \frac{4}{100} = \abs{ \frac{1}{100}\begin{pmatrix}
                       26 & 24\\
                       24 & 26 
                    \end{pmatrix} - \frac{1}{100}\begin{pmatrix}
                       5\\
                       5 
                    \end{pmatrix}\begin{pmatrix}
                       5&
                       5 
                    \end{pmatrix} }.
      \]
Thus, $\expect{\totalvardist{\p}{\q}} = \bigOmega{\eps}$, 
which by~\cref{lemma:distance:product} implies that 
$\expect{\totalvardist{\p}{\condindprop{\{0,1\}}{\{0,1\}}{[n]}}} = \bigOmega{\eps}$.

\medskip

The next claim shows that, for each $z \in [n]$, 
the first three norms of the conditional distribution $\p_z(i, j)$ match, hence
do not provide any information towards distinguishing 
between the \yes- and \no-cases. Therefore, we need to get at least $4$ samples $(X,Y,Z)$ 
with the same value of $Z$ --- that is a $4$-collision with regard to $Z$ ---  
in order to have useful information. 

\medskip

\noindent {\bf Notation.} 
Given a $4$-variable function $R = R[X_1,X_2,X_3,X_4]$ 
and a real $2 \times 2$ matrix $M\in\mathcal{M}_2(\R)$, 
we will denote $R(M) \eqdef R(M_{1,1},M_{1,2},M_{2,1},M_{2,2})$. 

\medskip

We have the following:

\begin{claim}\label{claim:lb:lowmoments:appendix}
Let $Y_1, Y_2, N_1, N_2, N_3$ the probability matrices in the definition of the $\yes$ and $\no$-instances.
For every $4$-variable polynomial $R\in\R[X_1,X_2,X_3,X_4]$ of degree at most $3$, 
the following holds:
\[
\frac{1}{8}R(N_1)+\frac{1}{8}R(N_2)+\frac{3}{4}R(N_3) = \frac{1}{2}R(Y_1)+\frac{1}{2}R(Y_2) \;.
\]
\end{claim}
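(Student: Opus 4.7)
The plan is to reduce the claimed identity to an elementary one-variable finite-difference identity. The key observation is that all five matrices $Y_1, Y_2, N_1, N_2, N_3$ share the same off-diagonal entries $24/100$, and differ only along the diagonal. Moreover, for each of them the two diagonal entries sum to $52/100$, so the entire matrix is determined by a single scalar parameter $a\in\R$ via
\[
    M(a) \eqdef \frac{1}{100}\begin{pmatrix} a & 24 \\ 24 & 52-a \end{pmatrix}\,,
\]
with $Y_1 = M(16)$, $Y_2 = M(36)$, $N_1 = M(6)$, $N_2 = M(46)$, $N_3 = M(26)$.

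Given any $R\in\R[X_1,X_2,X_3,X_4]$ of degree at most $3$, define $g(a)\eqdef R(M(a))$. Since $M(a)$ depends affinely on $a$, the composition $g$ is a polynomial of degree at most $3$ in the single variable $a$. The claim then reduces to showing that
\[
    \frac{1}{8}g(6) + \frac{1}{8}g(46) + \frac{3}{4}g(26) \;=\; \frac{1}{2}g(16) + \frac{1}{2}g(36)
\]
for every $g\in\R_{\leq 3}[a]$. Equivalently, after the substitution $b=a-26$ (which centers the five points symmetrically at $\pm 20$, $0$, $\pm 10$), and writing $h(b)\eqdef g(b+26)$, the claim becomes
\[
    \frac{1}{8}h(-20) + \frac{1}{8}h(20) + \frac{3}{4}h(0) \;=\; \frac{1}{2}h(-10) + \frac{1}{2}h(10), \qquad h\in\R_{\leq 3}[b].
\]

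By linearity of both sides in $h$, it suffices to verify the identity on the basis $\{1,b,b^2,b^3\}$ of $\R_{\leq 3}[b]$. The constant and linear cases give $1=1$ and $0=0$ by a direct check, and by the odd symmetry of the nodes the odd-degree case $h(b)=b^3$ also yields $0=0$. The only substantive verification is $h(b)=b^2$, where the left-hand side equals $\tfrac{1}{8}\cdot 400 + \tfrac{1}{8}\cdot 400 + 0 = 100$ and the right-hand side equals $\tfrac{1}{2}\cdot 100 + \tfrac{1}{2}\cdot 100 = 100$, as required. There is no real technical obstacle here: the construction of $N_1,N_2,N_3,Y_1,Y_2$ was precisely engineered so that their placements $-20,0,20$ and $-10,10$ (weighted by $\tfrac{1}{8},\tfrac{3}{4},\tfrac{1}{8}$ and $\tfrac{1}{2},\tfrac{1}{2}$) give the same first four moments, and the proof amounts to unwinding this design.
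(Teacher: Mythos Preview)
Your proof is correct and follows essentially the same approach as the paper: both observe that the five matrices lie on a single affine line (the paper writes them as $A+kB$ for $k=0,\dots,4$; you write them as $M(a)$ for $a\in\{6,16,26,36,46\}$) and then reduce to a one-variable identity for polynomials of degree at most~$3$. The only cosmetic difference is that the paper recognizes the resulting combination as the fourth-order forward finite difference and invokes its vanishing on degree-$3$ polynomials, whereas you center the nodes and verify directly on the monomial basis.
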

\begin{proof}
The first crucial observation is that the associated matrices can be expressed in the form 
\[
  (N_1,Y_1,N_3,Y_2,N_2) = (A+kB)_{0\leq k \leq 4} \;,
\]
where $A, B$ are the following matrices:
\[
A = \frac{1}{100}
\begin{pmatrix}
6 & 24\\
24 & 46 
\end{pmatrix},
\qquad B = \frac{1}{100}
\begin{pmatrix}
10 & 0\\
0 & -10 
\end{pmatrix} \;.
\]
Therefore, for any function $4$-variable function $R$ (not necessarily a polynomial), we have
\begin{align*}
\left( \frac{1}{8}R(N_1)+\frac{1}{8}R(N_2)+\frac{3}{4}R(N_3) \right) &- \left( \frac{1}{2}R(Y_1)+\frac{1}{2}R(Y_2) \right)\\
&= \frac{1}{8}R(N_1)-\frac{1}{2}R(Y_1)+\frac{3}{4}R(N_3)-\frac{1}{2}R(Y_2)+\frac{1}{8}R(N_2) \\
&= \frac{1}{8}\sum_{k=0}^4 (-1)^k \binom{4}{k}R(A+kB)= \frac{1}{8}\sum_{k=0}^4 (-1)^{4-k} \binom{4}{k}R(A+kB) \;,
\end{align*}
which is the $4$\textsuperscript{th}-order forward difference of $R$ at $A$ (more precisely, the fourth finite
difference of $f(k) = R(A+kB)$). 
Using the fact that the $(d+1)$\textsuperscript{th}-order forward difference of a degree-$d$ polynomial is zero, 
we get that the above RHS is zero for every degree-$3$ polynomial $R$.
\end{proof}
      
For the sake of simplicity and without loss of generality, 
we can use the Poissonization trick for the analysis of our lower bound construction 
(cf.~\cite{DK:16,CDKS17}). Specifically, instead of drawing $m$ independent samples from $\p$, 
we assume that our algorithm is provided with $m_z$ samples from the conditional distribution $\p_z$ 
(i.e., conditioned on $Z=z$), where the $(m_z)$'s are independent Poisson random variables 
with $m_z \sim \poisson{ m \p_Z(z) }$.
      
Consider the following process: we let $U\sim\bernoulli{\frac{1}{2}}$ be a uniformly random bit, 
and choose $\p$ to be selected as follows: 
(i) If $U=0$, then $p$ is drawn from the $\yes$-instances, 
(ii) If $U=1$, then $p$ is drawn from the $\no$-instances.
For every $z \in [n]$, let $a_z=(a^{00}_z,a^{01}_z,a^{10}_z,a^{11}_z)$ 
be the $4$-tuple of counts of $(i,j)_{i,j\in\{0,1\}}$ 
among the $m_z$ samples $(X,Y)\sim \p_z$. 
Accordingly, we will denote $A=(a_z)_{z\in[n]}$.
      
Following the mutual information method used in~\cite{DK:16}, 
to show the desired sample complexity lower bound of $\bigOmega{n^{6/7}/\eps^{8/7}}$, 
it suffices to show that
$\mutualinfo{U}{A} = o(1)$, 
unless $m=\bigOmega{n^{6/7}/\eps^{8/7}}$. 
Since the $(a_z)_{z\in[n]}$'s are independent conditioned on $U$, 
we have that $\mutualinfo{U}{A} \leq \sum_{z=1}^n \mutualinfo{U}{a_z}$, 
and therefore it suffices to bound from above separately
$\mutualinfo{U}{a_z}$ for every $z$. We proceed to establish such a bound
in the following lemma:
\begin{lemma}\label{lemma:lb:case3:main:appendix}
For any $z\in[n]$, we have $\mutualinfo{U}{a_z} = \bigO{ \frac{\eps^8 m^7}{n^7} }$.
\end{lemma}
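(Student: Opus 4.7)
The plan is to bound $\mutualinfo{U}{a_z}$ via the standard $\chi^2$-type inequality for mutual information with a uniform binary indicator,
\[
\mutualinfo{U}{a_z} \;=\; O\!\left(\sum_a \frac{(\mu_0(a)-\mu_1(a))^2}{\mu_0(a)+\mu_1(a)}\right),
\]
where $\mu_u$ denotes the law of $a_z$ given $U=u$. Conditionally on $U$, bin $z$ is either ``heavy'' (probability $m/n$, giving $m_z\sim\poisson{1}$ and uniform $\p_z$) or ``light'' (probability $1-m/n$, giving $m_z\sim\poisson{\lambda}$ with $\lambda \eqdef m\eps/n$ and $\p_z$ drawn from the $Y_i$ or $N_i$ ensembles). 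I will write $\mu_u = (m/n)\pi_H + (1-m/n)\pi_L^u$, observe that the heavy component cancels in $\mu_0-\mu_1$ (leaving $\delta(a) \eqdef (1-m/n)(\pi_L^Y(a)-\pi_L^N(a))$), and lower-bound the denominator merely by $2(m/n)\pi_H(a) = 2(m/n)\cdot e^{-1}(1/4)^{|a|}/\prod a^{ij}!$, which will introduce an explicit $(n/m)\cdot 4^{|a|}$ factor in the numerator of each term.

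The second step will exploit the fact that $\pi_L^M(a) = e^{-\lambda}\lambda^{|a|} M^a/\prod a^{ij}!$ with $M^a\eqdef\prod M_{ij}^{a^{ij}}$ is a degree-$|a|$ polynomial in the entries of $M$. In the parametrization of Claim~\ref{claim:lb:lowmoments:appendix} one can write
\[
\pi_L^Y(a)-\pi_L^N(a) \;=\; \frac{e^{-\lambda}\lambda^{|a|}}{\prod a^{ij}!}\cdot C(a), \qquad C(a)\eqdef \tfrac{1}{8}\sum_{j=0}^4 (-1)^j\binom{4}{j}(A+jB)^a \;,
\]
and the claim immediately yields $C(a) = 0$ for $|a|\leq 3$. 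For $|a|\geq 4$, expanding $(A+jB)^a$ multinomially and using $\sum_{j=0}^4(-1)^j\binom{4}{j}j^h = 4!\, S(h,4)$ (Stirling numbers of the second kind) shows that $C(a)$ is a linear combination of terms $A^{a-l}B^l$ with $|l|\geq 4$; since $B$ is diagonal with entries of magnitude $0.1$, only $l$ with $l^{01}=l^{10}=0$ contribute, and each such term carries the small prefactor $(0.1)^{|l|}$.

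To assemble the final bound, I introduce $T(k)\eqdef\sum_{|a|=k} C(a)^2/\prod a^{ij}!$ and compute its generating function via the identity $\sum_a x^a/\prod a^{ij}! = \exp(\sum_{ij} x_{ij})$:
\[
\sum_k T(k) z^k \;=\; \frac{1}{64}\sum_{j,j'=0}^4 (-1)^{j+j'}\binom{4}{j}\binom{4}{j'}\,\exp\!\left(z\sum_{ij}(A+jB)_{ij}(A+j'B)_{ij}\right).
\]
The exponent is linear in $j$ (for each fixed $j'$), and symmetrically in $j'$, so the double fourth-forward difference annihilates all $z$-coefficients of order at most $3$; the first surviving contribution is $T(4) = O((0.02)^4)$ (an explicit computation gives $6\cdot 10^{-8}$), and a crude bound on the higher derivatives at $z=0$ produces $T(k) \leq O(10^{-8})\cdot K^{k-4}$ for some absolute $K$. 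Combining this with the per-$k$ inequality $\sum_{|a|=k}\delta(a)^2/(\mu_0(a)+\mu_1(a)) \leq O((n/m)\cdot (4\lambda^2)^k\cdot T(k))$, and noting that $\lambda = m\eps/n$ is small enough in the regime $m \leq c\, n^{6/7}/\eps^{8/7}$ to make $4K\lambda^2$ a geometric ratio less than $1/2$, the sum over $k\geq 4$ is dominated by its $k=4$ term, which evaluates to $O((n/m)\lambda^8) = O(\eps^8 m^7/n^7)$, as desired.

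The main technical obstacle is the tail over $k\geq 5$: a naive bound $|C(a)|\leq 2$ loses the crucial $(0.1)^4$ factor. The generating-function computation circumvents this by matching the bilinear structure of the exponent to the double forward difference in $(j,j')$, thereby simultaneously producing the sharp $(0.02)^4$ constant at $k=4$ and a tame geometric decay in $k$.
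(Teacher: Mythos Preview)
Your proof is correct and shares the paper's overall scaffold: the $\chi^2$-type upper bound on mutual information, the heavy/light decomposition $\mu_u=(m/n)\pi_H+(1-m/n)\pi_L^u$, the lower bound on the denominator by the heavy component $2(m/n)\pi_H(a)$, and the invocation of Claim~\ref{claim:lb:lowmoments:appendix} (the fourth forward-difference identity) to kill all contributions with $|a|\leq 3$.

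The one substantive difference is how the remaining sum over $|a|\geq 4$ is handled. The paper does \emph{not} compute or bound your $T(k)$ sharply; it simply bounds the numerator $\bigl|\probaCond{a=\alpha}{U=0,\bar\Xi,|a|=k}-\probaCond{a=\alpha}{U=1,\bar\Xi,|a|=k}\bigr|\leq 2$ and then sums, arriving at
\[
\Phi(n,m,\eps)\;\leq\; O\!\left(\frac{n}{m}\right)\sum_{k\geq 4}\frac{(4x)^{2k}}{k!}\,+\,O\!\left(\frac{n^2}{m^2}\right)\sum_{k\geq 4}\frac{(4^{2/3}x)^{3k}}{k!}\,,\qquad x\eqdef\frac{\eps m}{n}\,,
\]
whose $k=4$ term already gives $O((n/m)x^8)=O(\eps^8 m^7/n^7)$ with a geometrically decaying tail (since $x\ll 1$ in this regime). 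Your concern that the naive bound $|C(a)|\leq 2$ ``loses the crucial $(0.1)^4$ factor'' is therefore too pessimistic: that factor is not needed for the stated big-$O$ conclusion, because the $\lambda^{2k}$ prefactor alone delivers the full $\lambda^8$ at $k=4$. Your generating-function computation of $T(k)$ is a sharper tool and yields a dramatically better implied constant, but the paper's cruder route reaches the same asymptotic bound with less work.
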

Before proving the lemma, we show that it implies the desired lower bound. 
Indeed, assuming~\cref{lemma:lb:case3:main:appendix}, we get that
\[
\mutualinfo{U}{A} \leq \sum_{z=1}^n \mutualinfo{U}{a_z} = \sum_{z=1}^n \bigO{ \frac{\eps^8 m^7}{n^7} } = \bigO{ \frac{\eps^8 m^7}{n^6}} \;,
\]
which is $o(1)$ unless $m = \bigOmega{n^{6/7}/\eps^{8/7}}$. It remains to prove the lemma.

\begin{proofof}{\cref{lemma:lb:case3:main:appendix}}
By symmetry, it is sufficient to show the claim for $z=1$. 
To simplify the notation, let $a \eqdef a_1$. 
We first bound $\mutualinfo{U}{a}$ from above using~\cite[Fact 4.12]{CDKS17} (see also~\cite{DK:16}) as follows:
\[
\mutualinfo{U}{a} \leq \sum_{\alpha\in\N^4} \probaOf{ a=\alpha } 
\left( 1 - \frac{\probaCond{ a =\alpha }{ U=1 }}{ \probaCond{ a =\alpha }{ U=0 }} \right)^2 \eqdef \Phi(n,m,\eps) \;.
\]
Our next step is to get a hold on the conditional probabilities 
$\probaCond{ a =\alpha }{U=0}$ and $\probaCond{ a =\alpha }{U=1}$. 
For notational convenience, we set
\[
p_1 \eqdef \frac{16}{100}, \quad p_2 \eqdef \frac{24}{100}, \quad p_3 \eqdef \frac{36}{100},
\quad q_1 \eqdef \frac{6}{100},q_2 \eqdef \frac{26}{100}, \quad q_3 \eqdef \frac{46}{100} \;,
\]   
and let $\Xi$ denote the event that the bin is ``heavy'', i.e., 
that $\p_Z$ puts probability mass $\frac{1}{m}$ on it. Note that by construction this event happens 
with probability $\frac{m}{n}$ and that $\Xi$ is independent of $U$.
        
We start with the $\yes$-case. Recall that with probability $m/n$, $\Xi$ holds: 
the probability $\p_Z(1)$ equals $1/m$, in which case we draw $\poisson{m\cdot \frac{1}{m}}$ samples 
from $Z=1$ and each sample is uniformly random on $\{0,1\}\times\{0,1\}$. That is, 
each of the four outcomes is an independent $\poisson{m\cdot \frac{1}{m}\cdot \frac{1}{4}}$ random variable. 
With probability $1-m/n$, $\bar{\Xi}$ holds: $\p_Z(1)$ equals $\eps/n$ and we draw $\poisson{m\cdot \frac{\eps}{n}}$ samples from $Z=1$. 
With probability $1/2$, all samples follow the first case, and with probability $1/2$ all samples follow the second case. 

For any $\alpha=(\alpha_1,\alpha_2,\alpha_3,\alpha_4)\in\N^4$, we can explicitly calculate the associated probabilities.
Specifically, we can write:
\begin{equation}\label{eq:lb:yescase:appendix}
            \probaCond{ a = \alpha }{ U=0,\bar{\Xi} } = \frac{e^{-\frac{\eps m}{n}}}{\alpha_1!\alpha_2!\alpha_3!\alpha_4!} \left( \frac{1}{2}p_1^{\alpha_1}p_2^{\alpha_2}p_2^{\alpha_3}p_3^{\alpha_4} + \frac{1}{2}p_3^{\alpha_1}p_2^{\alpha_2}p_2^{\alpha_3}p_1^{\alpha_4} \right) \;,
\end{equation}  
and
\begin{align*}
\probaCond{ a = \alpha }{ U=0 } &=
\probaCond{ a = \alpha }{ U=0, \Xi }\cdot \probaOf{\Xi} + \probaCond{ a = \alpha }{ U=0,\bar{\Xi} }\cdot \probaOf{\bar{\Xi}} \\
&=
\frac{m}{n}\cdot e^{-\cdot\frac{1}{4}\cdot 4} \frac{4^{-\sum_{\ell=1}^4 \alpha_\ell}}{\alpha_1!\alpha_2!\alpha_3!\alpha_4!} 
+ \left(1-\frac{m}{n}\right)\cdot \Big(\\
&\qquad \frac{1}{2} \frac{e^{-\frac{\eps m}{n}}}{\alpha_1!\alpha_2!\alpha_3!\alpha_4!} p_1^{\alpha_1}p_2^{\alpha_2}p_2^{\alpha_3}p_3^{\alpha_4} +\\
&\qquad \frac{1}{2} \frac{e^{-\frac{\eps m}{n}}}{\alpha_1!\alpha_2!\alpha_3!\alpha_4!} p_3^{\alpha_1}p_2^{\alpha_2}p_2^{\alpha_3}p_1^{\alpha_4}
\Big) \;.
\end{align*}  
Similarly, for the $\no$-case, we have
\begin{equation}\label{eq:lb:nocase:appendix}
            \probaCond{ a = \alpha }{ U=1,\bar{\Xi} } = \frac{e^{-\frac{\eps m}{n}}}{\alpha_1!\alpha_2!\alpha_3!\alpha_4!} \left( 
                \frac{1}{8} q_1^{\alpha_1}p_2^{\alpha_2}p_2^{\alpha_3}q_3^{\alpha_4} + 
                \frac{1}{8} q_3^{\alpha_1}p_2^{\alpha_2}p_2^{\alpha_3}q_1^{\alpha_4} +
                \frac{3}{4} q_2^{\alpha_1}p_2^{\alpha_2}p_2^{\alpha_3}q_2^{\alpha_4}
                \right) \;,
\end{equation}  
$$\probaCond{ a = \alpha }{ U=0, \Xi }=\probaCond{ a = \alpha }{ U=1, \Xi } \;,$$ 
and
\begin{align*}
          \probaCond{ a = \alpha }{ U=1 } 
          &= \probaCond{ a = \alpha }{ U=1, \Xi }\cdot \probaOf{\Xi} + \probaCond{ a = \alpha }{ U=1,\bar{\Xi} }\cdot \probaOf{\bar{\Xi}} \\
          &= \frac{m}{n}\cdot e^{-\frac{1}{4}\cdot 4} \frac{4^{-\sum_{\ell=1}^4 \alpha_\ell}}{\alpha_1!\alpha_2!\alpha_3!\alpha_4!} 
            + \left(1-\frac{m}{n}\right)\cdot \\ &\Big(
            \frac{1}{8} \frac{e^{-\frac{\eps m}{n}}}{\alpha_1!\alpha_2!\alpha_3!\alpha_4!} q_1^{\alpha_1}p_2^{\alpha_2}p_2^{\alpha_3}q_3^{\alpha_4} +
            \frac{1}{8} \frac{e^{-\frac{\eps m}{n}}}{\alpha_1!\alpha_2!\alpha_3!\alpha_4!} q_3^{\alpha_1}p_2^{\alpha_2}p_2^{\alpha_3}q_1^{\alpha_4} +
            \frac{3}{4} \frac{e^{-\frac{\eps m}{n}}}{\alpha_1!\alpha_2!\alpha_3!\alpha_4!} q_2^{\alpha_1}p_2^{\alpha_2}p_2^{\alpha_3}q_2^{\alpha_4}
            \Big) \;.
\end{align*}
\noindent With these formulas in hand, we can write
\begin{align*}
\Phi(n,m,\eps) 
& \eqdef \sum_{\alpha\in\N^4} \probaOf{ a=\alpha } \left(\frac{ \probaCond{ a =\alpha }{ U=0 } - \probaCond{ a =\alpha }{ U=1 }}{ \probaCond{ a =\alpha }{ U=0 }} \right)^2 \\
          &= \left(1-\frac{m}{n}\right)^2\sum_{\alpha\in\N^4} \probaOf{ a=\alpha } \left(\frac{ \probaCond{ a =\alpha }{ U=0, \bar{\Xi} } - \probaCond{ a =\alpha }{ U=1, \bar{\Xi} }}{ \probaCond{ a =\alpha }{ U=0 }} \right)^2 \\
          &\leq \sum_{\alpha\in\N^4} \probaOf{ a=\alpha } \left(\frac{ \probaCond{ a =\alpha }{ U=0, \bar{\Xi} } - \probaCond{ a =\alpha }{ U=1, \bar{\Xi} }}{ \probaCond{ a =\alpha }{ U=0 }} \right)^2 \;.
\end{align*}
By~\cref{eq:lb:yescase:appendix,eq:lb:nocase:appendix} and~\cref{claim:lb:lowmoments:appendix}, we observe that 
the difference $\probaCond{ a =\alpha }{ U=0, \bar{\Xi} } - \probaCond{ a =\alpha }{ U=1, \bar{\Xi} }$ 
is zero for any $\abs{\alpha} \eqdef \sum_i \alpha_i \leq 3$. We thus obtain
\begin{align*}
\Phi(n,m,\eps) 
&\leq \sum_{\substack{\alpha\in\N^4\\ \abs{\alpha} \geq 4}} \probaOf{ a=\alpha } 
\left(\frac{ \probaCond{ a =\alpha }{ U=0, \bar{\Xi} } - \probaCond{ a =\alpha }{ U=1, \bar{\Xi} }}{ \probaCond{ a =\alpha }{ U=0 }} \right)^2 \\
&= \sum_{k=4}^\infty\sum_{\substack{\alpha\in\N^4\\ \abs{\alpha} =k}} \probaOf{ a=\alpha } \cdot \probaCond{\abs{a}=k}{\bar{\Xi}}^2 \cdot \tag{$\dagger$}\\
&\qquad\left(\frac{ \probaCond{ a =\alpha }{ U=0, \bar{\Xi}, \abs{a}=k } - \probaCond{ a =\alpha }{ U=1, \bar{\Xi}, \abs{a}=k }}{ \probaCond{ a =\alpha }{ U=0 }} \right)^2 \\
&= \sum_{k=4}^\infty  \frac{e^{-\frac{2\eps m}{n}}}{k!^2}\left(\frac{\eps m}{n}\right)^{2k} \sum_{\substack{\alpha\in\N^4\\ \abs{\alpha} =k}} \probaOf{ a=\alpha } \cdot \\
&\qquad\left(\frac{ \probaCond{ a =\alpha }{ U=0, \bar{\Xi}, \abs{a}=k } - \probaCond{ a =\alpha }{ U=1, \bar{\Xi}, \abs{a}=k }}{ \probaCond{ a =\alpha }{ U=0 }} \right)^2 \\
&\leq \sum_{k=4}^\infty  \frac{e^{-\frac{2\eps m}{n}}}{k!^2}\left(\frac{\eps m}{n}\right)^{2k} \sum_{\substack{\alpha\in\N^4\\ \abs{\alpha} =k}} \probaOf{ a=\alpha } \cdot \left(\frac{2}{ \probaCond{ a =\alpha }{ U=0 }} \right)^2 \;,
\end{align*}
where for $(\dagger)$ we used the fact that, $\abs{a}$ is independent of $U$ to write 
$$\probaCond{\abs{a}=k}{U=1, \bar{\Xi}}=\probaCond{\abs{a}=k}{U=0, \bar{\Xi}}
=\probaCond{\abs{a}=k}{\bar{\Xi}} = \frac{e^{-\frac{\eps m}{n}}}{k!}\left(\frac{\eps m}{n}\right)^{k} \;.$$
To conclude the proof, we will handle the denominator using the bound
\[
\probaCond{ a =\alpha }{ U=0 } \geq \probaCond{ a =\alpha }{ \Xi,U=0 }\cdot \probaCond{\Xi}{U=0} = \frac{m}{n} e^{-1} \frac{4^{-k}}{k!} \;,
\]
and rewrite $\probaOf{ a=\alpha } = \probaCond{ a=\alpha }{ \abs{a}=\abs{\alpha} }\cdot\probaOf{ \abs{a}=\abs{\alpha} }$. 
Using $x\eqdef \frac{\eps m}{n}$ in the following expressions for conciseness, we now get:
\begin{align*}
\Phi(n,m,\eps) 
&\leq 4e^2\frac{n^2}{m^2}e^{-2x}\sum_{k=4}^\infty \left(4x\right)^{2k} \probaOf{ \abs{a}=k } \sum_{\substack{\alpha\in\N^4\\ \abs{\alpha} =k}} \probaCond{ a=\alpha }{ \abs{a}=k }  
= 4e^2\frac{n^2}{m^2}e^{-2x}\sum_{k=4}^\infty \left(4x\right)^{2k} \probaOf{ \abs{a}=k } \\
&= 4e^2\frac{n^2}{m^2}e^{-2x}\sum_{k=4}^\infty \left(4x\right)^{2k} \left( \frac{m}{n} \frac{e^{-1}}{k!} + \left(1-\frac{m}{n}\right) e^{-x} \frac{x^k}{k!} \right)\\
&\leq 40\frac{n^2}{m^2}\sum_{k=4}^\infty \left(4x\right)^{2k} \left( \frac{m}{n} \frac{1}{k!} + \frac{x^k}{k!}  \right)
= 40\frac{n}{m}\sum_{k=4}^\infty \frac{1}{k!}\left(4x\right)^{2k} + 40\frac{n^2}{m^2}\sum_{k=4}^\infty \frac{1}{k!} \left( 4^{2/3}x\right)^{3k} \\
&= 40\frac{n}{m}\frac{(4x)^{8}}{24} + o\left(\frac{n}{m}x^{8} \right) + 40\frac{n^2}{m^2} \frac{1}{24} \left( 4^{2/3}x\right)^{12}  + o\left(\frac{n^2}{m^2}x^{12}\right) \tag{$\ddagger$} \\
&= 2^{16}\frac{5}{3}\cdot  \frac{\eps^8 m^7}{n^7} + o\left(\frac{\eps^8 m^7}{n^7}\right) \;,
\end{align*}
where for $(\ddagger)$ we relied on the Taylor series expansion of $\exp$, recalling that $\frac{\eps m}{n} \ll 1$.
This completes the proof of~\cref{lemma:lb:case3:main:appendix}.
\end{proofof}
      
\subsection{Second Lower Bound Regime\texorpdfstring{: $\bigOmega{n^{7/8}/\eps}$ for $\frac{1}{n^{3/8}} \leq \eps \leq \frac{1}{n^{1/8}}$}{}} \label{ssec:lb-medium-eps:appendix}
Assume we are in the regime where
\[
\max(\min(n^{6/7}/\eps^{8/7},n^{7/8}/\eps),\sqrt{n}/\eps^2) = n^{7/8}/\eps \;,
\]
i.e., $\frac{1}{n^{3/8}} \leq \eps \leq \frac{1}{n^{1/8}}$. 
Suppose there is a testing algorithm for conditional independence using $m \leq c n^{7/8}/\eps$ samples, 
for a sufficiently small universal constant $c>0$. In this regime, we also have $m \ll n$.
  
Our construction of $\yes$- and $\no$- instances in this case is similar to 
those of the previous lower bound, although some specifics about how $\p_Z$ is generated will change.
  
\paragraph{The $\yes$-instance.}
A pseudo-distribution $\p$ is drawn from the $\yes$-instances as follows: 
Independently for each value $1\leq z\leq n-1$, we set:
\begin{itemize}
\item With probability $\frac{1}{2}$, $\p_Z(z) = \frac{1}{m}$ and $\p_z(i, j)= \frac{1}{4}$ for all $i, j\in\{0,1\}$. 
That is, we select uniform marginals for $\p_z(i, j)$.

\item With probability $\frac{1}{2}$, $\p_Z(z) = \frac{\eps}{n}$ and 
$\p_z(i, j)$ is defined by the same $2\times 2$ matrices as in the previous case:
\begin{itemize}
\item With probability $1/2$, $Y_1$,
\item With probability $1/2$, $Y_2$.
\end{itemize}
\end{itemize}
Furthermore, we set $\p_Z(n) = 1$, and $\p_n(i, j)= \frac{1}{4}$ for all $i, j \in\{0,1\}$. 
The last condition ensures that $\normone{\p_Z}=\bigTheta{1}$.  
It is clear that the resulting pseudo-distribution $\p$ satisfies $\normone{\p_Z}=\bigTheta{1}$ 
and that $\p \in\condindprop{\{0,1\}}{\{0,1\}}{[n]}$.
      
\paragraph{The $\no$-instance.}
A pseudo-distribution $\p$ is drawn from the $\no$-instances as follows: 
Independently for each value $z\in[n]$, we set:
\begin{itemize}
\item With probability $\frac{m}{n}$, $\p_Z(z) = \frac{1}{m}$ and $\p_z(i, j)= \frac{1}{4}$ for all $i,j\in\{0,1\}$, as before.
\item With probability $1-\frac{m}{n}$, $\p_Z(z) = \frac{\eps}{n}$, and $\p_z(i, j)$ is defined by the same $2\times 2$ matrix as before:
\begin{itemize}
\item With probability $1/8$, $N_1$,
\item With probability $1/8$, $N_2$,
\item With probability $3/4$, $N_3$.
\end{itemize}
\end{itemize}
Furthermore, we set as before $\p_Z(n) = 1$, 
and $\p_n(i, j) = \frac{1}{4}$ for all $i, j \in\{0,1\}$. 
This construction ensures that 
$\expect{\totalvardist{\p}{\condindprop{\{0,1\}}{\{0,1\}}{[n]}}} = \bigOmega{\eps}$.
      
The bulk of the proof of the $\bigOmega{n^{6/7}/\eps^{8/7}}$ remains the same. 
In particular, setting $a\eqdef a_1$, we can bound from above as before the mutual information $\mutualinfo{U}{a}$ by
\[
  \mutualinfo{U}{a} \leq \sum_{k=4}^\infty  \frac{e^{-\frac{2\eps m}{n}}}{k!^2}\left(\frac{\eps m}{n}\right)^{2k} \sum_{\substack{\alpha\in\N^4\\ \abs{\alpha} =k}} \probaOf{ a=\alpha } \cdot \left(\frac{2}{ \probaCond{ a =\alpha }{ U=0 }} \right)^2 \;.
\]
In the current setting, we use the fact that 
\[
\probaCond{ a =\alpha }{ U=0 } \geq \probaCond{ a =\alpha }{ \Xi,U=0 }\cdot \probaCond{\Xi}{U=0} = \frac{1}{2} e^{-1} \frac{4^{-k}}{k!} \;,
\]
to obtain that
\[
  \mutualinfo{U}{a} \leq 16e^2 e^{-\frac{2\eps m}{n}} \sum_{k=4}^\infty\left(\frac{4\eps m}{n}\right)^{2k} 
\sum_{\substack{\alpha\in\N^4\\ \abs{\alpha} =k}} \probaOf{ a=\alpha }
= 16e^2 e^{-\frac{2\eps m}{n}} \sum_{k=4}^\infty\left(\frac{4\eps m}{n}\right)^{2k} \probaOf{ \abs{a}=k } \;.
\]
Recalling that $\probaOf{ \abs{a}=k } = \frac{1}{2} e^{-1} \frac{1}{k!} +  \frac{1}{2} e^{-\frac{\eps m}{n}} \frac{1}{k!}\left(\frac{\eps m}{n}\right)^{2k}$ 
and that $\frac{\eps m}{n} = \bigTheta{\frac{1}{n^{1/8}}} \ll 1$, with a Taylor series expansion of the first term of the sum, 
we finally get that
\[
          \mutualinfo{U}{a} = \bigO{\frac{\eps^8m^8}{n^8}} \;.
\]
Therefore, $\mutualinfo{U}{A} \leq \sum_{z=1}^{n-1} \bigO{\frac{\eps^8m^8}{n^8}} =  \bigO{\frac{\eps^8m^8}{n^7}}$, 
which is $o(1)$ unless $m = \bigOmega{n^{7/8}/\eps}$. This completes the proof of this branch of the lower bound.

\subsection{Third Lower Bound Regime\texorpdfstring{: $\bigOmega{\frac{\sqrt{n}}{\eps^2}}$ for $\eps < n^{-3/8}$}{}} \label{ssec:lb-small-eps:appendix}
Finally, assume we are in the regime where
\[ 
\max(\min(n^{6/7}/\eps^{8/7},n^{7/8}/\eps),\sqrt{n}/\eps^2) = \sqrt{n}/\eps^2 \;,
\]
i.e., $\eps < n^{-3/8}$. In this case, we can show the desired lower bound by a simple 
reduction from the known hard instances for uniformity testing. 
Let $N$ be an even positive integer.
It is shown in~\cite{Paninski:08} that $\bigOmega{\sqrt{N}/\eps^2}$ samples are required to distinguish 
between (a) the uniform distribution on $[N]$, 
and (b) a distribution selected at random by pairing consecutive elements $2i, 2i+1$ 
and setting the probability mass of each pair to be either
$\left(\frac{1+2\eps}{N}, \frac{1-2\eps}{N}\right)$ or $\left(\frac{1-2\eps}{N},\frac{1+2\eps}{N}\right)$ 
independently and uniformly at random.
  
We map these instances to our conditional independence setting as follows: 
Let $N=4n$. We map $[N]$ to the set $\{0,1\}\times \{0,1\}\times [n]$
via the mapping $\Phi : [N] \to \{0,1\}\times \{0,1\}\times [n]$ defined as follows:
$\Phi(2i) = (0,0,i)$, $\Phi(2i+1) = (0,1,i)$, $\Phi(2i+2) = (1,0,i)$, and $\Phi(2i+3) = (1,1,i)$. 

For a distribution $\p\in\distribs{[N]}$ adversarially selected as described above, the following 
conditions are satisfied: (1) In case (a), $\Phi(\p)$ is the uniform distribution on $\{0,1\}\times \{0,1\}\times [n]$
and therefore $\Phi(\p)\in\condindprop{\{0,1\}}{\{0,1\}}{[n]}$. 
(2) In case (b), it is easy to see that for each fixed value of the third coordinate, 
the conditional distribution on the first two coordinates is one of the following:
  \[
      \frac{1}{4}\begin{pmatrix}
          1+2\eps & 1-2\eps\\
          1+2\eps & 1-2\eps\\
      \end{pmatrix}
      , \quad 
      \frac{1}{4}\begin{pmatrix}
          1+2\eps & 1-2\eps\\
          1-2\eps & 1+2\eps\\
      \end{pmatrix}
      , \quad 
      \frac{1}{4}\begin{pmatrix}
          1-2\eps & 1+2\eps\\
          1-2\eps & 1+2\eps\\
      \end{pmatrix}, \text{ or }\quad
      \frac{1}{4}\begin{pmatrix}
          1-2\eps & 1+2\eps\\
          1+2\eps & 1-2\eps\\
      \end{pmatrix} \;.
  \]
It is clear that each of these four distributions is $\bigOmega{\eps}$-far from independent.
Therefore, by~\cref{lemma:distance:product} it follows that 
$\totalvardist{ \Phi(\p) }{ \condindprop{\{0,1\}}{\{0,1\}}{[n]} } = \bigOmega{\eps}$.
  
In conclusion, we have established that any testing algorithm 
for $\condindprop{\{0,1\}}{\{0,1\}}{[n]}$ can be used (with parameter $\eps$) 
to test uniformity over $[N]$ (with parameter $\eps'=O(\eps)$). 
This implies a lower bound of $\bigOmega{\sqrt{N}/{\eps'}^2} = \bigOmega{\sqrt{n}/\eps^2}$ 
on the sample complexity of $\eps$-testing conditional independence.
 
\section{Sample Complexity Lower Bound for \texorpdfstring{$\domx=\domy=\domz=[n]$}{X=Y=Z=[n]}} \label{sec:lb:nnn}
   
In this section, we outline the proof of the (tight) sample complexity lower bound of
$
  \bigOmega{n^{7/4}}
$
for testing conditional independence in the regime that 
$\domx=\domy=\domz=[n]$, and $\eps=\bigOmega{1}$.

Specifically, we will show that, when $\abs{\domx}=\abs{\domy}=\abs{\domz}=n$ and $\eps= 1/20$, it is impossible to distinguish between conditional independence and $\eps$-far from conditional independence with $s=o(n^{7/4})$ samples. To do this, we begin by producing an adversarial ensemble of distributions. The adversarial distribution will be designed to match the cases where the upper bound construction will be tight. In particular, each conditional marginal distribution will have about $n^{3/4}$ heavy bins and the rest of the bins light. The difference between our distribution and the product of the marginals (if it exists) will be uniformly distributed about the light bins.

First we will come up with an ensemble where $X$ and $Y$ are conditionally independent and then we will tweak it slightly. For the first distribution, we let the distribution over $Z$ be uniform. For each value of $Z=z$, we pick random subsets $A_z, B_z \subset [n]$ of size $n^{3/4}$ (which we assume to be an integer), which will be the heavy bins of the conditional distribution. We then let the conditional probabilities be defined by
\[
  \probaCond{X=j}{Z=z} = \begin{cases} n^{-3/4}/2 & \textrm{if }j\in A_z\\ 1/(2(n-n^{3/4})) & \textrm{else} \end{cases}
\]
and
\[
  \probaCond{Y=j}{Z=z} = \begin{cases} n^{-3/4}/2 & \textrm{if }j\in B_z\\ 1/(2(n-n^{3/4})) & \textrm{else}.\end{cases}
\]
We then let $X$ and $Y$ be conditionally independent on $Z$, defining the distribution over $(X,Y,Z)$.

Finally, we introduce a new one bit variable $W$. In ensemble $\mathcal{D}_0$, $W$ is an independent random bit. In ensemble $\mathcal{D}_1$, the conditional distribution on $W$ given $(X,Y,Z)$ is uniform random if $X\in A_Z$ or $Y\in B_Z$, but otherwise is given by a uniform random function $f\colon[n]\times [n]\times [n] \to \{0,1\}$. In particular, in this case, $W$ is determined by the values of $X,Y,Z$, though different elements of $\mathcal{D}_1$ will give different functions. Note that elements of $\mathcal{D}_0$ have $XW$ and $Y$ conditionally independent on $Z$, whereas elements of $\mathcal{D}_1$ are $\eps$-far from any such distribution. We show that no algorithm that takes $s$ samples from a random distribution from one of these families can reliably distinguish which family the samples came from.

In particular, let $F$ be a uniform random bit. Let $S$ be a sequence of $s$ quadruples $(W,X,Y,Z)$ obtained by picking a random element $\p$ from $\mathcal{D}_F$ and taking $s$ independent samples from $\p$. It suffices to show that one cannot reliably recover $F$ from $S$. Note that with high probability $S$ contains at most $t\eqdef 2s/n+\log n=o(n^{3/4})$ samples for each value of $Z$. Therefore of we let $T_z$ be a sequence of $t$ independent samples from $\p$ conditioned on $Z=z$ for each $z$, it suffices to show that $F$ cannot be reliably recovered from $T_1,\ldots,T_n$. For this it suffices to show that $\mutualinfo{F}{T_1,\ldots, T_n}=o(1)$. Since the $T_z$ are conditionally independent on $F$, we have that $\mutualinfo{F}{T_1,\ldots, T_n} \leq \sum_{z=1}^n \mutualinfo{F}{T_z}$ and thus it suffices to show that $\mutualinfo{F}{T_z}=o(1/n)$ for every $z$. Since this shared information is clearly the same for each $z$, we will suppress the subscript.

We say that two distinct elements of $T$ \emph{collide} if they have the same values of $X$ and $Y$. We note that if we condition on the values of $X$ and $Y$ in the elements of $T$, that the values of $W$ for the elements that do not collide are uniform random bits independent of the other values of $W$ and of $F$. Therefore, no information can be gleaned from these $W$'s. This means that all of the information comes from $W$'s associated with collisions. Unfortunately, most collisions (as we will see) come from heavy values of either $X$ or $Y$ (or both), and these cases will also provide no extra information.

More formally, note that
\[
\mutualinfo{F}{T} = \shortexpect_{M\sim T}\left[O\left(\min\left(1,\left(1-\frac{\probaCond{T=M}{F=0}}{\probaCond{T=M}{F=1}} \right)^2 \right) \right) \right].
\]
We claim that if $M$ has $C$ pairs that collide, then
\[
\left(1-\frac{\probaCond{T=M}{F=0}}{\probaCond{T=M}{F=1}} \right)^2 = O(C^2/n).
\]
Our result will then follow from the observation that $\expect{C^2} = O(t^2/n^{3/2}) = o(1)$.

Given $M$, call a value of $X$ (resp. $Y$) \emph{extraneous} if it
\begin{itemize}
\item occurs as an $X$- (resp. $Y$-) coordinate of an element of $M$; and
\item does not occur as an $X$- (resp. $Y$-) coordinate of an element of $M$ involved in a collision.
\end{itemize}
We claim that our bound
\begin{equation}\label{eq:rel:prob:bound}
  \left(1-\frac{\probaCond{T=M}{F=0}}{\probaCond{T=M}{F=1}} \right)^2 = O(C^2/n).
\end{equation}
holds even after conditioning on which extraneous $X$ are in $A$ and which extraneous $Y$ are in $B$. This will be sufficient since which $X$ and $Y$ are extraneous, and which of them are in $A$ or $B$, are both independent of $F$. Let $M_E$ be the $X$ and $Y$ values of $M$ along with the information on which extraneous $X$ and $Y$ are in $A$ or $B$. It is enough to bound
\[
\left(1-\frac{\probaCond{T=M}{F=0}}{\probaCond{T=M}{F=1}} \right)^2 = \left(\frac{\probaCond{T=M}{F=1}-\probaCond{T=M}{F=0}}{\probaCond{T=M}{F=1}} \right)^2.
\]
It thus suffices to bound
\[
\left(\totalvardist{ (T\mid F=0,M_E) }{ (T\mid F=1,M_E) } \right)^2.
\]
Say that a pair of colliding elements of $M$ is \emph{light} if neither the corresponding $X$ nor the corresponding $Y$ are in $A$ or $B$. Observe that if we condition on no collision in $M$ being light, the conditional distributions of $T$ on $F=0$ and on $F=1$ are the same. Therefore, the expression above is bounded by
\[
\probaOf{ \textrm{There exists a light collision in }M}^2.
\]
Thus, it suffices to bound the probability that $M$ contains a light collision given its values of $X$ and $Y$. If $M$ (ignoring the values of $W$ and $Z$) is $((x_1,y_1),\ldots,(x_t,y_t))$ the probability of seeing this $M$ conditioned on the values of $A$ and $B$ is $\Theta(n^{-2t+\abs{\setOfSuchThat{i\in[t]}{ x_i\in A }}/4+\abs{\setOfSuchThat{i\in[t]}{y_i\in B}}/4})$. Now given some set of $a$ of values of $X$ appearing in $M$, the prior probability that those are the values of $X$ in $M$ appearing in $A$ is $n^{-a/4}\cdot \phi$, where $\phi$ is some quantity that changes by only a $1+o(1)$ factor if a single element is added or removed from the set of $X$'s. A similar relation holds for $Y$'s. Given this, and conditioning on whether the extraneous $X$'s and $Y$'s are in $A$ and $B$, we note that each non-extraneous $X$ or $Y$ that is in $A$ or $B$ contributes a factor of roughly $n^{-1/4}$ to the prior probability of having that configuration of elements in $A$ or $B$, but contributes a factor of at least $n^{1/2}$ to the conditional probability of seeing the $M$ that we saw given those values of $A$ and $B$. Therefore, even conditioned on the previously determined lightness of other collisions, each collision has only a $O(n^{-1/2})$ probability of being light. Therefore the probability that there is a light collision is $O(C/n^{1/2})$. This implies~\cref{eq:rel:prob:bound} and completes the proof of our lower bound.

\bibliographystyle{alpha}
\bibliography{allrefs}

\newcommand{\etalchar}[1]{$^{#1}$}
\begin{thebibliography}{dMASdBP14}

\bibitem[ADK15]{ADK15}
J.~Acharya, C.~Daskalakis, and G.~Kamath.
\newblock Optimal testing for properties of distributions.
\newblock In {\em Proceedings of NIPS'15}, 2015.

\bibitem[Agr92]{Agresti1992}
A.~Agresti.
\newblock A survey of exact inference for contingency tables.
\newblock {\em Statist. Sci.}, 7(1):131--153, 02 1992.

\bibitem[BFF{\etalchar{+}}01]{BFFKRW:01}
T.~Batu, E.~Fischer, L.~Fortnow, R.~Kumar, R.~Rubinfeld, and P.~White.
\newblock Testing random variables for independence and identity.
\newblock In {\em Proc. 42nd IEEE Symposium on Foundations of Computer
  Science}, pages 442--451, 2001.

\bibitem[BFR{\etalchar{+}}00]{BFR+:00}
T.~Batu, L.~Fortnow, R.~Rubinfeld, W.~D. Smith, and P.~White.
\newblock Testing that distributions are close.
\newblock In {\em {IEEE} Symposium on Foundations of Computer Science}, pages
  259--269, 2000.

\bibitem[BH07]{BlundHor07}
R.~Blundell and J.~L. Horowitz.
\newblock A non-parametric test of exogeneity.
\newblock {\em The Review of Economic Studies}, 74(4):1035--1058, 2007.

\bibitem[BKR04]{BKR:04}
T.~Batu, R.~Kumar, and R.~Rubinfeld.
\newblock Sublinear algorithms for testing monotone and unimodal distributions.
\newblock In {\em {ACM} Symposium on Theory of Computing}, pages 381--390,
  2004.

\bibitem[BT14]{BouezTaam14}
T.~Bouezmarni and A.~Taamouti.
\newblock Nonparametric tests for conditional independence using conditional
  distributions.
\newblock {\em Journal of Nonparametric Statistics}, 26(4):697--719, 2014.

\bibitem[Can15]{Canonne15}
C.~L. Canonne.
\newblock A survey on distribution testing: Your data is big. but is it blue?
\newblock {\em Electronic Colloquium on Computational Complexity {(ECCC)}},
  22:63, 2015.

\bibitem[CDGR16]{CDGR16}
C.~L. Canonne, I.~Diakonikolas, T.~Gouleakis, and R.~Rubinfeld.
\newblock Testing shape restrictions of discrete distributions.
\newblock In {\em 33rd Symposium on Theoretical Aspects of Computer Science,
  {STACS} 2016}, pages 25:1--25:14, 2016.
\newblock See also \cite{CDGR:17:journal} (full version).

\bibitem[CDGR17]{CDGR:17:journal}
C.~L. Canonne, I.~Diakonikolas, T.~Gouleakis, and R.~Rubinfeld.
\newblock Testing shape restrictions of discrete distributions.
\newblock {\em Theory of Computing Systems}, pages 1--59, 2017.

\bibitem[CDKS17]{CDKS17}
C.~L. Canonne, I.~Diakonikolas, D.~M. Kane, and A.~Stewart.
\newblock Testing {B}ayesian networks.
\newblock In {\em Proceedings of the 30th Conference on Learning Theory, {COLT}
  2017}, pages 370--448, 2017.

\bibitem[CDS17]{CDS17}
C.~L. Canonne, I.~Diakonikolas, and A.~Stewart.
\newblock Fourier-based testing for families of distributions.
\newblock {\em CoRR}, abs/1706.05738, 2017.

\bibitem[CDVV14]{CDVV14}
S.~Chan, I.~Diakonikolas, P.~Valiant, and G.~Valiant.
\newblock Optimal algorithms for testing closeness of discrete distributions.
\newblock In {\em SODA}, pages 1193--1203, 2014.

\bibitem[Coc54]{Coch54}
W.~G. Cochran.
\newblock Some methods for strengthening the common $\chi^2$ tests.
\newblock {\em Biometrics}, 10(4):417--451, 1954.

\bibitem[Daw79]{Dawid79}
A.~P. Dawid.
\newblock Conditional independence in statistical theory.
\newblock {\em Journal of the Royal Statistical Society. Series B
  (Methodological)}, 41(1):1--31, 1979.

\bibitem[DDK18]{DaskalakisDK16}
C.~Daskalakis, N.~Dikkala, and G.~Kamath.
\newblock Testing {I}sing models.
\newblock In {\em SODA}, 2018.
\newblock To appear.

\bibitem[DDS{\etalchar{+}}13]{DDSVV13}
C.~Daskalakis, I.~Diakonikolas, R.~Servedio, G.~Valiant, and P.~Valiant.
\newblock Testing $k$-modal distributions: Optimal algorithms via reductions.
\newblock In {\em SODA}, pages 1833--1852, 2013.

\bibitem[DGPP16]{DiakonikolasGPP16}
I.~Diakonikolas, T.~Gouleakis, J.~Peebles, and E.~Price.
\newblock Collision-based testers are optimal for uniformity and closeness.
\newblock {\em Electronic Colloquium on Computational Complexity {(ECCC)}},
  23:178, 2016.

\bibitem[DGPP17]{DGPP17}
I.~Diakonikolas, T.~Gouleakis, J.~Peebles, and E.~Price.
\newblock Sample-optimal identity testing with high probability.
\newblock {\em CoRR}, abs/1708.02728, 2017.

\bibitem[DK16]{DK:16}
I.~Diakonikolas and D.~M. Kane.
\newblock A new approach for testing properties of discrete distributions.
\newblock In {\em FOCS}, pages 685--694, 2016.
\newblock Full version available at abs/1601.05557.

\bibitem[DKN15a]{DKN:15:FOCS}
I.~Diakonikolas, D.~M. Kane, and V.~Nikishkin.
\newblock Optimal algorithms and lower bounds for testing closeness of
  structured distributions.
\newblock In {\em 56th Annual {IEEE} Symposium on Foundations of Computer
  Science, {FOCS} 2015}, 2015.

\bibitem[DKN15b]{DKN:15}
I.~Diakonikolas, D.~M. Kane, and V.~Nikishkin.
\newblock Testing identity of structured distributions.
\newblock In {\em Proceedings of the Twenty-Sixth Annual {ACM-SIAM} Symposium
  on Discrete Algorithms, {SODA} 2015, San Diego, CA, USA, January 4-6, 2015},
  2015.

\bibitem[DKN17]{DKN17}
I.~Diakonikolas, D.~M. Kane, and V.~Nikishkin.
\newblock Near-optimal closeness testing of discrete histogram distributions.
\newblock In {\em 44th International Colloquium on Automata, Languages, and
  Programming, {ICALP} 2017}, pages 8:1--8:15, 2017.

\bibitem[DM01]{DelgMant01}
M.~A. Delgado and W.~G. Manteiga.
\newblock Significance testing in nonparametric regression based on the
  bootstrap.
\newblock {\em The Annals of Statistics}, 29(5):1469--1507, 2001.

\bibitem[dMASdBP14]{dM14}
P.~de~Morais~Andrade, J.~M. Stern, and C.~A. de~Braganca~Pereira.
\newblock Bayesian test of significance for conditional independence: The
  multinomial model.
\newblock {\em Entropy}, 16(3):1376--1395, 2014.

\bibitem[Dob59]{Dobrushin59}
R.~L. Dobru\v{s}in.
\newblock A general formulation of the fundamental theorem of {S}hannon in the
  theory of information.
\newblock {\em Uspehi Mat. Nauk}, 14(6 (90)):3--104, 1959.

\bibitem[DP17]{DaskalakisP17}
C.~Daskalakis and Q.~Pan.
\newblock Square {H}ellinger subadditivity for {B}ayesian networks and its
  applications to identity testing.
\newblock In {\em Proceedings of the 30th Conference on Learning Theory, {COLT}
  2017}, pages 697--703, 2017.

\bibitem[EO87]{Easley87}
D.~Easley and M.~O'Hara.
\newblock Price, trade size, and information in securities markets.
\newblock {\em Journal of Financial Economics}, 19(1):69 -- 90, 1987.

\bibitem[Fis24]{Fisher24}
R.~A. Fisher.
\newblock The distribution of the partial correlation coefficient.
\newblock {\em Metron}, 3:329--332, 1924.

\bibitem[Gol17]{Gol:17}
O.~Goldreich.
\newblock {\em Introduction to Property Testing}.
\newblock Cambridge University Press, 2017.

\bibitem[Gra80]{Granger80}
C.W.J. Granger.
\newblock Testing for causality: A personal viewpoint.
\newblock {\em Journal of Economic Dynamics and Control}, 2(Supplement C):329
  -- 352, 1980.

\bibitem[GS10]{GL10}
G.~Geenens and L.~Simar.
\newblock Nonparametric tests for conditional independence in two-way
  contingency tables.
\newblock {\em Journal of Multivariate Analysis}, 101(4):765--788, 2010.

\bibitem[HPS16]{HardtPNS16}
M.~Hardt, E.~Price, and N.~Srebro.
\newblock Equality of opportunity in supervised learning.
\newblock In {\em Advances in Neural Information Processing Systems 29: Annual
  Conference on Neural Information Processing Systems 2016}, pages 3315--3323,
  2016.

\bibitem[Hua10]{Huang10}
T.-M. Huang.
\newblock Testing conditional independence using maximal nonlinear conditional
  correlation.
\newblock {\em Ann. Statist.}, 38(4):2047--2091, 08 2010.

\bibitem[LG96]{LG96}
O.~Linton and P.~Gozalo.
\newblock {Conditional Independence Restrictions: Testing and Estimation}.
\newblock Cowles Foundation Discussion Papers 1140, Cowles Foundation for
  Research in Economics, Yale University, 1996.

\bibitem[LRR11]{LRR11}
R.~Levi, D.~Ron, and R.~Rubinfeld.
\newblock Testing properties of collections of distributions.
\newblock In {\em ICS}, pages 179--194, 2011.

\bibitem[MH59]{MS59}
N.~Mantel and W.~Haenszel.
\newblock Statistical aspects of the analysis of data from retrospective
  studies of disease.
\newblock {\em Journal of the National Cancer Institute}, 22(4):719--748, April
  1959.
\newblock {PMID:} 13655060.

\bibitem[Nea03]{Neapolitan:2003}
R.~E. Neapolitan.
\newblock {\em Learning Bayesian Networks}.
\newblock Prentice-Hall, Inc., 2003.

\bibitem[NUU17]{Natori17}
K.~Natori, M.~Uto, and M.~Ueno.
\newblock Consistent learning {B}ayesian networks with thousands of variables.
\newblock In {\em Proceedings of The 3rd International Workshop on Advanced
  Methodologies for Bayesian Networks}, volume~73 of {\em Proceedings of
  Machine Learning Research}, pages 57--68. PMLR, 20--22 Sep 2017.

\bibitem[Pan08]{Paninski:08}
L.~Paninski.
\newblock A coincidence-based test for uniformity given very sparsely-sampled
  discrete data.
\newblock {\em IEEE Transactions on Information Theory}, 54:4750--4755, 2008.

\bibitem[Pea88]{Pearl88}
J.~Pearl.
\newblock {\em Probabilistic Reasoning in Intelligent Systems: Networks of
  Plausible Inference}.
\newblock Morgan Kaufmann Publishers Inc., San Francisco, CA, USA, 1988.

\bibitem[Pin05]{Pinsker05}
M.~S. Pinsker.
\newblock On the estimation of information via variation.
\newblock {\em Problemy Peredachi Informatsii}, 41(2):3--8, 2005.

\bibitem[Rub12]{Rub12}
R.~Rubinfeld.
\newblock Taming big probability distributions.
\newblock {\em XRDS}, 19(1):24--28, 2012.

\bibitem[SGS00]{Spirtes2000}
P.~Spirtes, C.~Glymour, and R.~Scheines.
\newblock {\em Causation, Prediction, and Search}.
\newblock MIT press, 2nd edition, 2000.

\bibitem[Son09]{Song09}
K.~Song.
\newblock Testing conditional independence via {R}osenblatt transforms.
\newblock {\em Ann. Statist.}, 37(6B):4011--4045, 12 2009.

\bibitem[SW07]{SuWhite07}
L.~Su and H.~White.
\newblock A consistent characteristic function-based test for conditional
  independence.
\newblock {\em Journal of Econometrics}, 141(2):807 -- 834, 2007.

\bibitem[SW08]{SuWhite08}
L.~Su and H.~White.
\newblock A nonparametric {H}ellinger metric test for conditional independence.
\newblock {\em Econometric Theory}, 24(4):829--864, 2008.

\bibitem[SW14]{SuWhite14}
L.~Su and H.~White.
\newblock Testing conditional independence via empirical likelihood.
\newblock {\em Journal of Econometrics}, 182(1):27 -- 44, 2014.
\newblock Causality, Prediction, and Specification Analysis: Recent Advances
  and Future Directions.

\bibitem[TBA06]{Tsamardinos2006}
I.~Tsamardinos, L.~E. Brown, and C.~F. Aliferis.
\newblock The max-min hill-climbing bayesian network structure learning
  algorithm.
\newblock {\em Machine Learning}, 65(1):31--78, Oct 2006.

\bibitem[VV14]{VV14}
G.~Valiant and P.~Valiant.
\newblock An automatic inequality prover and instance optimal identity testing.
\newblock In {\em FOCS}, 2014.

\bibitem[WH17]{WH17}
X.~Wang and Y.~Hong.
\newblock Characteristic function based testing for conditional independence: a
  nonparametric regression approach.
\newblock {\em Econometric Theory}, pages 1--35, 2017.

\bibitem[Wyn78]{Wyner78}
A.~D. Wyner.
\newblock A definition of conditional mutual information for arbitrary
  ensembles.
\newblock {\em Inform. and Control}, 38(1):51--59, 1978.

\bibitem[ZPJS11]{Zhang11uai}
K.~Zhang, J.~Peters, D.~Janzing, and B.~Sch\"{o}lkopf.
\newblock Kernel-based conditional independence test and application in causal
  discovery.
\newblock In {\em Proceedings of the Twenty-Seventh Conference on Uncertainty
  in Artificial Intelligence}, UAI'11, pages 804--813. AUAI Press, 2011.

\end{thebibliography}

\appendix

\section{Testing with Respect to Mutual Information}\label{sec:mutualinfo}

We conclude by considering a slightly different model from the one considered thus far. In particular, while the total variation metric is a reasonable one to measure what it means for $X$ and $Y$ to be far from conditionally independent, there is another metric that is natural in this context: \emph{conditional mutual information}. Specifically, we modify the testing problem to distinguish between the cases where $X$ and $Y$ are conditionally independent on $Z$ and the case where $\condmutualinfo{X}{Y}{Z}\geq \eps$. Our picture here is somewhat less complete, but we are still able to say something in the case where $X,Y$ are binary.

\begin{theorem}\label{thm:binary:mutualinfo}
If $X$ and $Y$ are binary random variables and $Z$ has a support of size $n$, there exists a sample-efficient algorithm that distinguishes between $\condmutualinfo{X}{Y}{Z}=0$ and $\condmutualinfo{X}{Y}{Z}\geq \eps$ with sample complexity
\[
O(\max(\min(n^{6/7}\log^{8/7}(1/\eps)/\eps^{8/7}, n^{7/8}\log(1/\eps)/\eps),\sqrt{n} \log^2(1/\eps)/\eps^2))\,.
\]
\end{theorem}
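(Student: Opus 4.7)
My approach is to reduce conditional mutual information testing to the $\lp[1]$-based tester of~\cref{thm:binary-informal} (from~\cref{sec:alg-basic}), run with a carefully chosen parameter $\eps' \eqdef \Theta(\eps/\log(1/\eps))$. Completeness transfers immediately, since $\condmutualinfo{X}{Y}{Z} = 0$ is equivalent to $\p \in \condindprop{\domx}{\domy}{\domz}$. The work lies in showing that the hypothesis $\condmutualinfo{X}{Y}{Z} \geq \eps$ implies $\dtv(\p, \condindprop{\domx}{\domy}{\domz}) = \Omega(\eps/\log(1/\eps))$, after which the algorithm and its analysis are obtained by invoking~\cref{thm:binary-informal} with parameter $\eps'$.

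The key technical ingredient is the following pointwise inequality for distributions on $\{0,1\}^2$: for any $r \in \distribs{\{0,1\}^2}$ with marginals $r_X, r_Y$, letting $d_r \eqdef \dtv(r, r_X \otimes r_Y)$,
\[
  \mutualinfo{X}{Y} \;\leq\; C \cdot d_r \cdot \log(e/d_r)
\]
(where $(X,Y) \sim r$) for an absolute constant $C$. I would prove this by writing $\mutualinfo{X}{Y} = \sum_y r_Y(y) \kldiv{r_{X\mid y}}{r_X}$ via the chain rule, reducing to a binary-to-binary KL bound, and then invoking the standard inequality $\kldiv{p}{q} = O(\dtv(p,q) \log(e/\dtv(p,q)))$ valid for distributions on a finite support of constant size. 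Tightness of the logarithmic factor is witnessed by the saturated example $r_{00}=0,\, r_{01}=\eps,\, r_{10}=1-\eps,\, r_{11}=0$: there $d_r = \Theta(\eps)$ whereas $\mutualinfo{X}{Y} = h(\eps) = \Theta(\eps \log(1/\eps))$.

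With the pointwise bound in hand, applying it to each conditional distribution $\p_z$ and averaging with weights $\p_Z(z)$ yields $\condmutualinfo{X}{Y}{Z} \leq C \sum_z \p_Z(z) \cdot \dtv(\p_z, \q_z) \log(e/\dtv(\p_z, \q_z))$. Since $x \mapsto x \log(e/x)$ is concave on $(0, 1]$ (second derivative $-1/(x \ln 2) < 0$), Jensen's inequality gives $\condmutualinfo{X}{Y}{Z} \leq C \cdot \bar{d} \log(e/\bar{d})$, where $\bar{d} \eqdef \sum_z \p_Z(z) \dtv(\p_z, \q_z) = \dtv(\p, \q)$ by the equality case of~\cref{lem:dtv-cond} (since $\q_Z = \p_Z$). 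Inverting this relation, the assumption $\condmutualinfo{X}{Y}{Z} \geq \eps$ forces $\bar{d} \geq c\eps/\log(1/\eps)$ for some absolute constant $c > 0$. Finally,~\cref{lemma:distance:product} gives $\dtv(\p, \condindprop{\domx}{\domy}{\domz}) \geq \bar{d}/4$, completing the reduction.

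Plugging $\eps' \eqdef c\eps/(4\log(1/\eps))$ into the sample complexity $O(\max(\min(n^{6/7}/\eps'^{8/7}, n^{7/8}/\eps'), \sqrt{n}/\eps'^2))$ of~\cref{thm:binary-informal} recovers exactly the expression in the theorem, with each factor of $1/\eps'$ contributing one factor of $\log(1/\eps)$. The main obstacle in executing this plan is establishing the pointwise inequality cleanly and uniformly across all regimes of the marginals $(r_X, r_Y)$, especially when some marginal probability is close to $0$ or $1$; once this is available, everything else reduces to a black-box application of our total variation conditional independence tester.
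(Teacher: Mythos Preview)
Your reduction is exactly the paper's approach: invoke the total-variation tester of~\cref{thm:binary-informal} with parameter $\eps'=\Theta(\eps/\log(1/\eps))$, using the implication $\condmutualinfo{X}{Y}{Z}\geq\eps \Rightarrow \dtv(\p,\condindprop{\domx}{\domy}{\domz})=\Omega(\eps/\log(1/\eps))$. The paper establishes this implication as~\cref{lemma:relation:cmi:tv}, citing~\cite{Pinsker05}; you instead attempt a direct proof, which is a nice addition. Your Jensen step and the use of~\cref{lemma:distance:product} are correct.

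There is, however, a gap in your sketch of the pointwise inequality $\mutualinfo{X}{Y}\leq C\,d_r\log(e/d_r)$. The ``standard inequality'' you invoke, $\kldiv{p}{q}=O(\dtv(p,q)\log(e/\dtv(p,q)))$ for distributions on a constant-size support, is \emph{false} in general: take $p=(\eps,1-\eps)$ and $q=(e^{-1/\eps},1-e^{-1/\eps})$, so that $\dtv(p,q)\approx\eps$ while $\kldiv{p}{q}\approx \eps\cdot\log(\eps\, e^{1/\eps})\approx 1$, and the ratio $\kldiv{p}{q}/(\dtv(p,q)\log(1/\dtv(p,q)))$ diverges. So the chain-rule decomposition $\mutualinfo{X}{Y}=\sum_y r_Y(y)\kldiv{r_{X\mid y}}{r_X}$ followed by a termwise KL--TV bound does not go through as stated.

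The target inequality is nonetheless correct, and there is a clean one-line fix. Since $r$ and $r_X\otimes r_Y$ have the same marginals, $\mutualinfo{X}{Y}=\entropy{r_X\otimes r_Y}-\entropy{r}$. Now apply the uniform continuity of entropy (Fannes--Audenaert) on the $4$-element alphabet $\{0,1\}^2$: for $d_r\leq 3/4$,
\[
  \abs{\entropy{r}-\entropy{r_X\otimes r_Y}} \;\leq\; d_r\log 3 + h(d_r) \;=\; O\!\left(d_r\log(1/d_r)\right),
\]
which is exactly your pointwise bound. With this substitution your argument is complete and matches the paper's proof.
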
 
\begin{proof}
This follows immediately upon noting that by~\cref{lemma:relation:cmi:tv} (stated and proven later), that if $X$ and $Y$ are $\eps$-close in total variation distance from being conditionally independent on $Z$, then $\condmutualinfo{X}{Y}{Z} \leq O(\eps\log(1/\eps))$; or, by the contrapositive, that $\condmutualinfo{X}{Y}{Z} \geq \eps$ implies that  $X$ and $Y$ are $\Omega(\eps/\log(1/\eps))$-far in total variation distance from being conditionally independent on $Z$. Therefore, it suffices to run our existing conditional independence tester with parameter $\eps' \eqdef \Omega(\eps/\log(1/\eps))$. The sample complexity of this tester is as specified.
\end{proof}

\begin{remark}[On the optimality of this bound]
It is not difficult to modify the analysis slightly in order to remove the logarithmic factors from the first two
terms in the above expression. Intuitively, this is because these terms arise only when at least half of the mutual information comes from ``light''
bins, with mass at most $1/m$. In this case, these bins contribute at least $m^4 \sum_z \eps_z^2 \p_Z(z)^4 \gg m^4 \sum_z (\p_Z(z) \eps_z \log(1/\eps_z))^4
\gg m^4 \eps^4/ n^3$ to the expectation of $Z$, and the analysis proceeds from there as before.

It is also easy to show that in this regime our lower bounds still apply, as the hard instances also produced distributions with mutual information $\Omega(\eps)$.\footnote{I.e., the conditional mutual information of ``\no-distributions'' is easily seen to actually be $\Omega(\eps)$, while applying the relation between total variation distance and conditional mutual information as a black-box to the $\eps$ distance in total variation distance would incur a quadratic loss in $\eps$.}{} Therefore, we have matching upper and lower bounds as long as $\eps \gg n^{-3/8}/\log^2 n$.

However, it seems likely that the correct behavior in the small $\eps$ regime is substantially different when testing with respect to mutual information. The difficult cases for total variation distance testing actually end up with mutual information merely $\condmutualinfo{X}{Y}{Z} = O(\eps^2)$. It is quite possible that a better algorithm or a better analysis of the existing algorithm could give substantially improved performance when $\eps < n^{-3/8}$. In fact, it is conceivable that the sample complexity of $O(n^{7/8}/\eps)$ could be maintained for a broad range of $\eps$. The only lower bound that we know preventing this is a lower bound of $\Omega(\eps\log(1/\eps))$ by noting that there are distributions with $\condmutualinfo{X}{Y}{Z}\geq \eps$, but where $(X,Y,Z)$ is $O(\eps/\log(1/\eps))$-far in variation distance from being conditionally independent.
\end{remark}

\begin{lemma}\label{lemma:relation:cmi:tv}
  Assume $(X,Y,Z)\sim\p$, where $\p\in\distribs{ \domx\times\domy\times\domz }$ with $\abs{\domx}=\ell_1$, $\abs{\domy}=\ell_2$, and $\abs{\domz}=n$. Then, for every $\eps\in(0,1)$,
  \begin{itemize}
    \item If $\totalvardist{ \p }{ \condindprop{\domx}{\domy}{\domz} } \leq \eps$, then $\condmutualinfo{X}{Y}{Z} \leq O(\eps\log(\ell_1\ell_2/\eps))$;
    \item If $\totalvardist{ \p }{ \condindprop{\domx}{\domy}{\domz} } \geq \eps$, then $\condmutualinfo{X}{Y}{Z} \geq 2\eps^2$.
  \end{itemize}
\end{lemma}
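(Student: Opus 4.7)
I will handle the two bounds separately, both leveraging the mixture-of-products distribution $\q\eqdef\sum_{z\in\domz}\p_Z(z)\q_z$ from~\cref{def:product:conditional:marginals}, which lies in $\condindprop{\domx}{\domy}{\domz}$ by construction and satisfies $\q_Z=\p_Z$. Since the $Z$-marginals agree, the equality case of~\cref{lem:dtv-cond} gives the bridge
\[
    \totalvardist{\p}{\q}=\sum_{z\in\domz}\p_Z(z)\totalvardist{\p_z}{\q_z}
\]
between $Z$-averaged local discrepancies and the global TV distance.

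\textbf{Easier bound ($\condmutualinfo{X}{Y}{Z}\geq 2\eps^2$).} I plan to write $\condmutualinfo{X}{Y}{Z}=\sum_z\p_Z(z)\kldiv{\p_z}{\q_z}$ (unraveling the definition: conditioning on $Z=z$, the joint of $(X,Y)$ is $\p_z$ and the product of marginals is $\q_z$), apply Pinsker's inequality termwise to obtain $\kldiv{\p_z}{\q_z}\geq 2\totalvardist{\p_z}{\q_z}^2$, and then invoke Jensen's inequality on $t\mapsto t^2$ together with the identity above. This yields $\condmutualinfo{X}{Y}{Z}\geq 2\totalvardist{\p}{\q}^2$, and since $\q\in\condindprop{\domx}{\domy}{\domz}$, one has $\totalvardist{\p}{\q}\geq\eps$, giving the desired lower bound.

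\textbf{Harder bound ($\condmutualinfo{X}{Y}{Z}\leq \bigO{\eps\log(\ell_1\ell_2/\eps)}$).} The crucial structural observation is that for each $z$, $\p_z$ and $\q_z=\p_{z,X}\otimes\p_{z,Y}$ share both $X$- and $Y$-marginals, so $H_{\q_z}(X)=H_{\p_z}(X)$ and $H_{\q_z}(Y)=H_{\p_z}(Y)$. Writing $\kldiv{\p_z}{\q_z}=I_{\p_z}(X;Y)=H_{\p_z}(X)+H_{\p_z}(Y)-H_{\p_z}(X,Y)$ and using $I_{\q_z}(X;Y)=0$, the marginal-entropy terms cancel and leave
\[
    \kldiv{\p_z}{\q_z}=H_{\q_z}(X,Y)-H_{\p_z}(X,Y).
\]
This reduces the KL-to-TV conversion to a \emph{continuity-of-entropy} estimate on a set of size $\ell_1\ell_2$. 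Setting $\delta_z\eqdef\totalvardist{\p_z}{\q_z}$, the standard Fano/Zhang--Audenaert inequality gives
\[
    \kldiv{\p_z}{\q_z}\leq\delta_z\log(\ell_1\ell_2-1)+h(\delta_z)=\bigO{\delta_z\log(\ell_1\ell_2/\delta_z)},
\]
where $h$ is the binary entropy. The function $f(t)\eqdef t\log(\ell_1\ell_2/t)$ is concave on $(0,1]$ (its second derivative is $-1/t<0$), so Jensen's inequality followed by the bridge identity and~\cref{lemma:distance:product} (which ensures $\totalvardist{\p}{\q}\leq 4\eps$) produces
\[
    \condmutualinfo{X}{Y}{Z}\leq\bigO{\sum_{z\in\domz}\p_Z(z)f(\delta_z)}\leq\bigO{f\mleft(\totalvardist{\p}{\q}\mright)}\leq\bigO{\eps\log(\ell_1\ell_2/\eps)}.
\]

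\textbf{Main obstacle.} The only non-routine step is the marginal-cancellation identity $\kldiv{\p_z}{\q_z}=H_{\q_z}(X,Y)-H_{\p_z}(X,Y)$. Without it, any generic TV-to-KL conversion would pick up a dependence on $\min \q_z$ which, in the worst case, could force a $\log n$ (or worse) factor in place of $\log(\ell_1\ell_2)$; simple examples (where $\q_z$ has one entry of mass $e^{-\ell_1\ell_2}$) show that $\kldiv{\p_z}{\q_z}$ is not generically controlled by $O(\delta_z\log(\ell_1\ell_2/\delta_z))$. The specific choice of $\q_z$ as the product of $\p_z$'s own marginals is what makes the cancellation go through, turning the estimate into a pure continuity-of-entropy inequality on the $(\ell_1\ell_2)$-element space.
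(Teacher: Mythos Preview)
Your proof is correct and follows the same high-level structure as the paper's (Pinsker for the lower bound; a TV-to-mutual-information conversion combined with~\cref{lemma:distance:product} for the upper bound), but you supply considerably more detail where the paper appeals to external references. For the lower bound, the paper applies Pinsker's inequality once to the full joint, using that $\condmutualinfo{X}{Y}{Z}=\kldiv{\p}{\q}$ directly; your termwise-Pinsker-plus-Jensen route is equivalent but slightly less direct. For the upper bound, the paper simply cites~\cite{Pinsker05} for the TV-to-MI relation, whereas you derive it explicitly: the identity $\kldiv{\p_z}{\q_z}=H(\q_z)-H(\p_z)$ (valid precisely because $\q_z$ shares both marginals with $\p_z$) reduces the question to a Fannes--Audenaert continuity-of-entropy estimate on $\ell_1\ell_2$ points, after which Jensen on the concave map $t\mapsto t\log(\ell_1\ell_2/t)$ and~\cref{lemma:distance:product} conclude. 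Your argument is self-contained and, as your ``main obstacle'' remark highlights, makes transparent why the bound depends only on $\ell_1\ell_2$ and $\eps$ rather than on $\min\q_z$.
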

\begin{proof}
  The second item is simply an application of Pinsker's inequality, recalling that 
  \[
      \condmutualinfo{X}{Y}{Z} = \kldiv{ (X,Y)\mid Z }{ (X \mid Z)\otimes (Y \mid Z) }\,.
  \]
  i.e. the Kullback--Leibler divergence between the joint distribution of $(X,Y\mid Z)$ and the product of
marginals $(X\mid Z)$ and $(Y\mid Z)$. As for the first, it follows from the relation between conditional mutual information and total variation distance obtained in~\cite{Pinsker05} (and~\cref{lemma:distance:product}).
\end{proof}
 \section{Deferred Proofs from~\cref{sec:prelims}}\label{sec:deferred:prelims}

\subsection{Proof of~\cref{lem:dtv-cond}}
The proof follows from the following chain of (in-)equalities:
\begin{align*}
2\totalvardist{\p}{\p'} 
  &= \sum_{(i, j, z) \in \domx\times\domy \times \domz} \abs{ \p(i,j,z) - \p'(i,j,z) }\\
  &=  \sum_{(i, j, z) \in \domx\times\domy \times \domz} \abs{ \p_Z(z) \cdot \p_z(i,j) - \p'_Z(z) \cdot \p'_z(i,j) }\\
  &= \sum_{(i, j, z) \in \domx\times\domy \times \domz} \abs{ \p_Z(z) \cdot (\p_z(i,j) - \p'_z(i,j)) + (\p_Z(z) - \p'_Z(z)) \cdot \p'_z(i,j) } \\
  &\leq \sum_{(i, j, z) \in \domx\times\domy \times \domz} \p_Z(z) \cdot \abs{ \p_z(i,j) - \p'_z(i,j) } 
    + \sum_{(i, j, z) \in \domx\times\domy \times \domz} \abs{\p_Z(z) - \p'_Z(z)} \cdot \p'_z(i,j) \\
  &= \sum_{z\in\domz} \left( \p_Z(z) \cdot \sum_{(i,j)\in\domx\times\domy} \abs{ \p_z(i,j) - \p'_z(i,j) } \right) 
    + \sum_{z\in\domz} \left( \abs{ \p_Z(z) - \p'_Z(z) } \cdot \sum_{(i,j)\in\domx\times\domy} \p'_z(i,j) \right) \\
  &= 2 \sum_{z\in\domz} \p_Z(z) \cdot \totalvardist{\p_z}{\p'_z} + 2\totalvardist{\p_Z}{\p'_Z} \;,
\end{align*}
where the fourth line used the triangle inequality and the last line used the fact that 
$\sum_{(i,j)\in\domx\times\domy} \p'_z(i,j) = 1$.
This completes the proof of the first part of the lemma. For the second part, we note that the equality in \eqref{eq:useful:decomposition} 
holds if and only if the triangle inequality in the fourth line above holds with equality, i.e., when $\p_Z=\p'_Z$.
This completes the proof of~\cref{lem:dtv-cond}. \qed

\subsection{Proof of~\cref{lemma:distance:product}}

Let $\p'\in\condindprop{\domx}{\domy}{\domz}$ be such that $\totalvardist{\p}{\p'}\leq \eps$ 
and $\q =  \sum_{z\in\domz}\p_Z(z) \q_z$. Since 
$\totalvardist{\p}{\q} \leq \totalvardist{\p}{\p'} + \totalvardist{\p'}{\q}
\leq \eps + \totalvardist{\p'}{\q}$, it suffices to show that $\totalvardist{\p'}{\q} \leq 3\eps.$
By~\cref{lem:dtv-cond}, we have that
\begin{align*}
\totalvardist{\q}{\p'}  
&\leq \sum_{z\in\domz} \q_Z(z) \cdot \totalvardist{\q_z}{\p'_z} + \totalvardist{\q_Z}{\p'_Z} \\
&= \sum_{z\in\domz} \p_Z(z) \cdot \totalvardist{\q_z}{\p'_z} + \totalvardist{\p_Z}{\p'_Z} \\
&= \sum_{z\in\domz} \p_Z(z) \cdot \totalvardist{\p_{z,X} \otimes \p_{z,Y}}{\p'_{z,X} \otimes \p'_{z,Y}}  + \totalvardist{\p_Z}{\p'_Z}   \\
&\leq  \sum_{z\in\domz} \p_Z(z) \cdot \left( \totalvardist{\p_{z,X}}{\p'_{z,X}}+ \totalvardist{\p_{z,Y}}{\p'_{z,Y}} \right)  + \totalvardist{\p_Z}{\p'_Z}    \\
&= \sum_{z\in\domz} \p_Z(z)\totalvardist{\p_{z,X}}{\p'_{z,X}}+ \sum_{z\in\domz} \p_Z(z)\totalvardist{\p_{z,Y}}{\p'_{z,Y}}  + \totalvardist{\p_Z}{\p'_Z}   \\
&\leq 3\eps  \;,
\end{align*}
where the second line uses the fact that $q_Z = p_Z$, the third line uses the fact that 
$\q_z = \p_{z,X} \otimes \p_{z,Y}$ (\cref{def:product:conditional:marginals}) 
and that $\p'_z = \p'_{z,X} \otimes \p'_{z,Y}$ (since  $\p' \in\condindprop{\domx}{\domy}{\domz}$), 
the fourth line uses the sub-additivity of total variation distance for product distributions, and
the last line uses the fact that each of the three terms in the fifth line is bounded from above by 
$\totalvardist{\p}{\p'}$. This completes the proof of~\cref{lemma:distance:product}. \qed

\subsection{Proof of~\cref{fact:split:distributions:l2norm:nonpoisson}}
This lemma is essentially shown in~\cite{DK:16}.
The only difference is that we require a proof for (ii)
when $S$ is a set of $m$ independent samples (as opposed to $\mathrm{Poi}(m)$ samples
from $\p$ in ~\cite{DK:16}). We show this by an explicit calculation below.

Let $a_i$ equal one plus the number of copies of $i$ in $S$, i.e. $a_i \eqdef 1+\sum_{j\in S} \indic{i=j}$.
We note that the expected squared $\lp[2]$-norm of $\p_S$ is
$
\expect{\sum_{i=1}^n \sum_{j=1}^{a_i} \p_i^2/a_i^2} = \sum_{i=1}^n \p_i^2 \expect{1/a_i}.
$
Further, $a_i$ is distributed as $1+X$ where $X$ is a $\binomial{m}{\p_i}$ random variable.
Therefore,
\begin{align*}
    \expect{\frac{1}{1+X}}
    &= \sum_{k=0}^m \frac{1}{k+1}\binom{m}{k} \p_i^k(1-\p_i)^{m-k}
    = \frac{1}{(m+1)\p_i}\sum_{k=0}^m \binom{m+1}{k+1} \p_i^{k+1}(1-\p_i)^{(m+1)-(k+1)}\\
    &= \frac{1}{(m+1)\p_i}\sum_{\ell=1}^{m+1} \binom{m+1}{\ell} \p_i^{\ell}(1-\p_i)^{(m+1)-\ell}
    = \frac{1-(1-\p_i)^{m+1}}{(m+1)\p_i} \leq \frac{1}{(m+1)\p_i}\,.
\end{align*}
This implies
$
\expect{[\normtwo{\p_S}^2} \leq \sum_{i=1}^n \p_i^2 / (m\p_i) = (1/m) \sum_{i=1}^n \p_i = 1/m.
$
which completes the proof. \qed

\subsection{Proof of~\cref{claim:variance:truncated:poisson}}

  Recalling that $\expect{N}=\lambda$ and $\expect{N^2}=\lambda+\lambda^2$, we get
  \[
  \expect{N \indic{ N \geq 4}} = e^{-\lambda}\sum_{k=4}^\infty k\frac{\lambda^k}{k!} = 
  \lambda - e^{-\lambda}\left(\lambda+\lambda^2+\frac{1}{2}\lambda^3\right) \eqdef f(\lambda) \;,
  \]
  and
  \[
  \var{N \indic{ N \geq 4}} = \left(\lambda+\lambda^2 - e^{-\lambda}\left(\lambda+2\lambda^2+\frac{3}{2}\lambda^3\right) \right)
  - \left(\lambda - e^{-\lambda}\left(\lambda+\lambda^2+\frac{1}{2}\lambda^3\right)\right)^2 \eqdef g(\lambda) \;.
  \]
  From these expressions, it is easy to check that 
  (i) $\lim_{x\to 0}\frac{f(x)}{g(x)} = \frac{1}{4}$, and
  (ii) $\lim_{x\to \infty}\frac{f(x)}{g(x)} = 1$. 
  From the definition as a variance of a non-constant random variable, it follows 
  that $g(x)>0$ for all $x>0$, from which we get that 
  (iii) $\frac{f}{g}$ is continuous and positive on $[0,\infty)$. 
  Combining these three statements, we get that $\frac{f}{g}$ achieves a minimum $c$ on $[0,\infty)$, 
  and that this minimum is positive. 
  This implies the result with $C\eqdef 1/c$. 
  The value $4.22$ comes from studying numerically this ratio, whose minimum is achieved for $x\simeq 1.1457$.

\subsection{Proof of~\cref{claim:variance:truncated:poisson:2}}

Let $a,b\geq 0$, $\lambda > 0$, and assume $X\sim\poisson{\lambda}$. 
Without loss of generality, suppose $0 < a\leq b$ (the case $a=0$ being trivial). 
We can rewrite $X \sqrt{\min(X,a)\min(X,b)}\indic{X\geq 4}$ as $Y$ with
      \[
          Y \eqdef X^2\indic{X\leq a} + \sqrt{a} X^{3/2} \indic{a < X\leq b}  + \sqrt{ab} X \indic{X > b}
      \]
      which implies
      \[
          Y^2 = X^4\indic{X\leq a} + a X^{3} \indic{a < X\leq b}  + ab X^2 \indic{X > b}\,.
      \]
     By linearity of expectation, the original claim boils down to proving there exists $C>0$ such that
     \begin{align*}
        \expect{X^4\indic{4\leq X\leq a}} &+\expect{a X^{3} \indic{a < X\leq b} \indic{X\geq 4} }+\expect{ab X^2 \indic{X > b} \indic{X\geq 4} } \\
        &\leq C\left(\expect{X^2\indic{4\leq X\leq a}} + \expect{\sqrt{a} X^{3/2} \indic{a < X\leq b} \indic{X\geq 4}}  + \expect{\sqrt{ab} X \indic{X > b} \indic{X\geq 4}}\right) \\
        &\quad+ \left(\expect{X^2\indic{4\leq X\leq a}} + \expect{\sqrt{a} X^{3/2} \indic{a < X\leq b} \indic{X\geq 4}}  + \expect{\sqrt{ab} X \indic{X > b} \indic{X\geq 4}}\right)^2
     \end{align*}
     and since $(x+y+z)^2 \geq x^2+y^2+z^2$ for $x,y,z\geq 0$, it is enough to show
     \[
          \expect{\beta^2 X^{2\alpha} \indic{X\in S}} \leq C\beta\expect{X^\alpha \indic{X\in S}} + \beta^2\expect{X^\alpha \indic{X\in S}}^2
     \]
     for $\alpha,\beta>0$, and $S\subseteq\R_+$ an interval. This in turn follows from arguments similar to that of the proof of~\cref{claim:variance:truncated:poisson}.

\subsection{Proof of~\cref{claim:expectation:truncated:poisson:squared:with:min}}

For $\lambda < 8$, we can take bound the expectation by the contribution of $X=4$ as
	  $\expect{ X \sqrt{\min(X,a)\min(X,b)}\indic{X\geq 4}} \geq 4 \sqrt{\min(4,a)\min(4,b)} \lambda^4/4! \geq \lambda^4/3$. Then $\lambda \sqrt{\min(\lambda,a)\min(\lambda,b)} \geq \lambda \sqrt{\min(\lambda,2)\min(\lambda,2)} \geq \lambda^2/4 \geq \lambda^4/256$. Thus  $\min(\lambda \sqrt{\min(\lambda,a)\min(\lambda,b)}, \lambda^4) \leq 256 \lambda^4$. Putting this together we have that for $\lambda < 8$,
	  
	  \[
            \expect{ X \sqrt{\min(X,a)\min(X,b)}\indic{X\geq 4}  } \geq (1/768) \min(\lambda \sqrt{\min(\lambda,a)\min(\lambda,b)}, \lambda^4) \;.
      \]
      To deal with the $\lambda \geq 8$ case, we claim that $\probaOf{X \geq \lfloor \lambda/2 \rfloor} \geq 1/2$ in this case.
	  To see this, we just need to expand $1=\exp(-\lambda) \sum_{k=0}^{\infty} f(k) \lambda^k/k!$ and note that for $1 \leq k \leq \lambda/2$, the ratio of the $k$ term to the $k-1$ term is at least $\lambda/k \geq 2$. Thus the sum of the first $\lfloor \lambda/2 \rfloor$ terms is smaller than the $k=\lfloor \lambda/2 \rfloor$ term and so
	  \[
	         \exp(-\lambda) \sum_{k=0}^{\lfloor \lambda/2 \rfloor-1}  \lambda^k/k! \leq \exp(-\lambda) \sum_{k=\lfloor \lambda/2 \rfloor}^{\infty} \lambda^k/k \; .
	  \]
	  The RHS is $\probaOf{X \geq \lfloor \lambda/2 \rfloor}$ and the LHS is $\probaOf{X < \lfloor \lambda/2 \rfloor}=1-\probaOf{X \geq \lfloor \lambda/2 \rfloor}$  and so we have that $\probaOf{X \geq \lfloor \lambda/2 \rfloor} \geq 1/2$ as claimed.
	  
	  For $8 \leq \lambda \leq 2 \min{a,b}$, we have
	  \begin{align*}
		\expect{ X \sqrt{\min(X,a)\min(X,b)}\indic{X\geq 4}} & \geq \expect{ X^2 \indic{X\geq \lfloor \lambda/2 \rfloor}} \\
		& \geq (1/2) (\lfloor \lambda/2 \rfloor)^2 \geq \lambda^2/3 \\
		& \geq \min(\lambda \sqrt{\min(\lambda,a)\min(\lambda,b)}/6, \lambda^4/3) \\
		& \geq (1/6) \min(\lambda \sqrt{\min(\lambda,a)\min(\lambda,b)}, \lambda^4)
	 \end{align*}
	 
	 For $\lambda \geq 2 \max{a,b,4}$, noting that for $X \geq \lambda/2$, $\sqrt{\min(X,a)\min(X,b)}=\sqrt{ab}$, we have
	 \begin{align*} 
		\expect{ X \sqrt{\min(X,a)\min(X,b)}\indic{X\geq 4}} & \geq \expect{ X \sqrt{\min(X,a)\min(X,b)}\indic{X\geq \lfloor \lambda/2 \rfloor}} \\
		& \geq  \lfloor \lambda/2 \rfloor \sqrt{ab} \\
		& \geq  \sqrt{ab}\lambda/3 \\
		& \geq (1/6) \min(\lambda \sqrt{\min(\lambda,a)\min(\lambda,b)}, \lambda^4)
	\end{align*}
	The final case we need to consider is when $\lambda$ is between $2a$ and $2b$ and the maximum of those is over $4$
	Supposing without loss of generality that $a \leq b$, for $\max{2a,4} \leq \lambda \leq 2b$, 	we have
	\begin{align*}
		\expect{ X \sqrt{\min(X,a)\min(X,b)}\indic{X\geq 4}} & \geq \expect{ X \sqrt{\min(X,a)\min(X,b)}\indic{X\geq \lfloor \lambda/2 \rfloor}} \\
		& \geq  \lfloor \lambda/2 \rfloor^{3/2} \sqrt{a}  \\
		& \geq \sqrt{ab} \lambda^{3/2}/4 \\
		& \geq (1/8) \min(\lambda \sqrt{\min(\lambda,a)\min(\lambda,b)}, \lambda^4)\;.
	\end{align*}

\end{document}